\documentclass{article}

\usepackage{microtype}
\usepackage{graphicx}
\usepackage{subcaption}
\usepackage{booktabs} 
\usepackage{hyperref}
\usepackage{hyperref}

\usepackage[accepted]{icml2026}

\usepackage{amsmath}
\usepackage{amssymb}
\usepackage{mathtools}
\usepackage{amsthm}

\usepackage[capitalize,noabbrev]{cleveref}

\theoremstyle{plain}
\newtheorem{theorem}{Theorem}[section]

\theoremstyle{definition}
\newtheorem{definition}[theorem]{Definition}

\theoremstyle{remark}
\newtheorem{remark}[theorem]{Remark}

\usepackage[textsize=tiny]{todonotes}
\usepackage[table]{xcolor}

\usepackage{microtype}
\usepackage{tikz}
\usepackage{algorithm}
\usepackage{amsfonts}
\usepackage{amsmath}
\usepackage{bbold}
\usepackage{mathtools}
\usepackage{amsthm}
\usepackage{bm}
\usepackage{booktabs}
\usepackage{graphicx}
\usepackage{caption}
\usepackage{comment}
\usepackage{hyperref}
\usepackage{balance}
\usepackage{cellspace}

\usepackage{multirow} 
\usepackage{wrapfig} 
\usepackage{makecell} 
\usepackage{subcaption}

\usepackage[normalem]{ulem}
\useunder{\uline}{\ul}{}

\usepackage{pifont}
\usepackage[most]{tcolorbox}

\newcommand{\cmark}{\ding{51}}%
\newcommand{\xmark}{\ding{55}}%

\newcommand{\methodname}{\textsc{TimeGuard}}

\newcommand{\LIN}{L_\text{in}}
\newcommand{\LOUT}{L_\text{out}}
\newcommand{\LTGR}{L_\text{tgr}}
\newcommand{\LPTN}{L_\text{ptn}}

\newcommand{\MAEC}{$\text{MAE}_\text{C}$}
\newcommand{\MAEP}{$\text{MAE}_\text{P}$}
\newcommand{\FDER}{$\text{FDER}$}

\renewcommand{\sectionautorefname}{Section}

\usepackage[switch]{lineno}
\usepackage{enumitem}
\setlist[itemize]{leftmargin=*}

\usepackage[most]{tcolorbox}
\usepackage{xcolor}

\definecolor{findingbar}{RGB}{230,126,34}
\definecolor{findingbg}{RGB}{255,250,244}
\definecolor{findingframe}{RGB}{244,214,185}

\newtcolorbox{findingbox}{
  enhanced,
  boxrule=0pt,            
  frame hidden,
  colback=findingbg,      
  borderline west={2.5pt}{0pt}{findingbar}, 
  sharp corners,
  left=4pt,right=4pt,top=2pt,bottom=2pt,
  before skip=6pt, after skip=6pt,
}

\icmltitlerunning{\textsc{TimeGuard}: Channel-wise Pool Training for Backdoor Defense in Time Series Forecasting}

\begin{document}

\twocolumn[
  \icmltitle{\textsc{TimeGuard}: Channel-wise Pool Training for Backdoor Defense \\ in Time Series Forecasting}



  \icmlsetsymbol{equal}{*}

  \begin{icmlauthorlist}
    \icmlauthor{Quang Duc Nguyen}{ntu}
    \icmlauthor{Siyuan Liang}{ntu}
    \icmlauthor{Yiming Li}{ntu}
    \icmlauthor{Fushuo Huo}{ntu}
    \icmlauthor{Dacheng Tao}{ntu}
  \end{icmlauthorlist}

   \icmlaffiliation{ntu}{College of Computing and Data Science, Nanyang Technological University, Singapore}

  \icmlcorrespondingauthor{Siyuan Liang}{siyuan.liang@ntu.edu.sg}
  \icmlcorrespondingauthor{Dacheng Tao}{dacheng.tao@ntu.edu.sg}

  \icmlkeywords{time series forecasting, trustworthy machine learning, backdoor defense}

  \vskip 0.3in
]



\printAffiliationsAndNotice{}  

\begin{abstract}
    Time Series Forecasting (TSF) is highly vulnerable to backdoor attacks, yet effective defenses remain underexplored due to challenges arising from data entanglement and shifts in task formulation.
    To fill this gap, we conduct a systematic evaluation of thirteen representative backdoor defenses across the TSF life cycle and analyze their failure modes. Our results reveal two fundamental issues: (1) data entanglement induces \emph{channel-level signal dilution}, rendering sample-filtering and trigger-synthesis defenses ineffective at localizing backdoors; and (2) task-formulation shift leads to \emph{training-loss degeneration}, causing poisoned and clean windows to become indistinguishable at training stages.
    Based on these findings, we propose a training-time backdoor defense for TSF, termed \methodname{}. Our method adopts channel-wise pool training as the core paradigm and initializes a high-confidence pool using time-aware criteria to mitigate signal dilution. Moreover, we introduce distance-regularized loss selection to progressively expand the reliable pool during training and ease loss degeneration. 
    Extensive experiments across multiple datasets, forecasting architectures, and TSF backdoor attacks demonstrate that \methodname{} substantially improves robustness, boosting \MAEP{} by $1.96\times$ over the leading baseline, while preserving clean performance within 5\% \MAEC{}. 
\end{abstract}
\section{Introduction} \label{sec:intro}
    \begin{figure}[t]
        \centering
        \includegraphics[width=\linewidth]{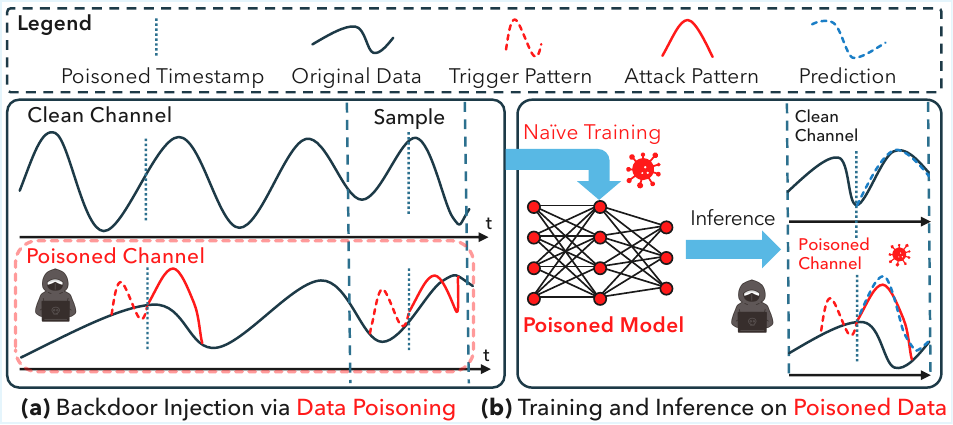}
        \vspace{-15pt}
        \caption{A backdoor is injected into selected channels during training and activated at inference to manipulate TSF predictions.}
        \label{fig:problem-intro}
        \vspace{-15pt} 
    \end{figure}

Time Series Forecasting (TSF) is widely used in critical domains such as climate prediction, transportation planning, and economic analysis. However, recent studies have shown that TSF models are also susceptible to backdoor attacks~\cite{liang2024badclip, liu2025pre, liang2025vl}, where an attacker implants hidden trigger patterns into the data during the training phase such that the model behaves normally under benign inputs but outputs attacker-specified predictions under trigger conditions~\cite{lin2024backtime}. This type of attack is highly covert and may pose serious risks to practical applications relying on TSF~\cite{liu2025natural,zhang2024towards,liu2024compromising}, such as undermining the reliability of decision-making and forecasting~\cite{zhang2015daily}, which highlights the necessity of studying TSF backdoor defense methods~\cite{wang2022universal,liang2024unlearning,guo2024copyrightshield}.

Although backdoor defense mechanisms have been extensively studied in classification and generative model domains~\cite{wu2025backdoorbench, li2025backdoorllm, lin2025backdoordm}, backdoor defenses for time series forecasting (TSF) remain significantly underdeveloped. Defense against TSF backdoors is still evidently insufficient. This is mainly due to two inherent challenges in TSF scenarios: one is data entanglement, i.e., multivariate time series exhibiting simultaneous channel structure and temporal dependency~\cite{xu2026cpiri}, which causes backdoor injections to be highly coupled with clean signals at the data level; and the other is task-formulation shift, i.e., TSF shifts from discrete classification to continuous-value regression and training window overlap~\cite{kim2025comprehensive}, resulting in substantial changes in the discriminative signals relied upon by existing defense methods during training~\cite{kuang2024adversarial, xu2026srd}. Therefore, it is often difficult to directly transfer existing backdoor defense techniques to TSF scenarios.

To fill the above research gap, we conduct a systematic evaluation of backdoor defenses in TSF scenarios by adapting and analyzing 13 representative defense methods across the four phases of the deep neural network lifecycle~\cite{wu2025backdoorbench}. Experimental results indicate that the failures of existing methods in TSF mainly manifest in two aspects induced by these inherent challenges: first, data entanglement in multivariate time series leads to channel-level signal dilution, where backdoor injections only affect a subset of channels, making it difficult for sample-level filtering and trigger-synthesis-based defenses to accurately localize backdoors when the attack granularity and defense granularity are mismatched; second, task-formulation shift further causes training loss degeneration, and the continuous-value regression targets together with overlapping window structures result in poisoned and clean windows exhibiting similar loss distributions during early training stages, thereby weakening or even invalidating defense strategies that rely solely on training losses. In addition, we observe that training-phase defenses~\cite{xun2025robust, liang2024unlearning} remain effective when relatively reliable clean training data are available.

Based on the above analysis, we propose a training-time backdoor defense method for TSF, termed \methodname{}.
Motivated by training-phase defenses, \methodname{} adopts Channel-wise reliable pool training as the core paradigm, reconfiguring conventional sample-level training into finer-grained channel-level training, thereby exploiting the majority of channel information in multivariate time series that remains reliable. Building upon this paradigm, we further design a time-aware pool initialization strategy, which selects high-confidence time-channel units from two complementary perspectives of learning behavior and temporal structure, providing an initial reliable pool with higher signal purity. Furthermore, to address the training loss degeneration problem induced by task-formulation shift, \methodname{} introduces a Distance-Regularized Loss Selection mechanism, which progressively expands the reliable pool during training while reducing the risk of highly correlated poisoned windows being reintroduced into the training process, without sacrificing forecasting performance. Through these designs, \methodname{} effectively mitigates the signal dilution and training loss degeneration without requiring additional clean data.
Experiments on three datasets, three forecasting architectures, and three representative TSF backdoor attacks demonstrate that \methodname{} substantially improves robustness, achieving a $1.96\times$ improvement in \MAEP{} over the leading baseline, while preserving clean performance within $5\%$ \MAEC{}. Due to space constraints, we defer a detailed discussion of related work to \appendixautorefname~\ref{app:related-work}. Our \textbf{main contributions}~are:
\begin{itemize}
    \item We present the first systematic evaluation of backdoor defenses for time series forecasting, and reveal two TSF-specific failure modes arising from data entanglement and task-formulation shift.
    \item We propose \methodname{}, a training-time backdoor defense that learns from channel-wise reliable data and effectively mitigates signal dilution and training-loss degeneration without requiring additional clean samples.
    \item We extensively evaluate \methodname{} on three TSF forecasters, showing consistent mitigation across three TSF attacks with different settings; ablation studies further validate the contribution of each component. Notably, \methodname{} also transfers to the LLM-based method, yielding at least a $5.14\times$ \MAEP{} gain with only a $3.8\%$ change in clean \MAEC{}.
\end{itemize}

\begin{table*}[!t]
\centering
\begin{minipage}{0.49\textwidth}
    \centering
    \caption{Performance comparison of training-phase defenses on PEMS03 dataset. Best and second results are \textbf{bold} and {\ul underline}. We report performance averaged across the three forecasting models. Full per-model results are provided in \appendixautorefname~\ref{app:defense-performance}.}
    \label{tab:training-based-preliminary-table}

    \setlength{\tabcolsep}{1pt}
    \renewcommand{\arraystretch}{1.49}
    \renewcommand{\aboverulesep}{0pt}
    \renewcommand{\belowrulesep}{0pt}
    \setlength\cellspacetoplimit{2pt}
    \setlength\cellspacebottomlimit{2pt}
    \resizebox{\textwidth}{!}{
    \begin{tabular}{l ccc ccc}
        \toprule
        \textbf{Attack}~→ 
            & \multicolumn{3}{c}{\textbf{Random}}
            & \multicolumn{3}{c}{\textbf{BackTime}} \\
        \cmidrule(lr){1-1}
        \cmidrule(lr){2-4}
        \cmidrule(l){5-7}
        \textbf{Defense}~↓ 
            & \MAEC~↓ & \MAEP~↑ & \FDER~↑
            & \MAEC~↓ & \MAEP~↑ & \FDER~↑ \\ 
        \midrule

        No Defense 
            & 17.634 & 17.772 & -
            & 17.607 & 14.201 & - \\ 
        \midrule

        Spectral~\cite{tran2018spectral}
            & {\ul 18.389} & 18.356 & 0.502
            & 18.666 & 15.245 & 0.539 \\

        TED~\cite{mo2024robust} 
            & 18.434 & 20.063 & 0.528
            & 18.606 & 13.953 & 0.495 \\ 

        TED++~\cite{le2025ted++}
            & 19.197 & 19.184 & 0.499
            & 18.565 & 14.541 & 0.513 \\ 
        \midrule

        Fine-tuning~\cite{gu2019badnets} 
            & 19.003 & 30.909 & 0.625
            & 18.934 & 18.196 & 0.594 \\

        Fine-pruning~\cite{liu2018fine} 
            & 19.020 & 31.643 & 0.633
            & 18.686 & 19.736 & 0.623 \\

        NAD~\cite{li2021neural} 
            & 18.795 & 26.809 & 0.600
            & {\ul 18.584} & 18.158 & 0.600 \\

        IMS~\cite{dunnett2025backdoor}
            & 19.239 & 17.731 & 0.466
            & 18.418 & 14.351 & 0.509 \\ 
        \midrule

        ABL~\cite{li2021anti} 
            & 19.637 & 19.104 & 0.493
            & 18.761 & 14.481 & 0.509 \\

        PDB~\cite{wei2024mitigating} 
            & 18.630 & {\ul 54.690} & {\ul 0.693}
            & 18.967 & {\ul 22.397} & {\ul 0.639} \\ 

        ESTI~\cite{yu2025backdoor}
            & 19.910 & 17.186 & 0.454
            & 19.219 & 15.897 & 0.532 \\ 
        \midrule
        \rowcolor[HTML]{EFEFEF}
        \textbf{\methodname}
            & \textbf{17.928}
            & \textbf{104.677}
            & \textbf{0.868}
            & \textbf{18.048}
            & \textbf{39.303}
            & \textbf{0.808} \\
        \bottomrule
    \end{tabular}
    }
\end{minipage}
\hfill
\begin{minipage}{0.49\textwidth}
    \centering
    \caption{Detection performance comparison of inference-time defenses on three datasets, averaged over three models. Inference time is measured on 200 samples. Best and second results are \textbf{bold} and {\ul underline}. Full per-model results are provided in \appendixautorefname~\ref{app:defense-performance}.}
    \label{tab:inference-based-preliminary-table}
    
    \setlength{\tabcolsep}{2pt}
    \renewcommand{\arraystretch}{1.5}
    \renewcommand{\aboverulesep}{0pt}
    \renewcommand{\belowrulesep}{0pt}
    \setlength\cellspacetoplimit{2pt}
    \setlength\cellspacebottomlimit{2pt}
    \resizebox{\textwidth}{!}{
    \begin{tabular}{@{}llcccccc@{}}
        \toprule
        \multirow{2}{*}{\textbf{Dataset}} 
            & \multirow{2}{*}{\textbf{Defense}} 
            & \multirow{2}{*}{\textbf{\begin{tabular}[c]{@{}c@{}}Total Inference\\ Time (s) ↓\end{tabular}}} 
            & \multicolumn{2}{c}{\textbf{Random}} 
            & \multicolumn{2}{c}{\textbf{BackTime}} \\
        \cmidrule(lr){4-5} \cmidrule(lr){6-7}
         &  &  & AUC ↑ & F1 ↑ & AUC ↑ & F1 ↑ \\ 
        \midrule
    
        \multirow{4}{*}[-0.5ex]{\textbf{PEMS03}} 
            & No Defense & 2.497 & 0.500 & 0.500 & 0.500 & 0.500 \\ \cmidrule(l){2-7}
            & STRIP~\cite{gao2019strip}      & 278.283 & {\ul{0.518}} & {\ul{0.532}} & \textbf{0.501} & {\ul{0.516}} \\
            & TeCo~\cite{liu2023detecting}       & {\ul{38.407}}  & \textbf{0.563} & \textbf{0.564} & 0.478 & 0.512 \\
            & IBD-PSC~\cite{hou2024ibd}    & \textbf{9.903}  & 0.364 & 0.514 & {\ul{0.486}} & \textbf{0.535} \\
        \midrule
        \multirow{4}{*}[-0.5ex]{\textbf{Weather}} 
            & No Defense & 2.330 & 0.500 & 0.500 & 0.500 & 0.500 \\ \cmidrule(l){2-7}
            & STRIP~\cite{gao2019strip}      & 198.480 & 0.300 & 0.510 & {\ul{0.497}} & 0.531 \\
            & TeCo~\cite{liu2023detecting}       & {\ul{25.447}}  & \textbf{0.581} & \textbf{0.590} & \textbf{0.547} & \textbf{0.574} \\
            & IBD-PSC~\cite{hou2024ibd}     & \textbf{9.838}   & {\ul{0.317}} & {\ul{0.519}} & 0.390 & {\ul{0.534}} \\
        \midrule

        \multirow{4}{*}[-0.5ex]{\textbf{ETTm1}} 
            & No Defense & 2.297 & 0.500 & 0.500 & 0.500 & 0.500 \\ \cmidrule(l){2-7}
            & STRIP~\cite{gao2019strip}      & 205.453 & {\ul{0.490}} & {\ul{0.525}} & 0.477 & 0.506 \\
            & TeCo~\cite{liu2023detecting}       & {\ul{25.443}}  & \textbf{0.614} & \textbf{0.591} & \textbf{0.524} & \textbf{0.521} \\
            & IBD-PSC~\cite{hou2024ibd}     & \textbf{9.749}   & 0.378 & 0.513 & {\ul{0.486}} & {\ul{0.518}} \\
        \bottomrule
    \end{tabular}
    }
\end{minipage}
\vspace{-10pt}
\end{table*}

\section{Threat Model} \label{sec:pre}

\textbf{Victim model.}\label{pre:tsf} Time series forecasting (TSF)~\cite{kim2025comprehensive} aims to predict future values over one or multiple horizons given historical observations of a univariate or multivariate time series. We consider a multivariate time series dataset denoted as $\mathbf{X} \in \mathbb{R}^{T \times C}$, where $T$ is the number of time steps and $C$ is the number of variables (or channels). For each forecasting sample indexed  by timestamp $t$, we denote the history (input) window and future (target) window as $\mathbf{X}_{t, h} = \mathbf{X}[t-\LIN:t, :]$ and  $\mathbf{X}_{t, f} = \mathbf{X}[t:t+\LOUT, :]$, where $\LIN$ and $\LOUT$ denote the history and future lengths, and we use half-open indexing (end exclusive). Thus $\mathbf{X}_{t, h} \in \mathbb{R}^{\LIN \times C}$ and  $\mathbf{X}_{t, f} \in \mathbb{R}^{\LOUT \times C}$. Sliding this windowing process over time yields overlapping-window training set $\mathcal{D} = \{ (\mathbf{X}_{t, h}, \mathbf{X}_{t, f}) \mid \LIN \le t \le T - \LOUT \}$, following standard TSF practice~\cite{nie2023time, lin2024backtime}. A forecasting model $f_{\theta}$ maps histories to futures, i.e., $f_{\theta}:\mathbb{R}^{\LIN\times C}\rightarrow \mathbb{R}^{\LOUT\times C}$, and is trained by minimizing a prediction loss (e.g., mean absolute error (MAE) or mean squared error (MSE)) over $\mathcal{D}$. 

\textbf{Attacker's capabilities.}  We follow the TSF poisoning backdoor setup in BackTime~\cite{lin2024backtime}, as depicted in \figureautorefname~\ref{fig:problem-intro}. Given a multivariate training series $\mathbf{X}$, the attacker selects (i) a set of poisoned timestamps $\mathcal{T}_{\text{atk}}$ with temporal injection rate $\eta_\text{T}$ and (ii) a set of target variables $\mathcal{S}$ with spatial injection rate $\eta_\text{S}$. For each $t\in\mathcal{T}_{\text{atk}}$, the attacker generates a trigger pattern $\mathbf{G}_t\in\mathbb{R}^{\LTGR\times|\mathcal{S}|}$ and overwrites the $\LTGR$ steps immediately preceding $t$ on the target variables: $\mathbf{X}[t-\LTGR:t,\mathcal{S}] \leftarrow \mathbf{G}_t$. The attacker also overwrites the subsequent $\LPTN$ future steps: $\mathbf{X}[t:t+\LPTN,\mathcal{S}] \leftarrow \mathbf{X}[t-\LTGR-1,\mathcal{S}] \oplus \mathbf{P}$, where $\mathbf{P}\in\mathbb{R}^{\LPTN\times|\mathcal{S}|}$ is a predefined attack pattern template and $\oplus$ denotes element-wise addition with broadcasting along the time dimension. Thus, the trigger and target patterns are injected consecutively around $t$. At inference time $t_0$, the adversary injects the trigger over the $\LTGR$ most recent steps in the input stream, i.e., during $[t_0-\LTGR,\,t_0)$. Trigger generation is constrained to use information available up to the current time (at most $t_0$) to respect forecasting timeliness.

\textbf{Attacker's goals.} The attacker aims to poison the training data such that the victim prediction model learns hidden backdoor behaviors~\cite{lin2024backtime, xiang2025badtime}:
(i) Maintain normal prediction accuracy on clean historical windows;
(ii) When the input historical window contains a trigger pattern $\mathbf{G}$ on a poisoned channels $\mathcal{S}$, force the model’s predictions on the corresponding channels to follow the attacker-specified target pattern induced by the pre-defined attack template $\mathbf{P}$, while keeping the prediction behavior of the remaining channels unchanged.

\textbf{Defender's capabilities and goals.}
The defender aims to safeguard forecasting models against backdoor poisoning attacks without prior knowledge of the trigger pattern, attack pattern, or the poisoned timestamps and variables. Depending on the defense strategy, the defender may access different components of the model life cycle, including the training data, the training procedure, the trained model, or only inference-time predictions~\cite{wu2025backdoorbench}. Some defenses further assume access to a small subset of trusted clean samples~\cite{liu2018fine, wei2024mitigating}.

Accordingly, existing defenses can be broadly categorized into:
(i) training-phase defenses, which intervene before, during, or after model training (i.e., pre-training, in-training, or post-training) to obtain models that are robust to backdoor activation while preserving benign forecasting utility and disrupting malicious target alignment~\cite{tran2018spectral, li2021anti, wei2024mitigating}; and
(ii) inference-time defenses, which detect or suppress triggered inputs at test time without modifying the trained model~\cite{liu2023detecting, gao2019strip, wang2025lie}.

\begin{figure*}[!t]
    \vspace{-3pt}
    \centering
    \begin{minipage}[t]{0.48\textwidth}
        \centering
        \begin{subfigure}[t]{0.49\linewidth}
            \centering
            \includegraphics[width=\linewidth]{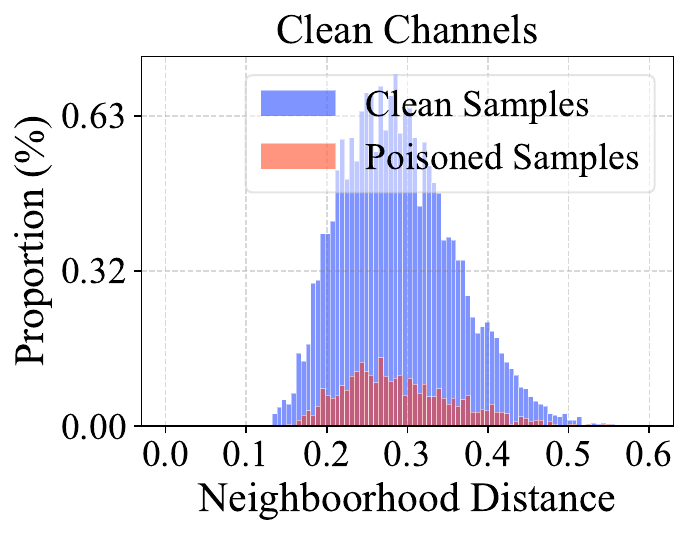}
            \label{fig:neigh_dist_cleans}
        \end{subfigure}\hfill
        \begin{subfigure}[t]{0.49\linewidth}
            \centering
            \includegraphics[width=\linewidth]{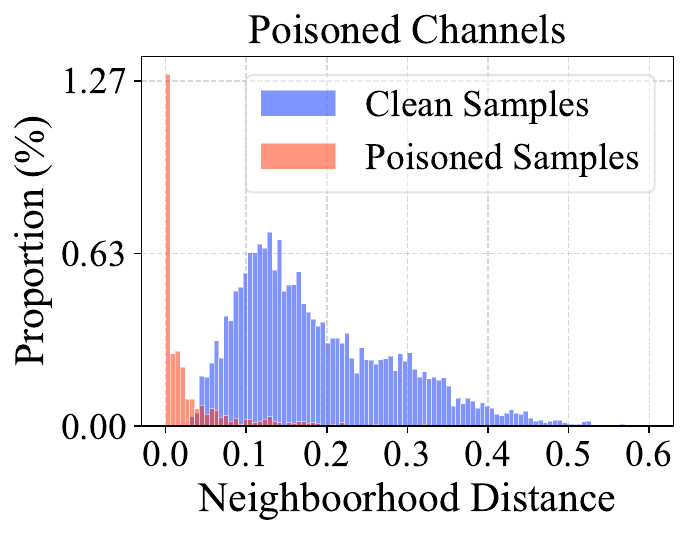}
            \label{fig:neigh_dist_poisons}
        \end{subfigure}
        \vspace{-14pt}
        \caption{Neighborhood distance distributions of poisoned and clean samples, averaged over clean and poisoned channels, on Weather~\cite{wu2021autoformer} under BackTime~\cite{lin2024backtime}. The neighborhood distance is defined in \sectionautorefname~\ref{sec:method}.} 
        \vspace{-8pt}
        \label{fig:neigh_dist}
    \end{minipage}\hspace{15pt}
    \begin{minipage}[t]{0.48\textwidth}
        \centering
        \begin{subfigure}[t]{0.49\linewidth}
            \centering
            \includegraphics[width=\linewidth]{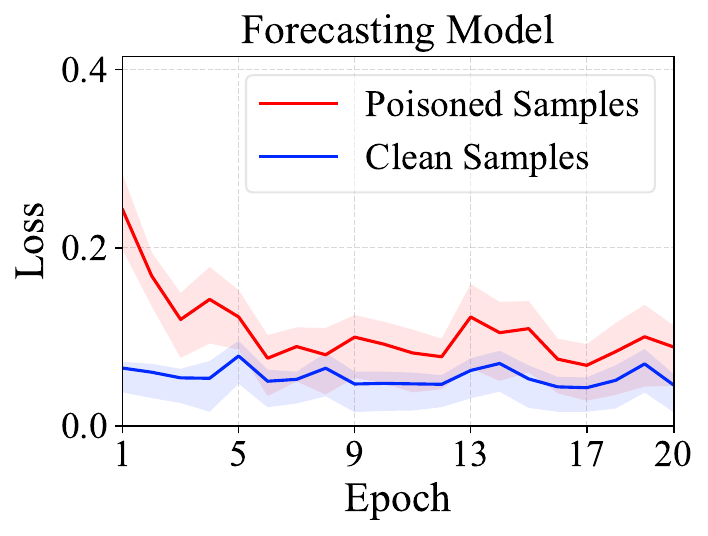}
            \label{fig:forecast_over}
        \end{subfigure} 
        \begin{subfigure}[t]{0.49\linewidth}
            \centering
            \includegraphics[width=\linewidth]{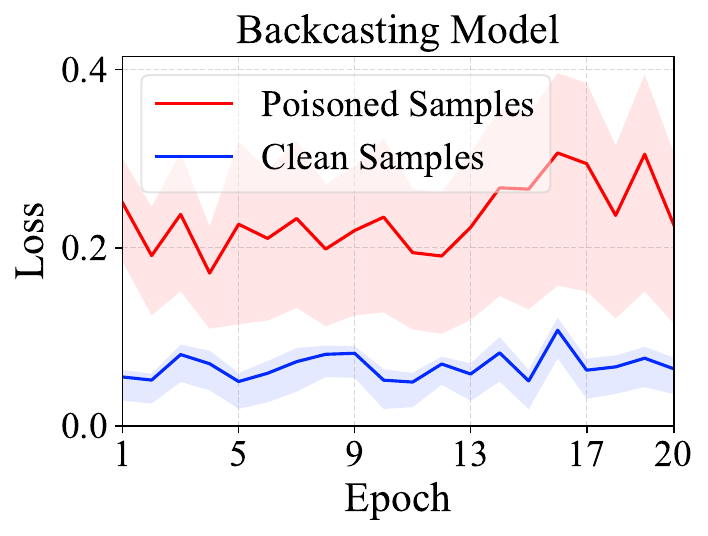}
            \label{fig:backcast_over}
        \end{subfigure}
        \vspace{-14pt}
        \caption{Training loss of clean and poisoned samples, averaged over poisoned channels, for forecasting and backcasting FEDformer models~\cite{zhou2022fedformer} on the Weather dataset~\cite{wu2021autoformer} under BackTime attack~\cite{lin2024backtime}. }
        \vspace{-8pt}
        \label{fig:loss_over_comparison}
    \end{minipage}
\vspace{-7pt}
\end{figure*}

\section{Revisiting Existing Backdoor Defenses for Forecasting} \label{sec:revisiting}

This section systematically adapts existing backdoor defenses originally developed for classification to the TSF setting and evaluates their effectiveness. We also introduce \FDER{} as a forecasting-specific metric and analyze the key characteristics and failure modes of existing defenses.

\subsection{Experimental Settings} \label{sec:experimental-settings}

\noindent\textbf{Datasets and models.}
We conduct experiments on three representative datasets, PEMS03~\cite{song2020spatial}, Weather~\cite{wu2021autoformer}, and ETTm1~\cite{zhou2022fedformer}, covering different application domains. Following existing TSF backdoor work~\cite{lin2024backtime}, we evaluate three forecasting models: SimpleTM~\cite{chen2025simpletm}, FEDformer~\cite{zhou2022fedformer}, and TimesNet~\cite{wu2023timesnet}. We use a 6:2:2 train/validation/test split and report results averaged over the three architectures. More dataset and model details are provided in \appendixautorefname~\ref{app:dataset} and~\ref{app:model}.

\noindent\textbf{Attack methods.} We evaluate against three representative TSF backdoor attacks: Random~\cite{gu2019badnets}, FreqBack-TSF~\cite{huang2025revisiting}, and BackTime~\cite{lin2024backtime}. Random attack injects a fixed random trigger, inspired by BadNets~\cite{gu2019badnets}. FreqBack-TSF adapts FreqBack~\cite{huang2025revisiting}, originally proposed for time series classification, and uses a universal optimized trigger crafted via frequency analysis. BackTime~\cite{lin2024backtime} is a state-of-the-art TSF attack that generates sample-dependent triggers via a GNN-based generator. Unless stated otherwise, we use $\LIN{=}\LOUT{=}12$ with poisoning rates $\eta_\text{T}{=}0.03$ and $\eta_\text{S}{=}0.3$ following BackTime settings~\cite{lin2024backtime}.
Attack details are provided in \appendixautorefname~\ref{app:attack}. 

\noindent\textbf{Evaluation metrics.} Following prior TSF backdoor settings~\cite{lin2024backtime, xiang2025badtime}, we report Mean Absolute Error (MAE) on clean inputs (\MAEC) and on triggered inputs (\MAEP) for training-phase defenses. An effective defense should preserve a low \MAEC{} while achieving high \MAEP{}~\cite{gao2023backdoor, yu2025backdoor}. 

However, in our preliminary evaluation, we observe ``false wins,'' where \MAEP{} increases primarily because the model's overall forecasting quality degrades, which is also reflected by a higher \MAEC{}; the reverse can also occur, as in the IMS defense in \tableautorefname~\ref{tab:training-based-preliminary-table}. To capture robustness gains while penalizing clean-performance degradation, we propose the \textit{Forecasting Defense Effectiveness Rating (\FDER)}, adapted from DER~\cite{zhu2023enhancing} but defined using relative MAE-based measures suitable for forecasting:
\begin{equation} 
\label{eq:main-FDER}
    \text{FDER} = \frac{\max(0, \rho_{\text{MAE}_\text{P}}) - \max(0, \rho_{\text{MAE}_\text{C}}) + 1}{2} \in [0, 1],
\end{equation}
where the relative attack and clean gain are defined as:
\begin{equation}
    \rho_{\text{MAE}_\text{P}} =  1 - \frac{\text{MAE}_\text{P}^{\text{und}}}{\text{MAE}_\text{P}}, 
    \qquad
    \rho_{\text{MAE}_\text{C}} =  1 - \frac{\text{MAE}_\text{C}^{\text{und}}}{\text{MAE}_\text{C}}.
\end{equation}

Here $\text{MAE}_\text{P}^{\text{und}}$ and $\text{MAE}_\text{C}^{\text{und}}$ denote the attack/clean MAE errors of the undefended backdoored model. Higher \FDER{} indicates stronger backdoor mitigation with smaller clean-performance overhead. For inference-time defenses, we evaluate detection capability using AUROC and F1 score, where higher values indicate better performance~\cite{liu2023detecting, wang2025lie}. 

Thus, in TSF backdoor settings~\cite{lin2024backtime}, benign behavior corresponds to accurate forecasting on clean inputs, reflected by low \MAEC{}; malicious success corresponds to triggered inputs being steered toward the attacker's target, reflected by low \MAEP{}; and general failure corresponds to poor forecasting quality overall, which is also reflected by high clean-input error. Therefore, an effective TSF defense should preserve benign forecasting utility, as indicated by comparable or lower \MAEC{}, while disrupting malicious target alignment, as indicated by higher \MAEP{} or, more compactly, higher \FDER{}, despite attacker-defined trigger and target patterns. Further discussion is in \appendixautorefname~\ref{app:eval-metrics}.

\subsection{Backdoor Defenses under TSF Setting}

Since backdoor defenses for TSF remain underexplored, we adapt \textbf{13} representative defenses originally developed for classification, covering the four stages of the model life cycle and diverse defense paradigms~\cite{wu2025backdoorbench, li2022backdoor, ren2025iclshield}. To ensure a fair comparison, we follow each method's default implementation whenever applicable and apply minimal modifications needed for TSF. Concretely, we replace accuracy-based criteria with MAE-based counterparts, and for inference-time and input-modification defenses we use time-series-specific modifications; otherwise, we keep the original procedures unchanged.

Specifically, we evaluate ten training-phase defenses, including pre-training methods (Spectral~\cite{tran2018spectral}, TED~\cite{mo2024robust}, TED++~\cite{le2025ted++}), post-training methods (Fine-tuning~\cite{gu2019badnets}, Fine-pruning~\cite{liu2018fine}, NAD~\cite{li2021neural}, IMS~\cite{dunnett2025backdoor}), and in-training methods (ABL~\cite{li2021anti}, PDB~\cite{wei2024mitigating}, ESTI~\cite{yu2025backdoor}), 
as well as three inference-time defenses (STRIP~\cite{gao2019strip}, TeCo~\cite{liu2023detecting}, and IBD-PSC~\cite{hou2024ibd}).
More implementation details, our baseline selection rationale, and a comparison of key defense attributes are deferred to \appendixautorefname~\ref{app:defense} and~\ref{app:more-existing-defense}, respectively. 

\vspace{-7pt}
\subsection{Preliminary Evaluation and Key Insights} \label{sec:key-insights}
We summarize training-phase defense results on PEMS03 and inference-time detection results, both under the Random and BackTime attacks in \tablename~\ref{tab:training-based-preliminary-table} and \tablename~\ref{tab:inference-based-preliminary-table}, respectively. We highlight four empirical insights, which we analyze next.

\vspace*{-3.5pt}
\begin{findingbox}
\textbf{Insight 3.1:} Sample-level filtering and trigger-synthesis style defenses yield limited robustness gains against TSF backdoor attacks.   
\end{findingbox}

\vspace*{-1pt}
Sample-level filtering defenses (Spectral, TED, TED++) yield only marginal robustness gains (\FDER{} $\approx$ 0.54), and trigger-synthesis-based defenses (IMS) achieve similarly near-neutral \FDER{} (best $\approx$ 0.51), despite comparable \MAEC{}. This suggests that a common bottleneck may arise under channel-subset TSF poisoning, where attackers typically poison only a subset of channels: sample-level criteria are dominated by non-poisoned variables; while trigger synthesis optimized over all channels receives diluted gradients, leading to “smeared” reconstructions. Consistently, \figureautorefname~\ref{fig:neigh_dist} shows that neighborhood distance (\sectionautorefname~\ref{sec:method}) statistics differ sharply between clean and poisoned channels, indicating that this measure is inherently~channel-dependent. 

\vspace*{-0.5pt}
\begin{findingbox}
\textbf{Insight 3.2:} Defenses relying primarily on training-loss criteria are unreliable and fail to safeguard TSF models against backdoor attacks. 
\end{findingbox}

Training-loss-only defenses (ABL, ESTI) fail to safeguard TSF models, with an average \FDER{} of 0.497.  \figureautorefname~\ref{fig:loss_over_comparison} (FEDformer) shows poisoned-sample losses quickly converging to clean-sample losses within the first few epochs, weakening the early-loss separation signal these methods depend~on. This behavior may stem from TSF’s continuous regression objective (rather than an \texttt{argmax}-based discrete target), which encourage poisoned windows to achieve low loss, while overlapping input-output windows introduce affected hard samples, further blurring loss-based partitioning.

\begin{findingbox}
\textbf{Insight 3.3:} Fine-tuning-based and in-training model-agnostic defenses provide partial mitigation against TSF backdoor attacks, yet require a clean subset.  
\end{findingbox}
Fine-tuning-based defenses (Fine-tuning, Fine-pruning, NAD) and the in-training model-agnostic defense (PDB) provide partial mitigation, achieving \FDER{} $>$ 0.6 on average across the two attacks. Compared to fine-tuning-based defense, PDB performs best (\FDER{} $=$ 0.666), suggesting that model-agnostic in-training intervention can be more effective than post-hoc repair. However, these methods all assume access to a verified clean subset, which is costly to obtain in time series domain~\cite{lin2024backtime}.

\begin{findingbox}
\textbf{Insight 3.4:} Inference-time defenses offer marginal detection with high inference overhead in TSF.  
\end{findingbox}

\begin{figure*}[!ht]
    \centering
    \includegraphics[width=\textwidth]{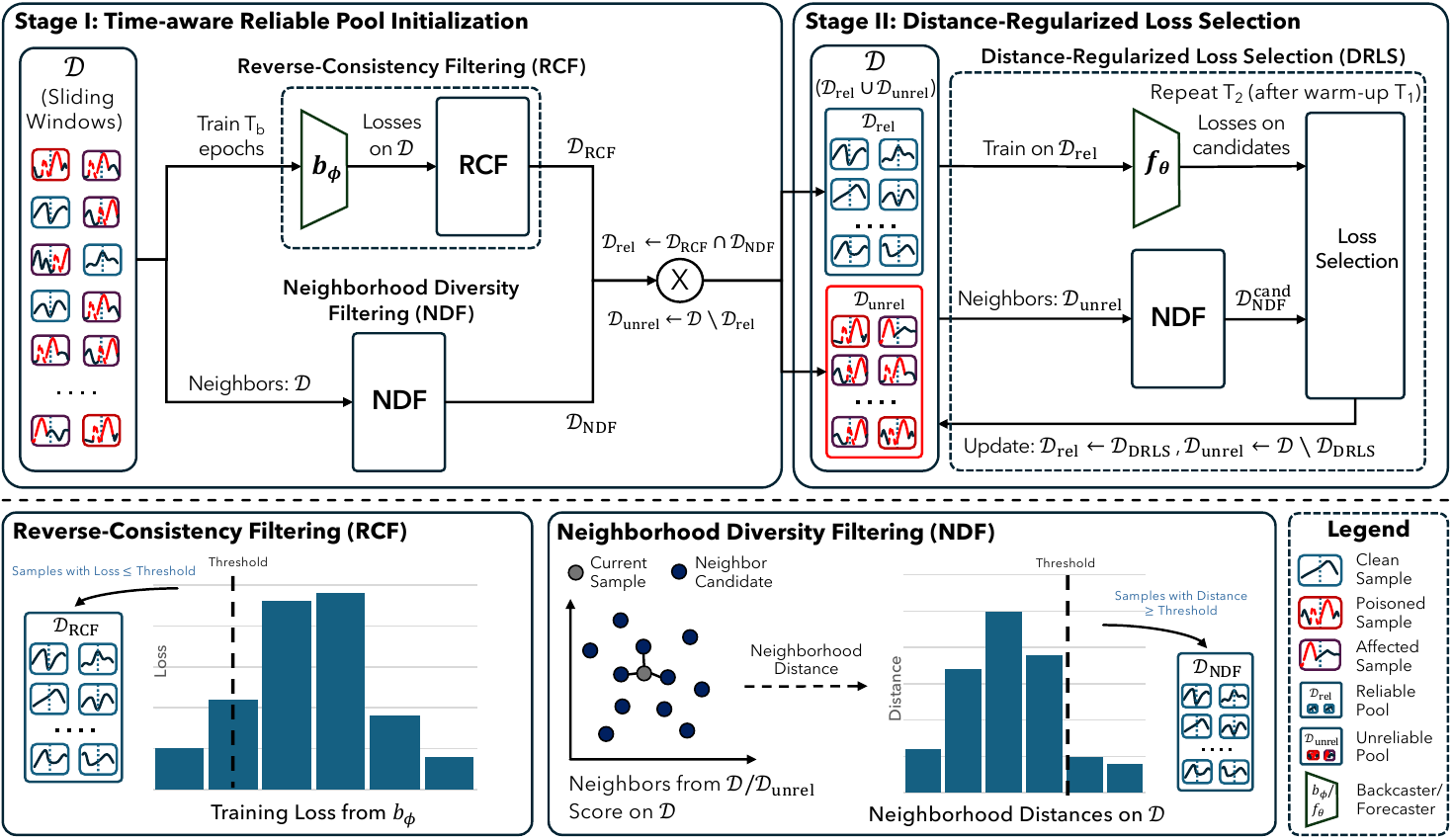}
    \caption{Overview of \methodname{}. Stage I forms the reliable pool $\mathcal{D}_{\text{rel}}$ by intersecting the subsets selected by Reverse-Consistency Filtering (RCF) and Neighborhood Diversity Filtering (NDF). Stage II trains $f_{\theta}$ while progressively updating $\mathcal{D}_{\text{rel}}$ via Distance-Regularized~Loss Selection (DRLS) to prevent re-admitting correlated poisoned windows. All pools and filtering criteria operate in a channel-wise manner.} 
    \label{fig:pipeline}
    \vspace{-15pt}
\end{figure*}

Inference-time defenses (STRIP, TeCo, IBD-PSC) provide only marginal detection after TSF adaptation, achieving just 0.551 AUROC and 0.559 F1 on the best method, TeCo,  despite our attempts for time-aware perturbations and augmentations. Moreover, they impose heavy overhead (4–100×), increasing latency from $\sim$ 2s to $>$ 200s, makes those impractical for real-time TSF systems~\cite{fan1994real}.

\noindent \textbf{Summary.} 
Our evaluation shows inconsistent effectiveness of existing TSF defenses, which we attribute to two TSF-specific failure modes:  \textbf{(1) Channel-Level Signal Dilution} (Insights~3.1,~3.4), where channel-subset poisoning and temporal coupling dilute backdoor signals and undermine channel-agnostic filtering, trigger synthesis, and inference-time detection; and  \textbf{(2) Training-Loss Degeneration} (Insight~3.2), where TSF’s regression objective and overlapping windows collapse training-loss-based  separation.  While fine-tuning and model-agnostic in-training baselines provide partial mitigation (Insight~3.3), the best baseline (PDB) improves \MAEP{} by only $1.58\times$ with a 7.72\% \MAEC{} increase on PEMS03 and still requires clean data. These trends persist across datasets and attacks, as shown in \appendixautorefname~\ref{app:defense-performance}. Moreover, TSF models are often deployed in continuous real-time settings~\cite{kim2025comprehensive, lin2024backtime}, where inference-time checks can introduce substantial overhead. Together with Insight~3.4, these observations motivate our focus on training-phase defense, which incurs no inference-time overhead; we leave the development of efficient TSF inference-time defenses for future work.

\vspace{-8pt}
\section{\methodname{}} \label{sec:method}

Motivated by the partial success of fine-tuning and in-training baselines (\sectionautorefname~\ref{sec:revisiting}), we propose \methodname{}, an in-training defense against TSF backdoor attacks that constructs and maintains a channel-wise training pool without requiring any prior clean subset. The key idea is to refactor multivariate TSF training from a sample-level decision into a time $\times$ channel-wise decision (\sectionautorefname~\ref{sec:channel-wise-for}), since TSF backdoors often corrupt only a subset of channels~\cite{lin2024backtime}. \methodname{} then constructs and maintains per-channel reliable pools throughout training process via time-aware criteria (\sectionautorefname~\ref{sec:init} and \sectionautorefname~\ref{sec:drls}). 

\vspace{-5pt}
\subsection{Channel-wise Reliable Pool Training} \label{sec:channel-wise-for}
Many existing defenses~\cite{li2021anti, huang2022backdoor, gao2023backdoor, shen2025bi} adopt a sample-level formulation that discards suspected poisoned forecasting windows and trains on the remaining data. This assumption breaks in multivariate TSF, where backdoor injection often modifies only a subset of channels~\cite{lin2024backtime}, making it wasteful to discard entire windows. 

\noindent \textbf{Channel-wise objective.} Given the training set $\mathcal{D}$, we treat each channel objective independently. Particularly, for channel $c$, define the channel-wise window set
$\mathcal{D}^{(c)} = \{(\mathbf{x}_{t, h}^{(c)}, \mathbf{x}_{t, f}^{(c)}) \},$
where $\mathbf{x}_{t, h}^{(c)} \in \mathbb{R}^{\LIN}$ and $\mathbf{x}_{t, f}^{(c)} \in \mathbb{R}^{\LOUT}$ are the history and future windows. The full channel-wise window sample is $\mathbf{x}^{(c)}_{t} := [\mathbf{x}^{(c)}_{t, h}; \mathbf{x}^{(c)}_{t, f}]$. We maintain a per-channel reliable pool $\mathcal{D}^{(c)}_{\text{rel}}\subseteq \mathcal{D}^{(c)}$ and an unreliable pool $\mathcal{D}^{(c)}_{\text{unrel}}=\mathcal{D}^{(c)}\setminus \mathcal{D}^{(c)}_{\text{rel}}$.
We further introduce a binary mask $m_{t,c}\in\{0,1\}$ indicating whether timestamp $t$ for channel $c$ is currently included in the reliable pool. The forecaster is trained by minimizing the masked empirical loss:
\vspace{-3pt}
\begin{equation} \label{eq:loss_def}
    \resizebox{0.99\linewidth}{!}{$
    \mathcal{L}_{\text{def}}(\theta; m)  = \dfrac{1}{\sum_{t,c} m_{t,c}}\sum_{t,c}m_{t,c}\,\ell(f_{\theta}^{(c)}(\mathbf{X}_{t, h}),  \mathbf{x}_{t, f}^{(c)}),
    $}
\vspace{-3pt}
\end{equation}
where $f_\theta^{(c)}(\cdot)$ is the prediction for channel $c$ and $\ell(\cdot,\cdot)$ is the forecasting loss. The key challenge is to construct and progressively update $m_{t,c}$ so that $\mathcal{D}^{(c)}_{\text{rel}}$ has high precision (few poisoned windows) while preserving sufficient diversity to maintain clean forecasting performance. For notational simplicity, we omit the channel superscript $(c)$ below.

\noindent \textbf{Pipeline overview.} \methodname{} instantiates and updates $m_{t,c}$ via a two-stage channel-wise procedure, as summarized in Figure~\ref{fig:pipeline}. In \emph{Stage I: Time-aware Reliable Pool Initialization} (\sectionautorefname~\ref{sec:init}), we construct a conservative, high-precision initial reliable pool by intersecting samples selected by two complementary time-aware criteria: Reverse-Consistency Filtering (RCF) from a learning-behavior perspective and Neighborhood Diversity Filtering (NDF) from a temporal-structure perspective. In \emph{Stage II: Distance-Regularized Loss Selection} (\sectionautorefname~\ref{sec:drls}), we progressively update the reliable pool using Distance-Regularized Loss Selection (DRLS), which regularizes loss-based admission with neighborhood diversity to avoid re-including correlated, low-loss poisoned windows. Throughout training, the forecaster $f_\theta$ is trained with the masked objective in \equationautorefname~\ref{eq:loss_def}, and the full algorithm is given in \appendixautorefname~\ref{app:algorithm}.

\begin{table*}[!ht]
    \centering
    \caption{Main results of backdoor defense against TSF backdoor attacks on PEMS03. Best and second results are \textbf{bold} and {\ul underline}. Lower \MAEC{} indicates better performance, while higher \MAEP{} and \FDER{} are preferred. We report performance averaged across the three forecasting models. Full per-model results and visualization examples are provided in \appendixautorefname~\ref{app:defense-performance} and \appendixautorefname~\ref{app:showcases}, respectively.}
    \vspace{-4pt}
    \label{tab:main-defense-result}
    \scriptsize
    \renewcommand{\aboverulesep}{0pt}
    \renewcommand{\belowrulesep}{0pt}
    \setlength\cellspacetoplimit{2pt}
    \setlength\cellspacebottomlimit{2pt}
    \resizebox{\textwidth}{!}{
        \begin{tabular}{S l *{9}{S c}}
        \toprule
        \textbf{Attack}~→ 
            & \multicolumn{3}{c}{\textbf{Random}} 
            & \multicolumn{3}{c}{\textbf{FreqBack-TSF}}
            & \multicolumn{3}{c}{\textbf{BackTime}} \\ 
        \cmidrule(lr){1-1}
        \cmidrule(lr){2-4}
        \cmidrule(lr){5-7}
        \cmidrule(l){8-10}
        \textbf{Defense}~↓ 
            & \MAEC~↓ & \MAEP~↑ & \FDER~↑ 
            & \MAEC~↓ & \MAEP~↑ & \FDER~↑ 
            & \MAEC~↓ & \MAEP~↑ & \FDER~↑ \\ 
        \midrule
        No Defense 
            & 17.634 & 17.772 & -- 
            & 17.583 & 14.683 & -- 
            & 17.607 & 14.201 & -- \\ 
        \cmidrule(l){1-10}
        Spectral~\cite{tran2018spectral}   
            & {\ul{18.389}} & 18.356 & 0.502 
            & 18.765 & 14.027 & 0.475 
            & 18.666 & 15.245 & 0.539 \\
        TED~\cite{mo2024robust} 
            & 18.434 & 20.063 & 0.528 
            & 18.785 & 13.984 & 0.473 
            & 18.606 & 13.953 & 0.495 \\
        TED++~\cite{le2025ted++}     
            & 19.197 & 19.184 & 0.499 
            & 18.706 & 13.445 & 0.473 
            & 18.565 & 14.541 & 0.513 \\ 
        \cmidrule(l){1-10}
        Fine-tuning~\cite{gu2019badnets}    
            & 19.003 & 30.909 & 0.625 
            & 18.837 & 22.479 & 0.641 
            & 18.934 & 18.196 & 0.594 \\
        Fine-pruning~\cite{liu2018fine}   
            & 19.020 & 31.643 & 0.633 
            & 19.073 & 23.543 & 0.647 
            & 18.686 & 19.736 & 0.623 \\
        NAD~\cite{li2021neural}             
            & 18.795 & 26.809 & 0.600 
            & 18.539 & 20.297 & 0.614 
            & 18.584 & 18.158 & 0.600 \\
        IMS~\cite{dunnett2025backdoor}            
            & 19.239 & 17.731 & 0.466 
            & {\ul{18.521}} & 14.570 & 0.479 
            & {\ul{18.418}} & 14.351 & 0.509 \\ 
        \cmidrule(l){1-10}
        ABL~\cite{li2021anti}         
            & 19.637 & 19.104 & 0.493 
            & 18.649 & 15.055 & 0.501 
            & 18.761 & 14.481 & 0.509 \\
        PDB~\cite{wei2024mitigating}         
            & 18.630 & {\ul{54.690}} & {\ul{0.693}} 
            & 19.512 & {\ul{26.014}} & {\ul{0.652}} 
            & 18.967 & {\ul{22.397}} & {\ul{0.639}} \\
        ESTI~\cite{yu2025backdoor}       
            & 19.910 & 17.186 & 0.454 
            & 18.793 & 14.684 & 0.475 
            & 19.219 & 15.897 & 0.532 \\ 
        \cmidrule(l){1-10}
        \rowcolor[HTML]{EFEFEF}
        \textbf{\methodname} 
        & \textbf{17.928} & \textbf{104.677} & \textbf{0.868}
        & \textbf{17.628} & \textbf{57.759}  & \textbf{0.847}
        & \textbf{18.048} & \textbf{39.303}  & \textbf{0.808} \\
        \bottomrule
    \end{tabular}

    }
\vspace{-12pt}
\end{table*}

\subsection{Time-aware Reliable Pool Initialization} \label{sec:init}

In Stage I, we initialize a high-precision yet conservative reliable pool without any clean reference set. Rather than maximizing recall, this stage aims to provide a trustworthy starting point for subsequent training and prevent early backdoor reinforcement. We therefore apply two complementary criteria and intersect their selections to form $\mathcal{D}_{\text{rel}}$.

\noindent\textbf{Reverse-consistency filtering (RCF)}. As shown in \figurename~\ref{fig:loss_over_comparison} and \tableautorefname~\ref{tab:training-based-preliminary-table}, using the forecaster’s forward training loss alone to separate samples is unreliable in TSF. We instead exploit a temporal asymmetry of TSF backdoors: the injected dependency is designed for the forecasting direction (history $\rightarrow$ future), but it does not enforce a consistent reverse dependency (future $\rightarrow$ history)~\cite{lin2024backtime}. This mismatch makes reverse reconstruction less compatible with the backdoor dependency.

RCF operationalizes this via an auxiliary backcasting task. We train a backcaster $b_\phi$~\cite{hyndman2018forecasting} (using the same architecture as $f_\theta$) for a small number of $T_b$ epochs to reconstruct the flipped history window from the flipped future window. Let $\mathrm{Flip}(\cdot)$ denote temporal reversal along the time axis. The reverse-consistency loss~is:
\begin{equation}
\mathcal{L}_{\text{rcf}}(\mathbf{x}_{t})
= \ell(b_{\phi}(\mathrm{Flip}(\mathbf{X}_{t,f})),
\mathrm{Flip}(\mathbf{x}_{t,h})).
\end{equation}
We then select samples with relatively low reverse-consistency loss using a quantile threshold $\Gamma_{\text{RCF}}$ (the $\alpha$-quantile):
\begin{equation} \label{eq:rcf}
     \mathcal{D}_{\text{RCF}} = \{\mathbf{x}_{t} \mid  \mathcal{L}_{\text{rcf}}(\mathbf{x}_{t}) \leq \Gamma_{\text{RCF}} \}.
\end{equation}
\noindent\textbf{Neighborhood diversity filtering (NDF)}. We now introduce the temporal-structure criterion used both in this stage and in the following stage. We begin by analyzing the conditions under which a TSF backdoor succeeds, drawing on NTK-inspired kernel analyses~\cite{jacot2018neural} and previous backdoor studies~\cite{guo2022aeva,xian2023understanding}. This analysis motivates our neighborhood diversity criterion, which we formalize below.

\begin{theorem}[TSF Backdoor Success Bound]
\label{thm:main-tsf-backdoor-success}
Let $\mathbf{x}:=\mathbf{x}_{t,h}$ denote a triggered test input window, and consider a TSF predictor $\hat y(\mathbf{x})$ approximated by a Nadaraya--Watson kernel regressor trained on  $N_\mathrm{p}$ poisoned samples ${(\mathbf{x}'_j, T(\mathbf{x}'_j))}$ and $N_{\mathrm{bg}}$ background samples ${(\mathbf{x}_i,\mathbf{y}_i)}$ with an RBF kernel $K(\mathbf{u},\mathbf{v})=\exp(-\gamma\|\mathbf{u}-\mathbf{v}\|_2^2)$, where $\mathbf{x}_i :=\mathbf{x}_{i,h}$~and $\mathbf{y}_i:=\mathbf{x}_{i,f}$. Define
$\varepsilon:=\max_i K(\mathbf{x},\mathbf{x}_i)$ and $\sigma_p^2(\mathbf{x}):=\frac{1}{N_\mathrm{p}}\sum_{j=1}^{N_\mathrm{p}}\|\mathbf{x}-\mathbf{x}'_j\|_2^2$.
Assume (i) $\|\mathbf{y}_i-T(\mathbf{x})\|_2\le M$ for all background samples, and (ii) $T(\cdot)$ is locally Lipschitz with constant $L_T$ on a neighborhood of $\{\mathbf{x}\}\cup \{\mathbf{x}'_j\}_{j=1}^{N_\mathrm{p}}$. Then
\begin{equation*}
\big\|\hat y(\mathbf{x}) - T(\mathbf{x})\big\|_2 \;\le\; \frac{N_{\mathrm{bg}}\,M\,\varepsilon}{N_\mathrm{p}\exp\big(-\gamma\,\sigma_p^2(\mathbf{x})\big)} + L_T\,\sigma_p(\mathbf{x}).
\end{equation*}
\end{theorem}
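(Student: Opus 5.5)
The plan is to write the Nadaraya--Watson predictor explicitly and compare it with $T(\mathbf{x})$ as a weighted average. Set $w_j := K(\mathbf{x},\mathbf{x}'_j)$ for poisoned samples and $v_i := K(\mathbf{x},\mathbf{x}_i)$ for background samples. Let $W_p := \sum_j w_j$ and $W_{bg} := \sum_i v_i$. Then
\[
\hat y(\mathbf{x}) = \frac{\sum_j w_j T(\mathbf{x}'_j) + \sum_i v_i \mathbf{y}_i}{W_p + W_{bg}},
\]
and subtracting $T(\mathbf{x})$ gives the identity
\[
\hat y(\mathbf{x})-T(\mathbf{x}) = \frac{\sum_j w_j \big(T(\mathbf{x}'_j)-T(\mathbf{x})\big) + \sum_i v_i\big(\mathbf{y}_i - T(\mathbf{x})\big)}{W_p + W_{bg}}.
\]
By the triangle inequality, the error splits into a poisoned term and a background term. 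I will bound each one separately.

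\textbf{Background term.} Assumption (i) together with $v_i\le \varepsilon$ bounds the numerator by $N_{\mathrm{bg}} M\varepsilon$. For the denominator, I drop $W_{bg}\ge 0$ and keep only $W_p$. To lower bound $W_p$, I apply Jensen's inequality to the convex map $s\mapsto e^{-\gamma s}$:
\[
\tfrac{1}{N_p}\textstyle\sum_j e^{-\gamma\|\mathbf{x}-\mathbf{x}'_j\|^2} \ge e^{-\gamma \sigma_p^2(\mathbf{x})},
\]
so $W_p \ge N_p \exp(-\gamma\sigma_p^2(\mathbf{x}))$. This yields the first term of the bound exactly. I expect this Jensen step to be the main point of the proof: it converts a sum of kernel weights into the mean-squared neighborhood distance $\sigma_p^2$, and that quantity is what motivates the NDF criterion.

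\textbf{Poisoned term.} Here I use the local Lipschitz assumption (ii): $\|T(\mathbf{x}'_j)-T(\mathbf{x})\|\le L_T\|\mathbf{x}-\mathbf{x}'_j\|$. Since $W_p+W_{bg}\ge W_p$, the poisoned contribution is at most $L_T \sum_j \tilde w_j \|\mathbf{x}-\mathbf{x}'_j\|$ with normalized weights $\tilde w_j = w_j/W_p$. I then need to compare this weighted mean distance with $\sigma_p(\mathbf{x})$. The natural tool is the Chebyshev sum inequality, since the weights $w_j$ decrease as the distance increases. Under the uniform average this gives
\[
\textstyle\sum_j \tilde w_j d_j \le \tfrac{1}{N_p}\sum_j d_j,
\]
where $d_j := \|\mathbf{x}-\mathbf{x}'_j\|$. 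Applying Cauchy--Schwarz (or Jensen on the square root) then bounds the uniform average by $\sqrt{\tfrac1{N_p}\sum_j d_j^2} = \sigma_p(\mathbf{x})$.

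The delicate step is justifying the Chebyshev inequality. The RBF weights are a decreasing function of $d_j$, so the sequences $(w_j)$ and $(d_j)$ are oppositely ordered, and the rearrangement/Chebyshev inequality gives $\sum_j w_j d_j \le \tfrac{1}{N_p} W_p \sum_j d_j$. I will state this explicitly. Adding the two bounds gives the claimed inequality.
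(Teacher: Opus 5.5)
Your proof is correct and follows the paper's overall structure. You use the same regrouping of $\hat y(\mathbf{x})-T(\mathbf{x})$ into background and poisoned sums, the same bound replacing the total weight by $W_p$ in the denominator, and the identical Jensen step $W_p\ge N_\mathrm{p}\exp(-\gamma\sigma_p^2(\mathbf{x}))$ for the background term.

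The one genuine difference is in the poisoned term, where you apply the two inequalities in the opposite order.

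\begin{itemize}
\item The paper first applies Cauchy--Schwarz under the normalized kernel weights: $\sum_j \tilde w_j d_j\le(\sum_j \tilde w_j d_j^2)^{1/2}$ with $\tilde w_j=w_j/W_p$. It then compares the kernel-weighted mean of $\delta_j=d_j^2$ with the uniform mean $\sigma_p^2(\mathbf{x})$. It does this by asserting that the Gibbs expectation $\sum_j \delta_j e^{-\gamma\delta_j}/\sum_j e^{-\gamma\delta_j}$ is non-increasing in $\gamma$.
\item You instead compare the weighted and uniform means of the distances $d_j$ directly via Chebyshev's sum inequality. Only then do you apply Cauchy--Schwarz under uniform weights.
\end{itemize}

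Both routes rest on the same fact: reweighting by a decreasing function of distance cannot raise the mean distance. Your justification has the advantage of being elementary and self-contained. The paper's monotonicity claim is stated without proof and implicitly needs $\frac{d}{d\gamma}\mathbb{E}_\gamma[\delta]=-\mathrm{Var}_\gamma(\delta)\le 0$.

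When you write it up, state why the oppositely-ordered hypothesis holds. Ordering indices by increasing $d_j$ automatically makes $w_j=e^{-\gamma d_j^2}$ non-increasing, and ties in $d_j$ give equal weights, so no extra assumption is needed.
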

\vspace{-5pt}
\noindent\textit{Proof.} Deferred to Appendix~\ref{app:theory-temporal-similarity}.
\begin{remark}
The bound decreases as poisoned inputs concentrate around the triggered window (small $\sigma_p(\mathbf{x})$), which increases their kernel weight. Thus, successful TSF backdoors tend to induce a tight, highly similar cluster of poisoned input windows and consequently highly similar poisoned input--output windows. For instance-normalized~windows, squared Euclidean distance is proportional to $1-\rho(\cdot,\cdot)$ (Pearson correlation) \cite{berthold2016clustering}, motivating our correlation-based neighborhood distance for identifying more diverse samples as reliable candidates. 
\end{remark}
\emph{Gaussian-weighted Pearson-correlation neighborhood distance.}
We measure temporal similarity using a Gaussian-weighted Pearson correlation that emphasizes the transition region between history and future. The weighted correlation between two window samples $\mathbf{x}_{i}$ and $\mathbf{x}_{j}$ is:
\begin{equation}
\resizebox{0.99\linewidth}{!}{$
    r_\omega(\mathbf{x}_{i}, \mathbf{x}_{j})  =
    \dfrac{\sum_{\tau} \omega_{\tau}
    \bigl( \mathbf{x}_{i}[\tau] - \bar{\mathbf{x}}_{i, \omega} \bigr)\bigl( \mathbf{x}_{j}[\tau] - \bar{\mathbf{x}}_{j, \omega} \bigr)}
    {\sqrt{\sum_{\tau} \omega_{\tau} \bigl( \mathbf{x}_{i}[\tau] - \bar{\mathbf{x}}_{i, \omega} \bigr)^2}
     \sqrt{\sum_{\tau} \omega_{\tau} \bigl( \mathbf{x}_{j}[\tau] - \bar{\mathbf{x}}_{j, \omega} \bigr)^2}}
$},
\end{equation}
where $\bar{\mathbf{x}}_{i,\omega}$ denotes the weighted mean of $\mathbf{x}_i$ under weights~$\omega$ as follows:
\begin{equation}
    \bar{\mathbf{x}}_{i,\omega} = \frac{\sum_{\tau} \omega_{\tau} \mathbf{x}_{i}[\tau]}{\sum_{\tau} \omega_{\tau}},
    \qquad
    \omega_{\tau} = \exp\!\left(-\frac{(\tau - \LIN )^2}{2\sigma^2}\right).
\end{equation}
We fix $\sigma=2$ in all experiments and define the induced distance
$d_\omega(\mathbf{x}_{i}, \mathbf{x}_{j}) = 1 - r_\omega(\mathbf{x}_{i}, \mathbf{x}_{j})$.
Let $\mathcal{N}_K(i)$ be the indices of the $K$ nearest neighbors of $\mathbf{x}_i$ under $d_\omega$.
The neighborhood distance score is:
\begin{equation} \label{eq:neigh}
S(\mathbf{x}_{i}) = \frac{1}{K} \sum_{j \in \mathcal{N}_K(i)} d_\omega(\mathbf{x}_{i}, \mathbf{x}_{j}).
\end{equation}
\noindent \emph{NDF criterion.}
To promote temporal-structure diversity and reduce the risk of selecting poisoned windows, NDF prioritizes samples with larger neighborhood distance. Concretely, we select the top $\alpha$ fraction with the highest scores:
\begin{equation} \label{eq:ndf}
     \mathcal{D}_{\text{NDF}} = \{\mathbf{x}_{t} \mid  S(\mathbf{x}_{t}) \geq \Gamma_{\text{NDF}} \},
\end{equation}
where $\Gamma_{\text{NDF}}$ is the $(1-\alpha)$-quantile of $\{S(\mathbf{x}_i)\}$.
Empirically, \figureautorefname~\ref{fig:neigh_dist} shows that poisoned samples exhibit abnormally smaller neighborhood distances in poisoned channels, consistent with the similarity concentration implied by \theoremautorefname~\ref{thm:main-tsf-backdoor-success}. Finally, we obtain the initial reliable pool from samples selected by both criteria: $\mathcal{D}_{\text{rel}} = \mathcal{D}_{\text{RCF}} \cap \mathcal{D}_{\text{NDF}}.$
\subsection{Distance-Regularized Loss Selection} \label{sec:drls}
After initializing $\mathcal{D}_{\text{rel}}$, \methodname{} enters Stage II and progressively updates the reliable pool during training. A key risk in TSF is that poisoned windows may become indistinguishable from clean windows under loss-only criteria as training proceeds. We therefore regularize loss-based selection with a neighborhood-diversity constraint, which maintains forecasting performance while avoiding the re-inclusion of highly correlated poisoned windows.

At each update, we recompute neighborhood distances using the current unreliable pool $\mathcal{D}_{\text{unrel}}$ as the neighbor set (rather than $\mathcal{D}$), since $\mathcal{D}_{\text{unrel}}$ becomes increasingly enriched with poisoned samples as the reliable pool expands. We first form a candidate set by selecting the top $100\pi\gamma\%$ ($\pi\ge1$) samples from $\mathcal{D}$ with the largest neighborhood distances (following NDF), denoted $\mathcal{D}_{\text{NDF}}^{\text{cand}}$. From this candidate set, we admit only the lowest-loss $\gamma|\mathcal{D}|$ samples, equivalently setting $\Gamma_{\text{DRLS}}$ to the $1/\pi$-quantile of losses over $\mathcal{D}_{\text{NDF}}^{\text{cand}}$:
\begin{equation} \label{eq:drls}
     \mathcal{D}_{\text{DRLS}} \;=\; \{\mathbf{x}_{t} \in  \mathcal{D}_{\text{NDF}}^{\text{cand}} \mid  \mathcal{L}(\mathbf{x}_{t}) \leq \Gamma_{\text{DRLS}} \}.
\end{equation}
After $T_1$ epochs of training on the initial reliable pool, \methodname{} trains $f_\theta$ for a further $T_2$ epochs while progressively updating $\mathcal{D}_{\text{rel}} \gets \mathcal{D}_{\text{DRLS}}$ via \equationautorefname~\ref{eq:drls}; the pool expansion ratio $\gamma$ starts from $\alpha$ and is capped at $\beta$ of the full dataset.

\section{Experiments} \label{sec:exp}

    \vspace{-4pt}
    We follow the datasets, attacks, and evaluation protocol in \sectionautorefname~\ref{sec:experimental-settings}.  For \methodname{}, we set $\alpha{=}0.2$ and $\beta{=}0.5$, and adopt a linear schedule for the clean-pool ratio $\gamma$ in Stage II and grid-search $\pi\in\{1.25,1.5\}$ and $K\in\{20,32\}$.  We train $f_\theta$ with Adam~\cite{kingma2014adam} for $T_1{=}10$ epochs in Stage I and $T_2{=}90$ epochs in Stage II, and train the backcaster $b_\phi$ for $T_b{=}10$ epochs.  Additional details are in \appendixautorefname~\ref{app:exp-settings}, further detailed analyses are deferred to Appendix~\ref{app:neighborhood-analysis}--\ref{app:pool-dynamics}. By default, we present main results on PEMS03 and report results on other datasets in the corresponding appendix.  
    Our code is available at \url{https://github.com/qducnguyen/TimeGuard}.

     \begin{table}[t]
            \centering
            \caption{Defense performance across PEMS03, Weather, and ETTm1 datasets under Random and BackTime attacks.}
            \vspace{-4pt}
            \label{tab:different-dataset}
            \scriptsize
            \setlength{\tabcolsep}{2pt}
            \renewcommand{\arraystretch}{1.1}
            \resizebox{\linewidth}{!}{
           \begin{tabular}{c|lcccccc}
            \toprule
            \multirow{2}{*}{\textbf{Dataset}} & \textbf{Attack~→}          
            & \multicolumn{3}{c}{\textbf{Random}}
            & \multicolumn{3}{c}{\textbf{BackTime}} \\
            \cmidrule(l){2-2}
            \cmidrule(l){3-5}
            \cmidrule(l){6-8}
            & \textbf{Defense~↓}
            & \MAEC~↓ & \MAEP~↑ & \FDER~↑
            & \MAEC~↓ & \MAEP~↑ & \FDER~↑ \\
            \midrule
            
            \multicolumn{1}{c|}{\multirow{3}{*}{\textbf{PEMS03}}}
             & No Defense 
             & 17.634 & 17.772 & -- 
             & 17.607 & 14.201 & -- \\
            \multicolumn{1}{c|}{} 
             & PDB~\cite{wei2024mitigating}   
             & 18.630 & 54.690 & 0.693 
             & 18.967 & 22.397 & 0.639 \\
            \multicolumn{1}{c|}{} 
             & \textbf{\methodname} 
             & \textbf{17.928} & \textbf{104.677} & \textbf{0.868} 
             & \textbf{18.048} & \textbf{39.303} & \textbf{0.808} \\ 
            \midrule
            
            \multicolumn{1}{c|}{\multirow{3}{*}{\textbf{Weather}}}
             & No Defense 
             & 11.210 & 14.991 & -- 
             & 10.768 & 15.913 & -- \\
            \multicolumn{1}{c|}{} 
             & PDB~\cite{wei2024mitigating}    
             & 12.305 & 91.237 & 0.841 
             & 11.732 & 56.439 & 0.827 \\
            \multicolumn{1}{c|}{} 
             & \textbf{\methodname} 
             & \textbf{10.587} & \textbf{177.583} & \textbf{0.942} 
             & \textbf{10.716} & \textbf{66.534} & \textbf{0.874} \\ 
            \midrule
            
            \multicolumn{1}{c|}{\multirow{3}{*}{\textbf{ETTm1}}}
             & No Defense 
             & 1.144 & 1.059 & -- 
             & 1.114 & 0.805 & -- \\
            \multicolumn{1}{c|}{} 
             & PDB~\cite{wei2024mitigating}    
             & \textbf{1.230} & 2.972 & 0.766 
             & 1.274 & 1.422 & 0.648 \\
            \multicolumn{1}{c|}{} 
             & \textbf{\methodname} 
             & 1.235 & \textbf{6.481} & \textbf{0.881} 
             & \textbf{1.268} & \textbf{1.443} & \textbf{0.652} \\ 
            \bottomrule
            \end{tabular}
            }
        \vspace{-8pt}
        \end{table}

        \begin{figure}[t]
            \centering
            \begin{subfigure}{0.32\linewidth}
                \includegraphics[width=\linewidth]{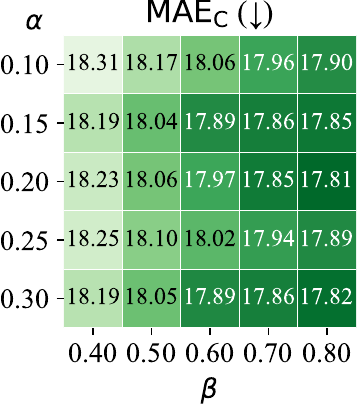}
            \end{subfigure}
            \begin{subfigure}{0.32\linewidth}
                \includegraphics[width=\linewidth]{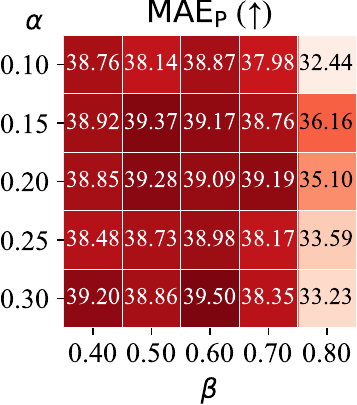}
            \end{subfigure}
            \begin{subfigure}{0.32\linewidth}
                \includegraphics[width=\linewidth]{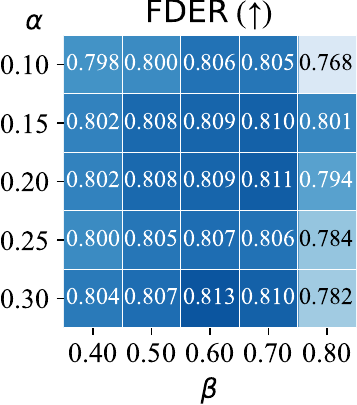}
            \end{subfigure}
            \vspace{-4pt}
            \caption{Hyperparameter analysis of pool size parameters $\alpha$ and $\beta$ in \methodname{} on the PEMS03 dataset under BackTime attack.}
            \vspace{-18pt}
            \label{fig:hyper-pool}
        \end{figure}
    
    \vspace{-7pt}
    \subsection{Main Results} \label{sec:exp-main-result}
    \vspace{-2pt}

    \textbf{Robustness against state-of-the-art attacks.} As shown in \tableautorefname~\ref{tab:main-defense-result}, averaged over three models, \methodname{} consistently mitigates all attacks, improving \MAEP{} to at least 39.3 (a minimum relative gain of 2.76x) while keeping clean \MAEC{} within 5\% of the undefended model. This indicates strong robustness to recent TSF backdoor attacks. Robustness against recent BadTime attack~\cite{xiang2025badtime} and per-model results are provided in \appendixautorefname~\ref{app:defense-performance}.

    \vspace{-2.5pt}
    \textbf{Comparison with state-of-the-art defenses.} \tableautorefname~\ref{tab:main-defense-result} also shows that \methodname{} achieves the best overall trade-off among previous training-phase defenses. Compared to the strongest baseline PDB, \methodname{} yields a 1.96x relative improvement in \MAEP{} and a 6.09\% relative reduction in \MAEC{}, with average \FDER{} of 0.841 across attacks. Notably, these gains require no additional clean data. Per-model results are provided in \appendixautorefname~\ref{app:defense-performance}.

    \vspace{-2.5pt}
    \textbf{Generalization on different datasets.}
       As shown in \tableautorefname~\ref{tab:different-dataset}, \methodname{} consistently improves robustness under both Random and BackTime across all three datasets, achieving \FDER{} above 0.65 in all settings. On Weather, \methodname{} also slightly improves clean forecasting accuracy over the undefended model (3.02\% on average), suggesting that neighborhood-distance-based criteria can act as a regularizer for better generalization. On ETTm1, \methodname{} incurs a small drop in clean performance but still delivers strong robustness without initial clean data unlike PDB.

     \begin{table}[t]
            \centering
            \caption{Ablation study on PEMS03 under Random and BackTime attacks. Full results are provided in~\appendixautorefname~\ref{app:ablation}.}
            \vspace{-5pt}
            \label{tab:ablation-results}
            \scriptsize
            \setlength{\tabcolsep}{2pt}
            \renewcommand{\arraystretch}{1.2}
            \resizebox{\linewidth}{!}{
                \begin{tabular}{l ccc ccc}
            \toprule
            \textbf{Attack}~→ 
                & \multicolumn{3}{c}{\textbf{Random}}
                & \multicolumn{3}{c}{\textbf{BackTime}} \\
            \cmidrule(lr){1-1}
            \cmidrule(lr){2-4}
            \cmidrule(l){5-7}
    
            \textbf{Defense}~↓ 
                & \MAEC~↓ & \MAEP~↑ & \FDER~↑
                & \MAEC~↓ & \MAEP~↑ & \FDER~↑ \\
            \midrule
    
            No Defense
                & 17.634 & 17.772 & --
                & 17.607 & 14.201 & -- \\
            
            \textbf{\methodname}
                & \textbf{17.928} & \textbf{104.677} & \textbf{0.868}
                & \textbf{18.048} & {\ul{39.303}} & \textbf{0.808} \\
                
            \midrule
    
            w/o Channel-wise
                & 18.320 & 16.145 & 0.478
                & 19.068 & 14.925 & 0.507 \\
            
            w/o NDF
                & 18.581 & {\ul{104.457}} & 0.853
                & 18.418 & 38.349 & 0.795 \\
    
            w/o RCF
                & {\ul{18.063}} & 104.405 & {\ul{0.865}}
                & 18.608 & \textbf{39.612} & 0.796 \\
    
            w/o NDF+RCF
                & 18.336 & 91.780 & 0.852
                & {\ul{18.273}} & 38.560 & {\ul{0.799}} \\
                
            w/o DRLS
                & 19.748 & 76.442 & 0.607
                & 20.081 & 22.918 & 0.586 \\
    
            \bottomrule
            \end{tabular}
            }
        \vspace{-9pt}
        \end{table}
            
        \begin{figure}[t]
            \centering
            \begin{subfigure}{0.452\linewidth}
                \includegraphics[width=\linewidth]{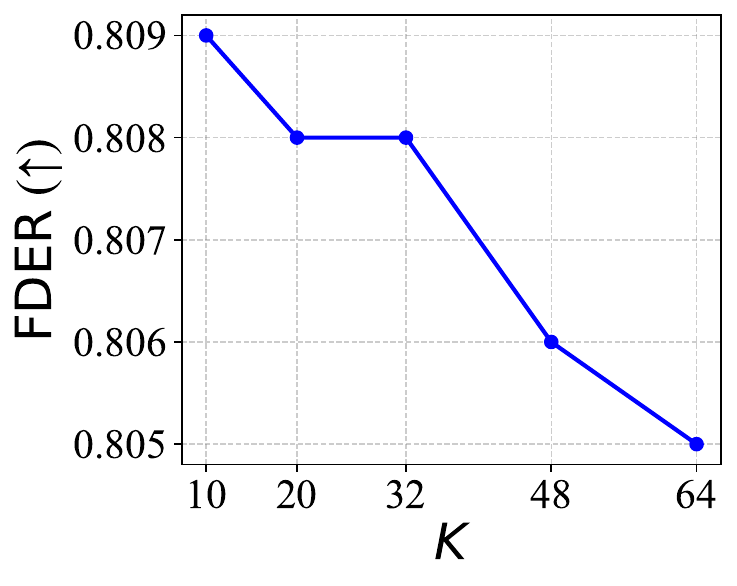}
            \end{subfigure}
            \hfill
            \begin{subfigure}{0.452\linewidth}
                \includegraphics[width=\linewidth]{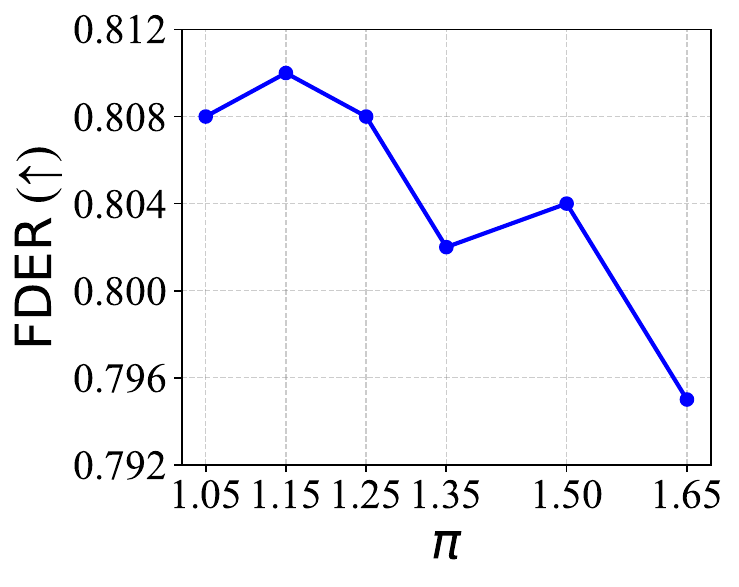}
            \end{subfigure}
            \vspace{-7pt}
            \caption{Hyperparameter analysis of  $K$ and $\pi$ in \methodname{} on the PEMS03 dataset under BackTime attack.}
            \label{fig:hyper-neighbor}
            \vspace{-18pt}
        \end{figure}
     
    \textbf{Generalization across diverse scenarios.} Comprehensive results across model architectures, TSF foundation models, forecasting horizons, poisoning rates, attack patterns, and challenging datasets with nonstationarity, strong distribution shifts, large scale, and count-valued variables are deferred to \appendixautorefname~\ref{app:defense-performance}. Overall, \methodname{} consistently achieves the best defense performance across diverse TSF attack settings and challenging scenarios. Notably, \methodname{} remains effective even in the extreme full-channel poisoning setting, e.g., $\eta_\mathrm{S}=1.0$, achieving \FDER{} of $0.748$. Furthermore, \methodname{} also transfers to an LLM-based forecaster~\cite{liu2024autotimes}, yielding at least a $5.14\times$ gain in \MAEP{} with only a $3.8\%$ change in clean~\MAEC{}.     

    \vspace{-4pt}
    \subsection{Analysis} \label{sec:analysis}
    \vspace{-2pt}
    
         \begin{table}[t]
                \centering
                \caption{Training time (seconds~↓) of in-training backdoor defenses on the PEMS03 dataset. ``No Defense'' denotes standard training on the poisoned data without any defense. Best results are in \textbf{bold}.}
                \vspace{-4pt}
                \label{tab:efficiency-analysis}
                \scriptsize
                \setlength{\tabcolsep}{3pt}
                \renewcommand{\arraystretch}{1.05}
                \resizebox{\linewidth}{!}{
               \begin{tabular}{lcccc}
                \toprule
                \textbf{Model~→}   & \multirow{2}{*}{\textbf{SimpleTM}} & \multirow{2}{*}{\textbf{FEDformer}} & \multirow{2}{*}{\textbf{TimesNet}} & \multirow{2}{*}{\textbf{\textsc{Average}}} \\
                \textbf{Defense~↓} &&&& \\ \midrule
                No Defense& 1621& 2340& 2442& 2134\\
                \midrule
                ABL~\cite{li2021anti}& \textbf{740}& \textbf{1409}& \textbf{2038}& \textbf{1395}\\
                PDB~\cite{wei2024mitigating}& 2378& 3441& 3399& 3073\\
                ESTI~\cite{yu2025backdoor}& 5563& 10347& 11253& 9054\\
                \textbf{\methodname{}} & 2454& 3411& 4250& 3372\\ \bottomrule
                \end{tabular}
                }
            \vspace{-18pt}
            \end{table}

        \textbf{Ablation study.} We ablate \methodname{} under Random and BackTime to quantify each design component’s contribution. As shown in \tableautorefname~\ref{tab:ablation-results}, removing the channel-wise formulation causes the defense to fail (\FDER{} = 0.493), underscoring the need to match the channel-subset granularity of TSF attacks. NDF and RCF are critical in Stage I for constructing a high-precision reliable pool and preventing early absorption of poisoned samples, as reflected by \MAEP{}. Replacing DRLS with loss-only selection substantially degrades performance, reducing \FDER{} by 28\% on average. These results underscore the necessity of distance-aware selection for both clean generalization (\MAEC{}) and robustness (\MAEP{}). Overall, the components contribute synergistically to \methodname{}’s effectiveness. Additional per-model ablation results are provided in \appendixautorefname~\ref{app:ablation}.

        \textbf{Influence of $\alpha$ and $\beta$.} \figureautorefname~\ref{fig:hyper-pool} shows a clear trade-off between clean performance (\MAEC{}) and robustness (\MAEP{}, \FDER{}) as the pool sizes vary under BackTime. Extremely small or large $\beta$ either admits too few clean samples or incorporates too many poisoned samples, both of which reduce \FDER{}. Similarly, a small $\alpha$ yields an insufficiently reliable initial pool for Stage II, leading to worse \MAEC{}, \MAEP{}, and \FDER{}. Empirically, $\alpha \in [0.15, 0.25]$ and $\beta \in [0.5, 0.7]$ provide the best balance, achieving the highest \FDER{}, exceeding 0.8. 
    
        \textbf{Influence of $K$ and $\pi$.} With $\alpha{=}0.2$ and $\beta{=}0.5$, we study the neighborhood size $K$ and scaling factor $\pi$. As shown in \figureautorefname~\ref{fig:hyper-neighbor}, \methodname{} is largely insensitive to $K$, with \FDER{} staying in a narrow range (0.805–0.809). In contrast, overly large $\pi$ tends to narrow the candidate set and reduce diversity, increasing the risk of admitting poisoned samples during Stage II. We thus recommend $\pi \leq 1.5$. Full per-model results, hyperparameter sensitivity analyses across different datasets and attacks, and analyses of other hyperparameters are provided in \appendixautorefname~\ref{app:hyper-sensit}.

      \textbf{Efficiency analysis.} With our implementation, memory footprints are the same across methods; we therefore focus on the wall-clock training time of in-training defenses. As shown in \tableautorefname~\ref{tab:efficiency-analysis}, \methodname{} incurs a $1.58\times$ training-time overhead over vanilla training, mainly due to its multi-stage procedure and neighborhood-distance computations. This overhead is comparable to the strongest baseline, PDB, while providing substantially stronger robustness. In contrast, ESTI incurs a much larger overhead ($4.24\times$ on average) yet remains ineffective against TSF backdoor attacks. Since \methodname{} adds no inference-time overhead, it remains practical. Implementation details and large-model running times are provided in \appendixautorefname~\ref{app:efficiency-analysis} and~\ref{app:defense-performance}.

       \textbf{Potential adaptive attacks.} We consider a worst-case adaptive scenario in which the attacker extends the state-of-the-art BackTime attack~\cite{lin2024backtime} by (i) using a well-trained backcaster $b_\phi$ as a regularizer to encourage reverse consistency and (ii) explicitly penalizing high correlation among poisoned samples to evade our neighborhood-based criterion. As shown in \tableautorefname~\ref{tab:adaptive-attack}, this adaptive attack attains 18.791 \MAEC{} and 15.343 \MAEP{}, slightly worse than the original BackTime attack. This is consistent with \theoremautorefname~\ref{thm:main-tsf-backdoor-success}, which suggests that successful TSF backdoor attacks benefit from tight, highly similar clusters of poisoned inputs. Under this adaptive threat, \methodname{} remains effective, achieving 18.438 \MAEC{}, 30.575 \MAEP{}, and 0.744 \FDER{}. Ablations further show that neighborhood-based cues remain useful: removing NDF only slightly reduces \FDER{} to 0.739, while removing DRLS causes a larger drop to 0.543. More implementation details and per-model ablation results are provided in \appendixautorefname~\ref{app:adaptive-attack}.
        
            \begin{table}[t]
                \centering
                \caption{Defense performance of \methodname{} under BackTime and adaptive attacks on PEMS03 dataset, averaged over three models. Best results under adaptive attack are in \textbf{bold}.}
                \vspace{-4pt}
                \label{tab:adaptive-attack}
                \scriptsize
                \setlength{\tabcolsep}{4.2pt}
                \renewcommand{\arraystretch}{1.05}
                \resizebox{\linewidth}{!}{
               \begin{tabular}{clccc}
                    \toprule
                    \textbf{Attack}
                    & \textbf{Defense}
                    & \textbf{\MAEC~↓} & \textbf{\MAEP~↑} & \textbf{\FDER~↑} \\
                    \midrule
                    \multirow{2}{*}{BackTime} & No Defense 
                        & 17.607 & 14.201 & -- \\
                    & \textbf{\methodname}
                        & 18.048 & 39.303 & 0.808 \\
                    \midrule
                    \multirow{4}{*}{Adaptive} & No Defense 
                        & 18.791 & 15.343 & -- \\
                    & \textbf{\methodname}
                        & \textbf{18.438} & \textbf{30.575} & \textbf{0.744} \\
                    & \textbf{\methodname{}} w/o NDF
                        & 18.564 & 29.695 & 0.739 \\
                    &\textbf{\methodname{}} w/o DRLS
                        & 20.863 & 19.026 & 0.543 \\
                    \bottomrule
                \end{tabular}
                }
            \vspace{-20pt}
            \end{table}

\vspace{-7pt}
\section{Conclusion}
\vspace{-3pt}

    Our paper presents the first systematic study of defenses against TSF backdoor attacks. We first expose key failure modes of existing classification defenses in TSF stemming from data entanglement and task-formulation shift. To address these gaps, we propose \methodname{}, a novel backdoor defense for TSF. Specifically, \methodname{} performs channel-wise reliable pool training and leverages reverse consistency and temporal pattern concentration in poisoned TSF data to initialize and progressively refine reliable pools. Extensive experiments validate \methodname{}'s effectiveness and generalization. Overall, our results emphasize the need for more robust and trustworthy forecasting systems. Limitations and future work are discussed in \appendixautorefname~\ref{app:limitation-future-work}.
    
\section*{Acknowledgment}
This research / project is supported by the National Research Foundation, Singapore, and Cyber Security Agency of Singapore under its National Cybersecurity R\&D Programme and CyberSG R\&D Cyber Research Programme Office. Any opinions, findings and conclusions or recommendations expressed in these materials are those of the author(s) and do not reflect the views of National Research Foundation, Singapore, Cyber Security Agency of Singapore as well as CyberSG R\&D Programme Office, Singapore.
    
\section*{Impact Statement}
This work studies backdoor learning in time series forecasting (TSF) and proposes a defense against TSF backdoor attacks. It may improve the reliability of forecasting components in safety- or cost-critical pipelines and support the development of more robust and trustworthy time series machine learning. Potential negative impacts are primarily related to dual use: our analysis and evaluation may help adversaries design more evasive backdoors or adapt poisoning strategies. Accordingly, we report findings under explicit threat models and emphasize that defenses should be complemented by other standard security measures (e.g., data provenance) to provide more comprehensive protection.

\bibliography{ref}
\bibliographystyle{icml2026}

\newpage
\appendix
\onecolumn
\section{Related Work} \label{app:related-work}

\subsection{Deep Models for Time Series Forecasting}
Time series forecasting (TSF) aims to predict future values of one or multiple variables based on their historical observations.
With the rapid development of deep learning, a wide variety of TSF DNN architectures have been proposed to model complex temporal dependencies, nonlinear dynamics, and inter-variable dependencies. RNN-based methods~\cite{abbasimehr2022improving, hewage2020temporal, lin2023segrnn} capture sequential patterns through recursive state transitions, while CNN-based methods~\cite{hewage2020temporal, cheng2025convtimenet} employ dilated and causal convolutions to efficiently learn long-range temporal features. GNN-based approaches~\cite{yan2018spatial, ma2020streaming} explicitly represent inter-variable correlations by constructing spatio-temporal graphs, enabling information propagation across related variables. 
 
Recently, Transformer-based models~\cite{nie2023time, zhou2022fedformer, wu2023timesnet, chen2025simpletm} have achieved state-of-the-art TSF performance by leveraging self-attention to jointly capture global temporal dependencies and cross-variable interactions. MLP-based architectures~\cite{han2024softs, wang2024timemixer}, built primarily on linear transformations, maintain high computational efficiency while still delivering strong forecasting accuracy. Currently, LLM-based models~\cite{liu2024unitime, liu2024autotimes} employ pre-trained LMMs as backbones and demonstrate impressive cross-domain generalization and zero-shot forecasting capability.
Meanwhile, reinforcement learning has recently become an important paradigm for improving LLM capability and behavior~\cite{fang2025serl,zhang2025consistent,zhang2026incentivizing}.

However, the increasing model complexity and data dependency of modern TSF architectures introduce new trustworthiness concerns, including adversarial attacks~\cite{xu2021adversarial, pialla2025time, liu2025adversarial}, backdoor attacks~\cite{lin2024backtime, kotowski2025trojan}, hallucination~\cite{zou2025investigating}, and watermarking~\cite{soi2025timewak}. In this work, we focus specifically on backdoor defenses for time series forecasting.

\subsection{Backdoor Attacks}
Backdoor attacks are typically implemented by injecting a small number of poisoned samples into the training set to implant hidden trigger-target associations \cite{li2022backdoor}. Once trained on such data, the model behaves normally on clean inputs but exhibits malicious behavior when the trigger appears, for example, classifying triggered samples into an attacker-specified target label. Such attacks have been extensively studies in computer vision~\cite{gu2019badnets, gao2023not, gao2024backdoor, zhu2025towards, chen2026taught, li2026rethinking}, speech recognition \cite{zhai2021backdoor, koffas2023going, cai2024toward}, object recognition \cite{lifew2022, chan2022baddet, luo2023untargeted}, and graph learning~\cite{xi2021graph}, demonstrating that even a tiny poisoning ratio can yield high attack success while maintaining benign performance. 

In the time series domain, prior work has primarily examined backdoor attacks on classification tasks~\cite{ding2022towards, jiang2023backdoor, huang2025revisiting}, where temporal triggers are injected into complete time series to manipulate predictions of physiological or activity signals. However, these studies are restricted to producing categorical output labels for entire time series, rather than finer-grained temporal segments. The first work to target TSF models, BackTime~\cite{lin2024backtime}, embeds stealthy GNN-based trigger patterns with associated predefined target patterns in selected time step on the original training dataset via bi-level optimization. Following this, TBDA~\cite{liu2026beyond} introduces temporally delayed, variable-specific activations instead of immediate alignment, extending BackTime under a continuity assumption between trigger and target patterns. Meanwhile, BadTime~\cite{xiang2025badtime} studies long-term TSF and aims to train a backdoored model by using hybrid training strategy to select valuable poisoned samples and a decoupled backdoor objective leveraging distributed lag-aware triggers. 

Nevertheless, BadTime assumes a less practical threat model that requires full control over the training pipeline, whereas BackTime assumes only dataset-level control and employs more flexible, sample-dependent triggers. Although distributed lag-aware triggers are expressive, BadTime assumes unrealistic control over all input variables, which are typically distributed across multiple real-world data sources. Therefore, we adopt BackTime as our default threat model and leave a comprehensive evaluation under the BadTime-style threat model for future work.

\subsection{Backdoor Defense}
Backdoor defenses aim to mitigate or neutralize backdoor behaviors implanted during training or to detect such behaviors at inference. These methods can be categorized into four stages of the model life cycle~\cite{wu2025backdoorbench}. \textit{Pre-training-stage defenses} attempt to identify and remove poisoned samples before model training by analyzing training samples statistics or feature distributions to detect anomalous samples~\cite{tran2018spectral, chen2018detecting, mo2024robust, le2025ted++,hou2025flare}. \textit{In-training-stage defenses} aim to train clean models on poisoned datasets without backdoor injection, typically by reducing the influence of potentially poisoned samples through carefully designed training procedures~\cite{li2021anti,tang2023setting, gao2023backdoor, wei2024mitigating, yu2025backdoor, qiao2026cert}. \textit{Post-training-stage defenses} repair compromised models through structural modification or fine-tuning-based approaches~\cite{liu2018fine, wang2019neural, li2021neural, dunnett2025backdoor, xu2024towards, chen2025refine}, Finally, \textit{inference-time defenses} detect the presence of triggers at test time by measuring prediction consistency or entropy under different input perturbations or one input with multiple model variances~\cite{liu2023detecting, gao2019strip, guoscale2023, hou2024ibd, yiprobe2025}.

Although these defenses have demonstrated effectiveness in classification and vision domains~\cite{wu2025defenses}, their applicability to the time series domain remains largely underexplored. In time series classification, one representative effort is E2ABL~\cite{jiang2024end}, which extends ABL~\cite{li2021anti} and evaluates existing backdoor defenses on time series classification datasets. However, this work primarily focuses on empirical evaluation rather than proposing defenses that explicitly account for the temporal structure of time series inputs. Likewise, backdoor defenses for TSF remain largely unexplored. One notable concurrent effort in TSF backdoor defense is the competition associated with the ``Assurance for Space Domain AI Applications" program, which aims to detect and reconstruct static trigger patterns in backdoored TSF models~\cite{kotowski2025trojan, wang2026trojanscope}. However, this setting assumes access to clean models with the same architecture as the poisoned models, as well as clean datasets, and is limited to a subset of model architectures within the space operations domain. In this work, we conduct the first systematic study of representative backdoor defenses across the TSF model life cycle, spanning multiple domains and model architectures. We further introduce \methodname{}, an in-training-stage defense specifically designed for TSF backdoors.

\section{Further Analysis of Existing Backdoor Defenses}
\label{app:more-existing-defense}

Beyond the two fundamental issues in current backdoor defense settings for time series forecasting (TSF), namely data entanglement and task-formulation shift as discussed in \sectionautorefname~\ref{sec:intro}, we further provide a detailed analysis of TSF-specific challenges that hinder the direct adaptation of existing defenses. We summarize these challenges in \sectionautorefname~\ref{app:chal-repre}. We then provide the rationale for selecting representative baselines in \sectionautorefname~\ref{app:selection-rationale} and clarify the practical attribute aspects of each defense in the TSF backdoor setting in \sectionautorefname~\ref{app:tsf-attributes}.

\vspace{-5pt}
\subsection{TSF-Specific Challenges for Backdoor Defense}
\label{app:chal-repre}

Similar to backdoor attacks~\cite{lin2024backtime}, defending TSF models against backdoor attacks presents several unique challenges compared to traditional backdoor defense in classification and generative models~\cite{wu2025backdoorbench, li2025backdoorllm, lin2025backdoordm}. These difficulties largely come from the intrinsic properties of forecasting.
\textbf{(i) }The target outputs in TSF lie in a \emph{continuous space}, making it infeasible for label-based defenses~\cite{chen2018detecting, wang2019neural, chou2020sentinet, shen2025bi} that rely on either identifying poisoned classes or reconstructing potential triggers for each label in the output space.
\textbf{(ii)} Samples in TSF exhibit strong \emph{temporal dependencies}, where a single injected trigger or target pattern can propagate across overlapping input-output windows, contaminating subsequent forecasting steps and making it difficult to distinguish between clean and poisoned samples.
\textbf{(iii) }Time series data are often \emph{uninterpretable to human}; detecting abnormal fluctuations or poison patterns typically requires domain expertise (e.g., finance or healthcare), making manual inspection unreliable and the construction of a trusted clean dataset prohibitively expensive.
\textbf{(iv) }TSF models are typically deployed in \emph{continuous real-time settings}, where forecasts are generated sequentially and updated as new data arrive.
Defense methods therefore must operate efficiently, limiting the practicality of inference-time detection methods that often require multiple forward passes~\cite{gao2019strip, liu2023detecting, hou2024ibd}. 

Beyond these factors, the representational characteristics of TSF models also introduce further challenges for defense adaptation. \textbf{(i)} The \emph{heterogeneous representations} produced by different deep TSF models~\cite{kim2025comprehensive} significantly hinder defense generalization.
For instance, some models explicitly decompose time series into separate trend and seasonal components~\cite{wu2023timesnet}, while others rely on frequency-based transformations~\cite{zhou2022fedformer} or channel-independent that processes each variable independently~\cite{nie2023time}. As a result, their hidden representation spaces vary substantially across architectures, underscoring the need for model-agnostic (architecture-agnostic) defense design. \textbf{(ii)} Unlike classification or word embedding models~\cite{radford2021learning}, whose latent representations often align with semantically discrete concepts (e.g., object categories or word meanings), the semantics of hidden representations in TSF remain largely \emph{underexplored}, further making representation-based defenses~\cite{tran2018spectral, mo2024robust} unreliable under different DNNs.

\subsection{Rationale for Selecting Representative Defenses}
\label{app:selection-rationale}

We evaluate \textbf{13} representative defenses spanning four stages of the model life cycle, following the taxonomy of BackdoorBench~\cite{wu2025backdoorbench}. Specifically, our selection is guided by two criteria. \textbf{(i)} \textbf{Representativeness}: we include both classic methods (e.g., Spectral~\cite{tran2018spectral}, Fine-pruning~\cite{liu2018fine}, ABL~\cite{li2021anti}) and recent advanced approaches (e.g., PDB~\cite{wei2024mitigating}, TED++~\cite{le2025ted++}, and ESTI~\cite{yu2025backdoor}) that have shown effectiveness in classification or vision domains.  
\textbf{(ii)} \textbf{Adaptation Feasibility}: the method must be practical to adapt to TSF. 

Accordingly, we exclude algorithms that depend on discrete output spaces, such as those requiring enumeration of all target labels to detect poisoned samples~\cite{chen2018detecting, shen2025bi}, or access to poisoned labels~\cite{shen2025bi}, as well as methods relying on self- or semi-supervised learning frameworks~\cite{huang2022backdoor, gao2023backdoor}, which remain architecture-dependent and are not yet applicable to diverse TSF models~\cite{zhang2024self, cho2025comres}.

\subsection{Key Practical Attributes for TSF Defenses}
\label{app:tsf-attributes}

\begin{table*}[t]
    \centering
    \caption{Key attributes of defense methods against TSF backdoor attacks. } 
    \label{tab:defense-attribute}
    \renewcommand{\arraystretch}{1.2}
    \renewcommand{\aboverulesep}{0pt}
    \renewcommand{\belowrulesep}{0pt}
    \setlength\cellspacetoplimit{2pt}
    \setlength\cellspacebottomlimit{2pt}
    \resizebox{\textwidth}{!}{
    \begin{tabular}{S l *{5}{S c}}
        \toprule
        \textbf{Method} & \textbf{Defense Stage} & \textbf{\begin{tabular}[c]{@{}c@{}}No Additional \\ Clean Data Required\end{tabular}} & \textbf{\begin{tabular}[c]{@{}c@{}}No Internal \\ Features Access Required\end{tabular}} & \textbf{\begin{tabular}[c]{@{}c@{}}No Additional \\ Inference Overhead\end{tabular}} & \textbf{\begin{tabular}[c]{@{}c@{}}Time-Aware \\ Design\end{tabular}} \\ 
        \midrule
        Spectral~\cite{tran2018spectral}   
        & Pre-training & \cmark & \xmark & \cmark & \xmark \\
        TED~\cite{mo2024robust} 
        & Pre-training & \xmark & \xmark & \cmark & \xmark \\ 
        TED++~\cite{le2025ted++}     
        & Pre-training & \xmark & \xmark & \cmark & \xmark \\ 
        \midrule
        Fine-tuning~\cite{gu2019badnets}    
        & Post-training & \xmark & \cmark & \cmark & \xmark \\
        Fine-pruning~\cite{liu2018fine}   
        & Post-training & \xmark & \xmark & \cmark & \xmark \\
        NAD~\cite{li2021neural}             
        & Post-training & \xmark & \xmark & \cmark & \xmark \\ 
        IMS~\cite{dunnett2025backdoor}            
        & Post-training & \xmark & \xmark & \cmark & \xmark \\ 
        \midrule
        ABL~\cite{li2021anti}         
        & In-training & \cmark & \cmark & \cmark & \xmark \\
        PDB~\cite{wei2024mitigating}         
        & In-training & \xmark & \cmark & \xmark & \xmark \\ 
        ESTI~\cite{yu2025backdoor}       
        & In-training & \xmark & \cmark & \cmark & \xmark \\ 
        \midrule
        STRIP~\cite{gao2019strip}           
        & Inference & \xmark & \cmark & \xmark & \xmark \\
        TeCo~\cite{liu2023detecting}  
        & Inference & \cmark & \cmark & \xmark & \xmark \\ 
        IBD-PSC~\cite{hou2024ibd}
        & Inference & \xmark & \xmark & \xmark & \xmark \\ 
        \midrule
        \rowcolor[HTML]{EFEFEF} 
        \textbf{\methodname} & In-training & \cmark & \cmark & \cmark & \cmark \\ 
        \bottomrule
    \end{tabular}
    }
    \vspace{-12pt}
\end{table*}

To systematically compare these defenses, we examine four key attributes relevant to forecasting: \textbf{(i)} \textbf{No Additional Clean Data Required}: whether the method avoids dependence on a clean split, addressing the challenge of constructing trusted datasets for time series; \textbf{(ii)} \textbf{No Internal Feature Access Required}: whether the defense operates without access to intermediate activations or feature representations, reflecting model-agnostic applicability; \textbf{(iii)} \textbf{No Additional Inference Overhead}: whether the defense incurs extra computational cost during inference, which is critical for real-time forecasting deployments; and \textbf{(iv)} \textbf{Time-Aware Design}: whether the defense explicitly incorporates time-series characteristics. 
As summarized in \tablename~\ref{tab:defense-attribute}, substantial differences emerge across defenses at different stages. Post-training-stage methods typically rely on additional clean data, while pre-training-stage defenses often require access to internal representations. Inference-time defenses, on the other hand, introduce notable inference overhead, limiting their deployment efficiency. Importantly, \textit{none of the existing defenses} explicitly accounts for temporal dynamics, as they were all originally designed for static classification tasks. Motivated by these observations, we evaluate these defenses in the TSF setting in \sectionautorefname~\ref{sec:revisiting} and introduce \methodname{} as a in-training time-aware backdoor defense in \sectionautorefname~\ref{sec:method}. We leave the development of efficient inference-time backdoor defenses for future work.

\section{Theoretical Analysis of TSF Backdoor Success} \label{app:theory-temporal-similarity}

In this section, we provide a bound showing that successful and stealthy TSF backdoor attacks tend to induce highly similar (and thus highly correlated) poisoned input windows, motivating the design of \methodname{}. For readability, we focus on a single channel so that each history window is a vector $\mathbf{x}_{t,h}\in\mathbb{R}^{\LIN}$ and each future window is $\mathbf{x}_{t,f}\in\mathbb{R}^{\LOUT}$. We denote a triggered test input by $\mathbf{x}:=\mathbf{x}_{t,h}$, background inputs by $\mathbf{x}_i:=\mathbf{x}_{i,h}$ with outputs $\mathbf{y}_i:=\mathbf{x}_{i,f}$, and poisoned inputs by $\mathbf{x}'_j$.

\textbf{Setup.} Following~\cite{xian2023understanding, guo2022aeva}, we approximate a TSF predictor in a kernel regression regime~\cite{jacot2018neural}.
Assume all windows are instance-normalized during preprocessing.
Let $K(\mathbf{u},\mathbf{v})$ be an RBF kernel
\[
K(\mathbf{u},\mathbf{v}) \;=\; \exp\!\big(-\gamma \|\mathbf{u}-\mathbf{v}\|_2^2\big),
\]
with bandwidth $\gamma>0$.
The training set consists of $N_\mathrm{p}$ poisoned samples
$\mathcal{D}_\mathrm{p}=\{(\mathbf{x}'_j,\mathbf{y}'_j)\}_{j=1}^{N_\mathrm{p}}$ and  $N_{\mathrm{bg}}$ background samples
$\mathcal{D}_{\mathrm{bg}}=\{(\mathbf{x}_i,\mathbf{y}_i)\}_{i=1}^{N_{\mathrm{bg}}}$
(e.g., containing clean and affected samples).

\textbf{Attack mechanism and target mapping.} In the threat model~\cite{lin2024backtime}, the attacker inserts a trigger into the history window and enforces a patterned target in the future window.
At the dataset level (multivariate notation), for an injection time $t$ and attacked channel subset $\mathcal{S}$:
\[
\mathbf{X}[t-\LTGR:t,\mathcal{S}] \leftarrow \mathbf{G}_t,
\qquad
\mathbf{X}[t:t+\LPTN,\mathcal{S}]
\leftarrow
\mathbf{X}[t-\LTGR-1,\mathcal{S}] \oplus \mathbf{P},
\]
where $\mathbf{G}_t$ is a trigger pattern at timestep $t$ and  $\mathbf{P}$ is a fixed attack pattern template and $\oplus$ denotes element-wise addition (with broadcasting along time when needed). This produces sample-dependent target patterns because the baseline term $\mathbf{X}[t-\LTGR-1,\mathcal{S}]$ varies across samples. We abstract this behavior via a deterministic mapping $T(\cdot)$ at the window level.

\begin{definition}[Backdoor Target Mapping]
\label{def:target-map}
Let $\mathbf{p}$ denote the attack pattern template (aligned to the selected channel within $\mathbf{P}$), and let $b(\mathbf{x})$ extract a baseline value from an input window (e.g., the value immediately preceding the trigger, broadcast to match the target horizon).
Define
\[
T(\mathbf{x}) \;:=\; b(\mathbf{x}) \oplus \mathbf{p}.
\]
For a poisoned input $\mathbf{x}'_j$, its poisoned label is $\mathbf{y}'_j = T(\mathbf{x}'_j)$.
For a triggered test window $\mathbf{x}$, the attacker aims for $\hat{\mathbf{y}}(\mathbf{x})\approx T(\mathbf{x})$.
\end{definition}

\textbf{Key quantities.} For a triggered test input $\mathbf{x}$, define the maximum similarity to background inputs:
\[
\varepsilon \;:=\; \max_{(\mathbf{x}_i,\mathbf{y}_i)\in \mathcal{D}_{\mathrm{bg}}} K(\mathbf{x},\mathbf{x}_i),
\]
and define the poison dispersion around $\mathbf{x}$:
\[
\sigma_p^2(\mathbf{x})
\;:=\;
\frac{1}{N_\mathrm{p}}\sum_{j=1}^{N_\mathrm{p}}\|\mathbf{x}-\mathbf{x}'_j\|_2^2,
\qquad
\sigma_p(\mathbf{x}) := \sqrt{\sigma_p^2(\mathbf{x})}.
\]
Intuitively, $\varepsilon$ measures how strongly background samples can influence prediction at $\mathbf{x}$, while $\sigma_p(\mathbf{x})$ measures how tightly poisoned inputs concentrate around $\mathbf{x}$.

\begin{theorem}[TSF Backdoor Success Bound]
\label{thm:tsf-backdoor-success}
Let $\hat{\mathbf{y}}(\cdot)$ be the Nadaraya--Watson kernel regressor trained on $\mathcal{D}_{\mathrm{bg}}\cup \mathcal{D}_\mathrm{p}$:
\[
\hat{\mathbf{y}}(\cdot)
=
\frac{\sum_{i=1}^{N_{\mathrm{bg}}} K(\cdot,\mathbf{x}_i)\,\mathbf{y}_i \;+\; \sum_{j=1}^{N_\mathrm{p}} K(\cdot,\mathbf{x}'_j)\,T(\mathbf{x}'_j)}
     {\sum_{i=1}^{N_{\mathrm{bg}}} K(\cdot,\mathbf{x}_i) \;+\; \sum_{j=1}^{N_\mathrm{p}} K(\cdot,\mathbf{x}'_j)}.
\]
Assume:
\begin{enumerate}
    \item \textbf{Bounded background deviation.}
    For the triggered test window $\mathbf{x}$, $\|\mathbf{y}_i - T(\mathbf{x})\|_2 \le M$ for all $(\mathbf{x}_i,\mathbf{y}_i)\in \mathcal{D}_{\mathrm{bg}}$.
    \item \textbf{Local Lipschitzness of $T$.}
    There exists $L_T>0$ such that
    \[
    \|T(\mathbf{u})-T(\mathbf{v})\|_2 \le L_T \|\mathbf{u}-\mathbf{v}\|_2
    \quad \text{for all } \mathbf{u},\mathbf{v} \text{ in a neighborhood of }
    \{\mathbf{x}\}\cup\{\mathbf{x}'_j\}_{j=1}^{N_\mathrm{p}}.
    \]
\end{enumerate}
Then for the triggered test window $\mathbf{x}$,
\[
\big\|\hat{\mathbf{y}}(\mathbf{x}) - T(\mathbf{x})\big\|_2
\;\le\;
\underbrace{\frac{N_{\mathrm{bg}}\,M\,\varepsilon}{N_\mathrm{p}\,\exp\!\big(-\gamma\,\sigma_p^2(\mathbf{x})\big)}}_{\text{(I) background influence}}
\;+\;
\underbrace{L_T\,\sigma_p(\mathbf{x})}_{\text{(II) target mismatch}}.
\]
\end{theorem}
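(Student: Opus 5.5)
We write the Nadaraya--Watson error as a convex combination of deviations and split it into background and poison contributions. Let $W_{\mathrm{bg}}:=\sum_i K(\mathbf{x},\mathbf{x}_i)$, $W_\mathrm{p}:=\sum_j K(\mathbf{x},\mathbf{x}'_j)$ and $W:=W_{\mathrm{bg}}+W_\mathrm{p}$. Since the weights sum to one, we have
\[
\hat{\mathbf{y}}(\mathbf{x})-T(\mathbf{x})=\frac{\sum_i K(\mathbf{x},\mathbf{x}_i)\big(\mathbf{y}_i-T(\mathbf{x})\big)+\sum_j K(\mathbf{x},\mathbf{x}'_j)\big(T(\mathbf{x}'_j)-T(\mathbf{x})\big)}{W}.
\]
The triangle inequality then bounds the error by a background term (I) plus a poison term (II). We handle the two terms separately.

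\emph{Term (I).} The numerator is at most $M\,W_{\mathrm{bg}}\le N_{\mathrm{bg}} M\varepsilon$, using assumption 1 and the definition of $\varepsilon$. For the denominator we drop $W_{\mathrm{bg}}\ge 0$, so $W\ge W_\mathrm{p}$. We then lower-bound $W_\mathrm{p}$ by Jensen's inequality for the convex map $s\mapsto e^{-\gamma s}$:
\[
\frac{1}{N_\mathrm{p}}\sum_j e^{-\gamma\|\mathbf{x}-\mathbf{x}'_j\|_2^2}\ge e^{-\gamma\sigma_p^2(\mathbf{x})}.
\]
This gives $W_\mathrm{p}\ge N_\mathrm{p}e^{-\gamma\sigma_p^2(\mathbf{x})}$, and term (I) is at most $N_{\mathrm{bg}}M\varepsilon/\big(N_\mathrm{p}e^{-\gamma\sigma_p^2(\mathbf{x})}\big)$. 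The crucial point is that Jensen goes in the needed direction: it lower-bounds an average of exponentials by the exponential of the average squared distance, which is exactly $\sigma_p^2$.

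\emph{Term (II).} We bound $W$ below by $W_\mathrm{p}$ again and apply assumption 2 to each summand. This gives at most $L_T\sum_j w_j\|\mathbf{x}-\mathbf{x}'_j\|_2$, where $w_j:=K(\mathbf{x},\mathbf{x}'_j)/W_\mathrm{p}$ are probability weights.

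This weighted average is the main obstacle. We need it to be at most $\sigma_p(\mathbf{x})$, the \emph{uniform} root-mean-square distance. By Jensen or Cauchy--Schwarz, $\sum_j w_j d_j\le\big(\sum_j w_j d_j^2\big)^{1/2}$, where $d_j:=\|\mathbf{x}-\mathbf{x}'_j\|_2$. It then suffices that $\sum_j w_j d_j^2\le\frac1{N_\mathrm{p}}\sum_j d_j^2$. This holds because the weights $w_j\propto e^{-\gamma d_j^2}$ are a decreasing function of $d_j^2$. By Chebyshev's sum inequality, applied to the oppositely ordered sequences $e^{-\gamma d_j^2}$ and $d_j^2$, we get $\frac1{N_\mathrm{p}}\sum_j e^{-\gamma d_j^2}d_j^2\le\big(\frac1{N_\mathrm{p}}\sum_j e^{-\gamma d_j^2}\big)\big(\frac1{N_\mathrm{p}}\sum_j d_j^2\big)$. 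Dividing by the mean weight gives the claim, so term (II) is at most $L_T\sigma_p(\mathbf{x})$.

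Summing the bounds on (I) and (II) yields the theorem. We should check that using $W\ge W_\mathrm{p}$ in term (II) is legitimate. Writing the full weights as $K(\mathbf{x},\mathbf{x}'_j)/W$, their sum is at most one, so the inequality only loosens the bound.
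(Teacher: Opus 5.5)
Your proof is correct and follows the paper's argument step for step: the same decomposition of $\hat{\mathbf{y}}(\mathbf{x})-T(\mathbf{x})$, the same Jensen lower bound $W_\mathrm{p}\ge N_\mathrm{p}e^{-\gamma\sigma_p^2(\mathbf{x})}$ for term (I), and the same Cauchy--Schwarz reduction for term (II). The one difference is the final step $\sum_j w_j d_j^2\le\sigma_p^2(\mathbf{x})$: you prove it with Chebyshev's sum inequality for the oppositely ordered sequences $e^{-\gamma d_j^2}$ and $d_j^2$, whereas the paper argues that the Gibbs-weighted mean is non-increasing in $\gamma$ (its derivative is minus a variance) and equals the uniform mean at $\gamma=0$, so your version is slightly more self-contained.
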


\begin{proof}
Let
\[
W(\mathbf{x}) := \sum_{i=1}^{N_{\mathrm{bg}}} K(\mathbf{x},\mathbf{x}_i) \;+\; \sum_{j=1}^{N_\mathrm{p}} K(\mathbf{x},\mathbf{x}'_j),
\qquad
W_p(\mathbf{x}) := \sum_{j=1}^{N_\mathrm{p}} K(\mathbf{x},\mathbf{x}'_j).
\]
Subtract $T(\mathbf{x})$ from $\hat{\mathbf{y}}(\mathbf{x})$ and regroup:
\[
\hat{\mathbf{y}}(\mathbf{x})-T(\mathbf{x})
=
\frac{\sum_{i=1}^{N_{\mathrm{bg}}} K(\mathbf{x},\mathbf{x}_i)\big(\mathbf{y}_i-T(\mathbf{x})\big)
+
\sum_{j=1}^{N_\mathrm{p}} K(\mathbf{x},\mathbf{x}'_j)\big(T(\mathbf{x}'_j)-T(\mathbf{x})\big)}
{W(\mathbf{x})}.
\]
Taking norms and applying triangle inequality yields two terms.

\textbf{(I) Background influence.}
Using $\|\mathbf{y}_i-T(\mathbf{x})\|_2\le M$ and $K(\mathbf{x},\mathbf{x}_i)\le \varepsilon$,
\[
\Big\|\sum_{i=1}^{N_{\mathrm{bg}}} K(\mathbf{x},\mathbf{x}_i)\big(\mathbf{y}_i-T(\mathbf{x})\big)\Big\|_2
\le
\sum_{i=1}^{N_{\mathrm{bg}}} K(\mathbf{x},\mathbf{x}_i)\,\|\mathbf{y}_i-T(\mathbf{x})\|_2
\le
N_{\mathrm{bg}}\,M\,\varepsilon.
\]
Moreover, $W(\mathbf{x})\ge W_p(\mathbf{x})$, hence the term is upper bounded by
$\dfrac{N_{\mathrm{bg}}\,M\,\varepsilon}{W_p(\mathbf{x})}$.
To lower bound $W_p(\mathbf{x})$, let $\delta_j:=\|\mathbf{x}-\mathbf{x}'_j\|_2^2$ so
$K(\mathbf{x},\mathbf{x}'_j)=\exp(-\gamma \delta_j)$.
By Jensen's inequality (since $z\mapsto e^{-\gamma z}$ is convex),
\[
\frac{1}{N_\mathrm{p}}\sum_{j=1}^{N_\mathrm{p}}\exp(-\gamma \delta_j)
\ge
\exp\!\Big(-\gamma\cdot \frac{1}{N_\mathrm{p}}\sum_{j=1}^{N_\mathrm{p}}\delta_j\Big)
=
\exp\!\big(-\gamma\,\sigma_p^2(\mathbf{x})\big).
\]
Multiplying by $N_\mathrm{p}$ gives
\[
W_p(\mathbf{x})
=
\sum_{j=1}^{N_\mathrm{p}}\exp(-\gamma \delta_j)
\ge
N_\mathrm{p}\,\exp\!\big(-\gamma\,\sigma_p^2(\mathbf{x})\big),
\]
which proves term (I).

\textbf{(II) Target mismatch.}
By Lipschitzness of $T$,
\[
\|T(\mathbf{x}'_j)-T(\mathbf{x})\|_2 \le L_T\|\mathbf{x}'_j-\mathbf{x}\|_2.
\]
Thus,
\[
\Big\|\sum_{j=1}^{N_\mathrm{p}} K(\mathbf{x},\mathbf{x}'_j)\big(T(\mathbf{x}'_j)-T(\mathbf{x})\big)\Big\|_2
\le
L_T \sum_{j=1}^{N_\mathrm{p}} K(\mathbf{x},\mathbf{x}'_j)\,\|\mathbf{x}'_j-\mathbf{x}\|_2.
\]
Let $w_j := K(\mathbf{x},\mathbf{x}'_j)/W_p(\mathbf{x})$ so that $w_j\ge 0$ and $\sum_j w_j=1$. Then
\[
\frac{\sum_{j=1}^{N_\mathrm{p}} K(\mathbf{x},\mathbf{x}'_j)\,\|\mathbf{x}'_j-\mathbf{x}\|_2}{W_p(\mathbf{x})}
=
\sum_{j=1}^{N_\mathrm{p}} w_j\,\|\mathbf{x}'_j-\mathbf{x}\|_2
\le
\sqrt{\sum_{j=1}^{N_\mathrm{p}} w_j\,\|\mathbf{x}'_j-\mathbf{x}\|_2^2}
=
\sqrt{\sum_{j=1}^{N_\mathrm{p}} w_j\,\delta_j},
\]
where the inequality is Cauchy--Schwarz.
Now note that $\sum_j w_j \delta_j$ is the expectation of $\delta$ under the Gibbs weights
$w_j \propto e^{-\gamma \delta_j}$. This expectation is non-increasing in $\gamma$ and equals the uniform mean at $\gamma=0$;
therefore for $\gamma>0$,
\[
\sum_{j=1}^{N_\mathrm{p}} w_j\,\delta_j \;\le\; \frac{1}{N_\mathrm{p}}\sum_{j=1}^{N_\mathrm{p}}\delta_j \;=\; \sigma_p^2(\mathbf{x}).
\]
Hence $\sum_j w_j\|\mathbf{x}'_j-\mathbf{x}\|_2 \le \sigma_p(\mathbf{x})$, proving term (II).
Combining (I) and (II) completes the proof.
\end{proof}

\begin{remark}[Connection to Correlation-based Neighborhood Distance in \sectionautorefname~\ref{sec:method}]
\label{rem:neighbor-distance}
For two z-normalized vectors $\mathbf{u},\mathbf{v}\in\mathbb{R}^{\LIN}$, the squared Euclidean distance satisfies
\[
\|\mathbf{u}-\mathbf{v}\|_2^2 \;=\; 2\,\LIN\,\bigl(1-\rho(\mathbf{u},\mathbf{v})\bigr),
\]
where $\rho(\mathbf{u},\mathbf{v})$ is the Pearson correlation.
Defining $d_{\mathrm{nb}}(\mathbf{u},\mathbf{v}):=1-\rho(\mathbf{u},\mathbf{v})$ (or a weighted variant
$d_{\mathrm{nb}}(\mathbf{u},\mathbf{v}):=1-\rho_w(\mathbf{u},\mathbf{v})$), we obtain
$\|\mathbf{u}-\mathbf{v}\|_2^2 \propto d_{\mathrm{nb}}(\mathbf{u},\mathbf{v})$ for normalized windows (up to a constant factor depending on $\LIN$), consistent with \citet{berthold2016clustering}.

Therefore, $\sigma_p^2(\mathbf{x})=\frac{1}{N_\mathrm{p}}\sum_{j=1}^{N_\mathrm{p}} \|\mathbf{x}-\mathbf{x}'_j\|_2^2$ is proportional (up to constants) to
$\frac{1}{N_\mathrm{p}}\sum_{j=1}^{N_\mathrm{p}} d_{\mathrm{nb}}(\mathbf{x},\mathbf{x}'_j)$.
Hence, requiring small $\sigma_p^2(\mathbf{x})$ corresponds to poisoned inputs forming a tight cluster under our neighborhood distance with high temporal correlation.

Moreover, TSF backdoor attacks commonly rely on a shared attack-pattern template $\mathbf{p}$, which makes poisoned input--output windows redundant and highly similar. Motivated by prior observations that time steps near the prediction boundary between input and output windows exert stronger influence on prediction manipulation~\cite{lin2024backtime, xiang2025badtime}, we adopt a Gaussian-weighted Pearson correlation when computing $d_{\mathrm{nb}}(\cdot,\cdot)$, supporting our neighborhood diversity filtering in \sectionautorefname~\ref{sec:init} and \sectionautorefname~\ref{sec:drls}.
\end{remark}

\section{Method Details}

\subsection{Training Algorithm Outline} \label{app:algorithm}

The pseudocode of the our proposed method \methodname{} is listed as in Algorithm~\ref{alg:timeguard}.

\begin{algorithm}[htbp]
  \caption{Pseudocode for \methodname{}}
  \label{alg:timeguard}
  \begin{algorithmic}
    \STATE {\bfseries Input:} training set $\mathcal{D}$ from poisoned series $\mathbf{X}\in\mathbb{R}^{T\times C}$; forecaster $f_{\theta}$; backcaster epochs $T_b$; Stage I epochs $T_1$; Stage II epochs $T_2$; init ratio $\alpha$; max ratio $\beta$; neighbors $K$; candidate scaling factor $\pi$.
    \STATE {\bfseries Output:} defended forecaster $f_{\theta}$.

    \STATE {\bfseries \# Stage I: Time-aware Reliable Pool Initialization} (\sectionautorefname~\ref{sec:init})
    \STATE Initialize backcaster $b_\phi$ with the same architecture as $f_\theta$.
    \FOR{$e=1$ {\bfseries to} $T_b$}
      \FORALL{$(\mathbf{X}_{t,h},\mathbf{X}_{t,f})\in\mathcal{D}$}
        \STATE $\phi \leftarrow \phi - \nabla_{\phi}\,\ell\big(b_{\phi}(\mathrm{Flip}(\mathbf{X}_{t,f})),\,\mathrm{Flip}(\mathbf{X}_{t,h}))$
      \ENDFOR
    \ENDFOR

    \FOR{$c=1$ {\bfseries to} $C$}
        \STATE \# RCF: Reverse-Consistency Filtering 
        \STATE Compute $\mathcal{D}_{{\text{RCF}}}^{(c)}$ using $\Gamma_{\text{RCF}}$ as the $\alpha$-quantile of reverse-consistency losses (Eq.~\ref{eq:rcf}).
        \STATE \# NDF: Neighborhood Diversity Filtering 
        \STATE Compute neighborhood distances $S^{(c)}(\cdot)$ with $\mathcal{D}^{(c)}$ as neighbors (Eq.~\ref{eq:neigh}).
        \STATE Select $\mathcal{D}_{{\text{NDF}}}^{(c)}$ using $\Gamma_{\text{NDF}}$ as the $(1-\alpha)$-quantile (Eq.~\ref{eq:ndf}).
        \STATE $\mathcal{D}_{\text{rel}}^{(c)} \leftarrow \mathcal{D}_{{\text{RCF}}}^{(c)} \cap \mathcal{D}_{{\text{NDF}}}^{(c)}$;\;\;
       $\mathcal{D}_{\text{unrel}}^{(c)} \leftarrow \mathcal{D}^{(c)} \setminus \mathcal{D}_{\text{rel}}^{(c)}$
    \ENDFOR
    \STATE Update mask $m_{t,c}\leftarrow \mathbb{1}\!\left[(\mathbf{x}_{t,h}^{(c)},\mathbf{x}_{t,f}^{(c)})\in \mathcal{D}_{\text{rel}}^{(c)}\right]$ for all $(t,c)$.

    \FOR{$e=1$ {\bfseries to} $T_1$}
      \FORALL{$(\mathbf{X}_{t,h},\mathbf{X}_{t,f})\in\mathcal{D}$}
        \STATE $\theta \leftarrow \theta - \nabla_{\theta}\,\mathcal{L}_{\text{def}}(\theta; m)$ (Eq.~\ref{eq:loss_def}).
      \ENDFOR
    \ENDFOR

    \STATE {\bfseries \# Stage II: Distance-Regularized Loss Selection} (\sectionautorefname~\ref{sec:drls})
    \FOR{$e=1$ {\bfseries to} $T_2$}
      \STATE $\gamma \leftarrow \alpha + \frac{\beta-\alpha}{T_2-1}(e-1)$ \COMMENT{Current target clean ratio (treat $0/0$ as $0$).}
      \FOR{$c=1$ {\bfseries to} $C$}
        \STATE \# DRLS: Distance-Regularized Loss Selection
        \STATE Compute $S^{(c)}(\cdot)$ with $\mathcal{D}_{\text{unrel}}^{(c)}$ as neighbors (Eq.~\ref{eq:neigh}).
        \vspace{1pt}
        \STATE Select candidate set $\mathcal{D}_{\text{NDF}}^{\text{cand}^{(c)}}$  using $\Gamma_{\text{NDF}}$ as the $(1-\pi\gamma)$-quantile (Eq.~\ref{eq:ndf}). \COMMENT{top $100\pi\gamma\%$ of $\mathcal{D}^{(c)}$}
        \STATE Update $\mathcal{D}_{\text{rel}}^{(c)}$ using $\Gamma_{\text{DRLS}}$ as the $(1/\pi)$-quantile of losses over $\mathcal{D}_{\text{DRLS}}^{\text{cand}^{(c)}}$ (Eq.~\ref{eq:drls}). \COMMENT{equivalent of  $100\gamma\%$ of $\mathcal{D}^{(c)}$}
        \STATE $\mathcal{D}_{\text{unrel}}^{(c)} \leftarrow \mathcal{D}^{(c)}\setminus \mathcal{D}_{\text{rel}}^{(c)}$
      \ENDFOR
    \STATE Update mask $m_{t,c}\leftarrow \mathbb{1}\!\left[(\mathbf{x}_{t,h}^{(c)},\mathbf{x}_{t,f}^{(c)})\in \mathcal{D}_{\text{rel}}^{(c)}\right]$ for all $(t,c)$.

      \FORALL{$(\mathbf{X}_{t,h},\mathbf{X}_{t,f})\in\mathcal{D}$}
        \STATE $\theta \leftarrow \theta - \nabla_{\theta}\,\mathcal{L}_{\text{def}}(\theta; m)$ (Eq.~\ref{eq:loss_def}).
      \ENDFOR
    \ENDFOR
  \end{algorithmic}
\end{algorithm}

\subsection{Comparison with Distance-based Backdoor Defenses}

At a high level, \methodname{} may appear related to prior distance-based backdoor defenses.
However, most existing distance-based defenses typically operate in learned representation spaces and
typically rely on a separability assumption between poisoned and clean samples~\cite{chen2018detecting, tran2018spectral, hayase2021spectre, huang2025detecting}. 
Such assumptions can be brittle even in standard vision settings, where representation-based filtering may break down under more challenging scenarios (e.g., source-specific or dynamic triggers)~\cite{mo2024robust}. The mismatch is further exacerbated in TSF: (i) forecasting is a regression task without discrete target classes for within-class clustering, (ii) TSF backdoors are often channel-subset, so the overall sample representation could remain close to clean, and (iii) heterogeneous internal representations across forecasting architectures make it difficult to apply a unified representation-space criterion.
Consequently, clean/poison separation in learned activations is not a reliable primitive for TSF.

In contrast, \methodname{} uses distance in a fundamentally different way. Rather than measuring learned representations, we compute data-space neighborhood distances between instance-normalized channel-wise windows (equivalently, correlation-based distances) and use them to measure local temporal similarity concentration with theoretical support. The key signal is not global separability, but an abnormal neighborhood dispersion pattern induced by trigger and target patterns reuse: poisoned windows tend to exhibit unusually small distances to their nearest neighbors along the attacked channels, even when they remain mixed with clean windows overall.
This distance cue is then fused with TSF-specific directional evidence (reverse consistency loss) to progressively
construct a reliable pool during training, without requiring access to intermediate activations or
assuming feature-space clustering structure.

\section{Evaluation Metrics} \label{app:eval-metrics}
For training-phase defenses, we use two typical metrics: clean forecasting error (\textbf{\MAEC}), attack forecasting error (\textbf{\MAEP}). \MAEC{} measures the Mean Absolute Error (MAE) between model's output and ground-truth future values on clean inputs, reflecting natural forecasting ability. \MAEP{} measures the MAE between model's output and the target pattern when the input contain triggers, reflecting resistance against backdoor manipulation. A desirable defense should achieve a low \MAEC{} while having a high \MAEP{} following prior backdoor settings~\cite{gao2023backdoor, yu2025backdoor}.

Taking both \MAEC{} and \MAEP{} into account, we further propose a new metric, \textbf{Forecasting Defense Effectiveness Rating ({\FDER})}, adapted from the Defense Effective Rate (DER) originally proposed for classification models~\cite{zhu2023enhancing}. Unlike DER, which relies on accuracy-based metrics, \FDER{} employs relative error-based measures more suitable for forecasting: 
\begin{equation*}
    \text{FDER} = \frac{\max(0, \rho_{\text{MAE}_\text{P}}) - \max(0, \rho_{\text{MAE}_\text{C}}) + 1}{2} \in [0, 1],
\end{equation*}
where the relative clean gain ($\rho_{\text{MAE}_\text{C}}$) and relative attack gain ($\rho_{\text{MAE}_\text{P}}$) are defined as:
\begin{equation*}
    \rho_{\text{MAE}_\text{C}} =  1 - \frac{\text{MAE}_\text{C}^{\text{und}}}{\text{MAE}_\text{C}}, 
    \qquad
    \rho_{\text{MAE}_\text{P}} =  1 - \frac{\text{MAE}_\text{P}^{\text{und}}}{\text{MAE}_\text{P}}.
\end{equation*}
Here $\rho_{\text{MAE}_\text{C}}$ quantifies the relative increase in clean forecasting error (performance overhead), while $\rho_{\text{MAE}_\text{P}}$ quantifies the relative increase in attack forecasting error (robustness gain) after defense. $\text{MAE}_\text{C}^{\text{und}}$ and $\text{MAE}_\text{P}^{\text{und}}$ denote the  clean and attack forecasting errors of undefended model. A higher \FDER{} value indicates stronger defense effectiveness with smaller degradation of clean forecasting performance.

For inference-time defenses, which aim to identify triggered input samples during prediction, following~\citet{liu2023detecting}, we adopt two evaluation metrics: (i) the \textbf{AUROC}, which measures the trade-off between true and false detection rates, and (ii) \textbf{F1 score}, which measures the harmonic mean of precision and recall, reflecting the overall detection performance. Higher AUROC and F1 scores indicate stronger detection capability and more reliable inference-time defense performance. 

In our setting, benign means good forecasting on clean inputs (low \MAEC{}); malicious success means that triggered inputs are steered toward the attacker’s target (low \MAEP{}); and simply wrong means the model performs poorly in general, which is also reflected by high error on clean inputs. We also do not assume that poisoned TSF samples must always have globally distinct trajectories from benign ones, since both the trigger and target patterns are attacker-defined. When these patterns mimic common clean motifs, poisoned and clean samples can indeed become ambiguous. Therefore, defense success should not be judged by trajectory separability, but by whether a method preserves benign forecasting utility while disrupting malicious target alignment.

\section{Experimental Protocol} \label{app:imp-details}
\subsection{Environments}

All experiments are implemented in PyTorch 2.1.0+cu118 and run on a Linux 22.04.5 LTS server equipped with $4\times$ NVIDIA RTX A6000 Ada GPUs.

\subsection{Dataset Description} \label{app:dataset}
\begin{table}[htbp]
    \centering
    \caption{Dataset statistics.}
    \vspace{-6pt}
    \begin{tabular}{c|c c}
        \toprule
        Dataset & \# Timestamps & \# Variables (channels) \\
        \midrule
        PEMS03  & 26208 & 358 \\
        Weather & 52696 & 21 \\
        ETTm1   & 69680 & 7 \\
        \bottomrule
    \end{tabular}
    \vspace{-6pt}
    \label{tab:data_stat}
\end{table}

We primarily evaluate \methodname{} on three real-world multivariate forecasting benchmarks spanning
traffic, meteorology, and energy systems: PEMS03~\cite{song2020spatial}, Weather~\cite{wu2021autoformer}, and ETTm1~\cite{zhou2022fedformer}. Table~\ref{tab:data_stat} summarizes their basic statistics; we briefly describe each dataset below.
\begin{itemize}[itemsep=0.75pt, topsep=0.25pt]
  \item \textbf{PEMS03.} A traffic forecasting dataset built from Caltrans' Performance Measurement System (PeMS) loop-detector data. We use 5-minute aggregated measurements from 358 sensors (Sep-Nov 2018). PeMS provides standard traffic signals such as flow, speed, and occupancy.
  \item \textbf{Weather.} Hourly weather-station observations from NOAA NCEI Local Climatological Data, covering nearly 1,600 U.S. locations from 2010--2013. We forecast wet-bulb temperature using accompanying meteorological variables.
  \item \textbf{ETTm1.} A 15-minute-resolution subset of the Electricity Transformer Temperature (ETT) collection, containing 7 channels (oil temperature as the target and 6 load-related variables) over roughly two years.
\end{itemize}
We use the preprocessed versions of all datasets provided by TSLib\footnote{\url{https://github.com/thuml/Time-Series-Library}}, consistent with the data pipeline used in BackTime\footnote{\url{https://github.com/xiaolin-cs/BackTime}}.

\subsection{Forecasting Models} \label{app:model}

To evaluate whether \methodname{} and other defenses are model-agnostic, we primarily apply them to three representative forecasting backbones under backdoor attacks:
\begin{itemize}[itemsep=0.75pt, topsep=0.25pt]
    \item \textbf{FEDformer}~\cite{zhou2022fedformer}. A Transformer-based forecaster that combines seasonal--trend decomposition with frequency-domain modeling (e.g., Fourier bases) to capture global patterns efficiently.\footnote{\url{https://github.com/MAZiqing/FEDformer}}
    \item \textbf{TimesNet}~\cite{wu2023timesnet}. A period-aware architecture that maps 1D sequences into structured 2D representations and applies an inception-style block to model temporal variations across discovered periods.\footnote{\url{https://github.com/thuml/TimesNet}}
    \item \textbf{SimpleTM}~\cite{chen2025simpletm}. A lightweight multivariate forecasting baseline that tokenizes each channel via a stationary wavelet transform and models cross-channel dependencies with a simple interaction module.\footnote{\url{https://github.com/vsingh-group/SimpleTM}}
\end{itemize}

For each backbone, we use the authors' official implementation and follow the default training configuration as closely as possible. When the released code provides multiple recommended settings (e.g., varying by dataset or prediction horizon), we adopt the most commonly used configuration. All exact hyperparameters for each model are provided in our code release.

\subsection{Attack Methods} \label{app:attack}

To assess how well each defense generalizes across different TSF backdoor strategies, we evaluate robustness under the following attacks:
\begin{itemize}[itemsep=0.75pt, topsep=0.25pt]
    \item \textbf{BackTime}~\cite{lin2024backtime}. A state-of-the-art TSF backdoor attack that selects vulnerable timestamps and synthesizes \emph{sample-dependent} triggers via a GNN-based generator, leveraging inter-variable correlations. We follow BackTime and constrain the trigger perturbation by a budget $\Delta_{\text{tgr}}$.

    \item \textbf{Random}. A simple BadNets-inspired~\cite{gu2019badnets} baseline that injects a \textit{fixed} random trigger shared across all poisoned timestamps. We sample the trigger from $\mathcal{U}\!\left[-\Delta_{\text{tgr}},\,\Delta_{\text{tgr}}\right]$.

    \item \textbf{FreqBack-TSF}. An adaptation of FreqBack~\cite{huang2025revisiting} to forecasting that utilizes a \emph{learned universal} trigger guided by frequency-domain analysis. Concretely, we replace BackTime’s sample-dependent GNN trigger generator with a single trainable trigger tensor and optimize it using FreqBack’s frequency-guided objective (frequency and regularization terms), together with the standard target-pattern construction loss. We estimate the frequency heatmap of the trigger position for each selected poisoned channel. Since the original paper does not specify the perturbation-norm weighting, we set $\lambda{=}1$ and keep all other hyperparameters consistent with the official implementation.

\end{itemize}

In addition to the above three attacks, we report results for the \textbf{Manhattan} baseline from BackTime~\cite{lin2024backtime}, which uses triggers that mimic common temporal patterns. Specifically, Manhattan retrieves segments closest to the target pattern under the Manhattan (L1) distance and uses the preceding window as the trigger. Unless otherwise specified, we follow the default BackTime setting with window lengths $\LIN{=}\LOUT{=}12$, temporal injection rate $\eta_{\text{T}}{=}0.03$, and spatial injection rate $\eta_{\text{S}}{=}0.3$. We use the \textbf{cone-shaped attack pattern} by default following BackTime; details of the attack patterns are provided in \sectionautorefname~\ref{app:attack-patterns}.

\subsection{Attack Patterns} \label{app:attack-patterns}

\begin{figure}[htbp]
    \centering
    \begin{subfigure}{0.30\linewidth}
        \resizebox{\linewidth}{!}{
        \includegraphics{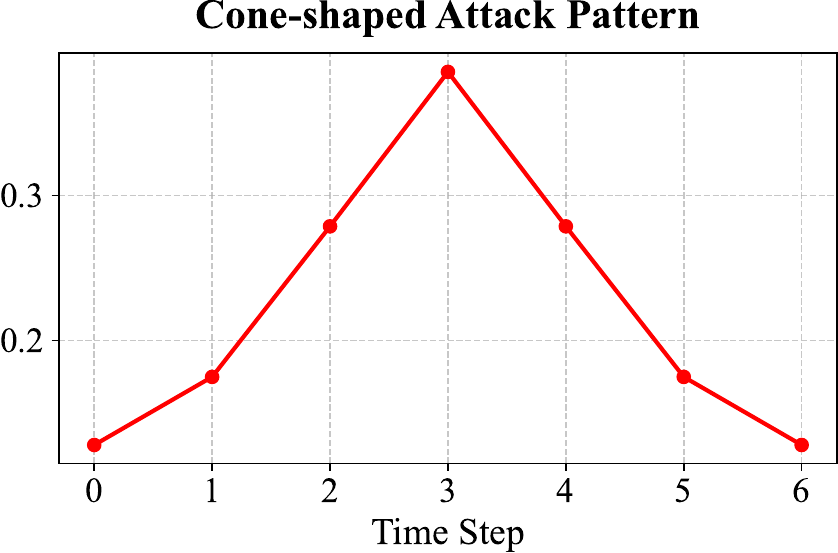}
        }
    \end{subfigure}
    \begin{subfigure}{0.30\linewidth}
        \resizebox{\linewidth}{!}{
        \includegraphics{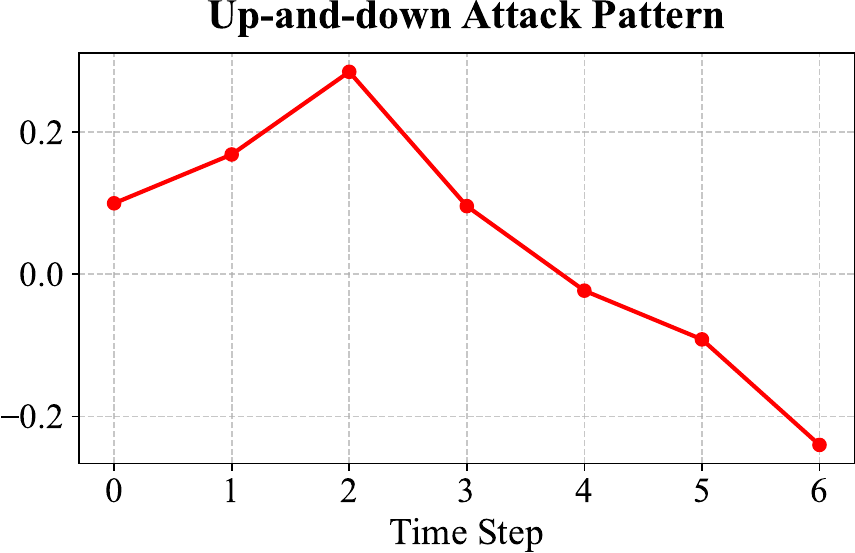}
        }
    \end{subfigure}
    \begin{subfigure}{0.30\linewidth}
        \resizebox{\linewidth}{!}{
        \includegraphics{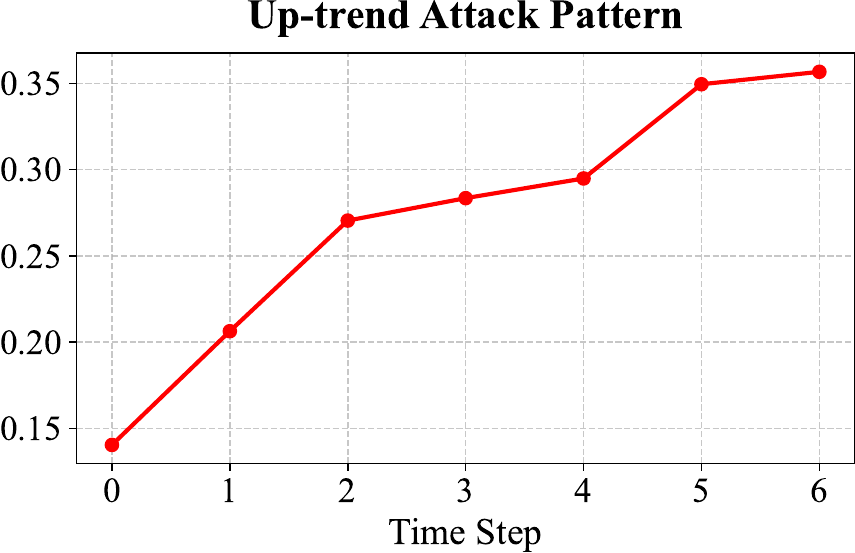}
        }
    \end{subfigure}
    \caption{Attack pattern shapes evaluated in this paper, covering diverse temporal trends as in BackTime~\cite{lin2024backtime}.}
    \label{fig:pattern_shapes}
\end{figure}

To evaluate \methodname{} under diverse attack scenarios, we consider three attack-pattern shapes $\textbf{P}$ following the BackTime setup for a fair comparison~\cite{lin2024backtime}. For each poisoned timestamp of the selected channel, the attacker injects the standardized attack pattern into the forecasting horizon. The three pattern shapes (cone, up-trend, and up-and-down) are illustrated in \figurename~\ref{fig:pattern_shapes}.

\subsection{\methodname{} Settings} \label{app:exp-settings} 
    
    For \methodname{} implementation, we follow the training pipeline of BackTime~\cite{lin2024backtime} as closely as possible to ensure a fair comparison. Unless otherwise specified, we use Adam~\cite{kingma2014adam} with learning rate $1\times10^{-4}$ for both the forecaster $f_\theta$ and the backcaster $b_\phi$, batch size $64$, and $\texttt{SmoothL1Loss}$ as the default training loss. We adopt the default input/output window lengths $\LIN{=}12$ and $\LOUT{=}12$. To match BackTime’s default budget of $100$ training epochs, we set Stage I and Stage II to $T_1{=}10$ and $T_2{=}90$ epochs, respectively. We additionally train the backcaster $b_\phi$ for $T_b{=}10$ epochs.
    
    We set the initial reliable-pool ratio to $\alpha{=}0.2$ and the final ratio to $\beta{=}0.5$, and use a linear schedule for $\gamma$ that increases from $\alpha$ to $\beta$ throughout Stage II. We grid-search the scaling factor $\pi\in\{1.25,1.5\}$ and the neighborhood size $K\in\{20,32\}$. We use a $6{:}2{:}2$ train/validation/test split and report performance on the test set.
        
\subsection{Baseline Defenses and TSF Adaptation}\label{app:defense}

Since TSF-specific backdoor defenses remain limited, we adapt \textbf{13} representative defenses originally proposed for classification, spanning all four stages of the model life cycle and covering diverse defense paradigms~\cite{wu2025backdoorbench, li2022backdoor}. For fairness, we start from each method's official (or widely used) implementation and make only the minimal modifications required to support forecasting.

In general, we replace accuracy-based criteria with MAE-based counterparts and substitute the entropy loss with a regression loss. For inference-time and input-transformation defenses, we tailor the perturbation/augmentation operators to time-series inputs; otherwise, we keep the original procedures unchanged. By default, we follow BackdoorBench implementations when available~\cite{wu2025backdoorbench}; for methods not included, we adapt the authors' original repositories as fair as possible. Below, we summarize the key adaptation choices and the settings that differ from the original defaults, grouped by life-cycle stage.

\noindent\textbf{Pre-training-stage defenses.}
\begin{itemize}[itemsep=0.75pt, topsep=0.25pt]
    \item \textbf{Spectral}~\cite{tran2018spectral}. Spectral detects poisons by SVD-based outlier scoring in learned representations within each label group, removing top-scoring points before retraining. For TSF, we use penultimate-layer sample representations, flatten them, obtain pseudo-labels via $k$-means, and apply the original per-cluster scoring/removal. We tune $k\in\{5,10,20\}$ and use the best-performing setting.
    \item \textbf{TED}~\cite{mo2024robust}. TED flags backdoor samples by tracking how a sample’s neighborhood structure evolves across layers: at selected layers, it records the rank of the nearest neighbor from the predicted group and uses the resulting rank trajectory for PCA-based outlier detection. For TSF, we assign pseudo-labels via $k$-means (as in Spectral) and compute rank trajectories within each cluster using flattened layer representations; we extract features from $M$ evenly spaced layers, with $M{=}20$ for SimpleTM and $M{=}5$ for FEDformer/TimesNet due to memory limits, and tune $k\in\{5,10,20\}$.
    \item \textbf{TED++}~\cite{le2025ted++}. TED++ extends TED by explicitly modelling a layer-wise tubular neighbourhood around each class’s hidden-feature submanifold, then applying Locally Adaptive Ranking (LAR) that assigns worst-case ranks to activations falling outside the tube. It aggregates the LAR ranks across layers into a trajectory and flags outliers using a PCA reconstruction-error test.  For TSF, we use the same adaptation settings as TED. 
\end{itemize} 
\noindent\textbf{In-training-stage defenses.}
\begin{itemize}[itemsep=0.75pt, topsep=0.25pt]
    \item \textbf{ABL}~\cite{li2021anti}. ABL identifies suspicious easy-to-fit poisoned samples from training dynamics and then performs an unlearning stage to suppress their influence. For TSF, we replace the cross-entropy loss with its regression counterparts and otherwise follow the original procedure, using learning rate $10^{-4}$ for standard training and $10^{-5}$ for unlearning, which is the same as the TSF training pipeline of BackTime~\cite{lin2024backtime}.
    \item \textbf{PDB}~\cite{wei2024mitigating}. PDB is a model-agnostic defense that mitigates unknown backdoors by proactively injecting a defender-chosen backdoor: it trains on $(\mathbf{x}\oplus \Delta_1,\,h(\mathbf{y}))$ with a reversible mapping $h$ and an auxiliary augmentation term (weight $\lambda_2$), then stamps $\Delta_1$ and applies $h^{-1}$ at inference. For TSF, we set $h(\mathbf{y})=\mathbf{y}+\delta$ and $h^{-1}(\mathbf{y})=\mathbf{y}-\delta$ on the target window, and use a fixed defensive trigger of value $-1$ (after normalization) over a specified span across all channels; we note this unrealistically assumes the defender knows the trigger length, otherwise performance degrades substantially. We tune $\lambda_2\in\{0.0,0.1,1\}$ and $\delta\in\{0.001,0.01,0.1\}$.
    \item \textbf{ESTI}~\cite{yu2025backdoor}. ESTI is a two-stage training-time defense that iteratively splits data into clean/poison pools using a KDE-based loss threshold (via benign vs.\ backdoor-sensitive training), and then isolates the suspected poison by training a trap model on a trap label. For TSF, we replace classification loss with per-window forecasting loss (SmoothL1) for KDE splitting, set the base learning rate to $10^{-4}$, and keep the original relative scaling of stage-specific learning rates.
\end{itemize}
\noindent\textbf{Post-training-stage defenses.}
\begin{itemize}[itemsep=0.75pt, topsep=0.25pt]
    \item \textbf{Fine-tuning}~\cite{gu2019badnets}. Fine-tuning is a post-training repair baseline that continues training the (potentially backdoored) model on a small trusted clean subset, with the goal of reducing backdoor behavior while preserving clean performance. In TSF, we fine-tune on $5\%$ clean training windows using the default forecasting loss and learning rate~$10^{-4}$.    
    \item \textbf{Fine-pruning}~\cite{liu2018fine}. Fine-pruning removes neurons that are rarely activated by clean inputs (ranked by average activation on a clean validation set) and then fine-tunes the pruned model to restore clean performance. For TSF, we prune units in ascending activation order on clean validation windows, iteratively removing a fraction $n$ per round until the validation MAE increases by more than $\delta$ relative to the unpruned model, and then fine-tune with the same setting as above. We grid-search $\delta\in\{0.01,0.1,0.2\}$ and $n\in\{0.01,0.05\}$.
    \item \textbf{NAD}~\cite{li2021neural}. NAD performs teacher--student fine-tuning: a teacher is first fine-tuned on a small trusted set, then the backdoored student is fine-tuned on the same set with an additional attention-distillation loss (weighted by $\beta$) that aligns intermediate attention maps. For TSF, we use the same $5\%$ clean windows as the fine-tuning baseline for 50 epochs of each model, and tune NAD by scaling each default $\beta$ in the released implementation by $\{0.1,1,100,1000\}$. 
    \item \textbf{IMS}~\cite{dunnett2025backdoor}. IMS mitigates backdoors by learning an invertible pruning mask via bilevel optimization: an inner step generates bounded perturbations through the inverse mask, and an outer step updates the mask to reduce backdoor behavior while preserving clean accuracy. For TSF, we replace the classification agree/disagree terms with regression versions based on $d=\mathrm{MSE}(\hat{\mathbf{y}}_1,\hat{\mathbf{y}}_2)$, i.e., $p_{\text{agree}}=\exp(-\alpha d)$, $L_{\text{agree}}=-\log(p_{\text{agree}}+\epsilon)$, and $L_{\text{dis}}=-\log(1-p_{\text{agree}}+\epsilon)$, and tune the perturbation norm bound in $\{0.02,0.2,1.0\}$.
\end{itemize}
\noindent\textbf{Inference-time defenses.}
\begin{itemize}[itemsep=0.75pt, topsep=0.25pt]
        \item \textbf{STRIP}~\cite{gao2019strip}. STRIP perturbs a test input by repeatedly superimposing it with randomly sampled clean windows and measures prediction randomness; triggered inputs tend to yield abnormally low randomness under such perturbations. For TSF, we replace class entropy with a forecast-dispersion score based on the normalized variance of predictions across perturbed copies, averaged over channels and horizon. We sample 100 clean windows per test input from a pool of 10{,}000 and tune the mixing strength $\alpha\in\{0.1,0.5,1.0\}$.
        \item \textbf{TeCo}~\cite{liu2023detecting}. TeCo applies multiple input corruptions with increasing severity and flags inputs whose robustness responses are inconsistent across corruption types. For TSF, we replace hard-label ``prediction change'' with a deviation-based transition score computed from relative prediction distances. We use four time-series corruptions: Gaussian noise, late cutout, local permutation, and moving-average smoothing, each with four severity levels (noise $\{0.1,0.2,0.3,0.4\}$; cutout ratio $\{0.1,0.2,0.3,0.4\}$; permutation length $\{T/6,T/4,T/3,T/2\}$; smoothing kernel $\{3,5,7,9\}$). The TeCo score is the dispersion of normalized prediction deviations across corruption families.
        \item \textbf{IBD-PSC}~\cite{hou2024ibd}. IBD-PSC scales the affine parameters of late normalization layers by a factor $\omega$ and flags inputs whose predictions remain unusually consistent across scaled model variants. For TSF, we scale BN/LayerNorm affine parameters from the last layers backward and compute the score from prediction deviations. We select the scaling depth using relative clean-performance degradation and tune $\omega\in\{1.25,1.5,1.75\}$.
\end{itemize}

\section{Additional Experiment Results} \label{app:exp}
\subsection{Full Defense Performance Results} \label{app:defense-performance}

\noindent \textbf{Complete results across datasets and attacks.} Tables~\ref{app-tab:main-defense-result}–\ref{app-tab:main-defense-result-timesnet-inf} report the full performance of all baselines and \methodname{} under four representative TSF backdoor attacks (including Manhattan attack). Tables~\ref{app-tab:main-defense-result} and~\ref{app-tab:main-defense-result-inf} summarize results over the three datasets , averaged across FEDformer~\cite{zhou2022fedformer}, SimpleTM~\cite{chen2025simpletm}, and TimesNet~\cite{wu2023timesnet}, while Tables~\ref{app-tab:main-defense-result-fedformer}–\ref{app-tab:main-defense-result-timesnet-inf} provide per-architecture breakdowns. Overall, the appendix results are consistent with the findings and conclusions discussed in Sections~\ref{sec:revisiting} and~\ref{sec:exp-main-result}.

\newpage
\begin{table}[!t]
    \centering
    \caption{Full main results of backdoor defenses against TSF backdoor attacks, averaged over FEDformer, SimpleTM, and TimesNet. Best results are in \textbf{bold}. Lower \MAEC{} indicates better performance, while higher \MAEP{} and \FDER{} indicate better performance.}
    \label{app-tab:main-defense-result}
    \setlength{\tabcolsep}{2pt} 
    \tiny
    \renewcommand{\aboverulesep}{0pt}
    \renewcommand{\belowrulesep}{0pt}
    \setlength\cellspacetoplimit{1.8pt}
    \setlength\cellspacebottomlimit{1.8pt}
    \resizebox{\textwidth}{!}{
    
    \begin{tabular}{Sc@{\hskip 3pt}Sl*{12}{Sc}}
    \toprule
     & \textbf{Attack}~→
        & \multicolumn{3}{c}{\textbf{Random}}
        & \multicolumn{3}{c}{\textbf{Manhattan}}
        & \multicolumn{3}{c}{\textbf{FreqBack-TSF}}
        & \multicolumn{3}{c}{\textbf{BackTime}} \\
    \cmidrule(l{0pt}r){2-2}
    \cmidrule(lr){3-5}
    \cmidrule(lr){6-8}
    \cmidrule(lr){9-11}
    \cmidrule(l){12-14}
    \multirow{-2}{*}[0.5ex]{\textbf{Dataset}}
        & \textbf{Defense}~↓
        & \MAEC~↓ & \MAEP~↑ & \FDER~↑
        & \MAEC~↓ & \MAEP~↑ & \FDER~↑
        & \MAEC~↓ & \MAEP~↑ & \FDER~↑
        & \MAEC~↓ & \MAEP~↑ & \FDER~↑ \\
    \midrule

     & No Defense & 17.634 & 17.772 & --  & 17.722 & 20.266 & --  & 17.583 & 14.683 & --  & 17.607 & 14.201 & -- \\ \cmidrule(l{0pt}r{0pt}){2-14}
     & Spectral   & 18.389 & 18.356 & 0.502 & 19.444 & 20.417 & 0.475 & 18.765 & 14.027 & 0.475 & 18.666 & 15.245 & 0.539 \\
     & TED        & 18.434 & 20.063 & 0.528 & 19.427 & 20.298 & 0.467 & 18.785 & 13.984 & 0.473 & 18.606 & 13.953 & 0.495 \\
     & TED++      & 19.197 & 19.184 & 0.499 & 18.992 & 20.659 & 0.479 & 18.706 & 13.445 & 0.473 & 18.565 & 14.541 & 0.513 \\ \cmidrule(l{0pt}r{0pt}){2-14}
     & Fine-tuning    & 19.003 & 30.909 & 0.625 & 19.661 & 30.995 & 0.608 & 18.837 & 22.479 & 0.641 & 18.934 & 18.196 & 0.594 \\
     & Fine-pruning   & 19.020 & 31.643 & 0.633 & 19.595 & 34.447 & 0.624 & 19.073 & 23.543 & 0.647 & 18.686 & 19.736 & 0.623 \\
     & NAD            & 18.795 & 26.809 & 0.600 & 19.260 & 26.181 & 0.566 & 18.539 & 20.297 & 0.614 & 18.584 & 18.158 & 0.600 \\
     & IMS            & 19.239 & 17.731 & 0.466 & 19.370 & 20.178 & 0.466 & 18.521 & 14.570 & 0.479 & 18.418 & 14.351 & 0.509 \\ \cmidrule(l{0pt}r{0pt}){2-14}
     & ABL        & 19.637 & 19.104 & 0.493 & 19.649 & 20.106 & 0.462 & 18.649 & 15.055 & 0.501 & 18.761 & 14.481 & 0.509 \\
     & PDB        & 18.630 & 54.690 & 0.693 & 19.308 & 60.477 & 0.708 & 19.512 & 26.014 & 0.652 & 18.967 & 22.397 & 0.639 \\
     & ESTI       & 19.910 & 17.186 & 0.454 & 19.460 & 18.960 & 0.458 & 18.793 & 14.684 & 0.475 & 19.219 & 15.897 & 0.532 \\ \cmidrule(l{0pt}r{0pt}){2-14}
    \multirow{-12}{*}[2ex]{PEMS03}
        & \cellcolor[HTML]{EFEFEF}\textbf{\methodname}
        & \cellcolor[HTML]{EFEFEF}\textbf{17.928}
        & \cellcolor[HTML]{EFEFEF}\textbf{104.677}
        & \cellcolor[HTML]{EFEFEF}\textbf{0.868}
        & \cellcolor[HTML]{EFEFEF}\textbf{17.850}
        & \cellcolor[HTML]{EFEFEF}\textbf{97.370}
        & \cellcolor[HTML]{EFEFEF}\textbf{0.854}
        & \cellcolor[HTML]{EFEFEF}\textbf{17.628}
        & \cellcolor[HTML]{EFEFEF}\textbf{57.759}
        & \cellcolor[HTML]{EFEFEF}\textbf{0.847}
        & \cellcolor[HTML]{EFEFEF}\textbf{18.048}
        & \cellcolor[HTML]{EFEFEF}\textbf{39.303}
        & \cellcolor[HTML]{EFEFEF}\textbf{0.808} \\
    \midrule

     & No Defense & 11.210 & 14.991 & --  & 11.506 & 38.944 & -- & 10.115 & 13.449 & --  & 10.768 & 15.913 & -- \\ \cmidrule(l{0pt}r{0pt}){2-14}
     & Spectral   & 11.189 & 20.422 & 0.628 & 12.454 & 44.360 & 0.528 & 11.993 & 14.439 & 0.492 & 14.745 & 20.389 & 0.488 \\
     & TED        & 12.131 & 21.282 & 0.618 & 11.826 & 38.960 & 0.500 & 14.691 & 16.245 & 0.501 & 14.682 & 24.410 & 0.539 \\
     & TED++      & 15.968 & 32.296 & 0.644 & 14.984 & 42.390 & 0.466 & 13.633 & 19.164 & 0.585 & 13.221 & 19.713 & 0.498 \\ \cmidrule(l{0pt}r{0pt}){2-14}
     & Fine-tuning    & 12.027 & 41.019 & 0.716 & 11.808 & 71.443 & 0.711 & 13.045 & 53.864 & 0.770 & 11.589 & 51.120 & 0.743 \\
     & Fine-pruning   & 11.759 & 44.333 & 0.733 & 11.655 & 74.261 & 0.727 & 12.054 & 51.888 & 0.799 & 11.493 & 48.343 & 0.762 \\
     & NAD            & 11.804 & 27.080 & 0.646 & 11.687 & 69.082 & 0.711 & 11.631 & 39.104 & 0.745 & 11.920 & 43.684 & 0.720 \\
     & IMS            & 11.207 & 14.947 & 0.502 & 11.514 & 39.117 & 0.501 & 10.110 & 13.194 & 0.500 & 10.770 & 15.929 & 0.501 \\ \cmidrule(l{0pt}r{0pt}){2-14}
     & ABL        & 13.845 & 20.264 & 0.527 & 15.081 & 43.216 & 0.472 & 13.671 & 18.693 & 0.529 & 13.047 & 20.018 & 0.539 \\
     & PDB        & 12.305 & 91.237 & 0.841 & 12.540 & 86.136 & 0.745 & 14.406 & 58.349 & 0.784 & 11.732 & 56.439 & 0.827 \\
     & ESTI       & 15.731 & 20.971 & 0.569 & 16.342 & 82.196 & 0.672 & 14.102 & 81.121 & 0.663 & 13.441 & 20.086 & 0.507 \\ \cmidrule(l{0pt}r{0pt}){2-14}
    \multirow{-12}{*}[2ex]{Weather}
        & \cellcolor[HTML]{EFEFEF}\textbf{\methodname}
        & \cellcolor[HTML]{EFEFEF}\textbf{10.587}
        & \cellcolor[HTML]{EFEFEF}\textbf{177.583}
        & \cellcolor[HTML]{EFEFEF}\textbf{0.942}
        & \cellcolor[HTML]{EFEFEF}\textbf{10.986}
        & \cellcolor[HTML]{EFEFEF}\textbf{101.476}
        & \cellcolor[HTML]{EFEFEF}\textbf{0.800}
        & \cellcolor[HTML]{EFEFEF}\textbf{10.804}
        & \cellcolor[HTML]{EFEFEF}\textbf{188.781}
        & \cellcolor[HTML]{EFEFEF}\textbf{0.919}
        & \cellcolor[HTML]{EFEFEF}\textbf{10.716}
        & \cellcolor[HTML]{EFEFEF}\textbf{66.534}
        & \cellcolor[HTML]{EFEFEF}\textbf{0.874} \\
    \midrule

     & No Defense & 1.144 & 1.059 & --  & 1.142 & 1.438 & --  & 1.117 & 0.752 & --  & 1.114 & 0.805 & -- \\ \cmidrule(l{0pt}r{0pt}){2-14}
     & Spectral   & 1.259 & 1.165 & 0.505 & 1.288 & 1.490 & 0.494 & 1.215 & 0.927 & 0.552 & 1.218 & 0.930 & 0.534 \\
     & TED        & \textbf{1.226} & 1.208 & 0.527 & 1.270 & 1.462 & 0.479 & 1.200 & 0.839 & 0.526 & \textbf{1.195} & 0.955 & 0.529 \\
     & TED++      & 1.270 & 1.202 & 0.516 & 1.264 & 1.409 & 0.477 & \textbf{1.194} & 0.889 & 0.536 & 1.219 & 0.945 & 0.524 \\ \cmidrule(l{0pt}r{0pt}){2-14}
     & Fine-tuning    & 1.269 & 1.895 & 0.664 & 1.265 & 2.603 & 0.676 & 1.254 & 1.365 & 0.658 & 1.249 & 1.286 & 0.623 \\
     & Fine-pruning   & 1.266 & 1.931 & 0.671 & 1.262 & 2.774 & 0.688 & 1.243 & 1.330 & 0.664 & 1.241 & 1.291 & 0.636 \\
     & NAD            & 1.276 & 1.555 & 0.607 & 1.226 & 2.137 & 0.624 & 1.235 & 1.125 & 0.613 & 1.244 & 1.208 & 0.579 \\
     & IMS            & 1.284 & 1.166 & 0.498 & \textbf{1.142} & 1.452 & 0.504 & 1.199 & 0.847 & 0.518 & 1.202 & 1.005 & 0.545 \\ \cmidrule(l{0pt}r{0pt}){2-14}
     & ABL        & 1.351 & 1.341 & 0.529 & 1.362 & 1.616 & 0.474 & 1.307 & 1.143 & 0.582 & 1.256 & 1.014 & 0.526 \\
     & PDB        & 1.230 & 2.972 & 0.766 & 1.353 & 3.669 & 0.681 & 1.294 & 1.418 & 0.663 & 1.274 & 1.422 & 0.648 \\
     & ESTI       & 1.390 & 2.409 & 0.637 & 1.356 & 1.952 & 0.541 & 1.218 & 1.082 & 0.607 & 1.244 & 1.075 & 0.551 \\ \cmidrule(l{0pt}r{0pt}){2-14}
    \multirow{-12}{*}[2ex]{ETTm1}
        & \cellcolor[HTML]{EFEFEF}\textbf{\methodname}
        & \cellcolor[HTML]{EFEFEF}1.235
        & \cellcolor[HTML]{EFEFEF}\textbf{6.481}
        & \cellcolor[HTML]{EFEFEF}\textbf{0.881}
        & \cellcolor[HTML]{EFEFEF}1.250
        & \cellcolor[HTML]{EFEFEF}\textbf{6.651}
        & \cellcolor[HTML]{EFEFEF}\textbf{0.849}
        & \cellcolor[HTML]{EFEFEF}1.321
        & \cellcolor[HTML]{EFEFEF}\textbf{2.053}
        & \cellcolor[HTML]{EFEFEF}\textbf{0.736}
        & \cellcolor[HTML]{EFEFEF}1.268
        & \cellcolor[HTML]{EFEFEF}\textbf{1.443}
        & \cellcolor[HTML]{EFEFEF}\textbf{0.652} \\
    \bottomrule
    \end{tabular}
    }
\end{table}

\begin{table}[!b]
    \centering
    \caption{Detection performance comparison of inference-time defenses on three datasets, averaged over FEDformer, SimpleTM, and TimesNet. Best results are in \textbf{bold}. Higher AUC and F1 indicates better detection performance.}
    \label{app-tab:main-defense-result-inf}
    \setlength{\tabcolsep}{6pt}
    \renewcommand{\arraystretch}{1.4}
    \renewcommand{\aboverulesep}{0pt}
    \renewcommand{\belowrulesep}{0pt}
    \setlength\cellspacetoplimit{2pt}
    \setlength\cellspacebottomlimit{2pt}
    \scriptsize
    \begin{tabular}{l@{\hskip 10pt}l@{\hskip 10pt}cccccccccc}
    \toprule
    \multirow{2}{*}{\textbf{Dataset}}
        & \multirow{2}{*}{\textbf{Defense}}
        & \multicolumn{2}{c}{\textbf{Random}}
        & \multicolumn{2}{c}{\textbf{Manhattan}}
        & \multicolumn{2}{c}{\textbf{FreqBack-TSF}}
        & \multicolumn{2}{c}{\textbf{BackTime}}
        & \multicolumn{2}{c}{\textbf{\textsc{Average}}} \\
    \cmidrule(r){3-4} 
    \cmidrule(lr){5-6} 
    \cmidrule(lr){7-8} 
    \cmidrule(lr){9-10}
    \cmidrule(l){11-12}
     &  & AUC $\uparrow$ & F1 $\uparrow$
        & AUC $\uparrow$ & F1 $\uparrow$
        & AUC $\uparrow$ & F1 $\uparrow$
        & AUC $\uparrow$ & F1 $\uparrow$
        & AUC $\uparrow$ & F1 $\uparrow$ \\
    \midrule

    \multirow{4}{*}[-0.5ex]{\textbf{PEMS03}}
        & No Defense & 0.500 & 0.500 & 0.500 & 0.500 & 0.500 & 0.500 & 0.500 & 0.500 & 0.500 & 0.500 \\ \cmidrule(l{0pt}){2-12}
        & STRIP      & 0.518 & 0.532 & 0.523 & 0.537 & \textbf{0.481} & 0.513 & \textbf{0.501} & 0.516 & 0.506 & 0.525 \\
        & TeCo       & \textbf{0.563} & \textbf{0.564} & \textbf{0.563} & \textbf{0.563} & 0.431 & 0.506 & 0.478 & 0.512 & \textbf{0.509} & \textbf{0.536} \\
        & IBD-PSC    & 0.364 & 0.514 & 0.402 & 0.522 & 0.416 & \textbf{0.519} & 0.486 & \textbf{0.535} & 0.417 & 0.523 \\
    \midrule

    \multirow{4}{*}[-0.5ex]{\textbf{Weather}}
        & No Defense & 0.500 & 0.500 & 0.500 & 0.500 & 0.500 & 0.500 & 0.500 & 0.500 & 0.500 & 0.500 \\ \cmidrule(l{0pt}){2-12}
        & STRIP      & 0.300 & 0.510 & 0.461 & 0.518 & \textbf{0.589} & \textbf{0.591} & 0.497 & 0.531 & 0.462 & 0.538 \\
        & TeCo       & \textbf{0.581} & \textbf{0.590} & 0.458 & 0.517 & 0.466 & 0.534 & \textbf{0.547} & \textbf{0.574} & \textbf{0.513} & \textbf{0.554} \\
        & IBD-PSC    & 0.317 & 0.519 & \textbf{0.521} & \textbf{0.546} & 0.369 & 0.556 & 0.390 & 0.534 & 0.399 & 0.539 \\
    \midrule

    \multirow{4}{*}[-0.5ex]{\textbf{ETTm1}}
        & No Defense & 0.500 & 0.500 & 0.500 & 0.500 & 0.500 & 0.500 & 0.500 & 0.500 & 0.500 & 0.500 \\ \cmidrule(l{0pt}){2-12}
        & STRIP      & 0.490 & 0.525 & 0.448 & 0.506 & 0.480 & 0.507 & 0.477 & 0.506 & 0.474 & 0.511 \\
        & TeCo       & \textbf{0.614} & \textbf{0.591} & \textbf{0.519} & \textbf{0.544} & \textbf{0.640} & \textbf{0.612} & \textbf{0.524} & \textbf{0.521} & \textbf{0.574} & \textbf{0.567} \\
        & IBD-PSC    & 0.378 & 0.513 & 0.464 & 0.523 & 0.497 & 0.532 & 0.486 & 0.518 & 0.456 & 0.522 \\
    \bottomrule
    \end{tabular}
   
\end{table}
\clearpage

\begin{table*}[!t]
    \centering
    \caption{Full main results of backdoor defenses against TSF backdoor attacks on \emph{FEDformer} model. Best results are in \textbf{bold}. Lower \MAEC{} indicates better performance, while higher \MAEP{} and \FDER{} indicate better performance.}
    \label{app-tab:main-defense-result-fedformer}
    \setlength{\tabcolsep}{2pt} 
    \tiny
    \renewcommand{\aboverulesep}{0pt}
    \renewcommand{\belowrulesep}{0pt}
    \setlength\cellspacetoplimit{1.8pt}
    \setlength\cellspacebottomlimit{1.8pt}
    \resizebox{\textwidth}{!}{
    
    \begin{tabular}{Sc@{\hskip 3pt}Sl*{12}{Sc}}
    \toprule
     & \textbf{Attack}~→
        & \multicolumn{3}{c}{\textbf{Random}}
        & \multicolumn{3}{c}{\textbf{Manhattan}}
        & \multicolumn{3}{c}{\textbf{FreqBack-TSF}}
        & \multicolumn{3}{c}{\textbf{BackTime}} \\
    \cmidrule(l{0pt}r){2-2}
    \cmidrule(lr){3-5}
    \cmidrule(lr){6-8}
    \cmidrule(lr){9-11}
    \cmidrule(l){12-14}
    \multirow{-2}{*}[0.5ex]{\textbf{Dataset}}
        & \textbf{Defense}~↓
        & \MAEC~↓ & \MAEP~↑ & \FDER~↑
        & \MAEC~↓ & \MAEP~↑ & \FDER~↑
        & \MAEC~↓ & \MAEP~↑ & \FDER~↑
        & \MAEC~↓ & \MAEP~↑ & \FDER~↑ \\
    \midrule

     & No Defense & 16.286 & 14.959 & --  & 16.411 & 17.984 & --  & 16.179 & 9.436 & --  & 16.093 & 10.760 & -- \\ \cmidrule(l{0pt}r{0pt}){2-14}
     & Spectral   & 16.930 & 15.658 & 0.503 & 16.567 & 18.960 & 0.521 & 16.232 & 9.627 & 0.508 & 16.484 & 11.221 & 0.509 \\
     & TED        & 16.607 & 15.402 & 0.505 & 16.667 & 17.639 & 0.492 & 16.378 & 9.496 & 0.497 & \textbf{16.284} & 10.828 & 0.497 \\
     & TED++      & 18.093 & 19.247 & 0.561 & 17.542 & 18.627 & 0.485 & 16.103 & 9.257 & 0.500 & 16.332 & 10.882 & 0.498 \\ \cmidrule(l{0pt}r{0pt}){2-14}
     & Fine-tuning    & 16.758 & 48.550 & 0.832 & 16.887 & 41.641 & 0.770 & 16.835 & 18.616 & 0.727 & 16.414 & 17.767 & 0.687 \\
     & Fine-pruning   & 16.836 & 49.054 & 0.831 & 16.951 & 51.561 & 0.810 & 16.871 & 20.123 & 0.745 & 16.408 & 21.445 & 0.740 \\
     & NAD            & 16.599 & 39.003 & 0.799 & 16.616 & 32.060 & 0.713 & 16.558 & 15.765 & 0.689 & 16.377 & 17.941 & 0.691 \\
     & IMS            & \textbf{16.286} & 14.953 & 0.500 & \textbf{16.411} & 17.982 & 0.500 & \textbf{16.179} & 9.430 & 0.500 & 16.684 & 13.569 & 0.586 \\ \cmidrule(l{0pt}r{0pt}){2-14}
     & ABL        & 17.591 & 18.677 & 0.562 & 17.141 & 17.898 & 0.479 & 16.990 & 10.627 & 0.532 & 16.803 & 11.183 & 0.498 \\
     & PDB        & 16.774 & 25.809 & 0.696 & 17.254 & 37.089 & 0.733 & 16.914 & 17.076 & 0.702 & 17.040 & 14.511 & 0.601 \\
     & ESTI       & 18.896 & 15.727 & 0.455 & 18.102 & 17.401 & 0.453 & 16.432 & 9.199 & 0.492 & \textbf{16.284} & 11.212 & 0.514 \\ \cmidrule(l{0pt}r{0pt}){2-14}
    \multirow{-12}{*}[2ex]{PEMS03}
        & \cellcolor[HTML]{EFEFEF}\textbf{\methodname}
        & \cellcolor[HTML]{EFEFEF}16.607
        & \cellcolor[HTML]{EFEFEF}\textbf{100.436}
        & \cellcolor[HTML]{EFEFEF}\textbf{0.916}
        & \cellcolor[HTML]{EFEFEF}16.578
        & \cellcolor[HTML]{EFEFEF}\textbf{94.212}
        & \cellcolor[HTML]{EFEFEF}\textbf{0.900}
        & \cellcolor[HTML]{EFEFEF}16.496
        & \cellcolor[HTML]{EFEFEF}\textbf{38.147}
        & \cellcolor[HTML]{EFEFEF}\textbf{0.867}
        & \cellcolor[HTML]{EFEFEF}16.840
        & \cellcolor[HTML]{EFEFEF}\textbf{41.232}
        & \cellcolor[HTML]{EFEFEF}\textbf{0.847} \\
    \midrule

     & No Defense & 9.282 & 13.400 & --  & 8.781 & 22.145 & --  & 9.434 & 9.423 & --  & 9.609 & 8.020 & -- \\ \cmidrule(l{0pt}r{0pt}){2-14}
     & Spectral   & 9.286 & 18.636 & 0.640 & 9.495 & 22.391 & 0.468 & 10.107 & 7.347 & 0.467 & 9.460 & 8.280 & 0.516 \\
     & TED        & 9.161 & 19.594 & 0.658 & 9.271 & 19.516 & 0.474 & 9.670 & 6.263 & 0.488 & 9.775 & 10.295 & 0.602 \\
     & TED++      & 9.412 & 45.224 & 0.845 & 10.853 & 21.064 & 0.405 & 9.506 & 19.342 & 0.753 & 9.517 & 8.135 & 0.507 \\ \cmidrule(l{0pt}r{0pt}){2-14}
     & Fine-tuning    & 9.174 & 77.684 & 0.914 & 9.252 & 71.884 & 0.820 & 9.785 & \textbf{85.835} & 0.927 & 9.517 & \textbf{69.535} & \textbf{0.942} \\
     & Fine-pruning   & 9.155 & 86.216 & 0.922 & 9.127 & 72.853 & 0.829 & 9.802 & 70.408 & 0.914 & 9.600 & 58.004 & 0.931 \\
     & NAD            & 9.111 & 44.531 & 0.850 & 9.032 & 67.813 & 0.823 & 9.801 & 60.086 & 0.903 & 9.584 & 55.976 & 0.928 \\
     & IMS            & 9.282 & 13.020 & 0.500 & 8.785 & 22.210 & 0.501 & 9.430 & 8.868 & 0.500 & 9.610 & 8.029 & 0.501 \\ \cmidrule(l{0pt}r{0pt}){2-14}
     & ABL        & 10.044 & 11.555 & 0.462 & 9.633 & 22.443 & 0.462 & 9.823 & 7.480 & 0.480 & 11.159 & 12.711 & 0.615 \\
     & PDB        & 9.890 & 41.380 & 0.807 & 9.619 & 54.644 & 0.754 & 9.609 & 30.494 & 0.836 & 10.254 & 35.951 & 0.857 \\
     & ESTI       & \textbf{9.076} & 23.477 & 0.715 & \textbf{8.569} & \textbf{102.159} & \textbf{0.892} & \textbf{8.440} & 5.191 & 0.500 & \textbf{8.882} & 4.855 & 0.500 \\ \cmidrule(l{0pt}r{0pt}){2-14}
    \multirow{-12}{*}[2ex]{Weather}
        & \cellcolor[HTML]{EFEFEF}\textbf{\methodname}
        & \cellcolor[HTML]{EFEFEF}9.162
        & \cellcolor[HTML]{EFEFEF}\textbf{102.996}
        & \cellcolor[HTML]{EFEFEF}\textbf{0.935}
        & \cellcolor[HTML]{EFEFEF}9.584
        & \cellcolor[HTML]{EFEFEF}96.013
        & \cellcolor[HTML]{EFEFEF}0.843
        & \cellcolor[HTML]{EFEFEF}9.536
        & \cellcolor[HTML]{EFEFEF}76.651
        & \cellcolor[HTML]{EFEFEF}\textbf{0.933}
        & \cellcolor[HTML]{EFEFEF}10.089
        & \cellcolor[HTML]{EFEFEF}43.244
        & \cellcolor[HTML]{EFEFEF}0.883 \\
    \midrule

     & No Defense & 1.121 & 1.218 & --  & 1.109 & 1.662 & --  & 1.111 & 0.671 & --  & 1.085 & 0.911 & -- \\ \cmidrule(l{0pt}r{0pt}){2-14}
     & Spectral   & 1.134 & 1.306 & 0.528 & 1.142 & 1.826 & 0.531 & 1.138 & 0.794 & 0.565 & 1.160 & 0.775 & 0.468 \\
     & TED        & \textbf{1.096} & 1.201 & 0.500 & 1.128 & 1.695 & 0.502 & 1.100 & 0.566 & 0.500 & 1.125 & 0.997 & 0.525 \\
     & TED++      & 1.130 & 1.258 & 0.512 & 1.145 & 1.498 & 0.484 & \textbf{1.088} & 0.708 & 0.526 & 1.106 & 0.851 & 0.490 \\ \cmidrule(l{0pt}r{0pt}){2-14}
     & Fine-tuning    & 1.180 & 2.273 & 0.707 & 1.173 & 2.534 & 0.645 & 1.217 & 1.759 & 0.766 & 1.191 & 1.698 & 0.687 \\
     & Fine-pruning   & 1.181 & 2.229 & 0.701 & 1.161 & 2.612 & 0.660 & 1.176 & 1.541 & 0.754 & 1.177 & 1.659 & 0.686 \\
     & NAD            & 1.155 & 1.994 & 0.680 & 1.148 & 2.403 & 0.637 & 1.163 & 1.310 & 0.721 & 1.161 & 1.667 & 0.694 \\
     & IMS            & 1.121 & 1.229 & 0.504 & \textbf{1.109} & 1.665 & 0.501 & 1.110 & 0.670 & 0.500 & \textbf{1.080 }& 1.103 & 0.587 \\ \cmidrule(l{0pt}r{0pt}){2-14}
     & ABL        & 1.275 & 1.630 & 0.566 & 1.345 & 2.052 & 0.507 & 1.327 & 1.492 & 0.693 & 1.296 & 1.152 & 0.523 \\
     & PDB        & 1.142 & 3.479 & 0.816 & 1.343 & 2.681 & 0.603 & 1.220 & 1.078 & 0.644 & 1.261 & 1.715 & 0.664 \\
     & ESTI       & 1.301 & 1.464 & 0.514 & 1.398 & 1.764 & 0.426 & 1.193 & 0.888 & 0.587 & 1.261 & 1.406 & 0.606 \\ \cmidrule(l{0pt}r{0pt}){2-14}
    \multirow{-12}{*}[2ex]{ETTm1}
        & \cellcolor[HTML]{EFEFEF}\textbf{\methodname}
        & \cellcolor[HTML]{EFEFEF}1.220
        & \cellcolor[HTML]{EFEFEF}\textbf{6.664}
        & \cellcolor[HTML]{EFEFEF}\textbf{0.868}
        & \cellcolor[HTML]{EFEFEF}1.213
        & \cellcolor[HTML]{EFEFEF}\textbf{7.138}
        & \cellcolor[HTML]{EFEFEF}\textbf{0.841}
        & \cellcolor[HTML]{EFEFEF}1.298
        & \cellcolor[HTML]{EFEFEF}\textbf{1.912}
        & \cellcolor[HTML]{EFEFEF}\textbf{0.752}
        & \cellcolor[HTML]{EFEFEF}1.256
        & \cellcolor[HTML]{EFEFEF}\textbf{1.304}
        & \cellcolor[HTML]{EFEFEF}\textbf{0.582} \\
    \bottomrule
    \end{tabular}
    }
\end{table*}

\begin{table*}[!b]
    \centering
    \caption{Detection performance comparison of inference-time defenses on three datasets on \emph{FEDformer} model. Best results are in \textbf{bold}. Higher AUC and F1 indicates better detection performance.}
    \label{app-tab:main-defense-result-fedformer-inf}
    \setlength{\tabcolsep}{6pt}
    \renewcommand{\arraystretch}{1.4}
    \renewcommand{\aboverulesep}{0pt}
    \renewcommand{\belowrulesep}{0pt}
    \setlength\cellspacetoplimit{2pt}
    \setlength\cellspacebottomlimit{2pt}
    \scriptsize

    \begin{tabular}{l@{\hskip 10pt}l@{\hskip 10pt}cccccccccc}
    \toprule
    \multirow{2}{*}{\textbf{Dataset}}
        & \multirow{2}{*}{\textbf{Defense}}
        & \multicolumn{2}{c}{\textbf{Random}}
        & \multicolumn{2}{c}{\textbf{Manhattan}}
        & \multicolumn{2}{c}{\textbf{FreqBack-TSF}}
        & \multicolumn{2}{c}{\textbf{BackTime}}
        & \multicolumn{2}{c}{\textbf{\textsc{Average}}} \\
    \cmidrule(r){3-4} 
    \cmidrule(lr){5-6} 
    \cmidrule(lr){7-8} 
    \cmidrule(lr){9-10}
    \cmidrule(lr){11-12}
     &  & AUC $\uparrow$ & F1 $\uparrow$
        & AUC $\uparrow$ & F1 $\uparrow$
        & AUC $\uparrow$ & F1 $\uparrow$
        & AUC $\uparrow$ & F1 $\uparrow$
        & AUC $\uparrow$ & F1 $\uparrow$ \\
    \midrule

    \multirow{4}{*}[-0.5ex]{\textbf{PEMS03}}
        & No Defense & 0.500 & 0.500 & 0.500 & 0.500 & 0.500 & 0.500 & 0.500 & 0.500 & 0.500 & 0.500 \\ \cmidrule(l{0pt}){2-12}
        & STRIP      & 0.502 & 0.522 & 0.523 & 0.532 & 0.500 & 0.520 & 0.504 & 0.517 & \textbf{0.507} & 0.523 \\
        & TeCo       & \textbf{0.573} & \textbf{0.557} & \textbf{0.577} & \textbf{0.559} & 0.403 & 0.500 & 0.465 & 0.500 & 0.505 & 0.529 \\
        & IBD-PSC    & 0.284 & 0.500 & 0.398 & 0.507 & \textbf{0.552} & \textbf{0.555} & \textbf{0.627} & \textbf{0.603} & 0.465 & \textbf{0.541} \\
    \midrule

    \multirow{4}{*}[-0.5ex]{\textbf{Weather}}
        & No Defense & 0.500 & 0.500 & 0.500 & 0.500 & 0.500 & 0.500 & 0.500 & 0.500 & 0.500 & 0.500 \\ \cmidrule(l{0pt}){2-12}
        & STRIP      & 0.422 & 0.507 & 0.481 & 0.524 & 0.526 & 0.547 & \textbf{0.527} & \textbf{0.539} & 0.489 & 0.529 \\
        & TeCo       & \textbf{0.599} & \textbf{0.587} & \textbf{0.554} & \textbf{0.541} & \textbf{0.587} & \textbf{0.571} & 0.442 & 0.521 & \textbf{0.546} & \textbf{0.555} \\
        & IBD-PSC    & 0.331 & 0.501 & 0.505 & 0.520 & 0.455 & 0.567 & 0.362 & 0.502 & 0.413 & 0.523 \\
    \midrule

    \multirow{4}{*}[-0.5ex]{\textbf{ETTm1}}
        & No Defense & 0.500 & 0.500 & 0.500 & 0.500 & 0.500 & 0.500 & 0.500 & 0.500 & 0.500 & 0.500 \\ \cmidrule(l{0pt}){2-12}
        & STRIP      & 0.406 & 0.501 & 0.401 & 0.501 & 0.496 & 0.513 & 0.466 & 0.507 & 0.442 & 0.506 \\
        & TeCo       & \textbf{0.674} & \textbf{0.630} & \textbf{0.577} & \textbf{0.559} & \textbf{0.560} & \textbf{0.561} & \textbf{0.504} & 0.507 & \textbf{0.579} & \textbf{0.564} \\
        & IBD-PSC    & 0.405 & 0.517 & 0.384 & 0.510 & 0.511 & 0.519 & 0.491 & \textbf{0.522} & 0.448 & 0.517 \\
    \bottomrule
    \end{tabular}
\end{table*}
\clearpage

\begin{table*}[!t]
    \centering
    \caption{Full main results of backdoor defenses against TSF backdoor attacks on \emph{SimpleTM} model. Best results are in \textbf{bold}. Lower \MAEC{} indicates better performance, while higher \MAEP{} and \FDER{} indicate better performance.}
    \label{app-tab:main-defense-result-simpletm}
    \setlength{\tabcolsep}{2pt} 
    \tiny
    \renewcommand{\aboverulesep}{0pt}
    \renewcommand{\belowrulesep}{0pt}
    \setlength\cellspacetoplimit{1.8pt}
    \setlength\cellspacebottomlimit{1.8pt}
    \resizebox{\textwidth}{!}{
    
    \begin{tabular}{Sc@{\hskip 3pt}Sl*{12}{Sc}}
    \toprule
     & \textbf{Attack}~→
        & \multicolumn{3}{c}{\textbf{Random}}
        & \multicolumn{3}{c}{\textbf{Manhattan}}
        & \multicolumn{3}{c}{\textbf{FreqBack-TSF}}
        & \multicolumn{3}{c}{\textbf{BackTime}} \\
    \cmidrule(l{0pt}r){2-2}
    \cmidrule(lr){3-5}
    \cmidrule(lr){6-8}
    \cmidrule(lr){9-11}
    \cmidrule(l){12-14}
    \multirow{-2}{*}[0.5ex]{\textbf{Dataset}}
        & \textbf{Defense}~↓
        & \MAEC~↓ & \MAEP~↑ & \FDER~↑
        & \MAEC~↓ & \MAEP~↑ & \FDER~↑
        & \MAEC~↓ & \MAEP~↑ & \FDER~↑
        & \MAEC~↓ & \MAEP~↑ & \FDER~↑ \\
    \midrule

     & No Defense & 17.510 & 19.007 & --  & 17.539 & 22.532 & --  & 17.335 & 15.468 & --  & 17.268 & 9.131 & -- \\ \cmidrule(l{0pt}r{0pt}){2-14}
     & Spectral   & 17.746 & 20.820 & 0.537 & 17.596 & 22.835 & 0.505 & 18.015 & 13.304 & 0.481 & 17.621 & 13.971 & 0.663 \\
     & TED        & 17.578 & 25.544 & 0.626 & 17.529 & 22.801 & 0.506 & 17.707 & 13.417 & 0.489 & 17.671 & 10.242 & 0.543 \\
     & TED++      & 17.807 & 19.156 & 0.496 & 17.529 & 22.863 & 0.507 & 17.785 & 12.099 & 0.487 & 17.328 & 11.401 & 0.598 \\ \cmidrule(l{0pt}r{0pt}){2-14}
     & Fine-tuning    & 17.397 & 23.898 & 0.602 & 17.619 & 27.930 & 0.594 & 17.464 & 27.846 & 0.719 & 17.355 & 13.287 & 0.654 \\
     & Fine-pruning   & \textbf{17.396} & 25.735 & 0.631 & 17.665 & 28.338 & 0.599 & 17.460 & 29.782 & 0.737 & 17.363 & 13.950 & 0.670 \\
     & NAD            & 17.516 & 21.572 & 0.559 & 17.571 & 24.708 & 0.543 & 17.411 & 25.277 & 0.692 & 17.299 & 13.245 & 0.654 \\
     & IMS            & 17.513 & 19.014 & 0.500 & 17.539 & 22.540 & 0.500 & 17.335 & 15.480 & 0.500 & \textbf{16.520} & 8.206 & 0.500 \\ \cmidrule(l{0pt}r{0pt}){2-14}
     & ABL        & 17.740 & 19.722 & 0.512 & 17.717 & 23.135 & 0.508 & 17.665 & 16.424 & 0.520 & 17.465 & 11.219 & 0.587 \\
     & PDB        & 17.740 & 117.954 & 0.913 & 17.527 & 120.846 & 0.907 & 18.889 & 38.543 & 0.758 & 18.025 & 26.746 & 0.808 \\
     & ESTI       & \textbf{17.396} & 16.826 & 0.500 & 17.380 & 20.347 & 0.500 & 17.188 & 15.309 & 0.500 & 18.952 & 14.763 & 0.646 \\ \cmidrule(l{0pt}r{0pt}){2-14}
    \multirow{-12}{*}[2ex]{PEMS03}
        & \cellcolor[HTML]{EFEFEF}\textbf{\methodname}
        & \cellcolor[HTML]{EFEFEF}17.489
        & \cellcolor[HTML]{EFEFEF}\textbf{173.700}
        & \cellcolor[HTML]{EFEFEF}\textbf{0.945}
        & \cellcolor[HTML]{EFEFEF}\textbf{17.284}
        & \cellcolor[HTML]{EFEFEF}\textbf{157.870}
        & \cellcolor[HTML]{EFEFEF}\textbf{0.929}
        & \cellcolor[HTML]{EFEFEF}\textbf{16.780}
        & \cellcolor[HTML]{EFEFEF}\textbf{94.224}
        & \cellcolor[HTML]{EFEFEF}\textbf{0.918}
        & \cellcolor[HTML]{EFEFEF}17.243
        & \cellcolor[HTML]{EFEFEF}\textbf{36.626}
        & \cellcolor[HTML]{EFEFEF}\textbf{0.875} \\
    \midrule

     & No Defense & 7.693 & 18.888 & --  & 7.711 & 64.020 & --  & 7.761 & 19.205 & --  & 7.752 & 15.301 & -- \\ \cmidrule(l{0pt}r{0pt}){2-14}
     & Spectral   & 7.868 & 26.086 & 0.627 & 7.792 & 65.176 & 0.504 & 7.875 & 19.109 & 0.493 & 7.851 & 15.079 & 0.494 \\
     & TED        & 7.764 & 28.075 & 0.659 & 7.676 & 63.941 & 0.500 & 7.836 & 16.691 & 0.495 & 7.979 & 15.343 & 0.487 \\
     & TED++      & 7.729 & 19.507 & 0.514 & 7.674 & 61.888 & 0.500 & 7.862 & 16.510 & 0.494 & 7.849 & 15.187 & 0.494 \\ \cmidrule(l{0pt}r{0pt}){2-14}
     & Fine-tuning    & 7.850 & 23.020 & 0.580 & 8.089 & 72.225 & 0.533 & 8.148 & 42.259 & 0.749 & 7.960 & 17.028 & 0.538 \\
     & Fine-pruning   & 7.860 & 25.415 & 0.618 & 8.092 & 77.976 & 0.566 & 8.005 & 45.018 & 0.771 & 8.009 & 19.544 & 0.593 \\
     & NAD            & 7.835 & 19.659 & 0.511 & 8.056 & 73.444 & 0.543 & 7.866 & 27.959 & 0.650 & 8.095 & 16.989 & 0.529 \\
     & IMS            & \textbf{7.692} & 19.130 & 0.506 & 7.712 & 64.484 & 0.504 & \textbf{7.755} & 18.969 & 0.500 & 7.753 & 15.263 & 0.500 \\ \cmidrule(l{0pt}r{0pt}){2-14}
     & ABL        & 7.887 & 35.276 & 0.720 & 8.007 & 64.383 & 0.484 & 7.927 & 21.299 & 0.539 & 8.021 & 16.695 & 0.525 \\
     & PDB        & 7.836 & 192.806 & 0.942 & 7.902 & 108.511 & 0.693 & 8.041 & 98.519 & 0.885 & 8.040 & 50.108 & 0.829 \\
     & ESTI       & 7.733 & 16.628 & 0.497 & \textbf{7.469} & 74.586 & 0.571 & 7.896 & 210.368 & 0.946 & \textbf{7.689} & 15.785 & 0.515 \\ \cmidrule(l{0pt}r{0pt}){2-14}
    \multirow{-12}{*}[2ex]{Weather}
        & \cellcolor[HTML]{EFEFEF}\textbf{\methodname}
        & \cellcolor[HTML]{EFEFEF}7.716
        & \cellcolor[HTML]{EFEFEF}\textbf{351.059}
        & \cellcolor[HTML]{EFEFEF}\textbf{0.972}
        & \cellcolor[HTML]{EFEFEF}7.699
        & \cellcolor[HTML]{EFEFEF}\textbf{114.876}
        & \cellcolor[HTML]{EFEFEF}\textbf{0.721}
        & \cellcolor[HTML]{EFEFEF}7.973
        & \cellcolor[HTML]{EFEFEF}\textbf{416.357}
        & \cellcolor[HTML]{EFEFEF}\textbf{0.964}
        & \cellcolor[HTML]{EFEFEF}7.934
        & \cellcolor[HTML]{EFEFEF}\textbf{69.357}
        & \cellcolor[HTML]{EFEFEF}\textbf{0.878} \\
    \midrule

     & No Defense & 1.206 & 0.966 & --  & 1.203 & 1.558 & --  & 1.165 & 0.870 & --  & 1.170 & 0.508 & -- \\ \cmidrule(l{0pt}r{0pt}){2-14}
     & Spectral   & 1.215 & 1.107 & 0.560 & 1.224 & 1.317 & 0.492 & 1.186 & 0.922 & 0.519 & 1.185 & 0.602 & 0.572 \\
     & TED        & 1.189 & 1.318 & 0.633 & 1.199 & 1.448 & 0.500 & 1.164 & 0.902 & 0.518 & 1.174 & 0.517 & 0.507 \\
     & TED++      & 1.189 & 1.197 & 0.596 & 1.190 & 1.474 & 0.500 & 1.165 & 0.893 & 0.513 & 1.181 & 0.549 & 0.533 \\ \cmidrule(l{0pt}r{0pt}){2-14}
     & Fine-tuning    & 1.210 & 2.075 & 0.766 & 1.186 & 3.142 & 0.752 & 1.182 & 1.171 & 0.621 & 1.186 & 0.683 & 0.622 \\
     & Fine-pruning   & 1.209 & 2.171 & 0.776 & 1.194 & 3.716 & 0.790 & 1.185 & 1.295 & 0.655 & 1.188 & 0.794 & 0.673 \\
     & NAD            & 1.215 & 1.341 & 0.636 & 1.195 & 2.557 & 0.695 & 1.170 & 1.005 & 0.565 & 1.183 & 0.501 & 0.495 \\
     & IMS            & 1.208 & 1.018 & 0.525 & 1.204 & 1.599 & 0.513 & 1.166 & 0.893 & 0.512 & 1.171 & 0.504 & 0.500 \\ \cmidrule(l{0pt}r{0pt}){2-14}
     & ABL        & 1.229 & 1.061 & 0.536 & 1.226 & 1.564 & 0.492 & 1.182 & 0.944 & 0.532 & 1.190 & 0.505 & 0.492 \\
     & PDB        & \textbf{1.142} & 3.659 & 0.868 & \textbf{1.154} & 6.487 & 0.880 & \textbf{1.142} & 1.447 & 0.699 & \textbf{1.135} & 0.799 & 0.683 \\
     & ESTI       & 1.285 & 4.140 & 0.853 & 1.276 & 2.649 & 0.677 & 1.281 & 1.345 & 0.631 & 1.259 & 0.461 & 0.465 \\ \cmidrule(l{0pt}r{0pt}){2-14}
    \multirow{-12}{*}[2ex]{ETTm1}
        & \cellcolor[HTML]{EFEFEF}\textbf{\methodname}
        & \cellcolor[HTML]{EFEFEF}1.247
        & \cellcolor[HTML]{EFEFEF}\textbf{6.928}
        & \cellcolor[HTML]{EFEFEF}\textbf{0.914}
        & \cellcolor[HTML]{EFEFEF}1.245
        & \cellcolor[HTML]{EFEFEF}\textbf{7.802}
        & \cellcolor[HTML]{EFEFEF}\textbf{0.883}
        & \cellcolor[HTML]{EFEFEF}1.287
        & \cellcolor[HTML]{EFEFEF}\textbf{1.795}
        & \cellcolor[HTML]{EFEFEF}\textbf{0.710}
        & \cellcolor[HTML]{EFEFEF}1.268
        & \cellcolor[HTML]{EFEFEF}\textbf{0.923}
        & \cellcolor[HTML]{EFEFEF}\textbf{0.687} \\
    \bottomrule
    \end{tabular}
    }
\end{table*}

\begin{table*}[!b]
    \centering
    \caption{Detection performance comparison of inference-time defenses on three datasets on \emph{SimpleTM} model. Best results are in \textbf{bold}. Higher AUC and F1 indicates better detection performance.}
    \label{app-tab:main-defense-result-simpletm-inf}
    \setlength{\tabcolsep}{6pt}
    \renewcommand{\arraystretch}{1.4}
    \renewcommand{\aboverulesep}{0pt}
    \renewcommand{\belowrulesep}{0pt}
    \setlength\cellspacetoplimit{2pt}
    \setlength\cellspacebottomlimit{2pt}
    \scriptsize

    \begin{tabular}{l@{\hskip 10pt}l@{\hskip 10pt}cccccccccc}
    \toprule
    \multirow{2}{*}{\textbf{Dataset}}
        & \multirow{2}{*}{\textbf{Defense}}
        & \multicolumn{2}{c}{\textbf{Random}}
        & \multicolumn{2}{c}{\textbf{Manhattan}}
        & \multicolumn{2}{c}{\textbf{FreqBack-TSF}}
        & \multicolumn{2}{c}{\textbf{BackTime}}
        & \multicolumn{2}{c}{\textbf{\textsc{Average}}} \\
    \cmidrule(r){3-4} 
    \cmidrule(lr){5-6} 
    \cmidrule(lr){7-8} 
    \cmidrule(lr){9-10}
    \cmidrule(lr){11-12}
     &  & AUC $\uparrow$ & F1 $\uparrow$
        & AUC $\uparrow$ & F1 $\uparrow$
        & AUC $\uparrow$ & F1 $\uparrow$
        & AUC $\uparrow$ & F1 $\uparrow$
        & AUC $\uparrow$ & F1 $\uparrow$ \\
    \midrule

    \multirow{4}{*}[-0.5ex]{\textbf{PEMS03}}
        & No Defense & 0.500 & 0.500 & 0.500 & 0.500 & 0.500 & 0.500 & 0.500 & 0.500 & 0.500 & 0.500 \\ \cmidrule(l{0pt}){2-12}
        & STRIP      & 0.517 & 0.529 & 0.506 & 0.516 & 0.486 & \textbf{0.517} & 0.494 & 0.518 & 0.501 & 0.520 \\
        & TeCo       & \textbf{0.680} & \textbf{0.628} & \textbf{0.670} & \textbf{0.627} & \textbf{0.510} & \textbf{0.517} & \textbf{0.541} & \textbf{0.535} & \textbf{0.600} & \textbf{0.577} \\
        & IBD-PSC    & 0.436 & 0.525 & 0.434 & 0.548 & 0.263 & 0.500 & 0.364 & 0.500 & 0.374 & 0.518 \\
    \midrule

    \multirow{4}{*}[-0.5ex]{\textbf{Weather}}
        & No Defense & 0.500 & 0.500 & 0.500 & 0.500 & 0.500 & 0.500 & 0.500 & 0.500 & 0.500 & 0.500 \\ \cmidrule(l{0pt}){2-12}
        & STRIP      & 0.308 & 0.521 & 0.395 & 0.515 & \textbf{0.737} & \textbf{0.701} & 0.420 & 0.515 & 0.465 & 0.563 \\
        & TeCo       & \textbf{0.747} & \textbf{0.680} & 0.463 & 0.506 & 0.439 & 0.531 & \textbf{0.770} & \textbf{0.700} & \textbf{0.605} & \textbf{0.604} \\
        & IBD-PSC    & 0.047 & 0.500 & \textbf{0.560} & \textbf{0.597} & 0.055 & 0.500 & 0.245 & 0.520 & 0.227 & 0.529 \\
    \midrule

    \multirow{4}{*}[-0.5ex]{\textbf{ETTm1}}
        & No Defense & 0.500 & 0.500 & 0.500 & 0.500 & 0.500 & 0.500 & 0.500 & 0.500 & 0.500 & 0.500 \\ \cmidrule(l{0pt}){2-12}
        & STRIP      & 0.464 & 0.500 & 0.447 & 0.500 & 0.485 & 0.506 & 0.491 & 0.509 & 0.472 & 0.504 \\
        & TeCo       & \textbf{0.533} & \textbf{0.539} & \textbf{0.594} & \textbf{0.572} & \textbf{0.597} & \textbf{0.575} & \textbf{0.541} & \textbf{0.533} & \textbf{0.566} & \textbf{0.555} \\
        & IBD-PSC    & 0.421 & 0.523 & 0.507 & 0.544 & 0.526 & \textbf{0.575} & 0.482 & 0.532 & 0.484 & 0.544 \\
    \bottomrule
    \end{tabular}
\end{table*}
\clearpage

\begin{table*}[!t]
    \centering
    \caption{Full main results of backdoor defenses against TSF backdoor attacks on \emph{TimesNet} model. Best results are in \textbf{bold}. Lower \MAEC{} indicates better performance, while higher \MAEP{} and \FDER{} indicate better performance.}
    \label{app-tab:main-defense-result-timesnet}
    \setlength{\tabcolsep}{2pt}
    \tiny
    \renewcommand{\aboverulesep}{0pt}
    \renewcommand{\belowrulesep}{0pt}
    \setlength\cellspacetoplimit{1.8pt}
    \setlength\cellspacebottomlimit{1.8pt}
    \resizebox{\textwidth}{!}{
    \begin{tabular}{Sc@{\hskip 3pt}Sl*{12}{Sc}}
    \toprule
     & \textbf{Attack}~→
        & \multicolumn{3}{c}{\textbf{Random}}
        & \multicolumn{3}{c}{\textbf{Manhattan}}
        & \multicolumn{3}{c}{\textbf{FreqBack-TSF}}
        & \multicolumn{3}{c}{\textbf{BackTime}} \\
    \cmidrule(l{0pt}r){2-2}
    \cmidrule(lr){3-5}
    \cmidrule(lr){6-8}
    \cmidrule(lr){9-11}
    \cmidrule(l){12-14}
    \multirow{-2}{*}[0.5ex]{\textbf{Dataset}}
        & \textbf{Defense}~↓
        & \MAEC~↓ & \MAEP~↑ & \FDER~↑
        & \MAEC~↓ & \MAEP~↑ & \FDER~↑
        & \MAEC~↓ & \MAEP~↑ & \FDER~↑
        & \MAEC~↓ & \MAEP~↑ & \FDER~↑ \\
    \midrule

     & No Defense & 19.104 & 19.351 & --  & 19.216 & 20.283 & --  & 19.234 & 19.146 & --  & 19.459 & 22.713 & -- \\ \cmidrule(l{0pt}r{0pt}){2-14}
     & Spectral   & 20.492 & 18.591 & 0.466 & 24.168 & 19.455 & 0.398 & 22.047 & 19.149 & 0.436 & 21.891 & 20.544 & 0.444 \\
     & TED        & 21.116 & 19.244 & 0.452 & 24.086 & 20.453 & 0.403 & 22.270 & 19.040 & 0.432 & 21.862 & 20.789 & 0.445 \\
     & TED++      & 21.692 & 19.150 & 0.440 & 21.906 & 20.487 & 0.444 & 22.228 & 18.979 & 0.433 & 22.033 & 21.340 & 0.442 \\ \cmidrule(l{0pt}r{0pt}){2-14}
     & Fine-tuning    & 22.852 & 20.279 & 0.441 & 24.476 & 23.412 & 0.459 & 22.211 & 20.975 & 0.477 & 23.032 & 23.534 & 0.440 \\
     & Fine-pruning   & 22.828 & 20.139 & 0.438 & 24.168 & 23.441 & 0.465 & 22.887 & 20.725 & 0.458 & 22.286 & 23.813 & 0.460 \\
     & NAD            & 22.270 & 19.851 & 0.442 & 23.592 & 21.775 & 0.442 & 21.647 & 19.849 & 0.462 & 22.076 & 23.288 & 0.453 \\
     & IMS            & 23.919 & 19.225 & 0.399 & 24.160 & 20.013 & 0.398 & 22.048 & 18.800 & 0.436 & 22.049 & 21.278 & 0.441 \\ \cmidrule(l{0pt}r{0pt}){2-14}
     & ABL        & 23.579 & 18.912 & 0.405 & 24.091 & 19.283 & 0.399 & 21.293 & 18.113 & 0.452 & 22.014 & 21.043 & 0.442 \\
     & PDB        & 21.375 & 20.306 & 0.470 & 23.142 & 23.497 & 0.484 & 22.733 & 22.423 & 0.496 & 21.836 & 25.933 & 0.508 \\
     & ESTI       & 23.440 & 19.006 & 0.408 & 22.899 & 19.134 & 0.420 & 22.760 & 19.543 & 0.433 & 22.420 & 21.717 & 0.434 \\ \cmidrule(l{0pt}r{0pt}){2-14}
    \multirow{-12}{*}[2ex]{PEMS03}
        & \cellcolor[HTML]{EFEFEF}\textbf{\methodname}
        & \cellcolor[HTML]{EFEFEF}\textbf{19.687}
        & \cellcolor[HTML]{EFEFEF}\textbf{39.894}
        & \cellcolor[HTML]{EFEFEF}\textbf{0.743}
        & \cellcolor[HTML]{EFEFEF}\textbf{19.689}
        & \cellcolor[HTML]{EFEFEF}\textbf{40.029}
        & \cellcolor[HTML]{EFEFEF}\textbf{0.735}
        & \cellcolor[HTML]{EFEFEF}\textbf{19.609}
        & \cellcolor[HTML]{EFEFEF}\textbf{40.905}
        & \cellcolor[HTML]{EFEFEF}\textbf{0.756}
        & \cellcolor[HTML]{EFEFEF}\textbf{20.061}
        & \cellcolor[HTML]{EFEFEF}\textbf{40.052}
        & \cellcolor[HTML]{EFEFEF}\textbf{0.701} \\
    \midrule

     & No Defense & 16.653 & 12.684 & --  & 18.026 & 30.666 & --  & 13.148 & 11.719 & --  & 14.943 & 24.417 & -- \\ \cmidrule(l{0pt}r{0pt}){2-14}
     & Spectral   & 16.412 & 16.542 & 0.617 & 20.076 & 45.515 & 0.612 & 17.997 & 16.863 & 0.518 & 26.925 & 37.809 & 0.455 \\
     & TED        & 19.469 & 16.176 & 0.536 & 18.531 & 33.423 & 0.528 & 26.568 & 25.782 & 0.520 & 26.293 & 47.592 & 0.528 \\
     & TED++      & 30.764 & 32.157 & 0.573 & 26.426 & 44.219 & 0.494 & 23.532 & 21.640 & 0.509 & 22.297 & 35.818 & 0.494 \\ \cmidrule(l{0pt}r{0pt}){2-14}
     & Fine-tuning    & 19.056 & 22.354 & 0.653 & 18.082 & 70.219 & 0.780 & 21.202 & 33.497 & 0.635 & 17.291 & 66.797 & 0.749 \\
     & Fine-pruning   & 18.261 & 21.368 & 0.659 & 17.745 & 71.955 & 0.787 & 18.355 & 40.239 & 0.713 & 16.871 & 67.479 & 0.762 \\
     & NAD            & 18.466 & 17.050 & 0.579 & 17.971 & 65.990 & 0.768 & 17.227 & 29.268 & 0.681 & 18.082 & 58.089 & 0.703 \\
     & IMS            & 16.648 & 12.691 & 0.500 & 18.044 & 30.657 & 0.499 & \textbf{13.144} & 11.745 & 0.501 & 14.948 & 24.494 & 0.501 \\ \cmidrule(l{0pt}r{0pt}){2-14}
     & ABL        & 23.605 & 13.962 & 0.399 & 27.601 & 42.822 & 0.468 & 23.263 & 27.300 & 0.568 & 19.962 & 30.649 & 0.476 \\
     & PDB        & 19.188 & 39.525 & 0.773 & 20.099 & \textbf{95.252} & 0.787 & 25.566 & 46.034 & 0.630 & 16.903 & 83.259 & 0.795 \\
     & ESTI       & 30.383 & 22.809 & 0.496 & 32.988 & 69.844 & 0.554 & 25.970 & 27.804 & 0.542 & 23.750 & 39.619 & 0.506 \\ \cmidrule(l{0pt}r{0pt}){2-14}
    \multirow{-12}{*}[2ex]{Weather}
        & \cellcolor[HTML]{EFEFEF}\textbf{\methodname}
        & \cellcolor[HTML]{EFEFEF}\textbf{14.883}
        & \cellcolor[HTML]{EFEFEF}\textbf{78.694}
        & \cellcolor[HTML]{EFEFEF}\textbf{0.919}
        & \cellcolor[HTML]{EFEFEF}\textbf{15.675}
        & \cellcolor[HTML]{EFEFEF}93.538
        & \cellcolor[HTML]{EFEFEF}\textbf{0.836}
        & \cellcolor[HTML]{EFEFEF}14.902
        & \cellcolor[HTML]{EFEFEF}\textbf{73.334}
        & \cellcolor[HTML]{EFEFEF}\textbf{0.861}
        & \cellcolor[HTML]{EFEFEF}\textbf{14.125}
        & \cellcolor[HTML]{EFEFEF}\textbf{87.000}
        & \cellcolor[HTML]{EFEFEF}\textbf{0.860} \\
    \midrule

     & No Defense & 1.106 & 0.992 & --  & 1.113 & 1.094 & --  & 1.074 & 0.714 & --  & 1.086 & 0.996 & -- \\ \cmidrule(l{0pt}r{0pt}){2-14}
     & Spectral   & 1.427 & 1.080 & 0.428 & 1.497 & 1.327 & 0.460 & 1.320 & 1.067 & 0.572 & 1.308 & 1.412 & 0.563 \\
     & TED        & 1.394 & 1.106 & 0.448 & 1.484 & 1.243 & 0.435 & 1.337 & 1.049 & 0.561 & 1.287 & 1.352 & 0.554 \\
     & TED++      & 1.491 & 1.152 & 0.440 & 1.458 & 1.253 & 0.446 & 1.329 & 1.065 & 0.569 & 1.369 & 1.434 & 0.550 \\ \cmidrule(l{0pt}r{0pt}){2-14}
     & Fine-tuning    & 1.417 & 1.338 & 0.520 & 1.434 & 2.133 & 0.632 & 1.363 & 1.165 & 0.587 & 1.371 & 1.477 & 0.559 \\
     & Fine-pruning   & 1.407 & 1.392 & 0.536 & 1.430 & 1.994 & 0.615 & 1.368 & 1.153 & 0.583 & 1.359 & 1.421 & 0.549 \\
     & NAD            & 1.459 & 1.330 & 0.506 & 1.335 & 1.450 & 0.540 & 1.372 & 1.059 & 0.554 & 1.387 & 1.457 & 0.550 \\
     & IMS            & 1.523 & 1.250 & 0.466 & \textbf{1.114} & 1.093 & 0.500 & 1.320 & 0.980 & 0.542 & 1.355 & 1.409 & 0.548 \\ \cmidrule(l{0pt}r{0pt}){2-14}
     & ABL        & 1.548 & 1.331 & 0.485 & 1.515 & 1.232 & 0.424 & 1.412 & 0.992 & 0.520 & 1.281 & 1.386 & 0.565 \\
     & PDB        & 1.407 & 1.779 & 0.614 & 1.561 & 1.838 & 0.559 & 1.519 & 1.728 & 0.647 & 1.426 & 1.752 & 0.597 \\
     & ESTI       & 1.583 & 1.624 & 0.544 & 1.395 & 1.443 & 0.520 & \textbf{1.180} & 1.014 & 0.603 & \textbf{1.213} & 1.359 & 0.581 \\ \cmidrule(l{0pt}r{0pt}){2-14}
    \multirow{-12}{*}[2ex]{ETTm1}
        & \cellcolor[HTML]{EFEFEF}\textbf{\methodname}
        & \cellcolor[HTML]{EFEFEF}\textbf{1.239}
        & \cellcolor[HTML]{EFEFEF}\textbf{5.852}
        & \cellcolor[HTML]{EFEFEF}\textbf{0.862}
        & \cellcolor[HTML]{EFEFEF}1.291
        & \cellcolor[HTML]{EFEFEF}\textbf{5.013}
        & \cellcolor[HTML]{EFEFEF}\textbf{0.822}
        & \cellcolor[HTML]{EFEFEF}1.378
        & \cellcolor[HTML]{EFEFEF}\textbf{2.453}
        & \cellcolor[HTML]{EFEFEF}\textbf{0.744}
        & \cellcolor[HTML]{EFEFEF}1.279
        & \cellcolor[HTML]{EFEFEF}\textbf{2.103}
        & \cellcolor[HTML]{EFEFEF}\textbf{0.688} \\
    \bottomrule
    \end{tabular}
    }
\end{table*}

\begin{table*}[!b]
    \centering
    \caption{Detection performance comparison of inference-time defenses on three datasets on \emph{TimesNet} model. Best results are in \textbf{bold}.  Higher AUC and F1 indicates better detection performance.}
    \label{app-tab:main-defense-result-timesnet-inf}
    \setlength{\tabcolsep}{6pt}
    \renewcommand{\arraystretch}{1.4}
    \renewcommand{\aboverulesep}{0pt}
    \renewcommand{\belowrulesep}{0pt}
    \setlength\cellspacetoplimit{2pt}
    \setlength\cellspacebottomlimit{2pt}
    \scriptsize

    \begin{tabular}{l@{\hskip 10pt}l@{\hskip 10pt}cccccccccc}
    \toprule
    \multirow{2}{*}{\textbf{Dataset}}
        & \multirow{2}{*}{\textbf{Defense}}
        & \multicolumn{2}{c}{\textbf{Random}}
        & \multicolumn{2}{c}{\textbf{Manhattan}}
        & \multicolumn{2}{c}{\textbf{FreqBack-TSF}}
        & \multicolumn{2}{c}{\textbf{BackTime}}
        & \multicolumn{2}{c}{\textbf{\textsc{Average}}} \\
    \cmidrule(r){3-4}
    \cmidrule(lr){5-6}
    \cmidrule(lr){7-8}
    \cmidrule(lr){9-10}
    \cmidrule(lr){11-12}
     &  & AUC $\uparrow$ & F1 $\uparrow$
        & AUC $\uparrow$ & F1 $\uparrow$
        & AUC $\uparrow$ & F1 $\uparrow$
        & AUC $\uparrow$ & F1 $\uparrow$
        & AUC $\uparrow$ & F1 $\uparrow$ \\
    \midrule

    \multirow{4}{*}[-0.5ex]{\textbf{PEMS03}}
        & No Defense & 0.500 & 0.500 & 0.500 & 0.500 & 0.500 & 0.500 & 0.500 & 0.500 & 0.500 & 0.500 \\ \cmidrule(l{0pt}){2-12}
        & STRIP      & \textbf{0.536} & \textbf{0.545} & \textbf{0.539} & \textbf{0.563} & \textbf{0.457} & \textbf{0.502} & \textbf{0.507} & \textbf{0.513} & \textbf{0.510} & \textbf{0.531} \\
        & TeCo       & 0.437 & 0.506 & 0.442 & 0.503 & 0.379 & 0.501 & 0.428 & 0.500 & 0.422 & 0.503 \\
        & IBD-PSC    & 0.372 & 0.516 & 0.375 & 0.511 & 0.434 & \textbf{0.502} & 0.468 & 0.500 & 0.412 & 0.507 \\
    \midrule

    \multirow{4}{*}[-0.5ex]{\textbf{Weather}}
        & No Defense & 0.500 & 0.500 & 0.500 & 0.500 & 0.500 & 0.500 & 0.500 & 0.500 & 0.500 & 0.500 \\ \cmidrule(l{0pt}){2-12}
        & STRIP      & 0.171 & 0.503 & \textbf{0.507} & 0.514 & 0.505 & 0.525 & 0.545 & 0.537 & 0.432 & 0.520 \\
        & TeCo       & 0.398 & 0.503 & 0.357 & 0.504 & 0.372 & 0.500 & 0.431 & 0.500 & 0.390 & 0.502 \\
        & IBD-PSC    & \textbf{0.573} & \textbf{0.555} & 0.497 & \textbf{0.521} & \textbf{0.596} & \textbf{0.599} & \textbf{0.563} & \textbf{0.580} & \textbf{0.557} & \textbf{0.564} \\
    \midrule

    \multirow{4}{*}[-0.5ex]{\textbf{ETTm1}}
        & No Defense & 0.500 & 0.500 & 0.500 & 0.500 & 0.500 & 0.500 & 0.500 & 0.500 & 0.500 & 0.500 \\ \cmidrule(l{0pt}){2-12}
        & STRIP      & 0.601 & 0.573 & 0.497 & \textbf{0.518} & 0.458 & 0.502 & 0.475 & 0.503 & 0.508 & 0.524 \\
        & TeCo       & \textbf{0.633} & \textbf{0.604} & 0.386 & 0.500 & \textbf{0.763} & \textbf{0.700} & \textbf{0.527} & \textbf{0.523} & \textbf{0.577} & \textbf{0.582} \\
        & IBD-PSC    & 0.307 & 0.500 & \textbf{0.501} & 0.513 & 0.455 & 0.501 & 0.484 & 0.501 & 0.437 & 0.504 \\
    \bottomrule
    \end{tabular}x
\end{table*}
\clearpage
\textbf{Robustness against BadTime attack.} Beyond evaluating defenses against three SOTA attacks, we also evaluate \methodname{} against the recent TSF backdoor attack BadTime~\cite{xiang2025badtime}. BadTime leverages inter-variable correlations, temporal lags, and data-driven initialization to construct distributed, lag-aware triggers for effective and stealthy attacks. For our BadTime implementation, we attempt to replicate the method using its default hyperparameters. For defense evaluation, after obtaining the fixed BadTime trigger, we poison the datasets and then apply the PDB and \methodname{} training pipelines. \tableautorefname~\ref{app-tab:badtime-attack} shows that \methodname{} remains effective under this recent attack setting and achieves the best overall trade-off, attaining the highest \FDER{} of 0.847 while maintaining competitive clean forecasting performance, even outperforming vanilla training in terms of clean \MAEC{}.
    \begin{table}[htbp]
        \vspace{-5pt}
        \centering
        \caption{Defense performance of PDB~\cite{wei2024mitigating} and \methodname{} under BadTime~\cite{xiang2025badtime} a on PEMS03 dataset, where FEDFormer, SimpleTM, and TimesNet are the victim models. Best results are in \textbf{bold}.}
        \label{app-tab:badtime-attack}
        \setlength{\tabcolsep}{2pt} 
        \renewcommand{\arraystretch}{1.2} 
        \tiny
        \renewcommand{\aboverulesep}{0pt}
        \renewcommand{\belowrulesep}{0pt}
        \setlength\cellspacetoplimit{2pt}
        \setlength\cellspacebottomlimit{2pt}
        \resizebox{\linewidth}{!}{
        \begin{tabular}{lcccccccccccc}
            \toprule
            \textbf{Model~→}
            & \multicolumn{3}{c}{\textbf{FEDformer}}
            & \multicolumn{3}{c}{\textbf{SimpleTM}}
            & \multicolumn{3}{c}{\textbf{TimesNet}} 
            & \multicolumn{3}{c}{\textbf{\textsc{Average}}} 
            \\
            \cmidrule(l{0pt}){1-1}
            \cmidrule(l){2-4}
            \cmidrule(l){5-7}
            \cmidrule(l){8-10}
            \cmidrule(l){11-13}
            \textbf{Defense~↓}
            & \MAEC~↓ & \MAEP~↑ & \FDER~↑
            & \MAEC~↓ & \MAEP~↑ & \FDER~↑
            & \MAEC~↓ & \MAEP~↑ & \FDER~↑ 
            & \MAEC~↓ & \MAEP~↑ & \FDER~↑ 
            \\
            \midrule
            No Defense 
                & 20.399 & 17.797 & --
                & 20.940 & 18.640 & --
                & 23.755 & 26.316 & -- 
                & 21.698 & 20.918 & -- \\
            PDB~\cite{wei2024mitigating}
                & 19.372 & 36.976 & 0.759
                & 21.455 & \textbf{38.732} & 0.747
                & 24.390 & 37.878 & 0.640 
                & 21.739 & 37.862 & 0.715 \\
            \cellcolor[HTML]{EFEFEF}\textbf{\methodname}
                & \cellcolor[HTML]{EFEFEF}\textbf{17.287}
                & \cellcolor[HTML]{EFEFEF}\textbf{37.524}
                & \cellcolor[HTML]{EFEFEF}\textbf{0.867}
                & \cellcolor[HTML]{EFEFEF}\textbf{19.109}
                & \cellcolor[HTML]{EFEFEF}37.165 
                & \cellcolor[HTML]{EFEFEF}\textbf{0.918}
                & \cellcolor[HTML]{EFEFEF}\textbf{21.600}
                & \cellcolor[HTML]{EFEFEF}\textbf{39.563} 
                & \cellcolor[HTML]{EFEFEF}\textbf{0.756} 
                & \cellcolor[HTML]{EFEFEF}\textbf{19.332} 
                & \cellcolor[HTML]{EFEFEF}\textbf{38.084} 
                & \cellcolor[HTML]{EFEFEF}\textbf{0.847} \\
            \bottomrule
        \end{tabular}
        }
    \vspace{-12pt}
    \end{table}

\textbf{Generalization to additional architectures.} Beyond the three backbone forecasters used in our main experiments, two Transformer-based models (FEDformer~\cite{zhou2022fedformer} and SimpleTM~\cite{chen2025simpletm}) and one CNN-based model (TimesNet~\cite{wu2023timesnet}), we further evaluate \methodname{} on a broader set of TSF architectures under the Random and BackTime attacks on PEMS03 dataset. Concretely, we include SegRNN~\cite{lin2023segrnn} (RNN-based), SOFTS~\cite{han2024softs} and TimeMixer~\cite{wang2024timemixer} (MLP-based), and AutoTimes~\cite{liu2024autotimes} as an emerging LLM-based forecaster with two large LLM variants (GPT2~\cite{radford2019language} and OPT-1.3B~\cite{zhang2022opt}). As shown in Table~\ref{app-tab:different-models}, \methodname{} consistently attains \MAEP{} above 32 and \FDER{} above 0.68, while incurring at most a 10\% relative increase in \MAEC{} across two attacks. Specifically, on  the LLM-based method (AutoTimes), \methodname{} yielding at least a $5.14\times$ \MAEP{} gain with only a $3.8\%$ change in clean \MAEC{}. Overall, these results support that \methodname{} is architecture-agnostic and remains effective across diverse forecasting architectures.
\begin{table}[htbp]
    \vspace{-5pt}
    \centering
    \caption{Defense performance across 8 models with different architectures under Random and BackTime attacks on PEMS03 dataset. }
    \label{app-tab:different-models}
    \scriptsize
    \setlength{\tabcolsep}{4pt}
    \renewcommand{\arraystretch}{1.25}
    \resizebox{\linewidth}{!}{
  \begin{tabular}{l@{\hspace{4pt}}cccccc@{\hspace{5pt}}ccccc}
    \toprule
    \textbf{Attack}~→
    & \multicolumn{5}{c}{\textbf{Random}}
    & \multicolumn{5}{c}{\textbf{BackTime}} \\
    \cmidrule(lr){1-1}
    \cmidrule(lr){2-6}
    \cmidrule(lr){7-11}

    \textbf{Defense}~→
    & \multicolumn{2}{c}{\textbf{No Defense}}
    & \multicolumn{3}{c}{\textbf{\methodname{}}}
    & \multicolumn{2}{c}{\textbf{No Defense}}
    & \multicolumn{3}{c}{\textbf{\methodname{}}} \\
    \cmidrule(lr){1-1}
    \cmidrule(lr){2-3}
    \cmidrule(lr){4-6}
    \cmidrule(lr){7-8}
    \cmidrule(lr){9-11}

    \textbf{Model}~↓
    & \MAEC~↓ & \MAEP~↑
    & \MAEC~↓ & \MAEP~↑ & \FDER~↑
    & \MAEC~↓ & \MAEP~↑
    & \MAEC~↓ & \MAEP~↑ & \FDER~↑ \\
    \midrule

    FEDformer~\cite{zhou2022fedformer}
    & 16.286 & 14.959 &  16.607 & 100.436 & 0.916
    & 16.093 & 10.760  & 16.840 & 41.232 & 0.847 \\

    SimpleTM~\cite{chen2025simpletm}
    & 17.510 & 19.007   & 17.489 & 173.700 & 0.945
    & 17.268 & 9.131   & 17.243 & 36.626 & 0.875 \\

    TimesNet~\cite{wu2023timesnet}
    & 19.104 & 19.351  & 19.687 & 39.894 & 0.743
    & 19.459 & 22.713  & 20.061 & 40.052 & 0.701 \\

    SegRNN~\cite{lin2023segrnn}
    & 19.889 & 8.953   & 20.469 & 205.044 & 0.964
    & 19.980 & 6.927   & 20.718 & 33.941 & 0.880 \\

    SOFTS~\cite{han2024softs}
    & 16.263 & 2.930   & 16.871 & 170.169 & 0.973
    & 16.227 & 3.185   & 17.451 & 33.340 & 0.917 \\

    TimeMixer~\cite{wang2024timemixer}
    & 21.540 & 19.917  & 21.351 & 220.274 & 0.955
    & 21.484 & 21.053  & 21.440 & 33.662 & 0.687 \\

    $\text{AutoTimes}_{\text{GPT2}}$~\cite{liu2024autotimes}
    & 20.984 & 19.346   & 21.292 & 215.993 & 0.948
    & 21.006 & 6.239   & 21.805 & 32.046 & 0.884 \\

    $\text{AutoTimes}_{\text{OPT1B}}$~\cite{liu2024autotimes}
    & 20.911 & 22.946   & 21.054 & 221.067 & 0.945
    & 20.921 & 6.162   & 21.196 & 33.931 & 0.903 \\
    \bottomrule
    \end{tabular}
    }
    \vspace{-12pt}
\end{table}

\textbf{Defense performance under large-scale TSF foundation models.} Beyond  $\text{AutoTimes}_{\text{GPT2}}$ and  $\text{AutoTimes}_{\text{OPT1B}}$~\cite{liu2024autotimes},we further evaluate $\text{AutoTimes}_{\text{LLaMA7B}}$, an AutoTimes variant built on the large-scale LLaMA-7B foundation model~\cite{touvron2023llama}. \tableautorefname~\ref{app-tab:large-scale-model} shows that \methodname{} still outperforms PDB while keeping the total training time to $1.431\times$ that of undefended training and comparable to PDB, i.e., approximately 70,000 seconds. These results suggest that \methodname{} transfers beyond standard TSF backbones and remains effective for large-scale TSF foundation models.
    \begin{table}[htbp]
        \vspace{-5pt}
        \centering
        \caption{Defense performance and training time (in seconds) of PDB~\cite{wei2024mitigating} and \methodname{} under BackTime on the PEMS03 dataset, using $\text{AutoTimes}_{\text{LLaMA7B}}$~\cite{liu2024autotimes} as the victim model. Best results are shown in \textbf{bold}.}
        \label{app-tab:large-scale-model}
        \scriptsize
        \setlength{\tabcolsep}{3.5pt}
        \renewcommand{\arraystretch}{1.25}
        \begin{tabular}{lcccc}
            \toprule
            \textbf{Defense}
            & \textbf{Training time (s)~↓}
            & \textbf{\MAEC~↓} & \textbf{\MAEP~↑} & \textbf{\FDER~↑} \\
            \midrule
            No Defense 
                & 49027.6 & 20.977 & 6.004 & -- \\
            PDB~\cite{wei2024mitigating}
                & \textbf{69997.6} & 22.657 & 25.798 & 0.847 \\
            \textbf{\methodname}
                & 70162.8 & \textbf{21.385} & \textbf{32.792} & \textbf{0.899} \\
            \bottomrule
        \end{tabular}
    \end{table}

\textbf{Generalization to different poisoning rates.} We evaluate the robustness of \methodname{} under varying attack budgets by adjusting the temporal poisoning rate $\eta_\text{T}$ and spatial poisoning rate $\eta_\text{S}$ using the BackTime attack on the PEMS03 dataset. As shown in Figures~\ref{app-fig:poison-rate-fedformer}–\ref{app-fig:poison-rate-TimesNet}, \methodname{} consistently maintains strong defense effectiveness across all poisoning rates, with \MAEP{} remaining above 35 for all three models. Additionally, clean performance stays within 5\% even at high poisoning rates ($\eta_{\text{T}} = 0.04$, $\eta_{\text{S}} = 0.4$). These results demonstrate that \methodname{} is robust to varying poisoning rates while maintaining reasonable clean performance.

    \begin{figure}[htbp]
        \centering
        \vspace{-3pt}
        \begin{subfigure}{0.24\linewidth}
            \includegraphics[width=\linewidth]{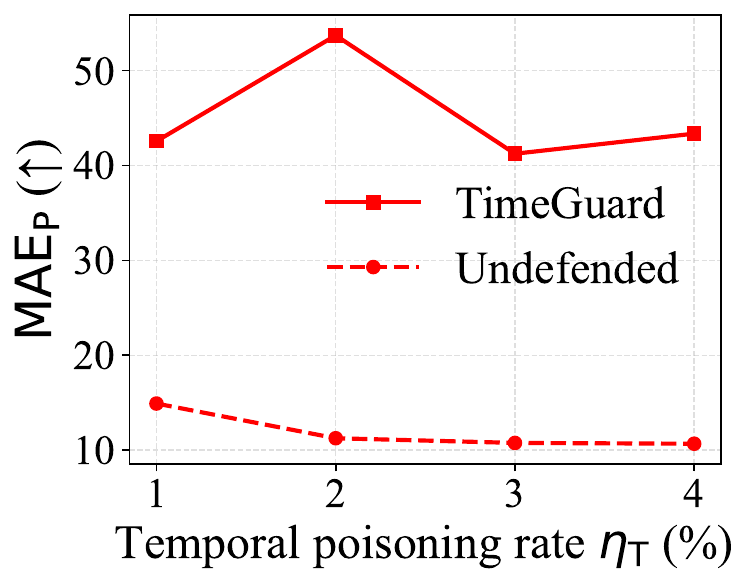}
        \end{subfigure}
        \begin{subfigure}{0.24\linewidth}
            \includegraphics[width=\linewidth]{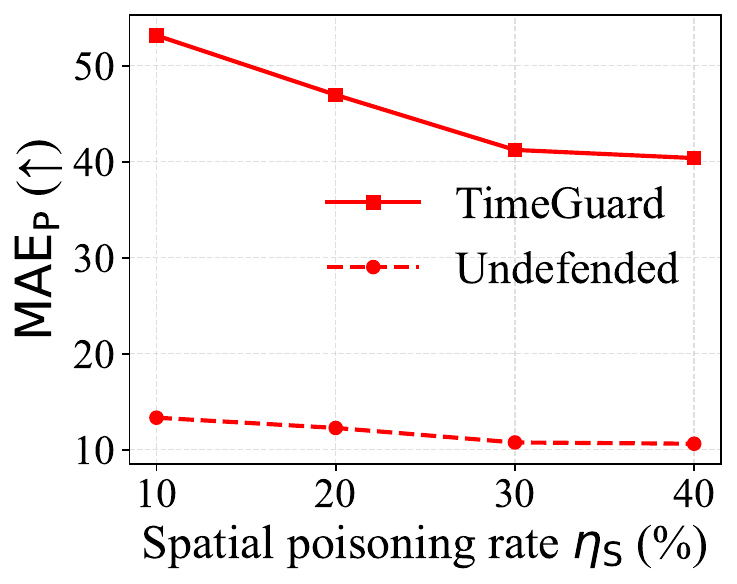}
        \end{subfigure}
        \begin{subfigure}{0.24\linewidth}
            \includegraphics[width=\linewidth]{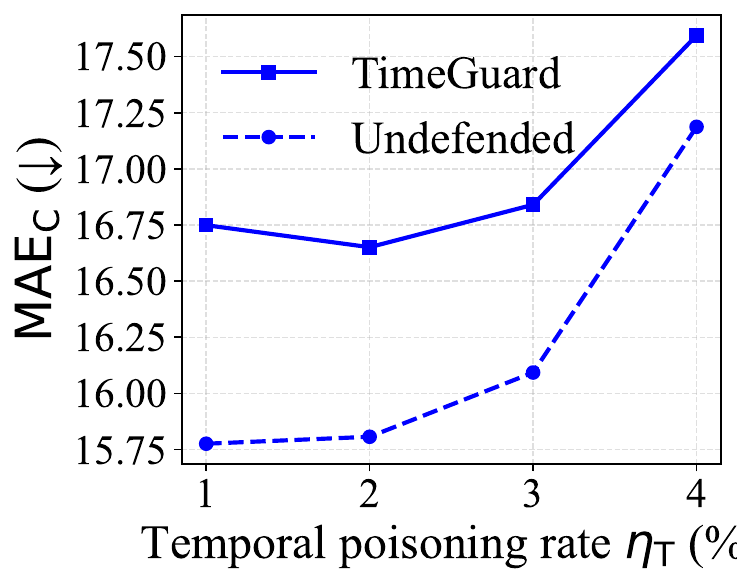}
        \end{subfigure}
        \begin{subfigure}{0.24\linewidth}
            \includegraphics[width=\linewidth]{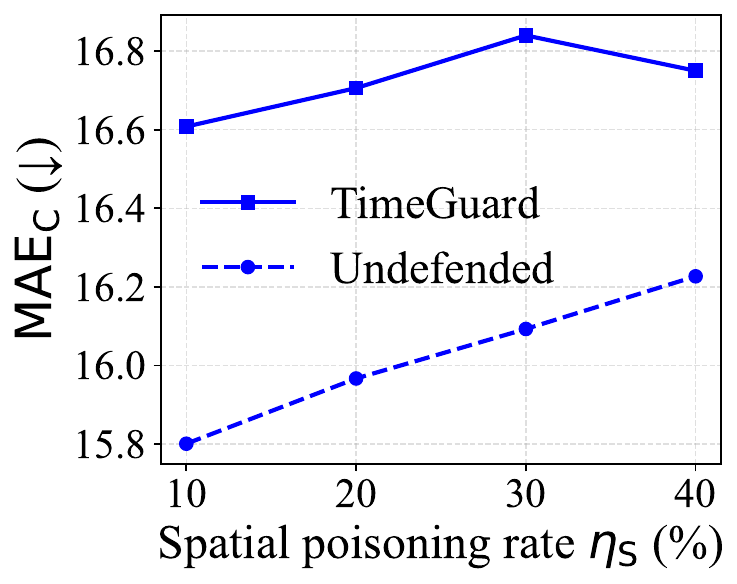}
        \end{subfigure}
        \caption{Defense performance of \methodname{} (\MAEP{} and \MAEC{}) under varying temporal and spatial poisoning rates of the BackTime attack on the PEMS03 dataset with the \emph{FEDformer} model. }        
        \label{app-fig:poison-rate-fedformer}
    \end{figure}

    \begin{figure}[htbp]
        \centering
        \vspace{-3pt}
        \begin{subfigure}{0.24\linewidth}
            \includegraphics[width=\linewidth]{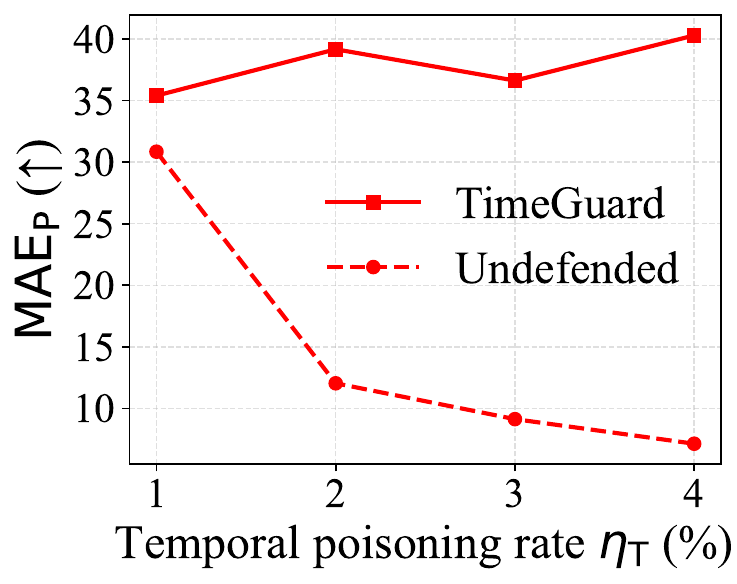}
        \end{subfigure}
        \begin{subfigure}{0.24\linewidth}
            \includegraphics[width=\linewidth]{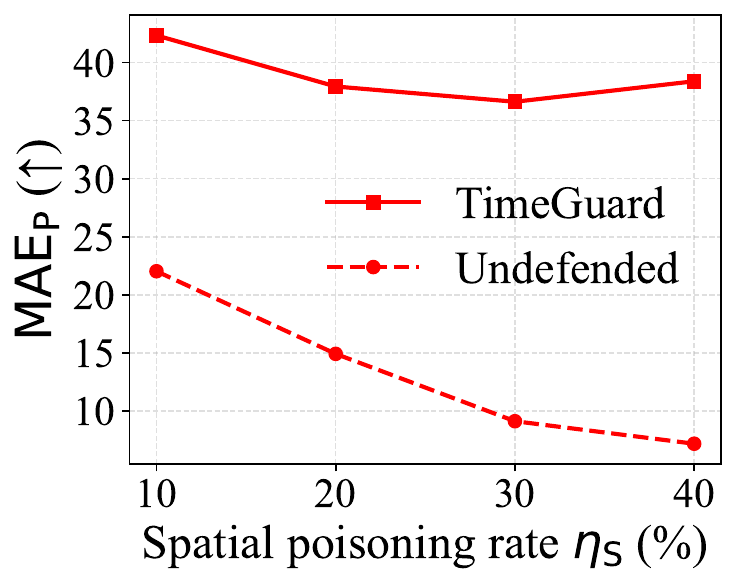}
        \end{subfigure}
        \begin{subfigure}{0.24\linewidth}
            \includegraphics[width=\linewidth]{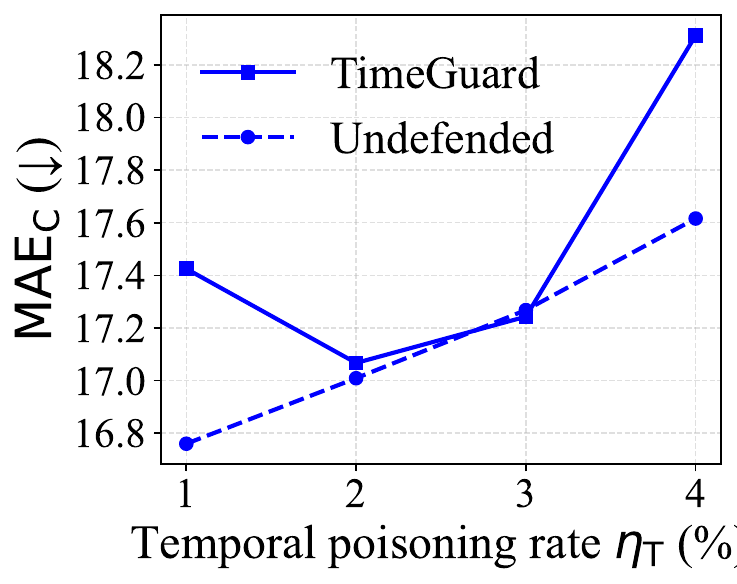}
        \end{subfigure}
        \begin{subfigure}{0.24\linewidth}
            \includegraphics[width=\linewidth]{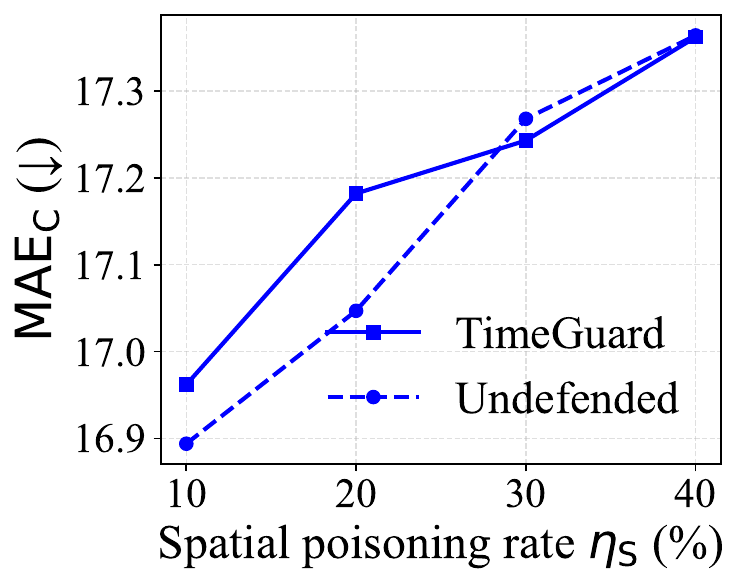}
        \end{subfigure}

         \caption{Defense performance of \methodname{} (\MAEP{} and \MAEC{}) under varying temporal and spatial poisoning rates of the BackTime attack on the PEMS03 dataset with the \emph{SimpleTM} model.}              
         \label{app-fig:poison-rate-simpleTM}
    \end{figure}

    \begin{figure}[htbp]
        \centering
        \vspace{-3pt}
        \begin{subfigure}{0.24\linewidth}
            \includegraphics[width=\linewidth]{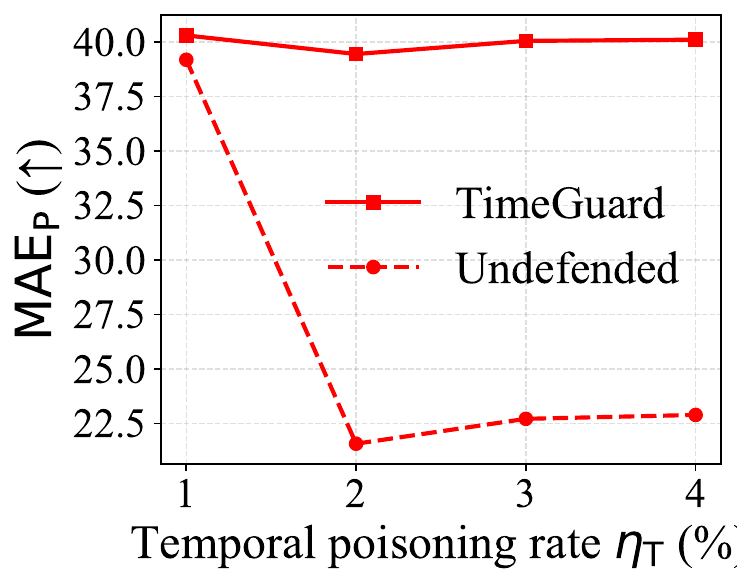}
        \end{subfigure}
        \begin{subfigure}{0.24\linewidth}
            \includegraphics[width=\linewidth]{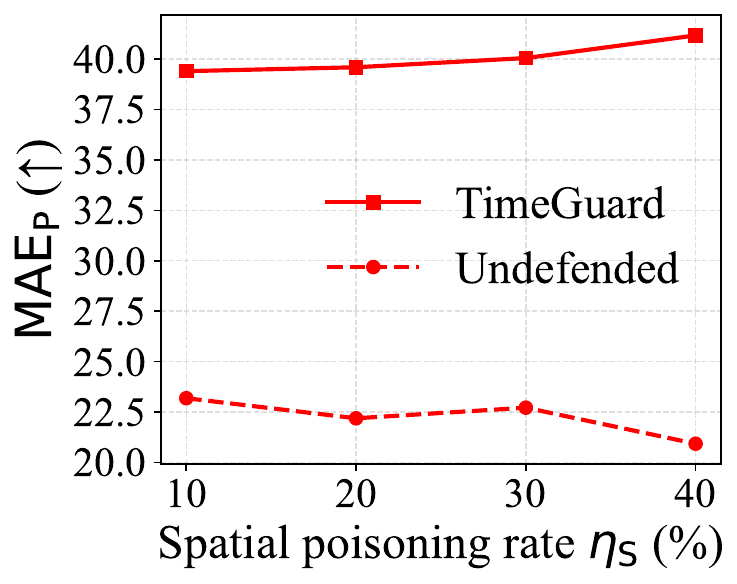}
        \end{subfigure}
        \begin{subfigure}{0.24\linewidth}
            \includegraphics[width=\linewidth]{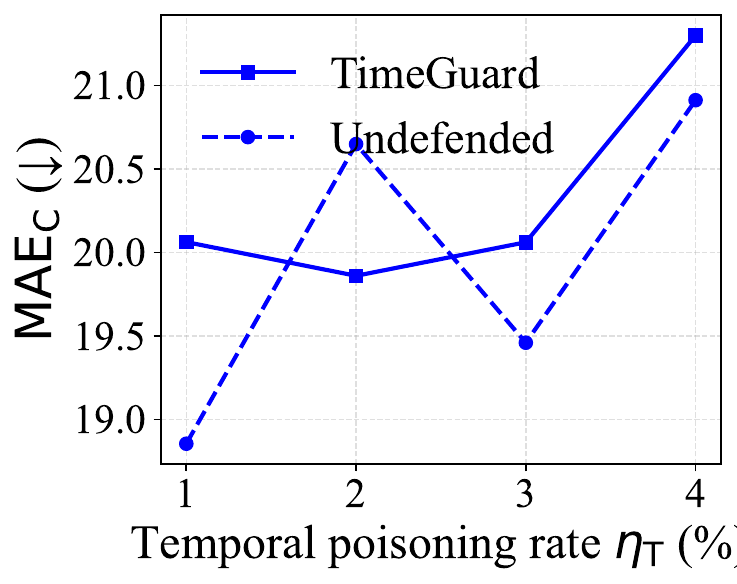}
        \end{subfigure}
        \begin{subfigure}{0.24\linewidth}
            \includegraphics[width=\linewidth]{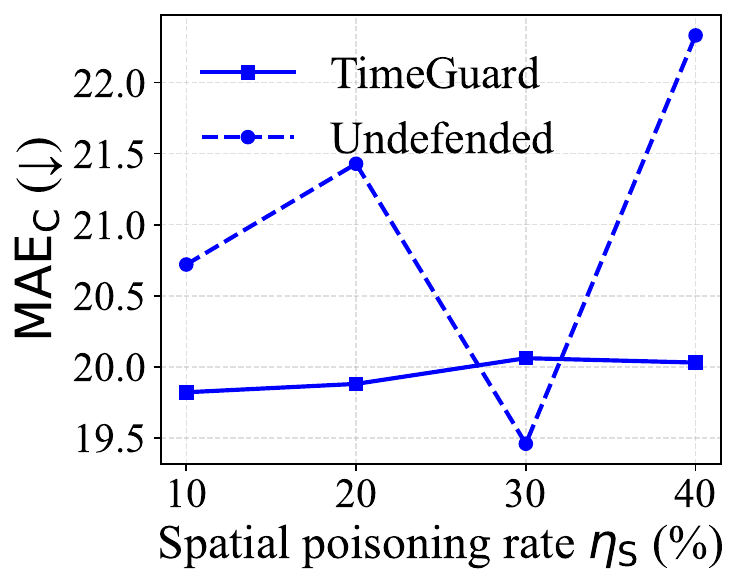}
        \end{subfigure}

         \caption{Defense performance of \methodname{} (\MAEP{} and \MAEC{}) under varying temporal and spatial poisoning rates of the BackTime attack on the PEMS03 dataset with the \emph{TimesNet} model.}   
        \label{app-fig:poison-rate-TimesNet}
    \end{figure}

\textbf{Generalization to the extreme case of full-channel poisoning.} Our motivation is strongest under partial-channel poisoning, which is the common setting in existing multivariate TSF backdoor attacks; as the channel poisoning ratio increases, attacks generally become less stealthy and easier to detect. Nevertheless, \methodname{} does not require poisoning to affect only a strict subset of channels: its channel-wise formulation remains applicable even when poisoning is dense across channels. To directly test the all-channel case, we evaluate \methodname{} on PEMS03 under BackTime with spatial poisoning ratio $\eta_\mathrm{S}=1.0$, meaning that all channels are poisoned. As shown in \tableautorefname~\ref{app-tab:full-channel-poisoning}, \methodname{} remains effective in this setting and achieves the highest \FDER{} of 0.748.
    \begin{table}[!htbp]
        \centering
        \vspace{-5pt}
        \caption{Defense performance of PDB and \methodname{} under BackTime attack on the PEMS03 dataset with full-channel poisoning, i.e., $\eta_\mathrm{S}=1.0$, where FEDformer, SimpleTM, and TimesNet are used as victim models. Best results are shown in \textbf{bold}.}
        \vspace{-5pt}
        \label{app-tab:full-channel-poisoning}
        \setlength{\tabcolsep}{2pt} 
        \renewcommand{\arraystretch}{1.2}
        \tiny
        \renewcommand{\aboverulesep}{0pt}
        \renewcommand{\belowrulesep}{0pt}
        \setlength\cellspacetoplimit{2pt}
        \setlength\cellspacebottomlimit{2pt}
        \resizebox{\linewidth}{!}{
        \begin{tabular}{lcccccccccccc}
            \toprule
            \textbf{Model~→}
            & \multicolumn{3}{c}{\textbf{FEDformer}}
            & \multicolumn{3}{c}{\textbf{SimpleTM}}
            & \multicolumn{3}{c}{\textbf{TimesNet}} 
            & \multicolumn{3}{c}{\textbf{\textsc{Average}}} 
            \\
            \cmidrule(l{0pt}){1-1}
            \cmidrule(l){2-4}
            \cmidrule(l){5-7}
            \cmidrule(l){8-10}
            \cmidrule(l){11-13}
            \textbf{Defense~↓}
            & \MAEC~↓ & \MAEP~↑ & \FDER~↑
            & \MAEC~↓ & \MAEP~↑ & \FDER~↑
            & \MAEC~↓ & \MAEP~↑ & \FDER~↑ 
            & \MAEC~↓ & \MAEP~↑ & \FDER~↑ 
            \\
            \midrule
            No Defense 
                & 18.025 & 11.586 & --
                & 18.567 & 5.817 & --
                & 27.473 & 21.503 & -- 
                & 21.355 & 12.969 & -- \\
            PDB~\cite{wei2024mitigating}
                & 18.308 & 16.074 & 0.632
                & \textbf{19.114} & 13.355 & \textbf{0.768}
                & 23.731 & 32.485 & 0.669 
                & 20.384 & 20.638 & 0.690 \\
            \cellcolor[HTML]{EFEFEF}\textbf{\methodname}
                & \cellcolor[HTML]{EFEFEF}\textbf{18.176}
                & \cellcolor[HTML]{EFEFEF}\textbf{26.268}
                & \cellcolor[HTML]{EFEFEF}\textbf{0.775}
                & \cellcolor[HTML]{EFEFEF}19.464
                & \cellcolor[HTML]{EFEFEF}\textbf{13.938} 
                & \cellcolor[HTML]{EFEFEF}\textbf{0.768}
                & \cellcolor[HTML]{EFEFEF}\textbf{21.974}
                & \cellcolor[HTML]{EFEFEF}\textbf{35.797} 
                & \cellcolor[HTML]{EFEFEF}\textbf{0.700} 
                & \cellcolor[HTML]{EFEFEF}\textbf{19.871} 
                & \cellcolor[HTML]{EFEFEF}\textbf{25.335} 
                & \cellcolor[HTML]{EFEFEF}\textbf{0.748} \\
            \bottomrule
        \end{tabular}
        }
        \vspace{-12pt}
    \end{table}    

\textbf{Generalization to different attack patterns.} We further evaluate the robustness of \methodname{} under three attack patterns from the original BackTime work~\cite{lin2024backtime} (described in Appendix~\ref{app:attack-patterns}) on the PEMS03 dataset. As shown in Table~\ref{app-tab:attack-patterns}, \methodname{} consistently demonstrates strong robustness across all attack patterns, including Random, Manhattan, and BackTime attacks, averaged over the three models. Compared to the state-of-the-art defense PDB, \methodname{} significantly improves both \MAEP{} and \FDER{}, while maintaining clean performance with minimal degradation. The results for the up-and-down and up-trend attack patterns, broken down by model, are provided in Tables~\ref{app-tab:attack-patterns-up-and-down} and~\ref{app-tab:attack-patterns-up-trend}, respectively.
\begin{table}[!htbp]
    \centering
    \vspace{-5pt}
    \caption{Defense performance of \methodname{} across three different attack patterns under Random, Manhattan, and BackTime attacks on PEMS03, \emph{average} over FEDFormer, SimpleTM, and TimesNet models. Best results are in \textbf{bold}.}
    \vspace{-5pt}
    \label{app-tab:attack-patterns}
    
    \setlength{\tabcolsep}{6pt} 
    \renewcommand{\arraystretch}{1.15}
    \scriptsize
    \renewcommand{\aboverulesep}{0pt}
    \renewcommand{\belowrulesep}{0pt}
    \setlength\cellspacetoplimit{1.8pt}
    \setlength\cellspacebottomlimit{1.8pt}
    \resizebox{\linewidth}{!}{
  \begin{tabular}{clcccccccccc}
    \toprule
    \multirow{2}{*}{\textbf{\begin{tabular}[c]{@{}c@{}}Attack\\ Pattern\end{tabular}}}
    & \textbf{Attack~→}
    & \multicolumn{3}{c}{\textbf{Random}}
    & \multicolumn{3}{c}{\textbf{Manhattan}}
    & \multicolumn{3}{c}{\textbf{BackTime}} \\
    \cmidrule(l{0pt}){2-2}
    \cmidrule(l){3-5}
    \cmidrule(l){6-8}
    \cmidrule(l){9-11}
    & \textbf{Defense~↓}
    & \MAEC~↓ & \MAEP~↑ & \FDER~↑
    & \MAEC~↓ & \MAEP~↑ & \FDER~↑
    & \MAEC~↓ & \MAEP~↑ & \FDER~↑ \\
    \midrule

    \multirow{5}{*}{\textbf{Cone}} & No Defense 
        & 17.634 & 17.772 & --
        & 17.722 & 20.266 & --
        & 17.607 & 14.201 & -- \\
    & Fine-tuning
        & 19.003 & 30.909 & 0.625
        & 19.661 & 30.995 & 0.608
        & 18.934 & 18.196 & 0.594 \\
    & Fine-pruning
        & 19.020 & 31.643 & 0.633
        & 19.595 & 34.447 & 0.624
        & 18.686 & 19.736 & 0.623 \\
    & PDB 
        & 18.630 & 54.690 & 0.693
        & 19.308 & 60.477 & 0.708
        & 18.967 & 22.397 & 0.639 \\
    & \cellcolor[HTML]{EFEFEF}\textbf{\methodname}
        &\cellcolor[HTML]{EFEFEF}\textbf{17.928} & \cellcolor[HTML]{EFEFEF}\textbf{104.677} & \cellcolor[HTML]{EFEFEF}\textbf{0.868}
        &\cellcolor[HTML]{EFEFEF}\textbf{17.850} &\cellcolor[HTML]{EFEFEF}\textbf{97.370} & \cellcolor[HTML]{EFEFEF}\textbf{0.854}
        &\cellcolor[HTML]{EFEFEF}\textbf{18.048} &\cellcolor[HTML]{EFEFEF}\textbf{39.303} &\cellcolor[HTML]{EFEFEF}\textbf{0.808} \\
    \midrule

    \multirow{5}{*}{\textbf{\begin{tabular}[c]{@{}c@{}}Up \&\\ Down\end{tabular}}}
    & No Defense 
        & 17.628 & 17.903 & --
        & 17.679 & 20.213 & --
        & 17.389 & 16.853 & -- \\
    & Fine-tuning
        & 19.161 & 26.878 & 0.609
        & 19.196 & 28.158 & 0.588
        & 18.772 & 19.479 & 0.576 \\
    & Fine-pruning
        & 19.078 & 29.024 & 0.629
        & 19.193 & 30.267 & 0.606
        & 18.743 & 20.980 & 0.593 \\
    & PDB 
        & 19.150 & 45.243 & 0.669
        & 19.146 & 42.813 & 0.649
        & 20.028 & 21.720 & 0.567 \\
    & \cellcolor[HTML]{EFEFEF}\textbf{\methodname}
        & \cellcolor[HTML]{EFEFEF}\textbf{18.055} & \cellcolor[HTML]{EFEFEF}\textbf{80.871} & \cellcolor[HTML]{EFEFEF}\textbf{0.835}
        & \cellcolor[HTML]{EFEFEF}\textbf{17.985} & \cellcolor[HTML]{EFEFEF}\textbf{75.454} & \cellcolor[HTML]{EFEFEF}\textbf{0.810} 
        & \cellcolor[HTML]{EFEFEF}\textbf{18.158} & \cellcolor[HTML]{EFEFEF}\textbf{30.165} & \cellcolor[HTML]{EFEFEF}\textbf{0.709} \\
    \midrule

    \multirow{5}{*}{\textbf{\begin{tabular}[c]{@{}c@{}}Up\\ Trend\end{tabular}}}
    & No Defense 
        & 17.721 & 18.472 & --
        & 17.733 & 21.010 & --
        & 17.615 & 13.674 & -- \\
    & Fine-tuning
        & 19.223 & 32.865 & 0.636
        & 19.540 & 33.298 & 0.624
        & 18.935 & 19.706 & 0.629 \\
    & Fine-pruning
        & 19.211 & 33.532 & 0.640
        & 19.620 & 34.720 & 0.622
        & 19.028 & 21.771 & 0.648 \\
    & PDB 
        & 19.227 & 58.857 & 0.683
        & 19.201 & 70.032 & 0.687
        & 19.783 & 22.195 & 0.645 \\
    & \cellcolor[HTML]{EFEFEF}\textbf{\methodname}
        & \cellcolor[HTML]{EFEFEF}\textbf{18.022} & \cellcolor[HTML]{EFEFEF}\textbf{111.006} & \cellcolor[HTML]{EFEFEF}\textbf{0.872}
        & \cellcolor[HTML]{EFEFEF}\textbf{17.985} & \cellcolor[HTML]{EFEFEF}\textbf{101.135} & \cellcolor[HTML]{EFEFEF}\textbf{0.856}
        & \cellcolor[HTML]{EFEFEF}\textbf{18.273} & \cellcolor[HTML]{EFEFEF}\textbf{46.106} & \cellcolor[HTML]{EFEFEF}\textbf{0.834}\\
    \bottomrule
    \end{tabular}
    }
    \vspace{-12pt}
\end{table}

\begin{table}[!b]
    \centering
    \vspace{-5pt}
    \caption{Defense performance of \methodname{} under Random, Manhattan, and BackTime attacks with a \emph{up-and-down attack pattern} on the PEMS03 dataset, where FEDFormer, SimpleTM, and TimesNet are the victim models. Best results are in \textbf{bold}.}
    \label{app-tab:attack-patterns-up-and-down}
    \vspace{-5pt}
    \setlength{\tabcolsep}{6pt} 
    \renewcommand{\arraystretch}{1.15} 
    \scriptsize
    \renewcommand{\aboverulesep}{0pt}
    \renewcommand{\belowrulesep}{0pt}
    \setlength\cellspacetoplimit{1.8pt}
    \setlength\cellspacebottomlimit{1.8pt}
    \resizebox{\linewidth}{!}{
    \begin{tabular}{clcccccccccc}
    \toprule
    \multirow{2}{*}{\textbf{\begin{tabular}[c]{@{}c@{}}Model\end{tabular}}}
    & \textbf{Attack~→}
    & \multicolumn{3}{c}{\textbf{Random}}
    & \multicolumn{3}{c}{\textbf{Manhattan}}
    & \multicolumn{3}{c}{\textbf{BackTime}} \\
    \cmidrule(l{0pt}){2-2}
    \cmidrule(l){3-5}
    \cmidrule(l){6-8}
    \cmidrule(l){9-11}
    & \textbf{Defense~↓}
    & \MAEC~↓ & \MAEP~↑ & \FDER~↑
    & \MAEC~↓ & \MAEP~↑ & \FDER~↑
    & \MAEC~↓ & \MAEP~↑ & \FDER~↑ \\
    \midrule

    \multirow{5}{*}{\textbf{FEDformer}} & No Defense 
        & 16.370 & 15.193 & --
        & 16.413 & 18.700 & -- 
        & 15.943 & 10.461 & -- \\
    & Fine-tuning
        & 16.802 & 35.086 & 0.771
        & 16.998 & 35.427 & 0.719
        & 16.559 & 18.153 & 0.693 \\
    & Fine-pruning
        & 16.772 & 38.811 & 0.792
        & 16.882 & 39.564 & 0.750
        & 16.650 & 23.142 & 0.753 \\
    & PDB 
        & 16.898 & 21.148 & 0.625
        & 17.259 & 24.828 & 0.599
        & 17.273 & 16.246 & 0.640 \\
    & \cellcolor[HTML]{EFEFEF}\textbf{\methodname}
        & \cellcolor[HTML]{EFEFEF}\textbf{16.551} & \cellcolor[HTML]{EFEFEF}\textbf{67.162} & \cellcolor[HTML]{EFEFEF}\textbf{0.881}
        & \cellcolor[HTML]{EFEFEF}\textbf{16.564} & \cellcolor[HTML]{EFEFEF}\textbf{65.542} & \cellcolor[HTML]{EFEFEF}\textbf{0.853}
        & \cellcolor[HTML]{EFEFEF}\textbf{16.776} & \cellcolor[HTML]{EFEFEF}\textbf{27.529} & \cellcolor[HTML]{EFEFEF}\textbf{0.785} \\
    \midrule

    \multirow{5}{*}{\textbf{SimpleTM}}
    & No Defense 
        & 17.428 & 19.133 & --
        & 17.461 & 20.735 & --
        & 17.209 & 12.936 & -- \\
    & Fine-tuning
        & 17.534 & 24.886 & 0.613
        & 17.515 & 27.539 & 0.622
        & 17.247 & 16.747 & 0.613 \\
    & Fine-pruning
        & 17.468 & 27.910 & 0.656
        & 17.568 & 29.781 & 0.649
        & 17.264 & 16.289 & 0.601 \\
    & PDB 
        & 17.555 & 91.863 & 0.892
        & 17.610 & 79.668 & 0.866
        & 18.996 & 22.338 & 0.663 \\
    & \cellcolor[HTML]{EFEFEF}\textbf{\methodname}
        & \cellcolor[HTML]{EFEFEF}\textbf{17.335} & \cellcolor[HTML]{EFEFEF}\textbf{140.540} & \cellcolor[HTML]{EFEFEF}\textbf{0.932}
        & \cellcolor[HTML]{EFEFEF}\textbf{17.024} & \cellcolor[HTML]{EFEFEF}\textbf{126.706} & \cellcolor[HTML]{EFEFEF}\textbf{0.918}
        & \cellcolor[HTML]{EFEFEF}\textbf{17.188} & \cellcolor[HTML]{EFEFEF}\textbf{28.395} & \cellcolor[HTML]{EFEFEF}\textbf{0.772} \\
    \midrule

    \multirow{5}{*}{\textbf{TimesNet}}
    & No Defense 
        & 19.087 & 19.383 & --
        & 19.161 & 21.202 & --
        & 19.016 & 27.161 & -- \\
    & Fine-tuning
        & 23.149 & 20.661 & 0.443
        & 23.073 & 21.508 & 0.422
        & 22.508 & 23.535 & 0.422 \\
    & Fine-pruning
        & 22.994 & 20.352 & 0.439
        & 23.131 & 21.456 & 0.420
        & 22.315 & 23.509 & 0.426 \\
    & PDB 
        & 22.996 & 22.720 & 0.488
        & 22.569 & 23.943 & 0.482
        & 23.816 & 26.576 & 0.399 \\
    & \cellcolor[HTML]{EFEFEF}\textbf{\methodname}
        & \cellcolor[HTML]{EFEFEF}\textbf{20.278} & \cellcolor[HTML]{EFEFEF}\textbf{34.911} & \cellcolor[HTML]{EFEFEF}\textbf{0.693}
        & \cellcolor[HTML]{EFEFEF}\textbf{20.368} & \cellcolor[HTML]{EFEFEF}\textbf{34.114} & \cellcolor[HTML]{EFEFEF}\textbf{0.660}
        & \cellcolor[HTML]{EFEFEF}\textbf{20.508} & \cellcolor[HTML]{EFEFEF}\textbf{34.571} & \cellcolor[HTML]{EFEFEF}\textbf{0.571} \\
    \bottomrule
    \end{tabular}
    }
    \vspace{-12pt}
\end{table}

\begin{table}[htbp]
    \centering
    \vspace{-5pt}
    \caption{Defense performance of \methodname{} under Random, Manhattan, and BackTime attacks with \emph{up-trend attack pattern} on PEMS03 dataset, where FEDFormer, SimpleTM, and TimesNet are the victim models. Best results are in \textbf{bold}.}
    \vspace{-5pt}
    \label{app-tab:attack-patterns-up-trend}
    
    \setlength{\tabcolsep}{6pt} 
    \renewcommand{\arraystretch}{1.15} 
    \scriptsize
    \renewcommand{\aboverulesep}{0pt}
    \renewcommand{\belowrulesep}{0pt}
    \setlength\cellspacetoplimit{1.8pt}
    \setlength\cellspacebottomlimit{1.8pt}
    \resizebox{\linewidth}{!}{
    \begin{tabular}{clcccccccccc}
    \toprule
    \multirow{2}{*}{\textbf{\begin{tabular}[c]{@{}c@{}}Model\end{tabular}}}
    & \textbf{Attack~→}
    & \multicolumn{3}{c}{\textbf{Random}}
    & \multicolumn{3}{c}{\textbf{Manhattan}}
    & \multicolumn{3}{c}{\textbf{BackTime}} \\
    \cmidrule(l{0pt}){2-2}
    \cmidrule(l){3-5}
    \cmidrule(l){6-8}
    \cmidrule(l){9-11}
    & \textbf{Defense~↓}
    & \MAEC~↓ & \MAEP~↑ & \FDER~↑
    & \MAEC~↓ & \MAEP~↑ & \FDER~↑
    & \MAEC~↓ & \MAEP~↑ & \FDER~↑ \\
    \midrule

    \multirow{5}{*}{\textbf{FEDformer}} & No Defense 
        & 16.420 & 15.652 & --
        & 16.434 & 18.461 & --
        & 16.105 & 10.772 & -- \\
    & Fine-tuning
        & 16.863 & 50.608 & 0.832
        & 16.864 & 42.857 & 0.772
        & 16.594 & 22.653 & 0.748 \\
    & Fine-pruning
        & 16.792 & 51.944 & 0.838
        & 16.811 & 49.437 & 0.802
        & 16.661 & 28.115 & 0.792 \\
    & PDB 
        & 16.875 & 23.665 & 0.656
        & 17.156 & 29.193 & 0.663
        & 17.347 & 16.996 & 0.647 \\
    & \cellcolor[HTML]{EFEFEF}\textbf{\methodname}
        & \cellcolor[HTML]{EFEFEF}\textbf{16.610} & \cellcolor[HTML]{EFEFEF}\textbf{102.645} & \cellcolor[HTML]{EFEFEF}\textbf{0.918}
        & \cellcolor[HTML]{EFEFEF}\textbf{16.598} & \cellcolor[HTML]{EFEFEF}\textbf{98.314} & \cellcolor[HTML]{EFEFEF}\textbf{0.901}
        & \cellcolor[HTML]{EFEFEF}\textbf{16.855} & \cellcolor[HTML]{EFEFEF}\textbf{50.734} & \cellcolor[HTML]{EFEFEF}\textbf{0.872} \\
    \midrule

    \multirow{5}{*}{\textbf{SimpleTM}} & No Defense 
        & 17.613 & 20.562 & --
        & 17.441 & 24.170 & --
        & 17.302 & 7.999 & -- \\
    & Fine-tuning
        & 17.683 & 26.638 & 0.612
        & 17.659 & 33.546 & 0.634
        & 17.350 & 12.608 & 0.681 \\
    & Fine-pruning
        & 17.661 & 27.625 & 0.626
        & 17.717 & 31.995 & 0.615
        & 17.483 & 13.888 & 0.707 \\
    & PDB 
        & 17.866 & 130.997 & 0.914
        & 17.667 & 157.867 & 0.917
        & 18.218 & 23.245 & 0.803 \\
    & \cellcolor[HTML]{EFEFEF}\textbf{\methodname}
        & \cellcolor[HTML]{EFEFEF}\textbf{17.066} & \cellcolor[HTML]{EFEFEF}\textbf{185.689} & \cellcolor[HTML]{EFEFEF}\textbf{0.945}
        & \cellcolor[HTML]{EFEFEF}\textbf{16.951} & \cellcolor[HTML]{EFEFEF}\textbf{160.888} & \cellcolor[HTML]{EFEFEF}\textbf{0.925}
        & \cellcolor[HTML]{EFEFEF}\textbf{17.263} & \cellcolor[HTML]{EFEFEF}\textbf{41.806} & \cellcolor[HTML]{EFEFEF}\textbf{0.904} \\
    \midrule

    \multirow{5}{*}{\textbf{TimesNet}} & No Defense 
        & 19.129 & 19.203 & --
        & 19.325 & 20.399 & --
        & 19.439 & 22.250 & -- \\
    & Fine-tuning
        & 23.121 & 21.347 & 0.464
        & 24.097 & 23.492 & 0.467
        & 22.862 & 23.857 & 0.459 \\
    & Fine-pruning
        & 23.179 & 21.028 & 0.456
        & 24.332 & 22.729 & 0.448
        & 22.940 & 23.310 & 0.446 \\
    & PDB 
        & 22.939 & 21.908 & 0.479
        & 22.780 & 23.035 & 0.481
        & 23.785 & 26.345 & 0.486 \\
    & \cellcolor[HTML]{EFEFEF}\textbf{\methodname}
        & \cellcolor[HTML]{EFEFEF}\textbf{20.389} & \cellcolor[HTML]{EFEFEF}\textbf{44.684} & \cellcolor[HTML]{EFEFEF}\textbf{0.754}
        & \cellcolor[HTML]{EFEFEF}\textbf{20.404} & \cellcolor[HTML]{EFEFEF}\textbf{44.203} & \cellcolor[HTML]{EFEFEF}\textbf{0.743}
        & \cellcolor[HTML]{EFEFEF}\textbf{20.703} & \cellcolor[HTML]{EFEFEF}\textbf{45.776} & \cellcolor[HTML]{EFEFEF}\textbf{0.726} \\
    \bottomrule
    \end{tabular}
    }
    \vspace{-12pt}
\end{table}
\textbf{Generalization to different forecasting horizons.} Beyond the default forecasting horizon used in BackTime~\cite{lin2024backtime} ($\LOUT{=}12$), we further evaluate \methodname{} under longer horizons with $\LOUT \in \{24,36,48\}$. Following BackTime’s protocol, we assume the attacker knows the forecasting horizon used by the victim model.

As shown in Figures~\ref{app-fig:different-t_out-fedformer}–\ref{app-fig:different-t_out-timesnet}, \methodname{} maintains competitive clean performance (\MAEC{}) across all horizons, and even outperforms undefended training on TimesNet in some cases. As $\LOUT$ increases, BackTime itself becomes less effective (e.g., poisoned \MAEP{} exceeds 30 across models when $\LOUT{=}48$), which correspondingly lowers \FDER{}; nevertheless, \methodname{} remains robust, with defended \MAEP{} staying above 28.4 in all settings. We observe that, except for TimesNet, the defense effectiveness of \methodname{} lightly decreases for FEDformer and SimpleTM at longer horizons ($\LOUT\in{36,48}$). A plausible explanation is that losses over longer target windows become more diluted across distant time steps, reducing the discriminability used by DRLS (Eq.~\ref{eq:drls}). A potential remedy is to adopt a weighted loss that prioritizes nearer horizons, which we leave for future work. Overall, these results indicate that \methodname{} remains effective against TSF backdoor attacks under different forecasting horizons.
\vspace{-5pt}
\begin{figure}[!htbp]
    \centering
    \begin{subfigure}{0.322\linewidth}
        \includegraphics[width=\linewidth]{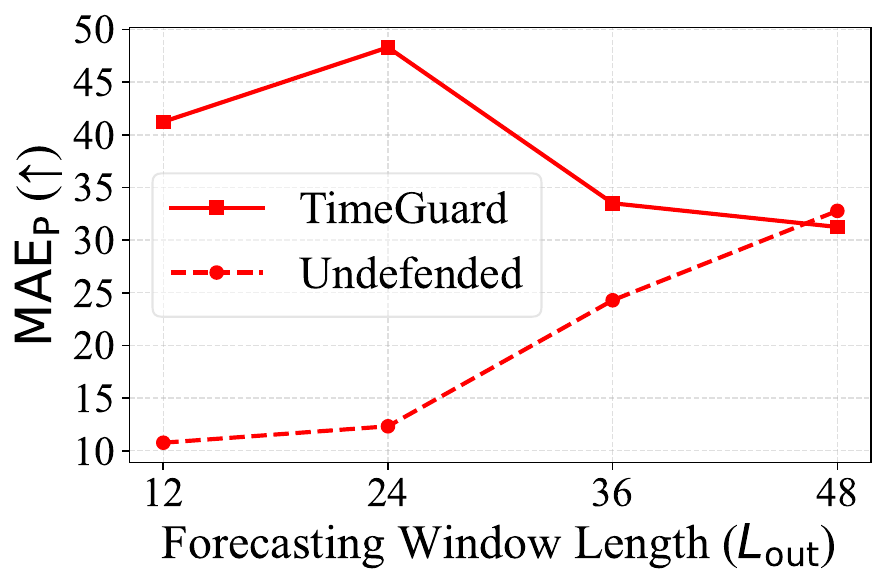}
    \end{subfigure}
    \hfill
    \begin{subfigure}{0.322\linewidth}
        \includegraphics[width=\linewidth]{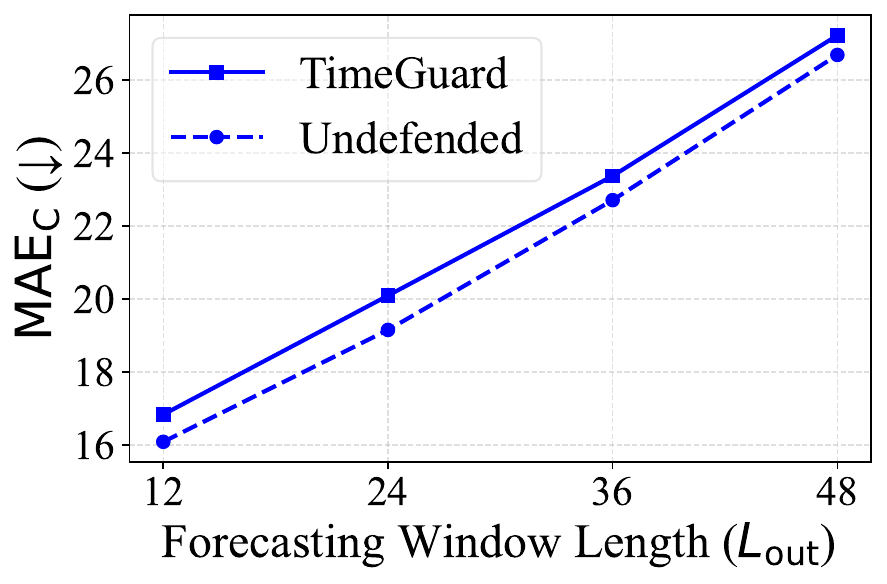}
    \end{subfigure}
    \hfill
    \begin{subfigure}{0.322\linewidth}
        \includegraphics[width=\linewidth]{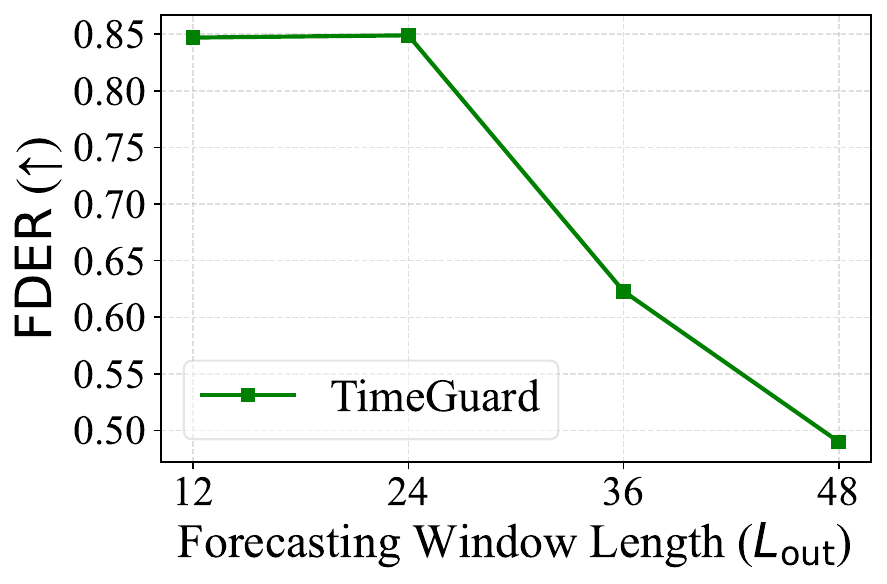}
    \end{subfigure}
    \caption{Defense performance of \methodname{} (\MAEP{}, \MAEC{}, and \FDER{}) under different forecasting window length $\LOUT$ of the BackTime attack on the PEMS03 dataset with the \emph{FEDformer} model.}                
    \label{app-fig:different-t_out-fedformer}
    \vspace{-12pt}
\end{figure}

\begin{figure}[!htbp]
    \centering
    \begin{subfigure}{0.322\linewidth}
        \includegraphics[width=\linewidth]{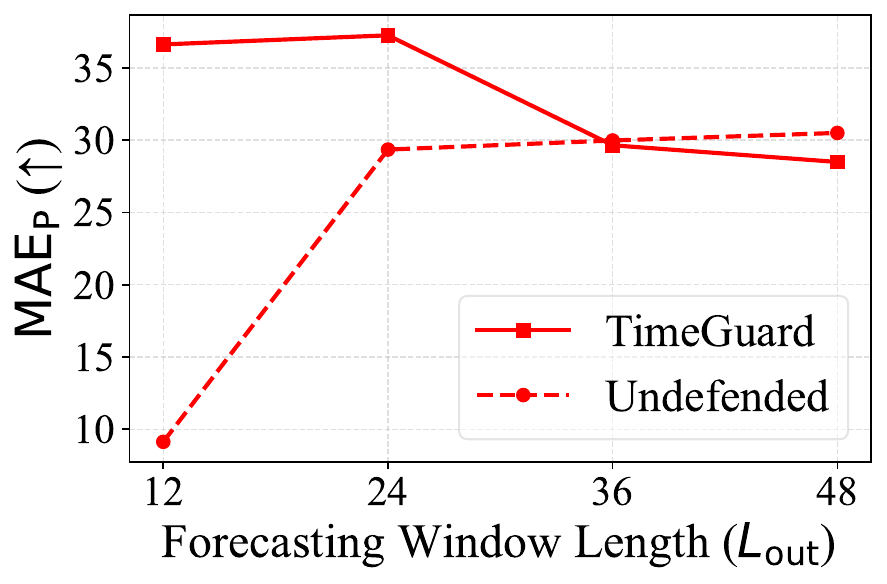}
    \end{subfigure}
    \hfill
    \begin{subfigure}{0.322\linewidth}
        \includegraphics[width=\linewidth]{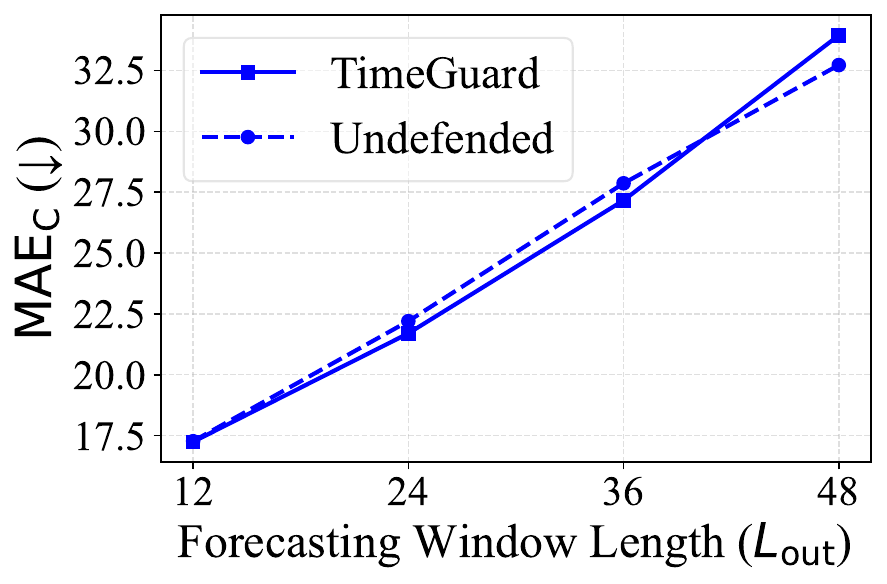}
    \end{subfigure}
    \hfill
    \begin{subfigure}{0.322\linewidth}
        \includegraphics[width=\linewidth]{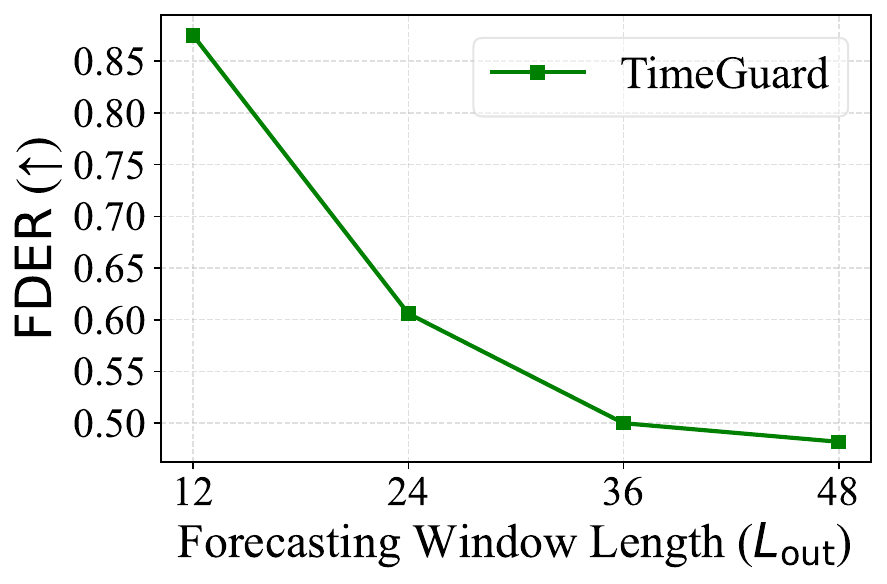}
    \end{subfigure}
    \caption{Defense performance of \methodname{} (\MAEP{}, \MAEC{}, and \FDER{}) under different forecasting window length $\LOUT$ of the BackTime attack on the PEMS03 dataset with the \emph{SimpleTM} model.}                
    \label{app-fig:different-t_out-simpletm}
\end{figure}

\begin{figure}[!htbp]
    \centering
    \begin{subfigure}{0.322\linewidth}
        \includegraphics[width=\linewidth]{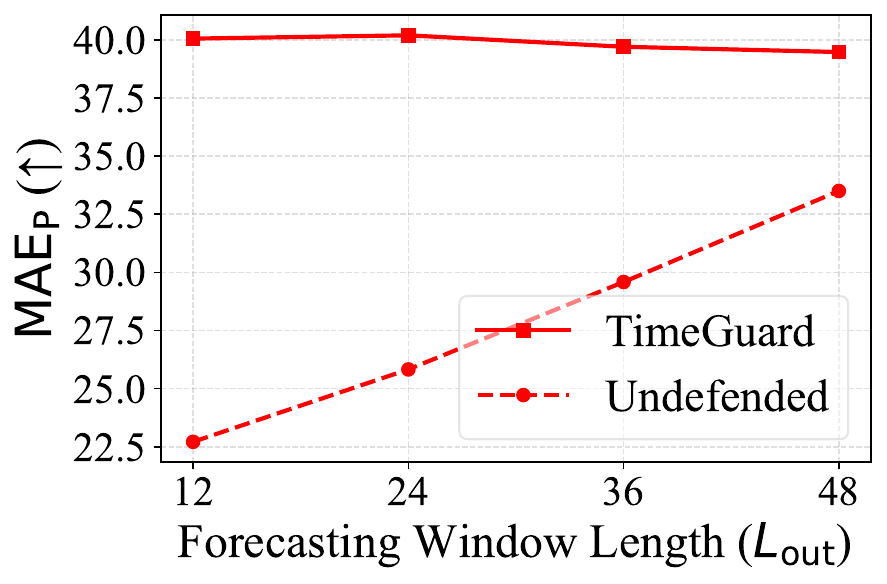}
    \end{subfigure}
    \hfill
    \begin{subfigure}{0.322\linewidth}
        \includegraphics[width=\linewidth]{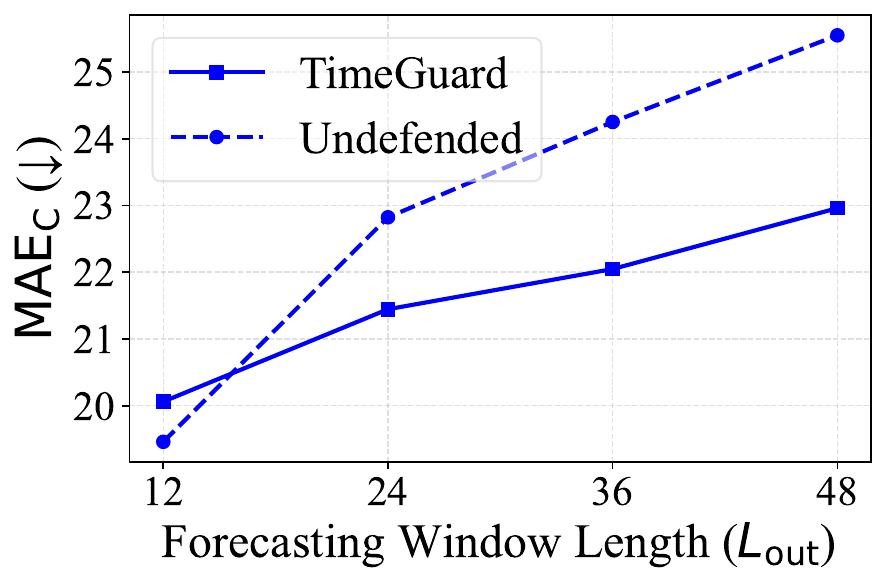}
    \end{subfigure}
    \hfill
    \begin{subfigure}{0.322\linewidth}
        \includegraphics[width=\linewidth]{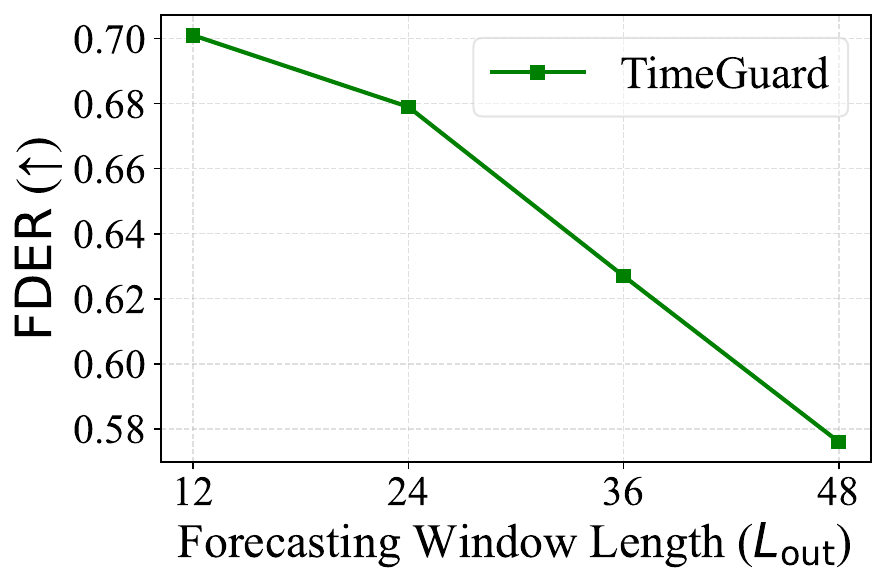}
    \end{subfigure}
    \caption{Defense performance of \methodname{} (\MAEP{}, \MAEC{}, and \FDER{}) under different forecasting window length $\LOUT$ of the BackTime attack on the PEMS03 dataset with the \emph{TimesNet} model.}                
    \label{app-fig:different-t_out-timesnet}
\end{figure}

\textbf{Generalization to large-scale datasets.} To evaluate the scalability of \methodname{}, we further do experiments on GBA~\cite{liu2023largest}, using its 2019 subset, which is a much larger traffic forecasting benchmark (35040 × 2352) than the datasets used in our main experiments. Even at this scale, \methodname{} achieves the best overall defense performance, with an average \FDER{} of $0.698$. We also explicitly report the increased training cost, showing that \methodname{} remains effective on substantially larger datasets, albeit with higher training overhead. This training time is notably higher than that on our previous largest benchmark, PEMS03, i.e., 3372s as reported in \tableautorefname~\ref{tab:efficiency-analysis}. The overhead corresponds to $\approx 3.53\times$ the cost of undefended training, indicating that \methodname{} remains scalable in practice but incurs nontrivial additional cost.
    \begin{table}[!htbp]
        \centering
        \vspace{-5pt}
        \caption{Defense performance and training time (in seconds) of PDB and \methodname{} under BackTime attack on the GBA dataset~\cite{liu2023largest}, where FEDformer, SimpleTM, and TimesNet are used as victim models. Best results are shown in \textbf{bold}.}
        \vspace{-5pt}
        \label{app-tab:gba-dataset}
        \setlength{\tabcolsep}{1pt} 
        \renewcommand{\arraystretch}{1.2} 
        \tiny
        \renewcommand{\aboverulesep}{0pt}
        \renewcommand{\belowrulesep}{0pt}
        \setlength\cellspacetoplimit{2pt}
        \setlength\cellspacebottomlimit{2pt}
        \resizebox{\linewidth}{!}{
       \begin{tabular}{lcccccccccccccccc}
    \toprule
    \textbf{Model~→}
    & \multicolumn{4}{c}{\textbf{FEDformer}}
    & \multicolumn{4}{c}{\textbf{SimpleTM}}
    & \multicolumn{4}{c}{\textbf{TimesNet}} 
    & \multicolumn{4}{c}{\textbf{\textsc{Average}}} 
    \\
    \cmidrule(l{0pt}){1-1}
    \cmidrule(l){2-5}
    \cmidrule(l){6-9}
    \cmidrule(l){10-13}
    \cmidrule(l){14-17}
    \textbf{Defense~↓}
    & \MAEC~↓ & \MAEP~↑ & \FDER~↑ & Training Time~↓
    & \MAEC~↓ & \MAEP~↑ & \FDER~↑ & Training Time~↓
    & \MAEC~↓ & \MAEP~↑ & \FDER~↑ & Training Time~↓
    & \MAEC~↓ & \MAEP~↑ & \FDER~↑ & Training Time~↓
    \\
    \midrule
    No Defense 
        & 27.259 & 26.234 & -- & 5625.6
        & 32.814 & 37.735 & -- & 6169.1
        & 31.846 & 41.139 & -- & 7134.3
        & 30.640 & 35.036 & -- & 6309.7 \\
    PDB~\cite{wei2024mitigating}
        & \textbf{26.705} & 39.016 & 0.664 & \textbf{5625.3}
        & \textbf{32.406} & 48.588 & \textbf{0.612} & \textbf{6419.8}
        & 32.326 & 51.237 & 0.591 & \textbf{8541.8}
        & \textbf{30.479} & 46.280 & 0.622 & \textbf{6862.3} \\
    \cellcolor[HTML]{EFEFEF}\textbf{\methodname}
        & \cellcolor[HTML]{EFEFEF}28.463
        & \cellcolor[HTML]{EFEFEF}\textbf{69.200}
        & \cellcolor[HTML]{EFEFEF}\textbf{0.789}
        & \cellcolor[HTML]{EFEFEF}23390.1
        & \cellcolor[HTML]{EFEFEF}39.874
        & \cellcolor[HTML]{EFEFEF}\textbf{54.641}
        & \cellcolor[HTML]{EFEFEF}0.566
        & \cellcolor[HTML]{EFEFEF}22641.2
        & \cellcolor[HTML]{EFEFEF}\textbf{31.647}
        & \cellcolor[HTML]{EFEFEF}\textbf{79.109}
        & \cellcolor[HTML]{EFEFEF}\textbf{0.740}
        & \cellcolor[HTML]{EFEFEF}20745.8
        & \cellcolor[HTML]{EFEFEF}33.328
        & \cellcolor[HTML]{EFEFEF}\textbf{67.650}
        & \cellcolor[HTML]{EFEFEF}\textbf{0.698}
        & \cellcolor[HTML]{EFEFEF}22259.0 \\
    \bottomrule
        \end{tabular}
        }
        \vspace{-12pt}
    \end{table}    

\textbf{Generalization to discrete, count-valued datasets.} While our main evaluation focuses on continuous-valued datasets, we further assess \methodname{} on a discrete, count-valued dataset. Specifically, we use the hourly subset of Bike Sharing~\cite{bikesharing}, which records hourly bike rental counts from 2011 to 2012 in the Capital Bikeshare system, and evaluate under the Random attack. For preprocessing, we retain ``temp", ``atemp", ``hum", ``windspeed", and ``cnt", and use ``cnt", a discrete variable representing the number of rental bikes, as the target variable, resulting in 17,379 time stamps. As shown in \tableautorefname~\ref{app-tab:bike-sharing-dataset}, \methodname{} still achieves the best defense performance, with an average \FDER{} of 0.831. This suggests preliminary transfer beyond continuous-valued TSF, while broader adaptation to discrete and count-valued forecasting remains future work.

    \begin{table}[!htbp]
        \centering
        \vspace{-5pt}
        \caption{Defense performance of PDB and \methodname{} under Random attack on the Bike Sharing dataset~\cite{bikesharing}, where FEDformer, SimpleTM, and TimesNet are used as victim models. Best results are shown in \textbf{bold}.}
        \vspace{-5pt}
        \label{app-tab:bike-sharing-dataset}
        \setlength{\tabcolsep}{2pt} 
        \renewcommand{\arraystretch}{1.2}
        \tiny
        \renewcommand{\aboverulesep}{0pt}
        \renewcommand{\belowrulesep}{0pt}
        \setlength\cellspacetoplimit{2pt}
        \setlength\cellspacebottomlimit{2pt}
        \resizebox{\linewidth}{!}{
        \begin{tabular}{lcccccccccccc}
        \toprule
        \textbf{Model~→}
        & \multicolumn{3}{c}{\textbf{FEDformer}}
        & \multicolumn{3}{c}{\textbf{SimpleTM}}
        & \multicolumn{3}{c}{\textbf{TimesNet}} 
        & \multicolumn{3}{c}{\textbf{\textsc{Average}}} 
        \\
        \cmidrule(l{0pt}){1-1}
        \cmidrule(l){2-4}
        \cmidrule(l){5-7}
        \cmidrule(l){8-10}
        \cmidrule(l){11-13}
        \textbf{Defense~↓}
        & \MAEC~↓ & \MAEP~↑ & \FDER~↑
        & \MAEC~↓ & \MAEP~↑ & \FDER~↑
        & \MAEC~↓ & \MAEP~↑ & \FDER~↑ 
        & \MAEC~↓ & \MAEP~↑ & \FDER~↑ 
        \\
        \midrule
        No Defense 
            & 23.921 & 66.894 & --
            & 24.143 & 66.465 & --
            & 20.377 & 47.904 & -- 
            & 22.814 & 60.421 & -- \\
        PDB~\cite{wei2024mitigating}
            & \textbf{22.368} & 120.193 & 0.722
            & 24.472 & 124.615 & 0.727
            & \textbf{19.131} & 103.550 & 0.769 
            & \textbf{21.990} & 116.119 & 0.739 \\
        \cellcolor[HTML]{EFEFEF}\textbf{\methodname}
            & \cellcolor[HTML]{EFEFEF}28.355
            & \cellcolor[HTML]{EFEFEF}\textbf{249.084}
            & \cellcolor[HTML]{EFEFEF}\textbf{0.788}
            & \cellcolor[HTML]{EFEFEF}\textbf{20.699}
            & \cellcolor[HTML]{EFEFEF}\textbf{243.449} 
            & \cellcolor[HTML]{EFEFEF}\textbf{0.863}
            & \cellcolor[HTML]{EFEFEF}22.751
            & \cellcolor[HTML]{EFEFEF}\textbf{227.498} 
            & \cellcolor[HTML]{EFEFEF}\textbf{0.843} 
            & \cellcolor[HTML]{EFEFEF}23.935 
            & \cellcolor[HTML]{EFEFEF}\textbf{240.010} 
            & \cellcolor[HTML]{EFEFEF}\textbf{0.831} \\
        \bottomrule
        \end{tabular}
        }
        \vspace{-5pt}
    \end{table}

\textbf{Robustness on distribution-shifted and nonstationary datasets.} Since \methodname{} relies on a hand-designed neighborhood metric, its estimates may degrade under strong distribution shift and nonstationarity. To evaluate this scenario, we test \methodname{} on Exchange, a financial forecasting benchmark with evolving dynamics that contains daily exchange rates from 8 countries between 1990 and 2016~\cite{lai2018modeling}, totaling 7,588 time steps, under the BackTime attack. The results in \tableautorefname~\ref{app-tab:exchange-dataset} show that, despite these challenging evolving dynamics, \methodname{} remains effective and outperforms PDB. This suggests that \methodname{} remains practically effective even under stronger distribution shift and nonstationarity.
    \begin{table}[!htbp]
        \centering
        \vspace{-5pt}
        \caption{Defense performance of PDB and \methodname{} under BackTime attack on the Exchange dataset~\cite{lai2018modeling}, where FEDformer, SimpleTM, and TimesNet are used as victim models. Best results are shown in \textbf{bold}.}
        \vspace{-5pt}
        \label{app-tab:exchange-dataset}
        \setlength{\tabcolsep}{2pt} 
        \renewcommand{\arraystretch}{1.2}
        \tiny
        \renewcommand{\aboverulesep}{0pt}
        \renewcommand{\belowrulesep}{0pt}
        \setlength\cellspacetoplimit{2pt}
        \setlength\cellspacebottomlimit{2pt}
        \resizebox{\linewidth}{!}{
        \begin{tabular}{lcccccccccccc}
        \toprule
        \textbf{Model~→}
        & \multicolumn{3}{c}{\textbf{FEDformer}}
        & \multicolumn{3}{c}{\textbf{SimpleTM}}
        & \multicolumn{3}{c}{\textbf{TimesNet}} 
        & \multicolumn{3}{c}{\textbf{\textsc{Average}}} 
        \\
        \cmidrule(l{0pt}){1-1}
        \cmidrule(l){2-4}
        \cmidrule(l){5-7}
        \cmidrule(l){8-10}
        \cmidrule(l){11-13}
        \textbf{Defense~↓}
        & \MAEC~↓ & \MAEP~↑ & \FDER~↑
        & \MAEC~↓ & \MAEP~↑ & \FDER~↑
        & \MAEC~↓ & \MAEP~↑ & \FDER~↑ 
        & \MAEC~↓ & \MAEP~↑ & \FDER~↑ 
        \\
        \midrule
        No Defense 
            & 0.00967 & 0.02143 & --
            & 0.00699 & 0.01927 & --
            & 0.03089 & 0.10875 & -- 
            & 0.01585 & 0.04982 & -- \\
        PDB~\cite{wei2024mitigating}
            & 0.01654 & 0.08107 & 0.66040
            & 0.00737 & 0.10169 & 0.87944
            & 0.05331 & \textbf{0.16554} & 0.46127
            & 0.02574 & 0.11610 & 0.66704 \\
        \cellcolor[HTML]{EFEFEF}\textbf{\methodname}
            & \cellcolor[HTML]{EFEFEF}\textbf{0.00803}
            & \cellcolor[HTML]{EFEFEF}\textbf{0.11179}
            & \cellcolor[HTML]{EFEFEF}\textbf{0.90417}
            & \cellcolor[HTML]{EFEFEF}\textbf{0.00673}
            & \cellcolor[HTML]{EFEFEF}\textbf{0.10451}
            & \cellcolor[HTML]{EFEFEF}\textbf{0.90782}
            & \cellcolor[HTML]{EFEFEF}\textbf{0.03829}
            & \cellcolor[HTML]{EFEFEF}0.14101
            & \cellcolor[HTML]{EFEFEF}\textbf{0.51782}
            & \cellcolor[HTML]{EFEFEF}\textbf{0.01768}
            & \cellcolor[HTML]{EFEFEF}\textbf{0.11911}
            & \cellcolor[HTML]{EFEFEF}\textbf{0.77660} \\
        \bottomrule
        \end{tabular}
        }
        \vspace{-12pt}
    \end{table}    

\textbf{Robustness under nonstationary settings and concept drift.} To further examine robustness under mild distribution change and concept drift, we conduct an additional experiment on PEMS03 under the BackTime attack by introducing synthetic distribution shifts at test time. Specifically, let $x_{t,c}$ denote the value at time step $t$ and channel $c$, and let $\sigma_c$ denote the standard deviation of channel $c$ computed from the training dataset. We consider three perturbations:
\begin{itemize}[itemsep=0.5pt, topsep=0.25pt]
    \item Scale shift: $x'_{t, c} = (1 + \alpha)x_{t,c}$
    \item Mean shift: $x'_{t, c} = x_{t,c} + \alpha\sigma_c$
    \item Linear trend: $x'_{t,c} = x_{t,c} + \alpha\sigma_c\frac{t}{T}$, where $T$ is the length of test split.
\end{itemize}
We test two shift strengths, $\alpha \in \{0.1, 0.2\}$. As shown in \tableautorefname~\ref{app-tab:nonstationary-case}, the performance of all methods declines slightly under synthetic distribution shift. However, \methodname{} consistently achieves the best defense performance across all six shifted settings. This suggests that although non-stationarity affects performance, its negative impact is moderate rather than catastrophic, and \methodname{} remains stable in practice under mild distribution shifts. Under strongly non-stationary or concept-drift scenarios, any training-phase defense is likely to face challenges, and \methodname{} is no exception, as also reflected by the degraded performance of both undefended training and PDB.
\begin{table}[htbp]
        \centering
        \caption{Defense performance of \methodname{} and PDB under BackTime attack on PEMS03 dataset under mild distribution shift, where FEDFormer, SimpleTM, and TimesNet are the victim models. Best results in each scenario are shown in \textbf{bold}.}
        \label{app-tab:nonstationary-case}
        \setlength{\tabcolsep}{2pt} 
        \renewcommand{\arraystretch}{1.2}
        \tiny
        \renewcommand{\aboverulesep}{0pt}
        \renewcommand{\belowrulesep}{0pt}
        \setlength\cellspacetoplimit{2pt}
        \setlength\cellspacebottomlimit{2pt}
        \resizebox{\linewidth}{!}{
        \begin{tabular}{clccccccccccccc}
    \toprule
    \multirow{2}{*}{\textbf{\begin{tabular}[c]{@{}c@{}}Shift / \\ Strength\end{tabular}}}
    & \textbf{Model~→}
    & \multicolumn{3}{c}{\textbf{FEDformer}}
    & \multicolumn{3}{c}{\textbf{SimpleTM}}
    & \multicolumn{3}{c}{\textbf{TimesNet}} 
    & \multicolumn{3}{c}{\textbf{\textsc{Average}}} 
    \\
    \cmidrule(l{0pt}){2-2}
    \cmidrule(l){3-5}
    \cmidrule(l){6-8}
    \cmidrule(l){9-11}
    \cmidrule(l){12-14}
    & \textbf{Defense~↓}
    & \MAEC~↓ & \MAEP~↑ & \FDER~↑
    & \MAEC~↓ & \MAEP~↑ & \FDER~↑
    & \MAEC~↓ & \MAEP~↑ & \FDER~↑ 
    & \MAEC~↓ & \MAEP~↑ & \FDER~↑ 
    \\
    \midrule
    \multirow{3}{*}{No shift} & No Defense 
        & 16.688 & 13.577 & --
        & 16.519 & 8.218 & --
        & 22.041 & 21.293 & -- 
        & 18.416 & 14.363 & -- \\
    & PDB~\cite{wei2024mitigating}
        & 17.420 & 16.038 & 0.556
        & 18.283 & 25.700 & 0.792
        & 23.586 & 26.898 & 0.571
        & 19.763 & 22.879 & 0.640 \\
    & \textbf{\methodname}
        & \textbf{16.850} & \textbf{42.101} & \textbf{0.834}
        & \textbf{17.688} & \textbf{36.386} & \textbf{0.854}
        & \textbf{20.562} & \textbf{40.005} & \textbf{0.734}
        & \textbf{18.367} & \textbf{39.497} & \textbf{0.807} \\
    \midrule
    \multirow{3}{*}{\begin{tabular}[c]{@{}c@{}}Scale shift \\ ($\alpha = 0.1$)\end{tabular}} & No Defense 
        & 18.323 & 14.860 & --
        & 18.171 & 8.696 & --
        & 25.510 & 24.590 & -- 
        & 20.668 & 16.049 & -- \\
    & PDB~\cite{wei2024mitigating}
        & 19.039 & 17.560 & 0.558
        & 20.229 & 26.254 & 0.784
        & 27.108 & 30.703 & 0.570
        & 22.125 & 24.839 & 0.637 \\
    & \textbf{\methodname}
        & \textbf{18.508} & \textbf{42.423} & \textbf{0.820}
        & \textbf{19.457} & \textbf{36.707} & \textbf{0.849}
        & \textbf{23.672} & \textbf{40.808} & \textbf{0.699}
        & \textbf{20.546} & \textbf{39.979} & \textbf{0.789} \\
    \midrule
    \multirow{3}{*}{\begin{tabular}[c]{@{}c@{}}Scale shift \\ ($\alpha = 0.2$)\end{tabular}} & No Defense 
        & 19.977 & 16.170 & --
        & 19.823 & 9.241 & --
        & 30.968 & 29.982 & -- 
        & 23.590 & 19.682 & -- \\
    & PDB~\cite{wei2024mitigating}
        & 20.734 & 19.108 & 0.482
        & 22.384 & 27.480 & 0.775
        & 32.602 & 36.076 & 0.559
        & 25.240 & 27.555 & 0.605 \\
    & \textbf{\methodname}
        & \textbf{20.225} & \textbf{42.811} & \textbf{0.762}
        & \textbf{21.226} & \textbf{37.050} & \textbf{0.842}
        & \textbf{28.465} & \textbf{42.762} & \textbf{0.649}
        & \textbf{23.305} & \textbf{40.874} & \textbf{0.751} \\
    \midrule
    \multirow{3}{*}{\begin{tabular}[c]{@{}c@{}}Linear trend \\ ($\alpha = 0.1$)\end{tabular}} & No Defense 
        & 16.689 & 14.860 & --
        & 16.519 & 8.696 & --
        & 21.761 & 24.590 & -- 
        & 18.323 & 16.049 & -- \\
    & PDB~\cite{wei2024mitigating}
        & 17.446 & 16.029 & 0.515
        & 18.409 & 25.862 & 0.781
        & 23.117 & 28.007 & 0.532
        & 19.657 & 23.300 & 0.609 \\
    & \textbf{\methodname}
        & \textbf{16.841} & \textbf{42.123} & \textbf{0.819}
        & \textbf{17.688} & \textbf{36.386} & \textbf{0.847}
        & \textbf{20.523} & \textbf{41.380} & \textbf{0.703}
        & \textbf{18.351} & \textbf{39.963} & \textbf{0.790} \\
    \midrule
    \multirow{3}{*}{\begin{tabular}[c]{@{}c@{}}Linear trend \\ ($\alpha = 0.2$)\end{tabular}} & No Defense 
        & 16.690 & 13.515 & --
        & 16.518 & 8.218 & --
        & 22.140 & 29.982 & -- 
        & 18.450 & 17.239 & -- \\
    & PDB~\cite{wei2024mitigating}
        & 17.482 & 16.028 & 0.556
        & 18.641 & 26.046 & 0.785
        & 23.364 & 29.682 & 0.474
        & 19.829 & 23.919 & 0.605 \\
    & \textbf{\methodname}
        & \textbf{16.836} & \textbf{42.139} & \textbf{0.835}
        & \textbf{17.688} & \textbf{36.401} & \textbf{0.854}
        & \textbf{20.990} & \textbf{42.854} & \textbf{0.650}
        & \textbf{18.505} & \textbf{40.465} & \textbf{0.780} \\
    \midrule
    \multirow{3}{*}{\begin{tabular}[c]{@{}c@{}}Mean shift \\ ($\alpha = 0.1$)\end{tabular}} & No Defense 
        & 16.688 & 13.515 & --
        & 16.519 & 8.217 & --
        & 22.051 & 23.388 & -- 
        & 18.419 & 15.040 & -- \\
    & PDB~\cite{wei2024mitigating}
        & 17.474 & 16.016 & 0.556
        & 18.607 & 26.066 & 0.786
        & 23.242 & 29.628 & 0.580
        & 19.774 & 23.903 & 0.641 \\
    & \textbf{\methodname}
        & \textbf{16.838} & \textbf{42.166} & \textbf{0.835}
        & \textbf{17.688} & \textbf{36.422} & \textbf{0.854}
        & \textbf{20.901} & \textbf{42.946} & \textbf{0.728}
        & \textbf{18.476} & \textbf{40.512} & \textbf{0.806} \\
    \midrule
    \multirow{3}{*}{\begin{tabular}[c]{@{}c@{}}Mean shift \\ ($\alpha = 0.2$)\end{tabular}} & No Defense 
        & 16.693 & 13.462 & --
        & 16.519 & 8.217 & --
        & 23.937 & 26.798 & -- 
        & 19.050 & 16.159 & -- \\
    & PDB~\cite{wei2024mitigating}
        & 17.559 & 16.028 & 0.555
        & 19.207 & 26.066 & 0.772
        & 24.941 & 33.902 & 0.585
        & 20.569 & 25.332 & 0.637 \\
    & \textbf{\methodname}
        & \textbf{16.836} & \textbf{42.227} & \textbf{0.836}
        & \textbf{17.688} & \textbf{36.422} & \textbf{0.854}
        & \textbf{22.566} & \textbf{46.094} & \textbf{0.709}
        & \textbf{19.030} & \textbf{41.581} & \textbf{0.800} \\
    \bottomrule
    \end{tabular}
        }
    \vspace{-12pt}
    \end{table}    
\subsection{Ablation Study Full Results} \label{app:ablation}
We provide per-model ablation results on the PEMS03 dataset under the Random, Manhattan, and BackTime attacks in Table~\ref{app-tab:ablation-study} with FEDformer, SimpleTM, and TimesNet. Overall, these results are consistent with the model-averaged trends reported in Section~\ref{sec:analysis}.

\begin{table}[!t]
    \centering
    \caption{Per-model ablation results of \methodname{} on PEMS03 under the Random, Manhattan, and BackTime attacks, with FEDformer, SimpleTM, and TimesNet as victim models. The \textsc{Average} row reports the mean across the three models, matching Table~\ref{tab:ablation-results}. }
    \vspace{-5pt}
    \label{app-tab:ablation-study}
    \setlength{\tabcolsep}{6.5pt}
    \renewcommand{\arraystretch}{1.1}
    \scriptsize
    \renewcommand{\aboverulesep}{0pt}
    \renewcommand{\belowrulesep}{0pt}
    \setlength\cellspacetoplimit{1.8pt}
    \setlength\cellspacebottomlimit{1.8pt}
    \resizebox{\linewidth}{!}{
    \begin{tabular}{clccccccccc}
    \toprule
    \multirow{2}{*}{\textbf{\begin{tabular}[c]{@{}c@{}}Model\end{tabular}}}
    & \textbf{Attack~→}
    & \multicolumn{3}{c}{\textbf{Random}}
    & \multicolumn{3}{c}{\textbf{Manhattan}}
    & \multicolumn{3}{c}{\textbf{BackTime}} \\
    \cmidrule(l{0pt}){2-2}
    \cmidrule(l){3-5}
    \cmidrule(l){6-8}
    \cmidrule(l){9-11}
    & \textbf{Defense~↓}
    & \MAEC~↓ & \MAEP~↑ & \FDER~↑
    & \MAEC~↓ & \MAEP~↑ & \FDER~↑
    & \MAEC~↓ & \MAEP~↑ & \FDER~↑ \\
    \midrule

    \multirow{7}{*}{\textbf{FEDformer}} & No Defense
        & 16.286 & 14.959 & --
        & 16.411 & 17.984 & --
        & 16.093 & 10.760 & -- \\
    & \cellcolor[HTML]{EFEFEF}\textbf{\methodname}
        & \cellcolor[HTML]{EFEFEF}16.607
        & \cellcolor[HTML]{EFEFEF}\textbf{100.436}
        & \cellcolor[HTML]{EFEFEF}\textbf{0.916}
        & \cellcolor[HTML]{EFEFEF}16.578
        & \cellcolor[HTML]{EFEFEF}\textbf{94.212}
        & \cellcolor[HTML]{EFEFEF}\textbf{0.900}
        & \cellcolor[HTML]{EFEFEF}16.840
        & \cellcolor[HTML]{EFEFEF}\textbf{41.232}
        & \cellcolor[HTML]{EFEFEF}\textbf{0.847} \\ \cmidrule(l{0pt}){2-11}
    & w/o Channel-wise
        & 16.558 & 14.915 & 0.492
        & 16.660 & 18.462 & 0.505
        & 17.959 & 12.832 & 0.529 \\
    & w/o NDF
        & 16.717 & 91.682 & 0.906
        & 16.695 & 87.981 & 0.889
        & 16.947 & 39.474 & 0.839 \\
    & w/o RCF
        & 16.622 & 99.747 & 0.915
        & 16.639 & 89.769 & 0.893
        & 16.962 & 40.913 & 0.843 \\
    & w/o NDF+RCF
        & \textbf{16.549} & 83.125 & 0.902
        & \textbf{16.549} & 82.560 & 0.887
        & \textbf{16.740} & 40.111 & \textbf{0.847} \\
    & w/o DRLS
        & 17.727 & 15.775 & 0.485
        & 17.618 & 16.000 & 0.466
        & 17.711 & 9.509 & 0.454 \\
    \midrule

    \multirow{7}{*}{\textbf{SimpleTM}} & No Defense
        & 17.510 & 19.007 & --
        & 17.539 & 22.532 & --
        & 17.268 & 9.131 & -- \\
    & \cellcolor[HTML]{EFEFEF}\textbf{\methodname}
        & \cellcolor[HTML]{EFEFEF}17.489
        & \cellcolor[HTML]{EFEFEF}173.700
        & \cellcolor[HTML]{EFEFEF}\textbf{0.945}
        & \cellcolor[HTML]{EFEFEF}17.284
        & \cellcolor[HTML]{EFEFEF}157.870
        & \cellcolor[HTML]{EFEFEF}\textbf{0.929}
        & \cellcolor[HTML]{EFEFEF}17.243
        & \cellcolor[HTML]{EFEFEF}36.626
        & \cellcolor[HTML]{EFEFEF}\textbf{0.875} \\ \cmidrule(l{0pt}){2-11}
    & w/o Channel-wise
        & \textbf{16.826} & 14.307 & 0.500
        & 16.660 & 18.462 & 0.500
        & 17.666 & 6.615 & 0.489 \\
    & w/o NDF
        & 18.740 & 180.126 & 0.914
        & 16.695 & 87.981 & 0.872
        & 17.400 & 36.145 & 0.870 \\
    & w/o RCF
        & 17.311 & 172.719 & \textbf{0.945}
        & \textbf{16.639} & 89.769 & 0.874
        & 17.703 & \textbf{37.717} & 0.867 \\
    & w/o NDF+RCF
        & 17.577 & 151.760 & 0.935
        & 17.793 & 139.796 & 0.912
        & \textbf{17.094} & 35.681 & 0.872 \\
    & w/o DRLS
        & 19.970 & \textbf{194.031} & 0.889
        & 19.945 & \textbf{173.531} & 0.875
        & 20.628 & 31.919 & 0.776 \\
    \midrule

    \multirow{7}{*}{\textbf{TimesNet}} & No Defense
        & 19.104 & 19.351 & --
        & 19.216 & 20.283 & --
        & 19.459 & 22.713 & -- \\
    & \cellcolor[HTML]{EFEFEF}\textbf{\methodname}
        & \cellcolor[HTML]{EFEFEF}\textbf{19.687}
        & \cellcolor[HTML]{EFEFEF}39.894
        & \cellcolor[HTML]{EFEFEF}\textbf{0.743}
        & \cellcolor[HTML]{EFEFEF}\textbf{19.689}
        & \cellcolor[HTML]{EFEFEF}\textbf{40.029}
        & \cellcolor[HTML]{EFEFEF}\textbf{0.735}
        & \cellcolor[HTML]{EFEFEF}\textbf{20.061}
        & \cellcolor[HTML]{EFEFEF}40.052
        & \cellcolor[HTML]{EFEFEF}\textbf{0.701} \\ \cmidrule(l{0pt}){2-11}
    & w/o Channel-wise
        & 21.577 & 19.212 & 0.443
        & 21.389 & 20.662 & 0.458
        & 21.580 & 25.328 & 0.502 \\
    & w/o NDF
        & 20.286 & \textbf{41.563} & 0.738
        & 20.370 & 39.967 & 0.718
        & 20.908 & 39.428 & 0.677 \\
    & w/o RCF
        & 20.255 & 40.748 & 0.734
        & 20.043 & 39.957 & 0.726
        & 21.158 & \textbf{40.207} & 0.677 \\
    & w/o NDF+RCF
        & 20.880 & 40.456 & 0.718
        & 20.366 & 39.382 & 0.714
        & 20.985 & 39.889 & 0.679 \\
    & w/o DRLS
        & 21.545 & 19.520 & 0.448
        & 21.723 & 21.164 & 0.463
        & 21.905 & 27.327 & 0.529 \\
    \midrule

    \multirow{7}{*}{\textbf{\textsc{Average}}} & No Defense
        & 17.634 & 17.772 & --
        & 17.722 & 20.266 & --
        & 17.607 & 14.201 & -- \\
    & \cellcolor[HTML]{EFEFEF}\textbf{\methodname}
        & \cellcolor[HTML]{EFEFEF}\textbf{17.928}
        & \cellcolor[HTML]{EFEFEF}\textbf{104.677}
        & \cellcolor[HTML]{EFEFEF}\textbf{0.868}
        & \cellcolor[HTML]{EFEFEF}17.850
        & \cellcolor[HTML]{EFEFEF}\textbf{97.370}
        & \cellcolor[HTML]{EFEFEF}\textbf{0.854}
        & \cellcolor[HTML]{EFEFEF}\textbf{18.048}
        & \cellcolor[HTML]{EFEFEF}39.303
        & \cellcolor[HTML]{EFEFEF}\textbf{0.808} \\ \cmidrule(l{0pt}){2-11}
    & w/o Channel-wise
        & 18.320 & 16.145 & 0.478
        & 18.236 & 19.195 & 0.488
        & 19.068 & 14.925 & 0.507 \\
    & w/o NDF
        & 18.581 & 104.457 & 0.853
        & 17.920 & 71.976 & 0.826
        & 18.418 & 38.349 & 0.795 \\
    & w/o RCF
        & 18.063 & 104.405 & 0.865
        & \textbf{17.774} & 73.165 & 0.831
        & 18.608 & \textbf{39.612} & 0.796 \\
    & w/o NDF+RCF
        & 18.336 & 91.780 & 0.852
        & 18.236 & 87.246 & 0.838
        & 18.273 & 38.560 & 0.799 \\
    & w/o DRLS
        & 19.748 & 76.442 & 0.607
        & 19.762 & 70.232 & 0.601
        & 20.081 & 22.918 & 0.586 \\
    \bottomrule
    \end{tabular}
    }
    \vspace{-8pt}
\end{table}

\subsection{Hyperparameter Sensitivity Full Results} \label{app:hyper-sensit}

\textbf{Influence of $\alpha$ and $\beta$.} Figures~\ref{app-fig:different-alpha-beta-fedformer}–\ref{app-fig:different-alpha-beta-timesnet} report the \methodname{} defense performance with FEDformer, SimpleTM, and TimesNet on PEMS03 under the BackTime attack while varying $\alpha \in \{0.10, 0.15, 0.20, 0.25, 0.30\}$ and $\beta \in \{0.40, 0.50, 0.60, 0.70, 0.80\}$ combination. Consistent with the model-averaged trends in Section~\ref{sec:analysis}, it recommends choosing $\alpha \in [0.15, 0.25]$ and $\beta \in [0.5, 0.7]$ to balance clean performance and robustness.

\begin{figure}[!htbp]
    \centering
    \begin{subfigure}{0.31\linewidth}
        \includegraphics[width=\linewidth]{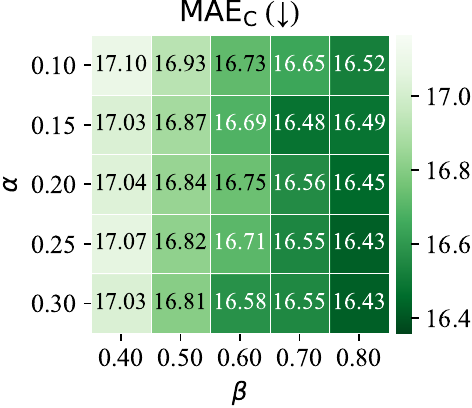}
    \end{subfigure}
    \hfill
    \begin{subfigure}{0.31\linewidth}
        \includegraphics[width=\linewidth]{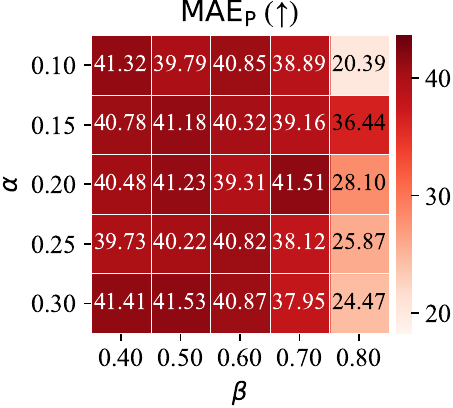}
    \end{subfigure}
    \hfill
    \begin{subfigure}{0.31\linewidth}
        \includegraphics[width=\linewidth]{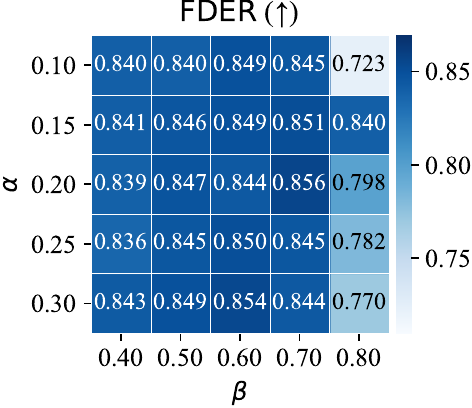}
    \end{subfigure}
    \vspace{-5pt}
    \caption{Defense performance of \methodname{} (\MAEP{}, \MAEC{}, and \FDER{}) with different initial reliable-pool ratio $\alpha$  and final ratio $\beta$ under \textbf{BackTime} attack on the \textbf{PEMS03} dataset with the \textbf{FEDformer} model.}                
    \label{app-fig:different-alpha-beta-fedformer}
    \vspace{-5pt}
\end{figure}

\begin{figure}[!htbp]
    \centering
    \begin{subfigure}{0.31\linewidth}
        \includegraphics[width=\linewidth]{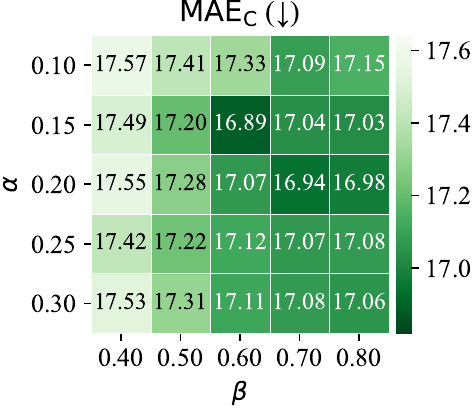}
    \end{subfigure}
    \hfill
    \begin{subfigure}{0.31\linewidth}
        \includegraphics[width=\linewidth]{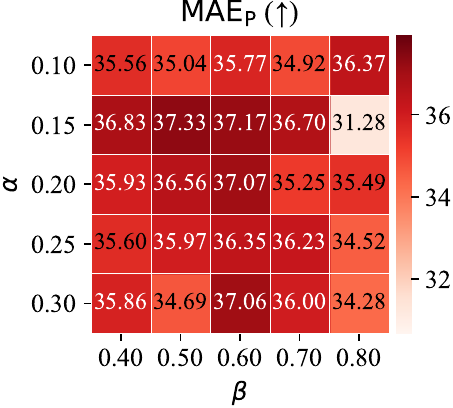}
    \end{subfigure}
    \hfill
    \begin{subfigure}{0.31\linewidth}
        \includegraphics[width=\linewidth]{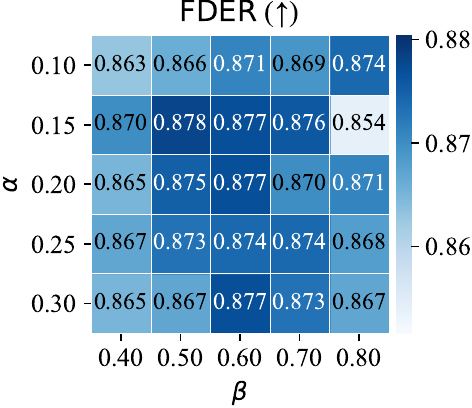}
    \end{subfigure}
    \vspace{-5pt}
    \caption{Defense performance of \methodname{} (\MAEP{}, \MAEC{}, and \FDER{}) with different initial reliable-pool ratio $\alpha$  and final ratio $\beta$ under \textbf{BackTime} attack on the \textbf{PEMS03} dataset with the \textbf{SimpleTM} model.}                
    \label{app-fig:different-alpha-beta-simpletm}
    \vspace{-5pt}
\end{figure}

\begin{figure}[!htbp]
    \centering
    \begin{subfigure}{0.31\linewidth}
        \includegraphics[width=\linewidth]{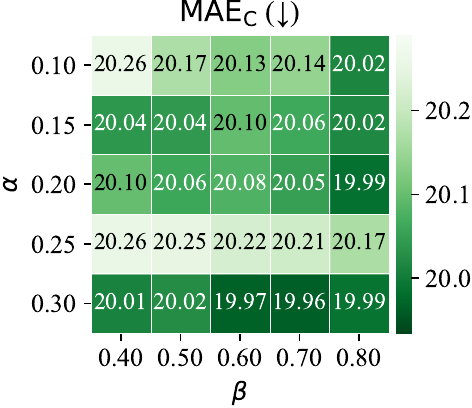}
    \end{subfigure}
    \hfill
    \begin{subfigure}{0.31\linewidth}
        \includegraphics[width=\linewidth]{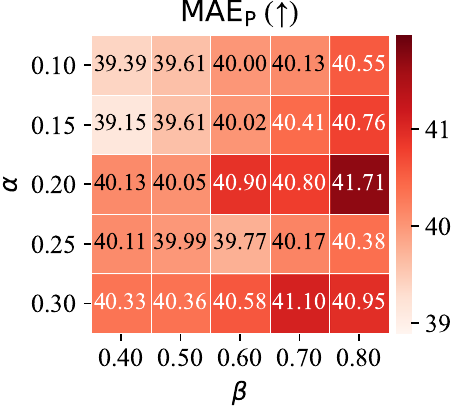}
    \end{subfigure}
    \hfill
    \begin{subfigure}{0.31\linewidth}
        \includegraphics[width=\linewidth]{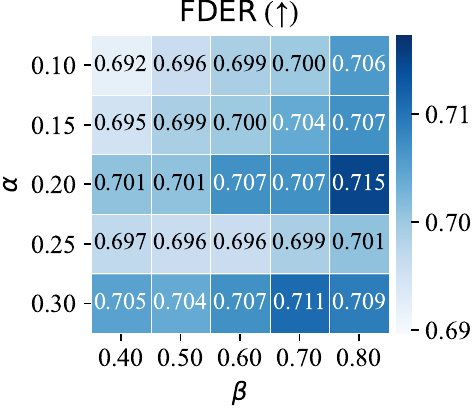}
    \end{subfigure}
    \vspace{-5pt}
    \caption{Defense performance of \methodname{} (\MAEP{}, \MAEC{}, and \FDER{}) with different initial reliable-pool ratio $\alpha$  and final ratio $\beta$ under \textbf{BackTime} attack on the \textbf{PEMS03} dataset with the \textbf{TimesNet} model.}                
    \label{app-fig:different-alpha-beta-timesnet}
    \vspace{-12pt}
\end{figure}

We further evaluate the effects of pool sizes $\alpha$ and $\beta$ on the Weather dataset under the BackTime attack, as shown in \figureautorefname~\ref{app-fig:different-alpha-fder-weather} and \figureautorefname~\ref{app-fig:different-beta-fder-weather}. Specifically, we vary $\alpha \in \{0.10, 0.15, 0.20, 0.25, 0.30\}$ with $\beta$ fixed at $0.50$, and vary $\beta \in \{0.40, 0.50, 0.60, 0.70, 0.80\}$ with $\alpha$ fixed at $0.20$. Overall, \methodname{} remains effective across all settings, achieving \FDER{} above 0.75 in every case. These results lead to the same conclusion as in \sectionautorefname~\ref{sec:exp-main-result} on PEMS03 dataset: $\alpha \in [0.15, 0.25]$ and $\beta \in [0.5, 0.7]$ provide the best trade-off, as reflected by \FDER{}.

\begin{figure}[!htbp]
    \centering
    \begin{subfigure}{0.24\linewidth}
        \includegraphics[width=\linewidth]{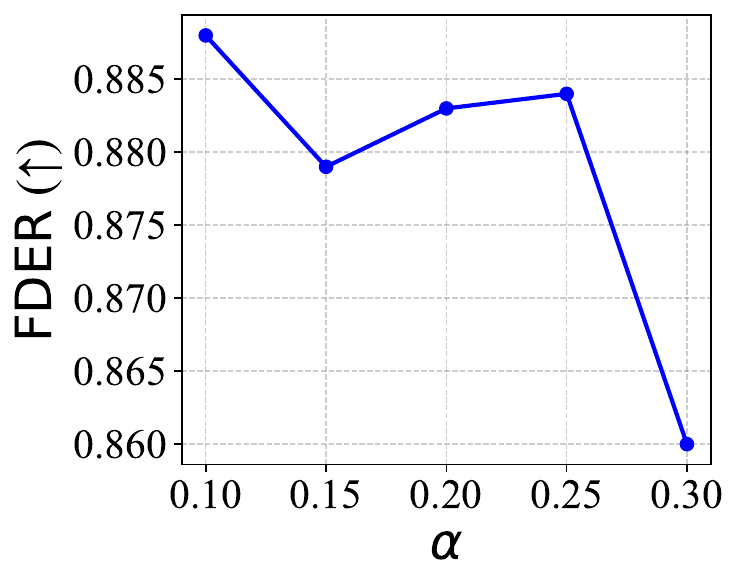}
    \end{subfigure}
    \hfill
    \begin{subfigure}{0.24\linewidth}
        \includegraphics[width=\linewidth]{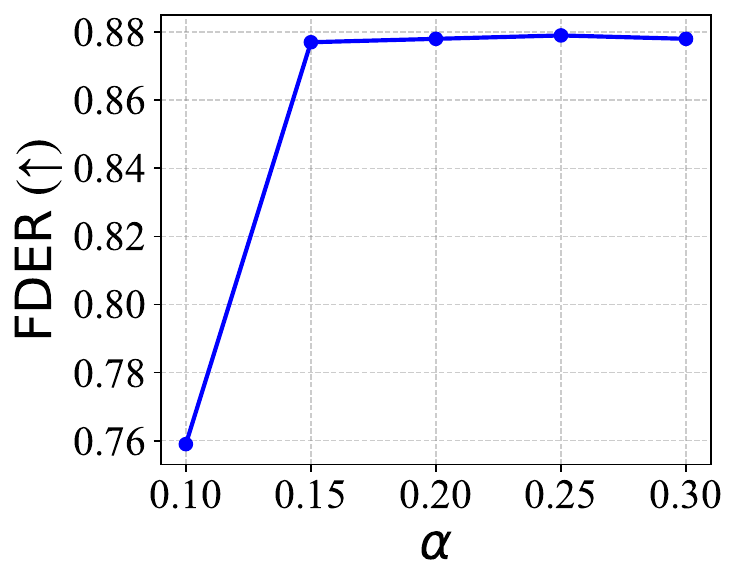}
    \end{subfigure}
    \hfill
    \begin{subfigure}{0.24\linewidth}
        \includegraphics[width=\linewidth]{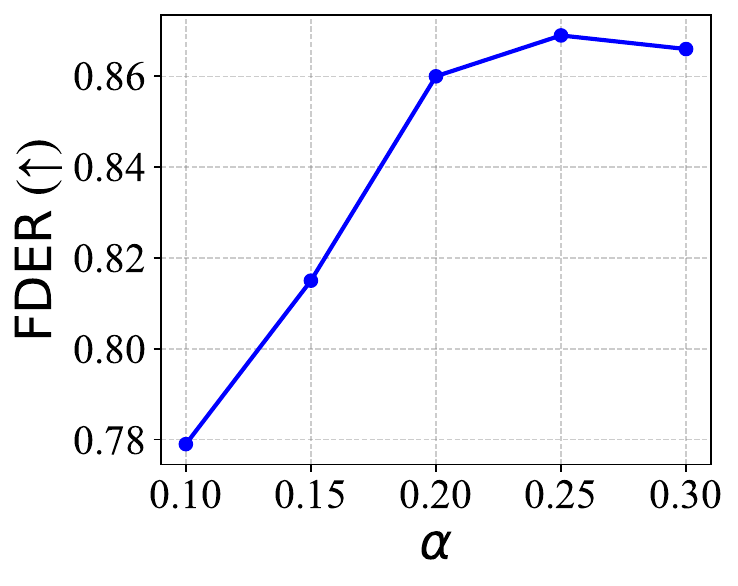}
    \end{subfigure}
     \hfill
    \begin{subfigure}{0.24\linewidth}
        \includegraphics[width=\linewidth]{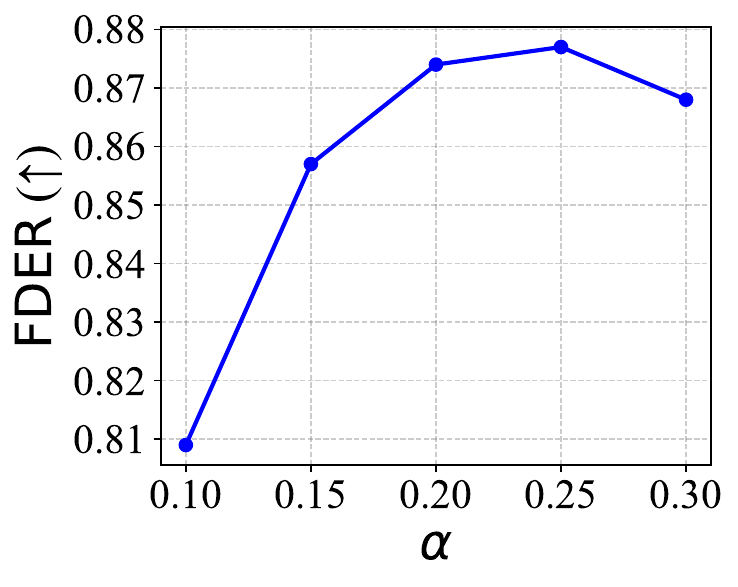}
    \end{subfigure}
    \vspace{-5pt}
    \caption{Defense performance of \methodname{} in terms of \FDER{} with different initialization ratios $\alpha$ under the \textbf{BackTime} attack on the \textbf{Weather} dataset, reported for FEDformer, SimpleTM, TimesNet, and their average.}    \label{app-fig:different-alpha-fder-weather}
    \vspace{-5pt}
\end{figure}

\begin{figure}[!htbp]
    \centering
    \begin{subfigure}{0.24\linewidth}
        \includegraphics[width=\linewidth]{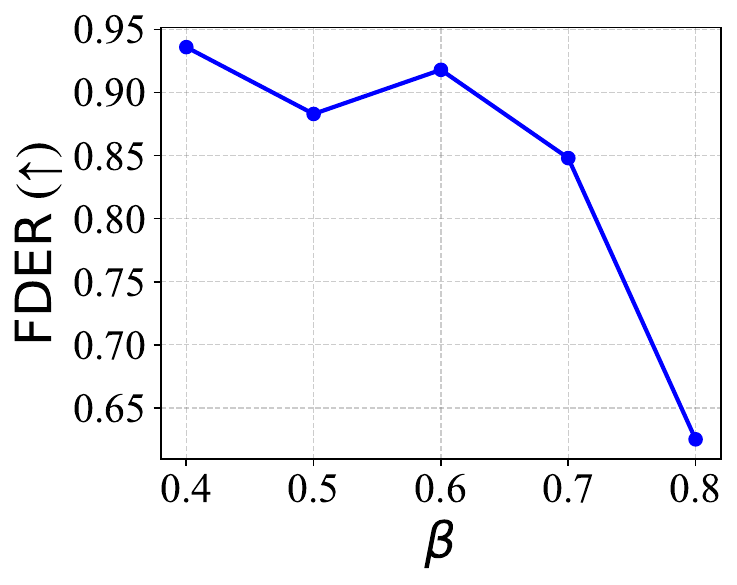}
    \end{subfigure}
    \hfill
    \begin{subfigure}{0.24\linewidth}
        \includegraphics[width=\linewidth]{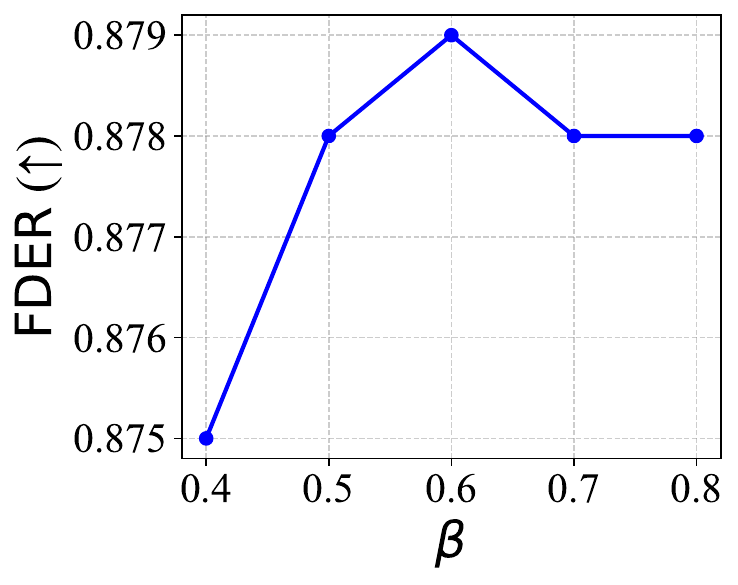}
    \end{subfigure}
    \hfill
    \begin{subfigure}{0.24\linewidth}
        \includegraphics[width=\linewidth]{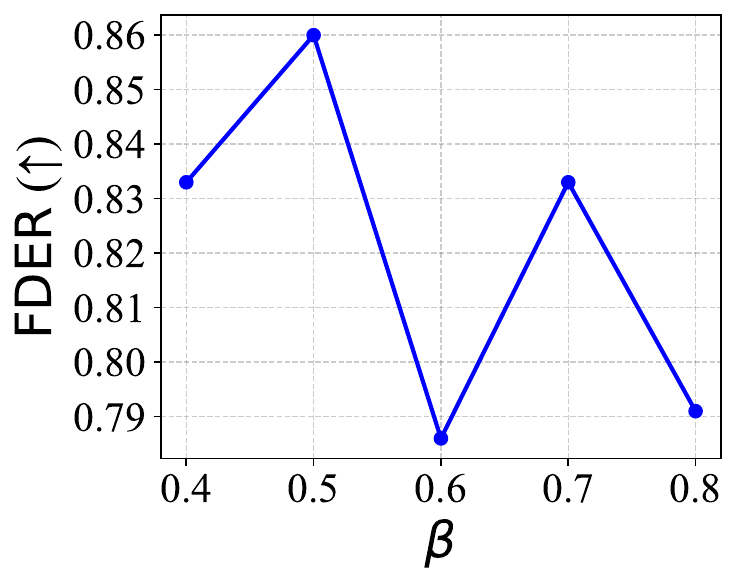}
    \end{subfigure}
     \hfill
    \begin{subfigure}{0.24\linewidth}
        \includegraphics[width=\linewidth]{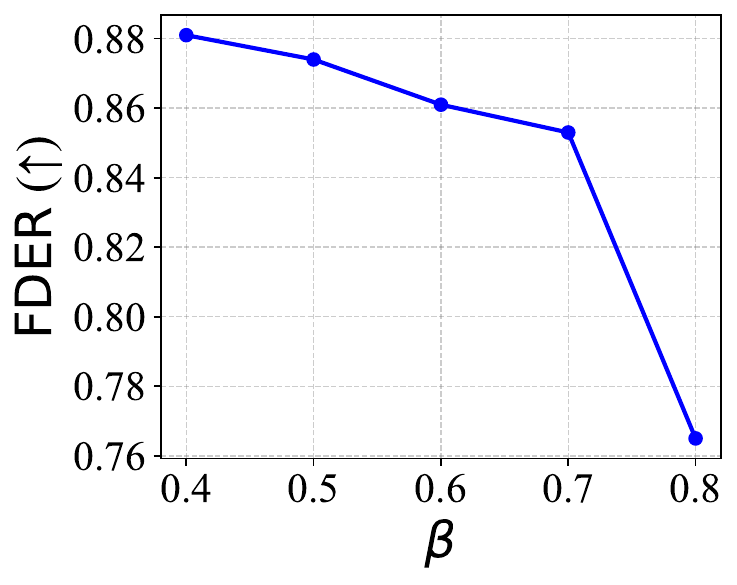}
    \end{subfigure}
    \vspace{-5pt}
    \caption{Defense performance of \methodname{} in terms of \FDER{} with different maximum pool ratios $\beta$ under the \textbf{BackTime} attack on the \textbf{Weather} dataset, reported for FEDformer, SimpleTM, TimesNet, and their average.}   \label{app-fig:different-beta-fder-weather}
\end{figure}

\textbf{Influence of $K$ and $\pi$.} Figures~\ref{app-fig:different-k-fder} and~\ref{app-fig:different-pi-fder} report the \methodname{} defense performance with FEDformer, SimpleTM, and TimesNet on PEMS03 under the BackTime attack while varying the neighborhood size $K \in \{10, 20, 32, 48, 64\}$ and the scaling factor $\pi \in \{1.05, 1.15, 1.25, 1.35, 1.50, 1.65\}$, respectively. Consistent with the model-averaged trends in Section~\ref{sec:analysis}, \methodname{} is relatively insensitive to the choice of $K$, while we recommend selecting $\pi \leq 1.5$. 

\begin{figure}[!htbp]
    \centering
    \begin{subfigure}{0.31\linewidth}
        \includegraphics[width=\linewidth]{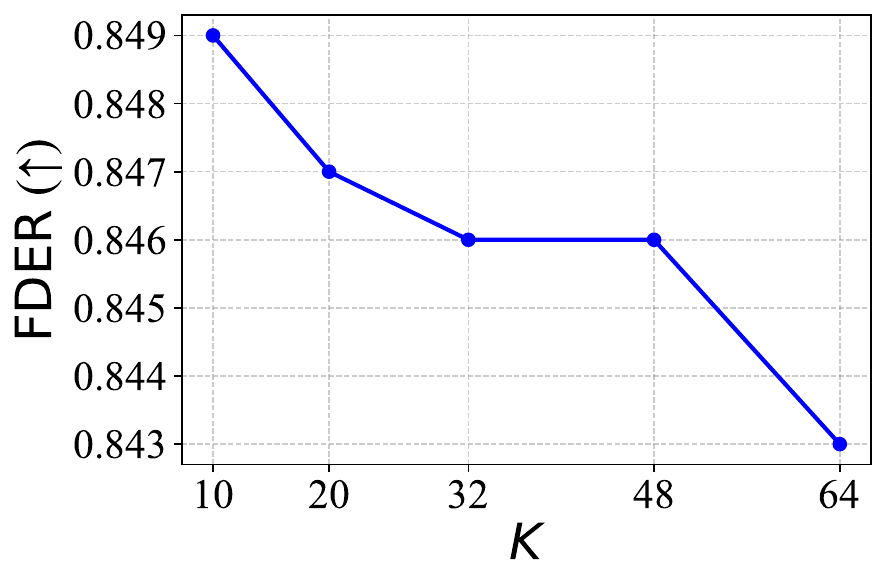}
    \end{subfigure}
    \hfill
    \begin{subfigure}{0.31\linewidth}
        \includegraphics[width=\linewidth]{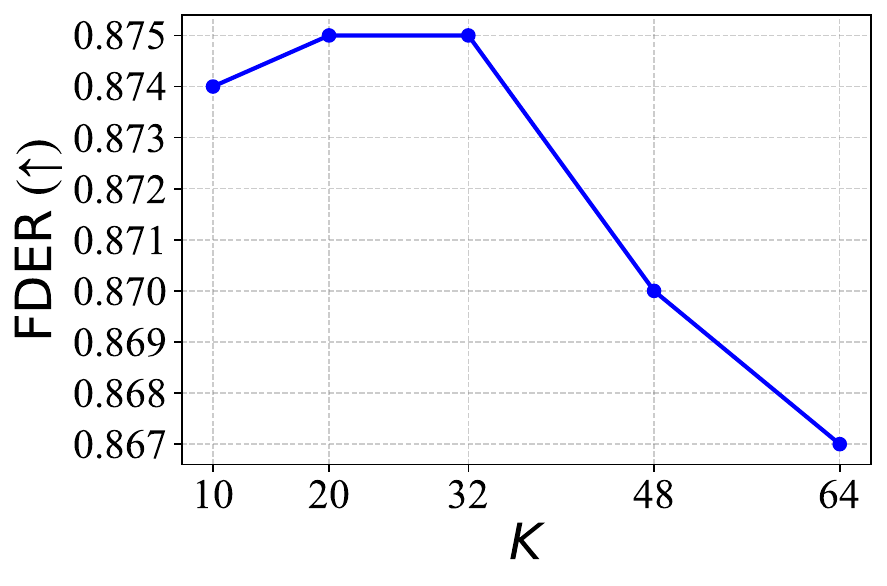}
    \end{subfigure}
    \hfill
    \begin{subfigure}{0.31\linewidth}
        \includegraphics[width=\linewidth]{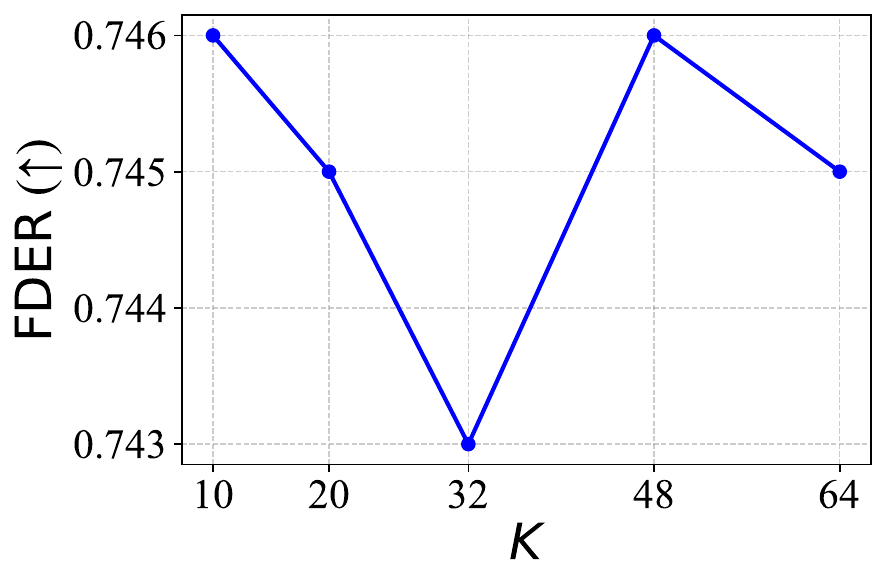}
    \end{subfigure}
    \vspace{-5pt}
    \caption{Defense performance of \methodname{} (\FDER{}) with different neighborhood size $K$ under \textbf{BackTime} attack on the \textbf{PEMS03} dataset with the FEDformer, SimpleTM, and TimesNet, respectively. }                
    \label{app-fig:different-k-fder}
\end{figure}

\begin{figure}[!htbp]
    \centering
    \begin{subfigure}{0.31\linewidth}
        \includegraphics[width=\linewidth]{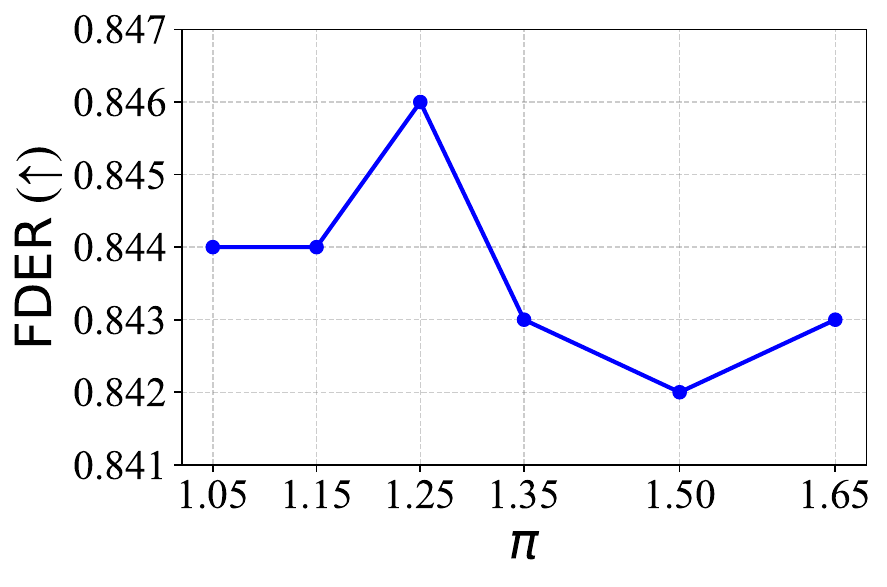}
    \end{subfigure}
    \hfill
    \begin{subfigure}{0.31\linewidth}
        \includegraphics[width=\linewidth]{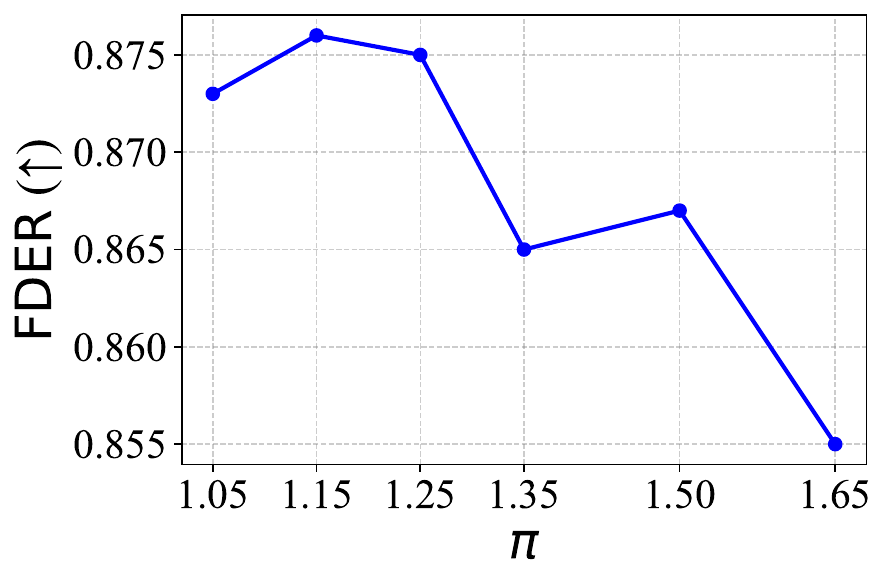}
    \end{subfigure}
    \hfill
    \begin{subfigure}{0.31\linewidth}
        \includegraphics[width=\linewidth]{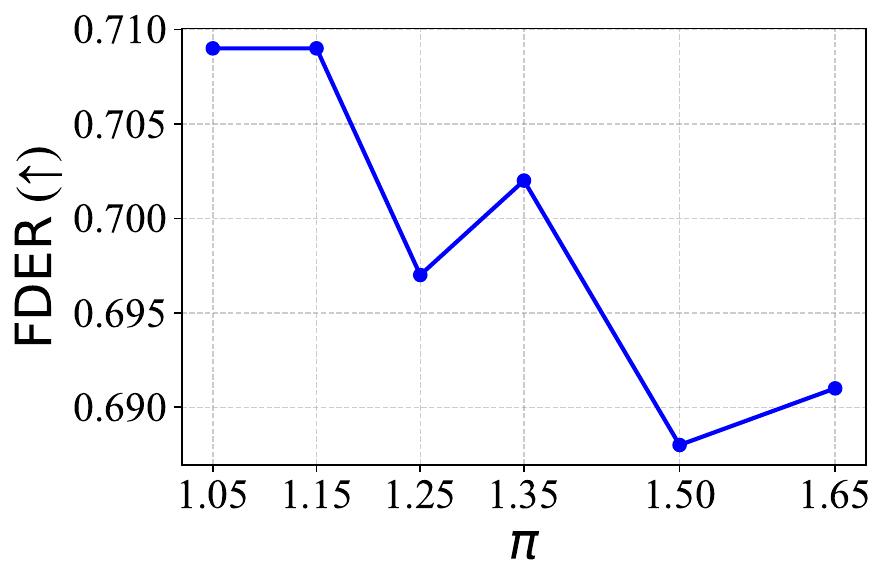}
    \end{subfigure}
    \vspace{-5pt}
    \caption{Defense performance of \methodname{} (\FDER{}) with  different scaling factor $\pi$ under \textbf{BackTime} attack on the \textbf{PEMS03} dataset with the FEDformer, SimpleTM, and TimesNet, respectively. }                
    \label{app-fig:different-pi-fder}
\end{figure}

We also vary $K$ and $\pi$ under the BackTime attack on the Weather dataset, as shown in \figureautorefname~\ref{app-fig:different-k-fder-weather}--\ref{app-fig:different-pi-fder-weather}, and under the Random attack on the PEMS03 dataset, as shown in \figureautorefname~\ref{app-fig:different-k-fder-random}--\ref{app-fig:different-pi-fder-random}. Consistent with the results on PEMS03 under BackTime in \sectionautorefname~\ref{sec:exp-main-result}, \methodname{} is relatively insensitive to the choice of $K$, and we recommend selecting $\pi \leq 1.5$. These results also suggest that the effects of $K$ and $\pi$ in \methodname{} are consistent across different datasets and attack scenarios.
\begin{figure}[!htbp]
    \centering
    \begin{subfigure}{0.24\linewidth}
        \includegraphics[width=\linewidth]{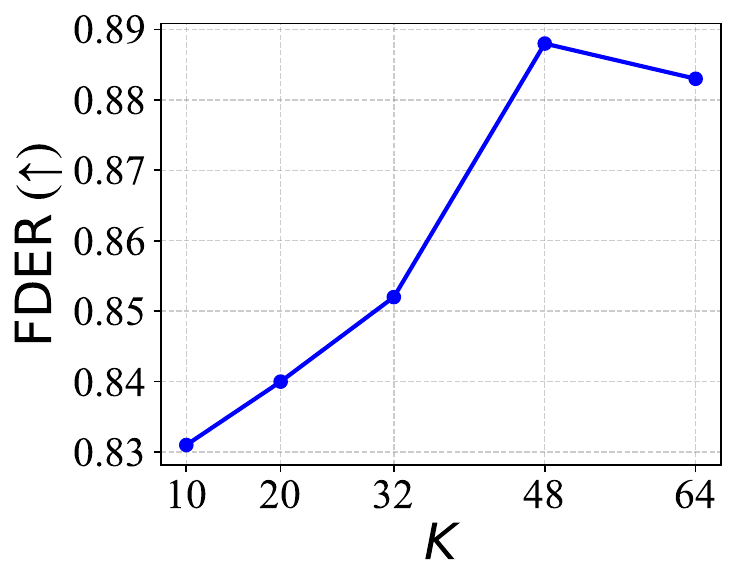}
    \end{subfigure}
    \hfill
    \begin{subfigure}{0.24\linewidth}
        \includegraphics[width=\linewidth]{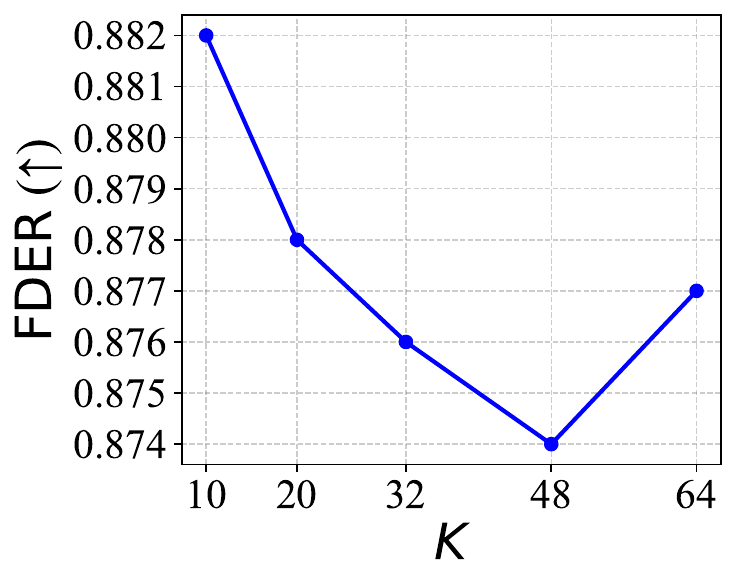}
    \end{subfigure}
    \hfill
    \begin{subfigure}{0.24\linewidth}
        \includegraphics[width=\linewidth]{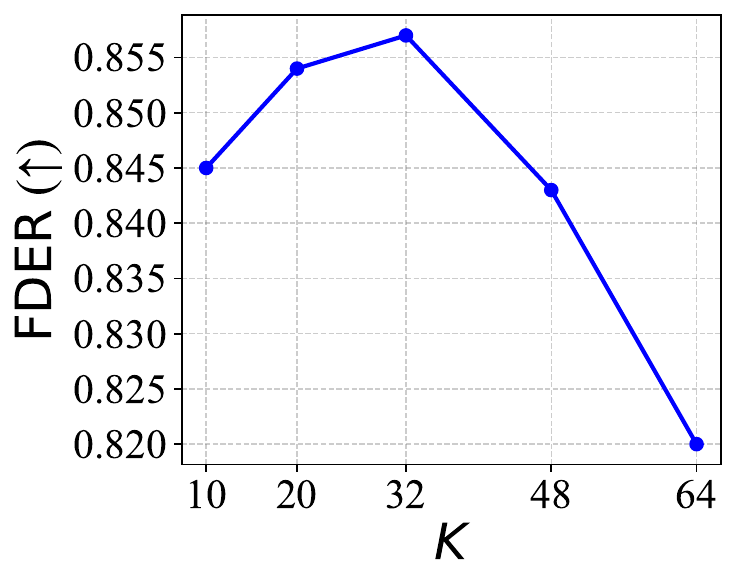}
    \end{subfigure}
    \hfill
    \begin{subfigure}{0.24\linewidth}
        \includegraphics[width=\linewidth]{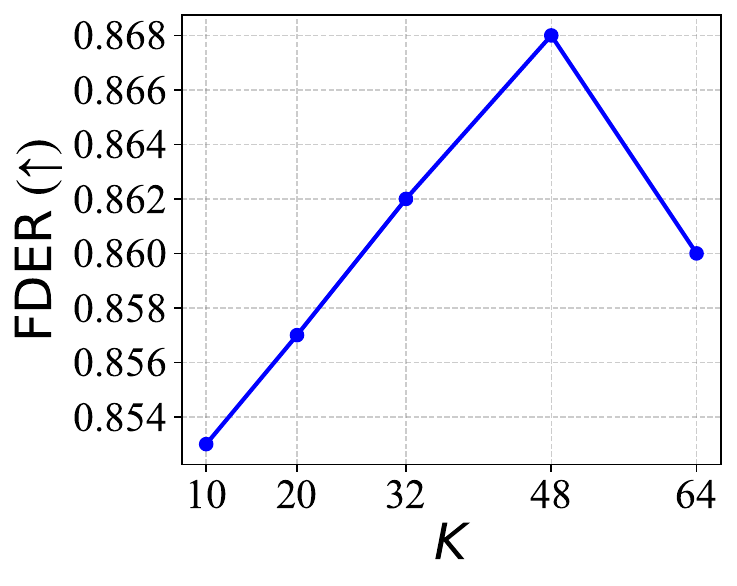}
    \end{subfigure}
    \vspace{-5pt}
    \caption{Defense performance of \methodname{} in terms of \FDER{} with different neighborhood sizes $K$ under the \textbf{BackTime} attack on the \textbf{Weather} dataset, reported for FEDformer, SimpleTM, TimesNet, and their average.}
    \label{app-fig:different-k-fder-weather}
    \vspace{-5pt}
\end{figure}

\begin{figure}[!htbp]
    \centering
    \begin{subfigure}{0.24\linewidth}
        \includegraphics[width=\linewidth]{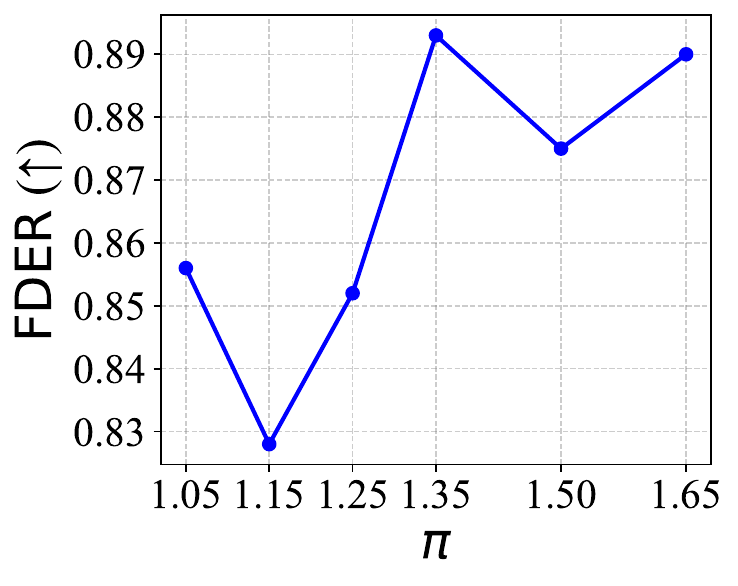}
    \end{subfigure}
    \hfill
    \begin{subfigure}{0.24\linewidth}
        \includegraphics[width=\linewidth]{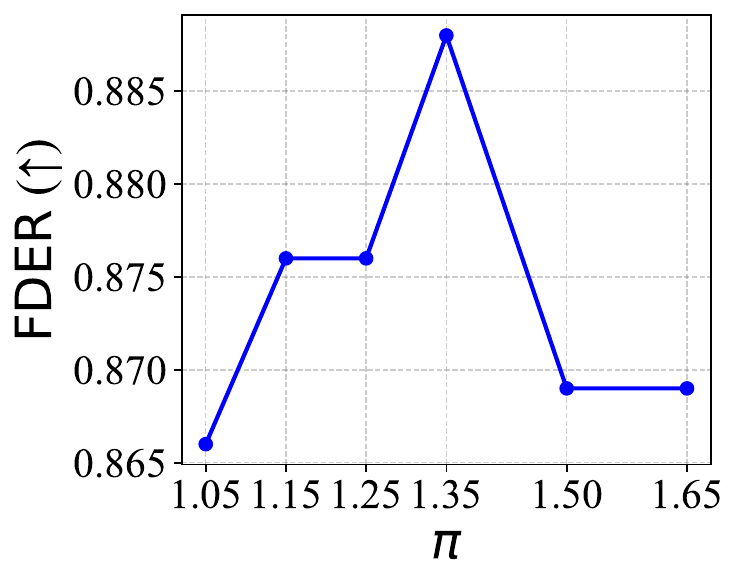}
    \end{subfigure}
    \hfill
    \begin{subfigure}{0.24\linewidth}
        \includegraphics[width=\linewidth]{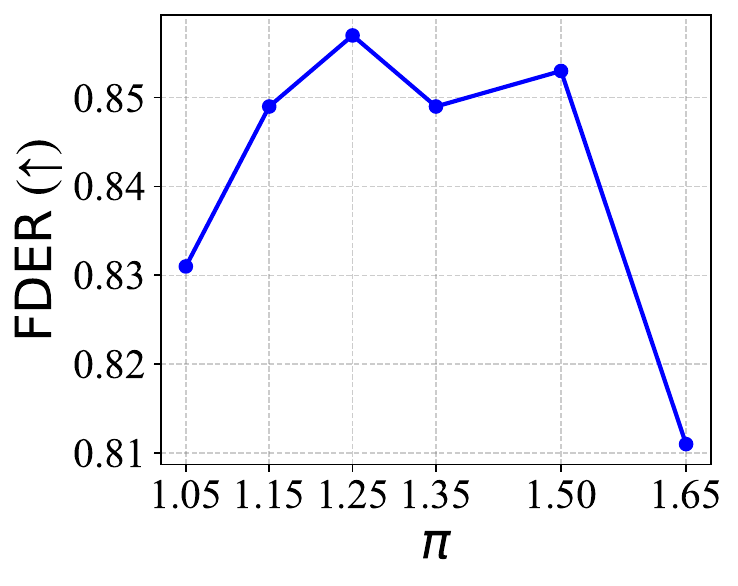}
    \end{subfigure}
    \hfill
    \begin{subfigure}{0.24\linewidth}
        \includegraphics[width=\linewidth]{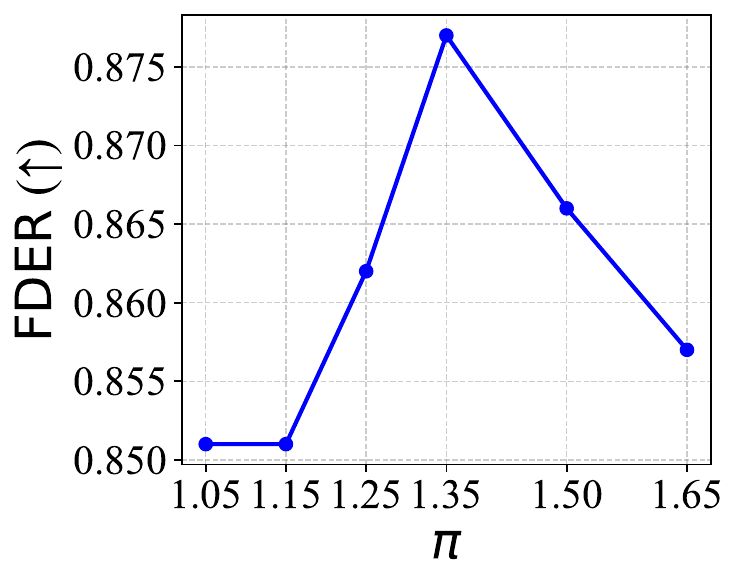}
    \end{subfigure}
    \vspace{-5pt}
    \caption{Defense performance of \methodname{} in terms of \FDER{} with different scaling factors $\pi$ under the \textbf{BackTime} attack on the \textbf{Weather} dataset, reported for FEDformer, SimpleTM, TimesNet, and their average.}   \label{app-fig:different-pi-fder-weather}
    \vspace{-5pt}
\end{figure}

\begin{figure}[!htbp]
    \centering
    \begin{subfigure}{0.24\linewidth}
        \includegraphics[width=\linewidth]{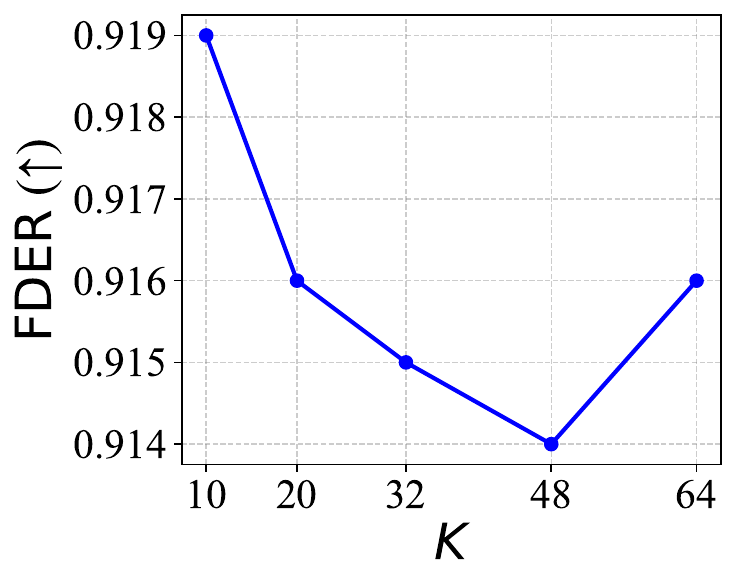}
    \end{subfigure}
    \hfill
    \begin{subfigure}{0.24\linewidth}
        \includegraphics[width=\linewidth]{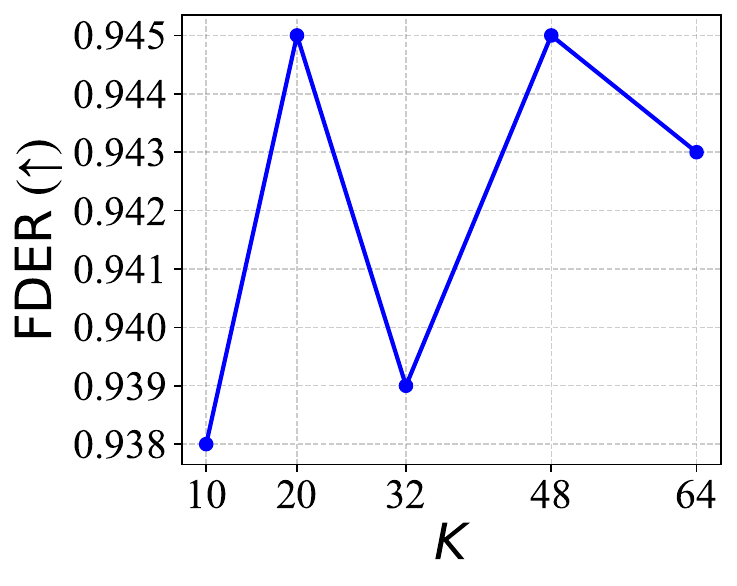}
    \end{subfigure}
    \hfill
    \begin{subfigure}{0.24\linewidth}
        \includegraphics[width=\linewidth]{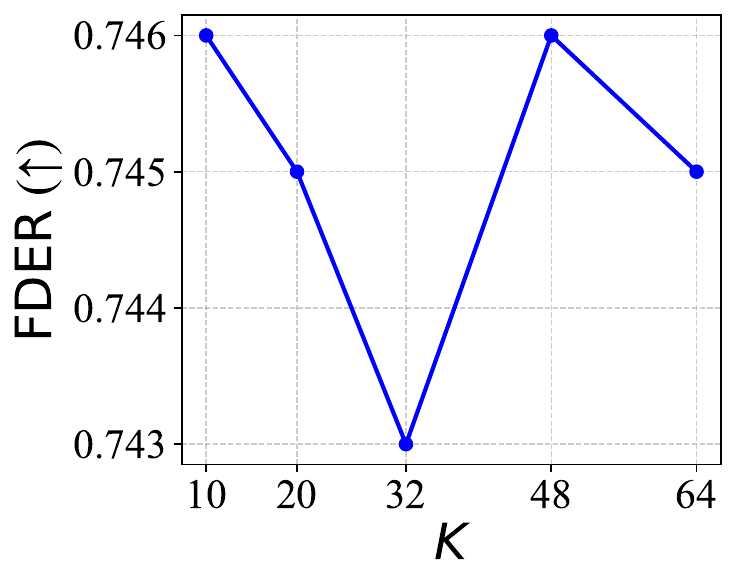}
    \end{subfigure}
    \hfill
    \begin{subfigure}{0.24\linewidth}
        \includegraphics[width=\linewidth]{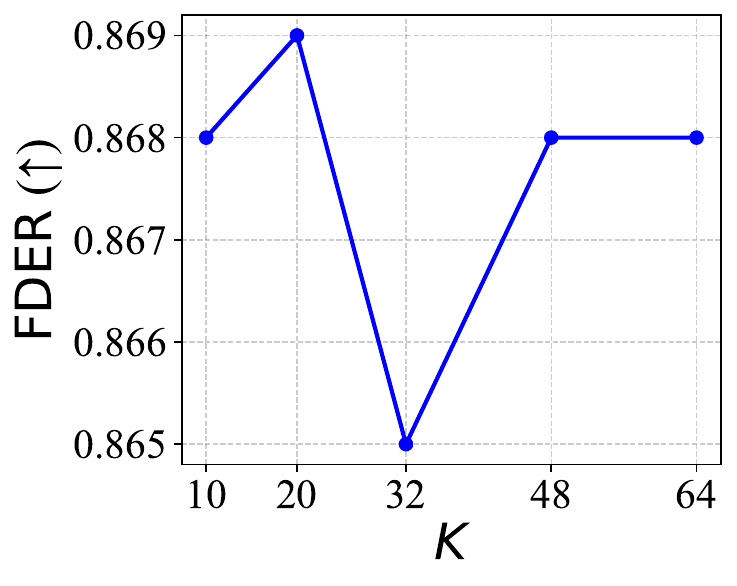}
    \end{subfigure}
    \vspace{-5pt}
    \caption{Defense performance of \methodname{} in terms of \FDER{} with different neighborhood sizes $K$ under the \textbf{Random} attack on the \textbf{PEMS03} dataset, reported for FEDformer, SimpleTM, TimesNet, and their average.}    \label{app-fig:different-k-fder-random}
    \vspace{-5pt}
\end{figure}

\begin{figure}[!htbp]
    \centering
    \begin{subfigure}{0.24\linewidth}
        \includegraphics[width=\linewidth]{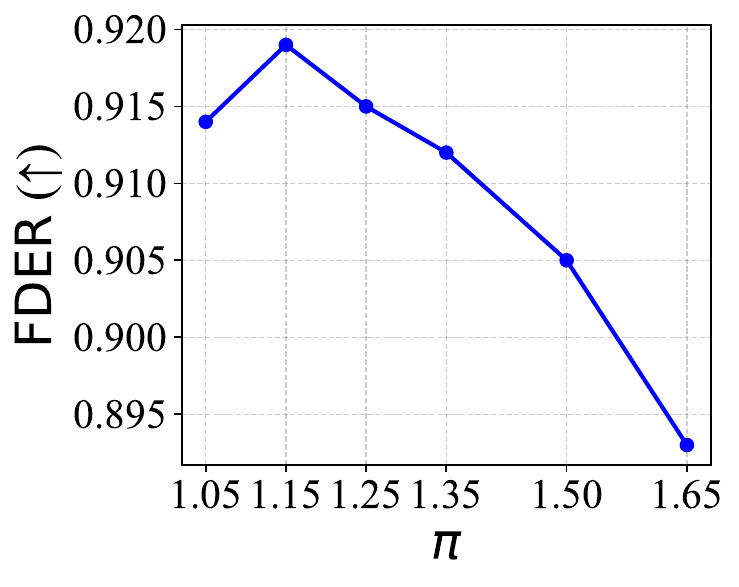}
    \end{subfigure}
    \hfill
    \begin{subfigure}{0.24\linewidth}
        \includegraphics[width=\linewidth]{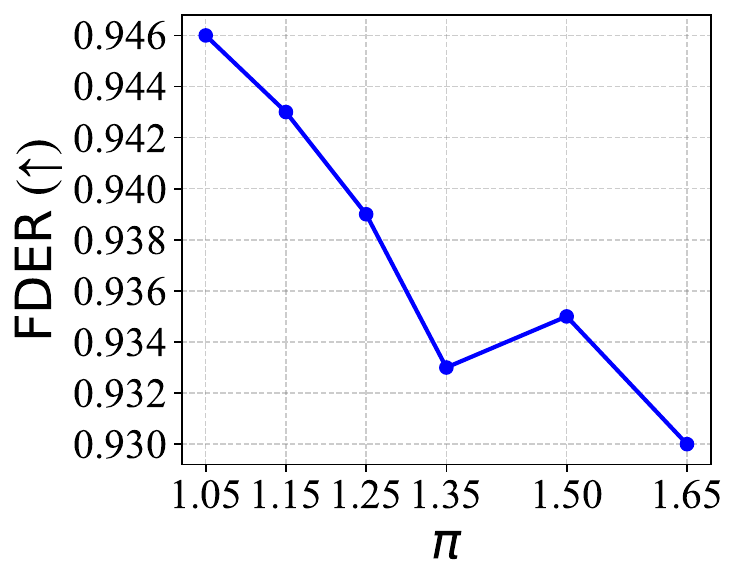}
    \end{subfigure}
    \hfill
    \begin{subfigure}{0.24\linewidth}
        \includegraphics[width=\linewidth]{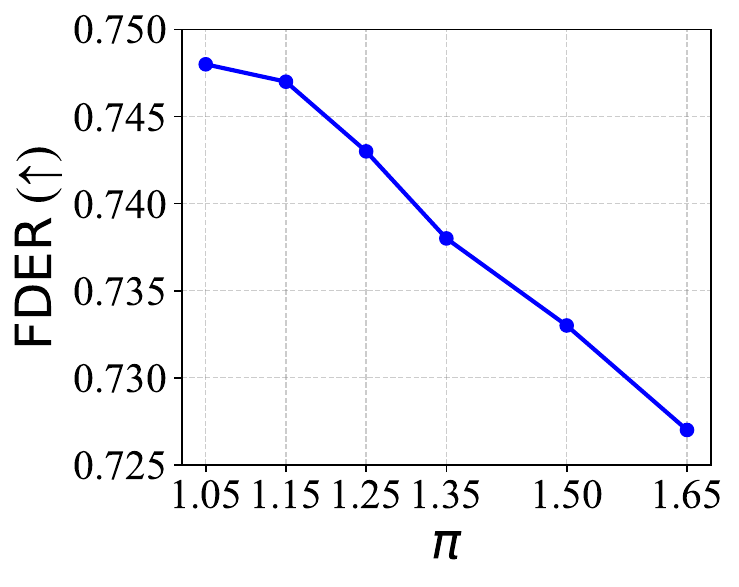}
    \end{subfigure}
    \hfill
    \begin{subfigure}{0.24\linewidth}
        \includegraphics[width=\linewidth]{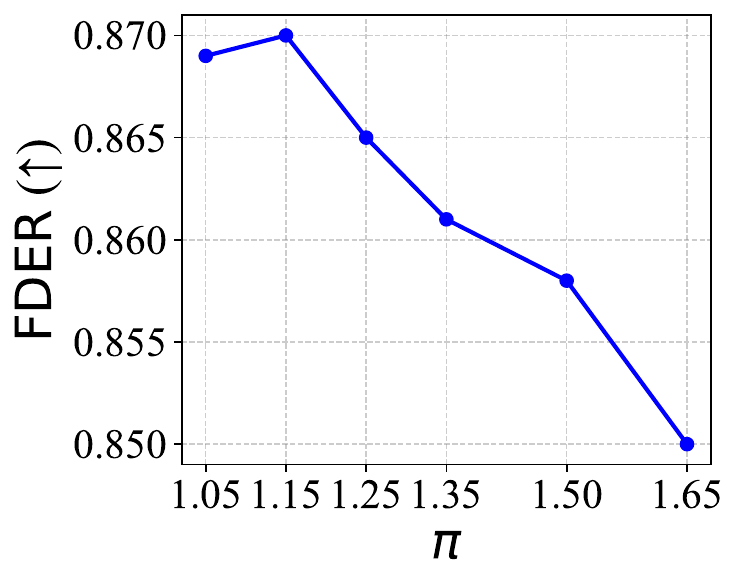}
    \end{subfigure}
    \vspace{-5pt}
    \caption{Defense performance of \methodname{} in terms of \FDER{} with different scaling factors $\pi$ under the \textbf{Random} attack on the \textbf{PEMS03} dataset, reported for FEDformer, SimpleTM, TimesNet, and their average.}    \label{app-fig:different-pi-fder-random}
    \vspace{-5pt}
\end{figure}

\textbf{Influence of $T_b$.} We further study the sensitivity to the number of backcaster training epochs $T_b$ used in the BLS module. Figure~\ref{app-fig:different-tb-fder} reports the defense performance of \methodname{} on PEMS03 under the BackTime attack for all three victim models while $T_b \in \{0, 5, 10, 20, 40, 60\}$. When $T_b < 10$, the backcaster is likely under-trained, making its loss signal less reliable for separating clean and poisoned samples, which results in suboptimal defense performance. To balance robustness and training cost, we set $T_b{=}10$ by default, which adds only roughly 10\% overhead to the training pipeline while reducing the risk of overfitting, where the backcaster may start fitting poisoned hard samples.

\begin{figure}[!htbp]
    \centering
    \begin{subfigure}{0.31\linewidth}
        \includegraphics[width=\linewidth]{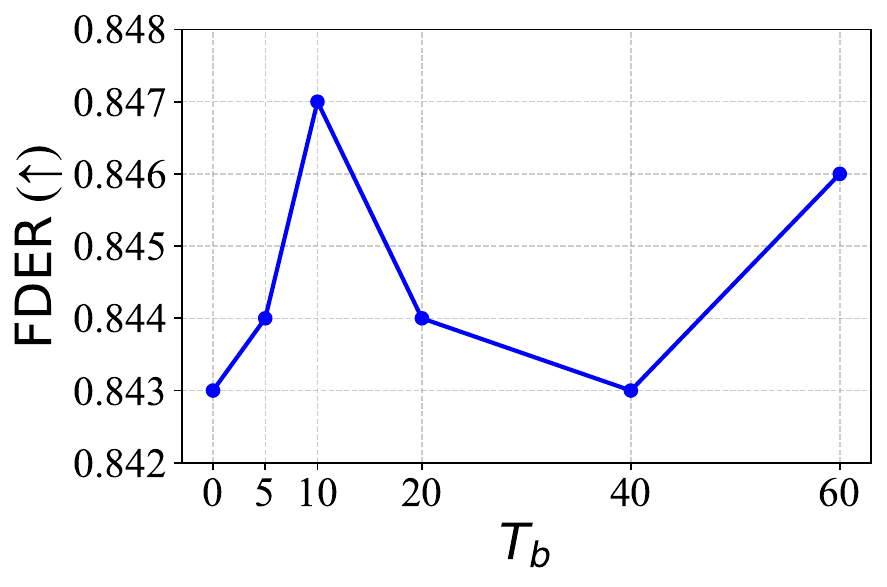}
    \end{subfigure}
    \hfill
    \begin{subfigure}{0.31\linewidth}
        \includegraphics[width=\linewidth]{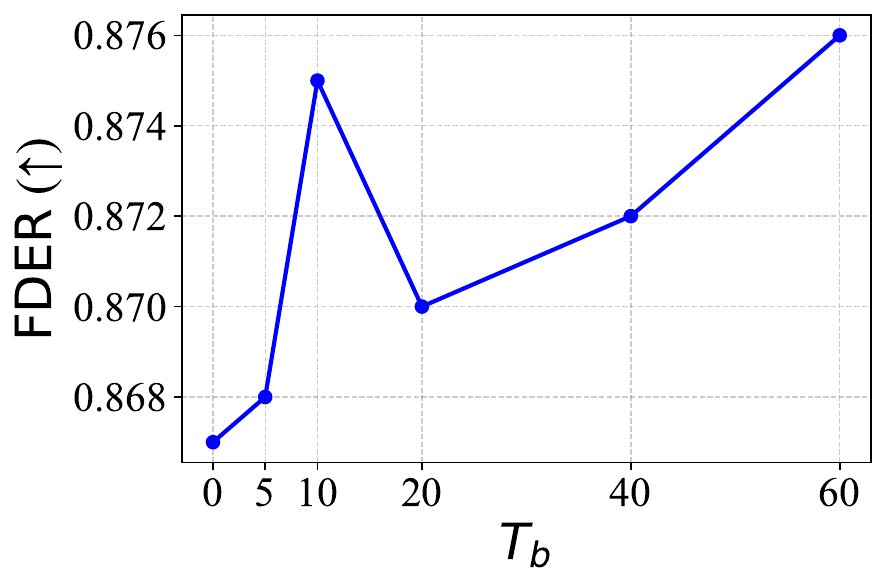}
    \end{subfigure}
    \hfill
    \begin{subfigure}{0.31\linewidth}
        \includegraphics[width=\linewidth]{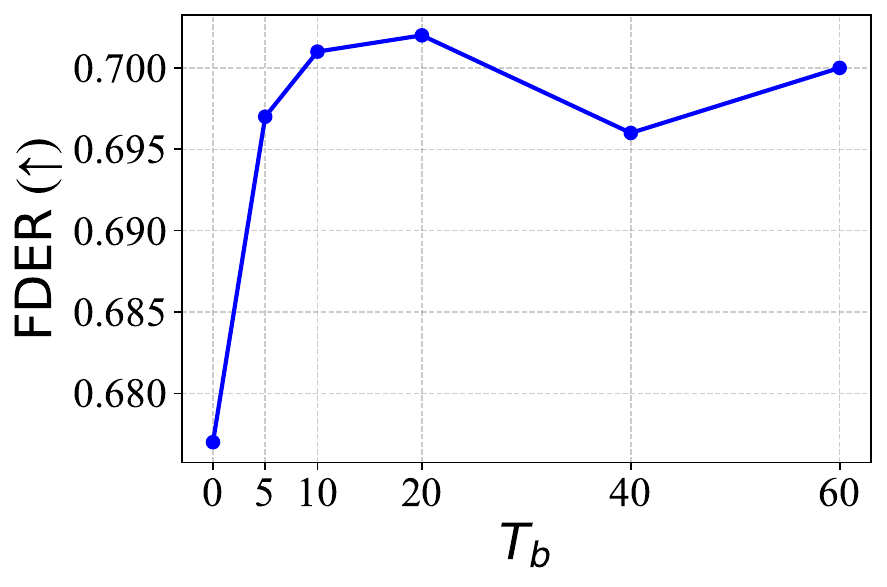}
    \end{subfigure}
    \vspace{-5pt}
    \caption{Defense performance of \methodname{} (\FDER{}) with varying backcaster $b_\phi$ training epoch $T_b$ under BackTime attack on the PEMS03 dataset with the FEDformer, SimpleTM, and TimesNet, respectively.}           
    \label{app-fig:different-tb-fder}
    \vspace{-12pt}
\end{figure}

\textbf{Influence of $T_1$ and $T_2$.} We study the sensitivity to the stage-wise training budgets $T_1$ (Stage I) and $T_2$ (Stage II) of \methodname{}, while fixing the total budget to $T_1 + T_2 = 100$. Figure~\ref{app-fig:different-t1-fder} reports the defense performance on PEMS03 under the BackTime attack for all three victim models, varying $T_1 \in \{0,5,10,20,30,40\}$ with the corresponding $T_2 \in \{100,95,90,80,70,60\}$. Overall, performance tends to degrade as $T_1$ increases, suggesting that overly long Stage I training may overfit the initial reliable pool and leave insufficient budget for incorporating newly admitted reliable samples in Stage II. Notably, for SimpleTM and TimesNet, a short Stage I training ($T_1  \leq 10$) improves subsequent progressive training, since Stage II relies on the current model's loss signal for selecting reliable samples. We set $T_1{=}10$ and $T_2{=}90$ by default, which is also consistent with common two-stage training schedules used in backdoor defenses~\cite{li2021anti, gao2023backdoor}.
\begin{figure}[!htbp]
    \centering
    \begin{subfigure}{0.31\linewidth}
        \includegraphics[width=\linewidth]{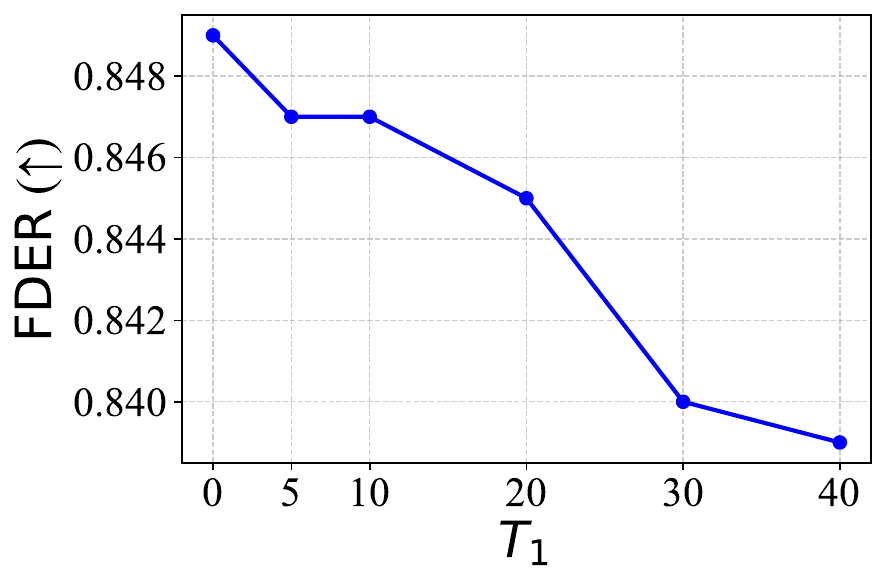}
    \end{subfigure}
    \hfill
    \begin{subfigure}{0.31\linewidth}
        \includegraphics[width=\linewidth]{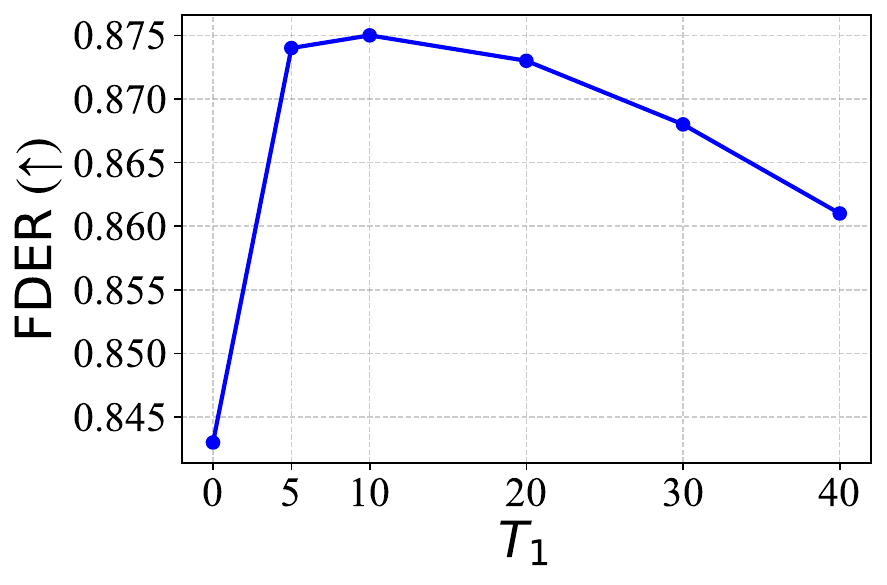}
    \end{subfigure}
    \hfill
    \begin{subfigure}{0.31\linewidth}
        \includegraphics[width=\linewidth]{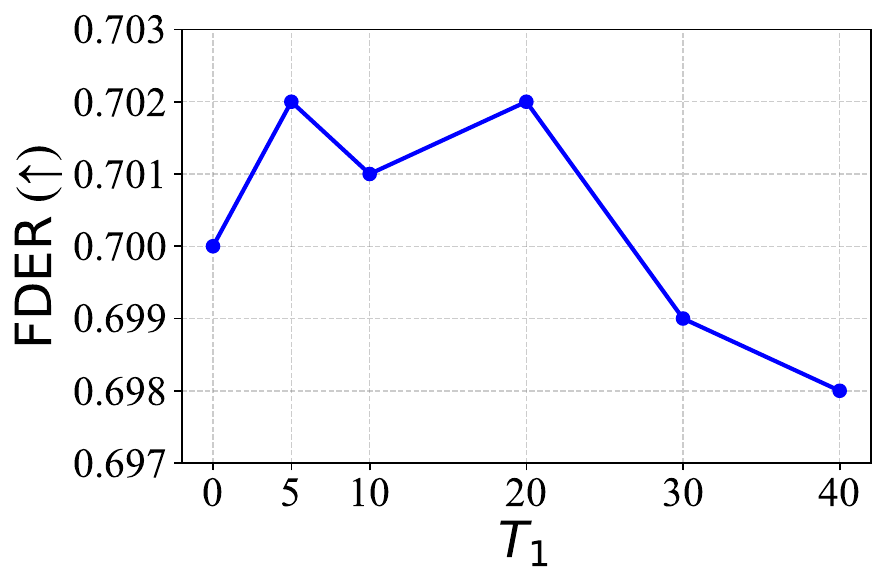}
    \end{subfigure}
    \vspace{-5pt}
    \caption{Defense performance of \methodname{} (\FDER{}) with varying training epoch $T_1$ under BackTime attack on the PEMS03 dataset with the FEDformer, SimpleTM, and TimesNet, respectively.}                
    \label{app-fig:different-t1-fder}
    \vspace{-5pt}
\end{figure}

\subsection{Detailed Efficiency Analysis}
\label{app:efficiency-analysis}

    As shown in \tableautorefname~\ref{tab:efficiency-analysis}, \methodname{} introduces additional training-time overhead, mainly from Stage I backcaster training and neighborhood-based filtering. However, this cost is limited and temporary: Stage I runs for only $T_b$ epochs, which we set to approximately $10\%$ of the total training budget by default. After Stage I, the backcaster is discarded, so its additional memory overhead does not persist into the main training stage. Importantly, this differs from the repeated inference-time overhead discussed in \tableautorefname~\ref{tab:inference-based-preliminary-table}, since \methodname{} introduces no additional latency during deployment.
        
    To reduce this cost, we implement neighborhood search in a precompute-and-reuse manner instead of recomputing all distances repeatedly. Specifically, we compute the channel-wise neighbor graph only once before the main Stage II training, cache the top-$K_\mathrm{max}$ neighbors for each sample, and reuse these cached neighborhoods throughout filtering and selection rather than recomputing kNN every epoch. By default, we set $K_\mathrm{max}=2K$; our preliminary experiments show that \methodname{} is insensitive to the choice of $K_\mathrm{max}$. This keeps the defense model-agnostic while avoiding repeated full-distance searches during training. As shown in \tableautorefname~\ref{app-tab:large-scale-model}, \methodname{} incurs training time comparable to PDB, the leading baseline, while achieving better defense performance on large-scale time-series foundation models based on LLaMA-7B~\cite{touvron2023llama}. Similar scalability trends are also observed on large-scale datasets, as shown in \tableautorefname~\ref{app-tab:gba-dataset}.

\subsection{Potential Adaptive Attacks}  \label{app:adaptive-attack}
    \begin{table}[htbp]
        \centering
        \caption{Defense performance of \methodname{} under BackTime and adaptive attacks on PEMS03 dataset, where FEDFormer, SimpleTM, and TimesNet are the victim models. Best results under adaptive attack are in \textbf{bold}.}
        \label{app-tab:adaptive-attack}
        \setlength{\tabcolsep}{2pt} 
        \renewcommand{\arraystretch}{1.2}
        \tiny
        \renewcommand{\aboverulesep}{0pt}
        \renewcommand{\belowrulesep}{0pt}
        \setlength\cellspacetoplimit{2pt}
        \setlength\cellspacebottomlimit{2pt}
        \resizebox{\linewidth}{!}{
        \begin{tabular}{clccccccccccccc}
        \toprule
        \multirow{2}{*}{\textbf{\begin{tabular}[c]{@{}c@{}}Attack\end{tabular}}}
        & \textbf{Model~→}
        & \multicolumn{3}{c}{\textbf{FEDformer}}
        & \multicolumn{3}{c}{\textbf{SimpleTM}}
        & \multicolumn{3}{c}{\textbf{TimesNet}} 
        & \multicolumn{3}{c}{\textbf{\textsc{Average}}} 
        \\
        \cmidrule(l{0pt}){2-2}
        \cmidrule(l){3-5}
        \cmidrule(l){6-8}
        \cmidrule(l){9-11}
        \cmidrule(l){12-14}
        & \textbf{Defense~↓}
        & \MAEC~↓ & \MAEP~↑ & \FDER~↑
        & \MAEC~↓ & \MAEP~↑ & \FDER~↑
        & \MAEC~↓ & \MAEP~↑ & \FDER~↑ 
        & \MAEC~↓ & \MAEP~↑ & \FDER~↑ 
        \\
        \midrule
        \multirow{2}{*}{BackTime} & No Defense 
            & 16.093 & 10.760 & --
            & 17.268 & 9.131 & --
            & 19.459 & 22.713 & -- 
            & 17.607 & 14.201 & -- \\
        &\textbf{\methodname}
            &16.840 & 41.232 & 0.847
            &17.243 & 36.626 & 0.875
            &20.061 & 40.052 & 0.701 
            &18.048 & 39.303 & 0.808\\
        \midrule
        \multirow{4}{*}{Adaptive} & No Defense 
            & 16.383 & 13.779 & --
            & 17.491 & 10.744 & --
            & 22.498 & 21.507 & -- 
            & 18.791 & 15.343 & -- \\
         &\textbf{\methodname}
            & 17.066 & \textbf{30.035} & \textbf{0.751}
            & 17.707 &\textbf{22.393} & \textbf{0.754}
            & \textbf{20.540} & \textbf{39.298} & \textbf{0.726} 
            & \textbf{18.438} & \textbf{30.575} & \textbf{0.744}\\
         &\textbf{\methodname{}} w/o NDF
            & \textbf{17.012} & 29.622 & 0.749
            & \textbf{17.287} & 22.053 & 0.756
            & 21.392 & 37.411 & 0.713 
            & 18.564 & 29.695 & 0.739\\
        &\textbf{\methodname{}} w/o DRLS
            & 19.035 & 14.558 & 0.457
            & 21.093 & 15.808 & 0.575
            & 22.460 & 26.712 & 0.597 
            & 20.863 & 19.026 & 0.543\\
        \bottomrule
        \end{tabular}
        }
    \vspace{-12pt}
    \end{table}    
    
    We consider a worst-case scenario in which the attacker deliberately adapts the attack strategy to circumvent our defense.

   \textbf{Design.} We construct an adaptive variant of BackTime~\cite{lin2024backtime} by augmenting the trigger-generator training objective. To challenge our unidirectional trigger-to-target assumption, we assume the attacker has access to a pre-trained backcaster $b_{\phi}$ trained on the clean dataset. The attacker then adds a reverse-consistency regularizer $L_{\text{uni}}$ that encourages the induced target pattern (specified by the attack target $\mathbf{P}$) to reconstruct the generated trigger $\mathbf{G}$, making the trigger and target more mutually predictive. In addition, to weaken our neighborhood similarity signal based on weighted Pearson correlation, the attacker introduces a similarity regularizer $L_{\text{sim}}$ that maintains a buffer of previously generated poisoned samples and penalizes the distance between the current poisoned sample and the buffer, thereby encouraging diversification. The resulting adaptive objective is:
    \begin{equation*}
    L_{\text{adap}} = L_{bd} + \lambda_1 L_{\text{uni}} + \lambda_2 L_{\text{sim}},
    \end{equation*}
    where $L_{bd}$ is the original BackTime backdoor objective and $\lambda_1,\lambda_2$ are attacker-controlled hyperparameters. In our implementation, we grid-search $\lambda_1 \in \{0.1, 0.5, 1\}$ and $\lambda_2 \in \{10, 100, 1000\}$.
    
    \textbf{Results.} \tableautorefname~\ref{app-tab:adaptive-attack} shows that, on PEMS03, this adaptive attack attains an average \MAEC{} of 18.791 and an average \MAEP{} of 15.343, which is slightly worse than the original BackTime attack (17.607 \MAEC{} and 14.201 \MAEP{}, averaged over models). This suggests that enforcing additional constraints, namely reverse consistency and similarity regularization, can hinder the attacker. This observation aligns with our analysis that effective TSF backdoors rely on generating highly similar trigger-induced patterns, as discussed in \theoremautorefname~\ref{thm:main-tsf-backdoor-success}.

    Meanwhile, \methodname{} remains effective under this adaptive threat, achieving 18.438 \MAEC{}, 30.575 \MAEP{}, and 0.744 \FDER{}, which remains within a strong defense-performance range. This trend is consistent across all forecasting models. We attribute this robustness primarily to the distance-aware criteria, which exploit the attacker’s structural need to produce highly correlated poisoned samples for successful backdoor activation.

\subsection{Neighborhood Distance Analysis} \label{app:neighborhood-analysis}

As \methodname{} relies on a hand-designed neighborhood metric, e.g., correlation-/distance-based $k$NN on normalized windows, input-space distances may degrade under strong distribution shifts or nonstationarity. To examine whether learned representations can provide a more robust neighborhood signal, we replace our Gaussian-weighted input-space distance with a TS2Vec embedding distance~\cite{yue2022ts2vec} on the Weather dataset under the BackTime attack. For implementation, we train TS2Vec on the full training set with hidden dimension 64, output dimension 128, and depth 6, and then use the resulting sample embeddings to compute neighborhood distances. Since this representation-learning step is computationally expensive, taking approximately 80,000 seconds, we evaluate this variant only on the moderate-scale Weather dataset.
\begin{table}[!htbp]
    \centering
        \vspace{-5pt}
        \caption{Defense performance of PDB, \methodname{}, and $\textsc{TimeGuard}_\mathrm{emb}$, a variant of \methodname{} that uses TS2Vec embeddings~\cite{yue2022ts2vec} as sample representations for neighborhood-distance computation, under the BackTime attack on the Weather dataset. FEDformer, SimpleTM, and TimesNet are used as victim models. Best results are shown in \textbf{bold}.}        
        \label{app-tab:t2v-embedding}
        \setlength{\tabcolsep}{2pt} 
        \renewcommand{\arraystretch}{1.2}
        \tiny
        \renewcommand{\aboverulesep}{0pt}
        \renewcommand{\belowrulesep}{0pt}
        \setlength\cellspacetoplimit{2pt}
        \setlength\cellspacebottomlimit{2pt}
        \resizebox{\linewidth}{!}{
        \begin{tabular}{lcccccccccccc}
            \toprule
            \textbf{Model~→}
            & \multicolumn{3}{c}{\textbf{FEDformer}}
            & \multicolumn{3}{c}{\textbf{SimpleTM}}
            & \multicolumn{3}{c}{\textbf{TimesNet}} 
            & \multicolumn{3}{c}{\textbf{\textsc{Average}}} 
            \\
            \cmidrule(l{0pt}){1-1}
            \cmidrule(l){2-4}
            \cmidrule(l){5-7}
            \cmidrule(l){8-10}
            \cmidrule(l){11-13}
            \textbf{Defense~↓}
            & \MAEC~↓ & \MAEP~↑ & \FDER~↑
            & \MAEC~↓ & \MAEP~↑ & \FDER~↑
            & \MAEC~↓ & \MAEP~↑ & \FDER~↑ 
            & \MAEC~↓ & \MAEP~↑ & \FDER~↑ 
            \\
            \midrule
            No Defense 
                & 9.609 & 8.020 & --
                & 7.752 & 15.301 & --
                & 14.943 & 24.417 & -- 
                & 10.768 & 15.913 & -- \\
            PDB~\cite{wei2024mitigating}
                & 10.254 & 35.951 & 0.857
                & 8.040 & 50.108 & 0.829
                & 16.903 & 83.259 & 0.795 
                & 11.732 & 56.439 & 0.827 \\
            \textbf{\methodname}
                & \textbf{10.089} & \textbf{43.244} & \textbf{0.883}
                & \textbf{7.934} & \textbf{69.357} & \textbf{0.878}
                & \textbf{14.125} & \textbf{87.000} & \textbf{0.860} 
                & \textbf{10.716} & \textbf{66.534} & \textbf{0.874} \\
            $\textbf{\textsc{TimeGuard}}_\mathrm{emb}$
                & 10.142 & 34.681 & 0.858
                & 8.256 & 64.011 & 0.850
                & 17.289 & 85.908 & 0.790 
                & 11.896 & 61.534 & 0.833 \\
            \bottomrule
        \end{tabular}
        }
        \vspace{-5pt}
    \end{table}  
    
\textbf{Results.} As shown in \tableautorefname~\ref{app-tab:t2v-embedding}, the embedding-based variant of \methodname{} still outperforms PDB across all models. However, it does not improve over the original input-space version. This suggests that generic learned embeddings such as TS2Vec do not automatically provide a stronger signal for TSF backdoor defense in our setting, consistent with previous observations on the limited effectiveness of generic data embeddings for time series forecasting~\cite{nematirad2025are}. We leave the design of more dedicated embeddings for TSF backdoor defense to future work.

\subsection{Clean Performance under No Attack} \label{app:no-attack}
     \begin{table}[!htbp]
        \centering
        \caption{Clean performance (\MAEC~↓) of in-training backdoor defenses under no attack scenario on PEMS03. Best results are in \textbf{bold}.}
        \label{app-tab:clean-no-attack}
        \setlength{\tabcolsep}{3.5pt}
        \renewcommand{\arraystretch}{1.25}
        \renewcommand{\aboverulesep}{0pt}
        \renewcommand{\belowrulesep}{0pt}
        \setlength\cellspacetoplimit{2pt}
        \setlength\cellspacebottomlimit{2pt}
        \begin{tabular}{lcccc}
        \toprule
        \textbf{Model →}   & \multirow{2}{*}{\textbf{SimpleTM}} & \multirow{2}{*}{\textbf{FEDformer}} & \multirow{2}{*}{\textbf{TimesNet}} & \multirow{2}{*}{\textbf{\textsc{Average}}} \\
        \textbf{Defense ↓} &&&& \\ \midrule
        Vanilla Training & 16.794 & 15.680 & 20.257 & 17.577 \\
        \midrule
        ABL       & \textbf{17.129} & 16.928 & 21.011 & 18.356 \\
        PDB       & 17.828 & 16.843 & 22.308 & 18.993 \\
        ESTI      & 17.396 & \textbf{15.915} & 20.119 & \textbf{17.810} \\
        \textbf{\methodname{}} & 17.197 & 16.695 & \textbf{19.804} & 17.899 \\
        \bottomrule
        \end{tabular}
        \vspace{-5pt}
    \end{table}

    In realistic scenarios, the defender may not know whether the training set has been poisoned. We therefore evaluate an extreme setting where no poisoning is present (\emph{No Attack}). Table~\ref{app-tab:clean-no-attack} reports the clean forecasting performance of four in-training defenses on the PEMS03 dataset under this setting. Overall, ESTI and \methodname{} preserve clean accuracy well, with at most a 3.5\% degradation across three  models compared to vanilla training, and they even improve performance for TimesNet in some cases; both outperform ABL and PDB in this no-attack regime. However, ESTI incurs substantially higher training cost and is more prone to failing under attacks in our TSF setting (\appendixautorefname~\ref{app:efficiency-analysis} and \sectionautorefname~\ref{sec:exp-main-result}). Taken together, these results suggest that \methodname{} offers a more practical trade-off when the poisoning status of the training data is unknown.

\subsection{Reliable Pool Dynamics Illustration} \label{app:pool-dynamics}

To illustrate how \methodname{} maintains a reliable pool during training, we plot (i) the number of poisoned samples admitted into the reliable pool at each epoch, together with the corresponding (ii) clean performance (\MAEC{}) and poisoned performance (\MAEP{}) of FEDformer. We compare \methodname{} against its loss-only variant (w/o NDF+DRLS) under the Random attack on three datasets, as shown in Figures~\ref{app-fig:clean-pool-illustration-pems03}–\ref{app-fig:clean-pool-illustration-ettm1} for PEMS03, Weather, and ETTm1, respectively. Overall, \methodname{} consistently admits fewer poisoned samples into the reliable pool than the loss-only variant, helping explain its strong robustness and competitive clean performance. These dynamics also highlight the importance of incorporating neighborhood-distance cues beyond loss-only criteria.

\begin{figure}[!htbp]
    \centering
    \begin{subfigure}{0.49\linewidth}
        \includegraphics[width=\linewidth]{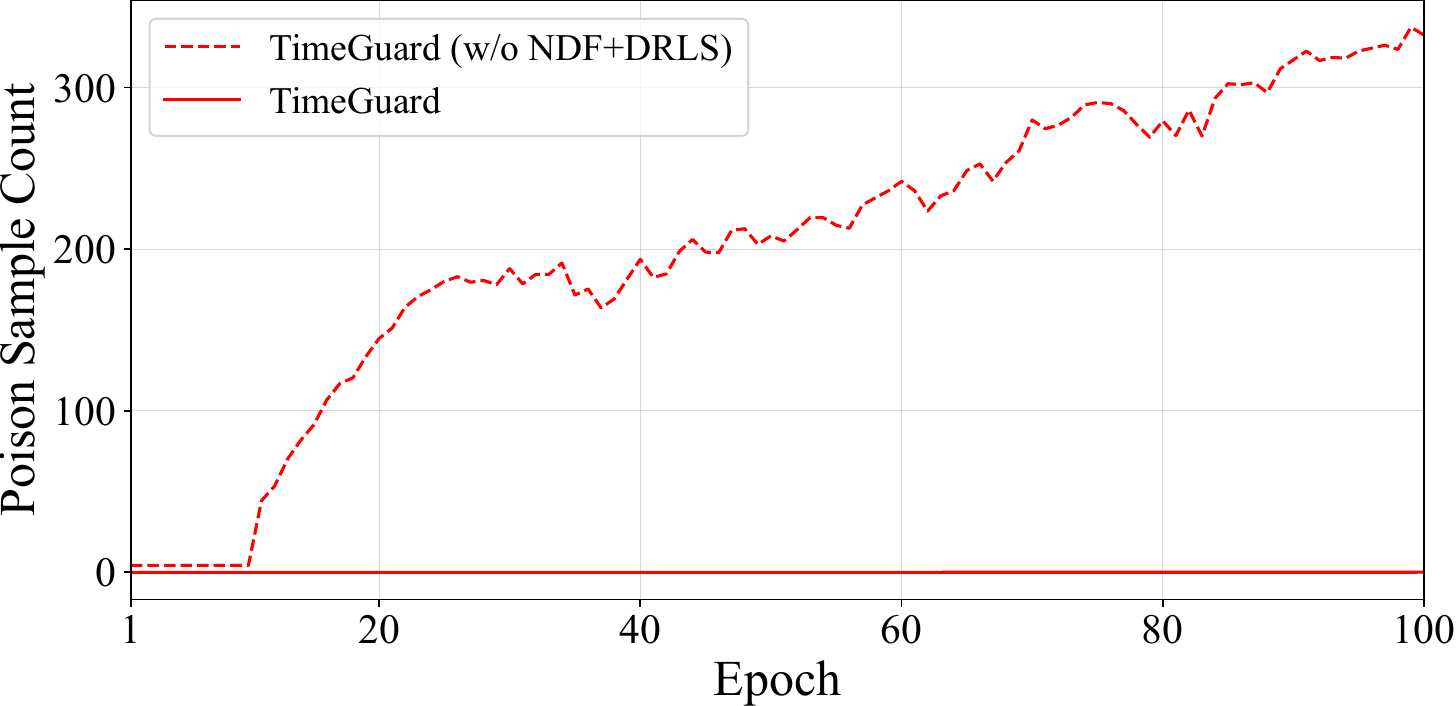}
        \caption{Number of poisoned samples in the reliable pool.}
    \end{subfigure}
    \begin{subfigure}{0.49\linewidth}
        \includegraphics[width=\linewidth]{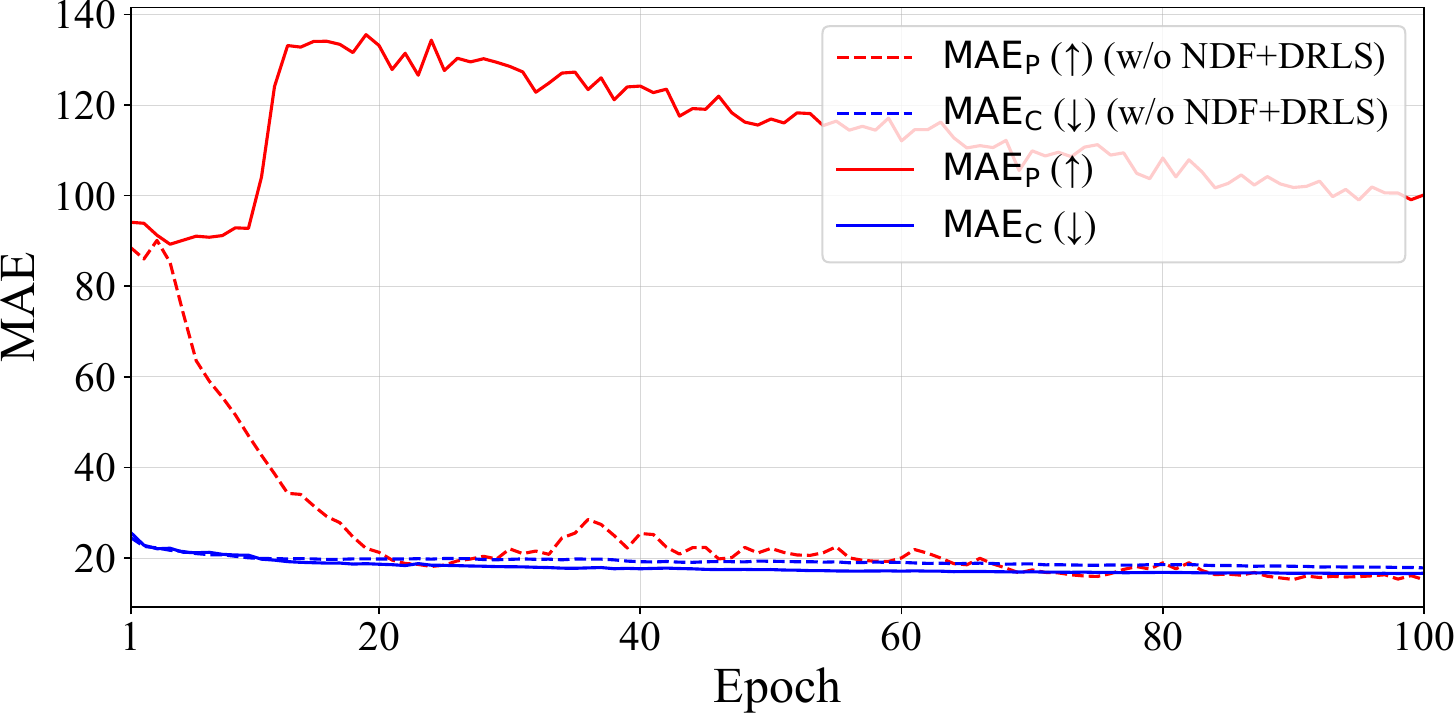}
        \caption{Clean and defense performance (\MAEC{} and \MAEP{}).}

    \end{subfigure}
    \caption{Dynamic illustration of \methodname{} at each training epoch under Random attack on PEMS03 dataset of FEDformer model.}                
    \label{app-fig:clean-pool-illustration-pems03}
    \vspace{-5pt}
\end{figure}

\begin{figure}[!htbp]
    \centering
    \begin{subfigure}{0.49\linewidth}
        \includegraphics[width=\linewidth]{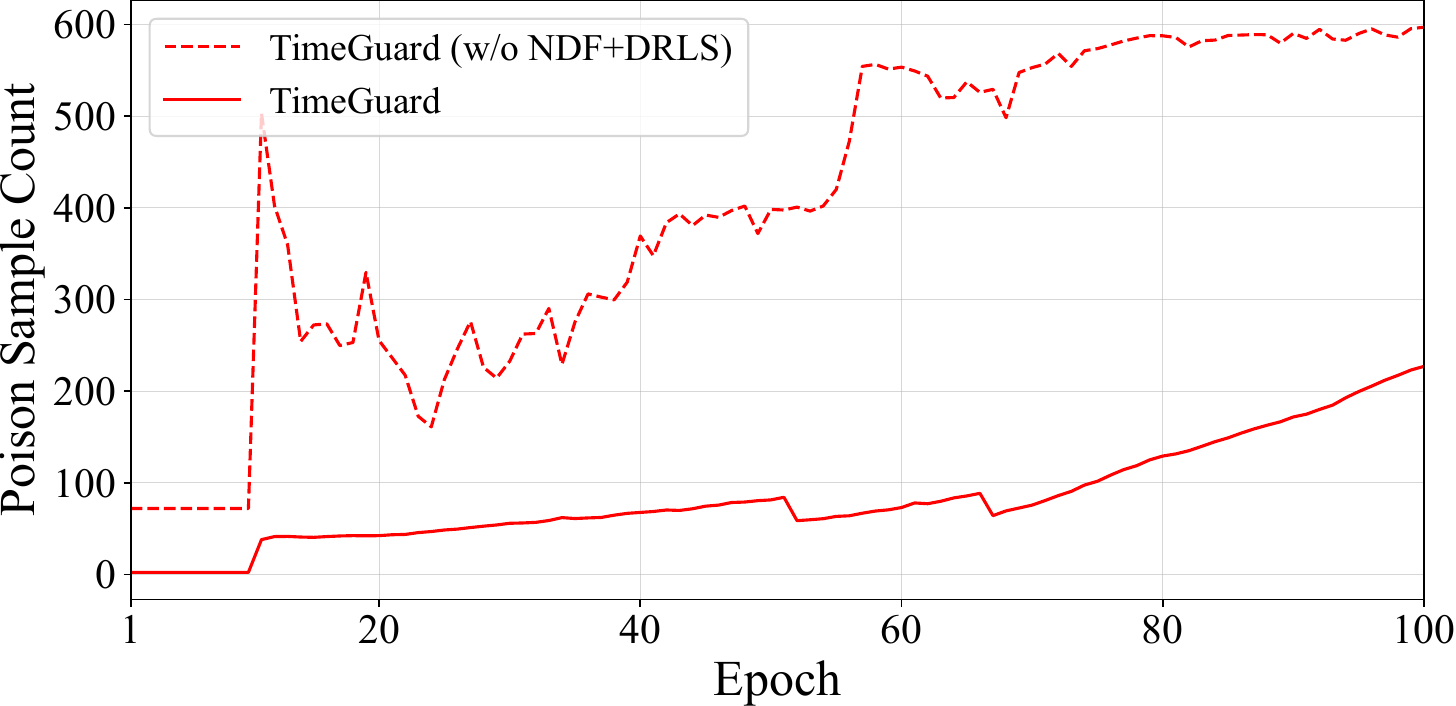}
        \caption{Number of poisoned samples in the reliable pool.}
    \end{subfigure}
    \begin{subfigure}{0.49\linewidth}
        \includegraphics[width=\linewidth]{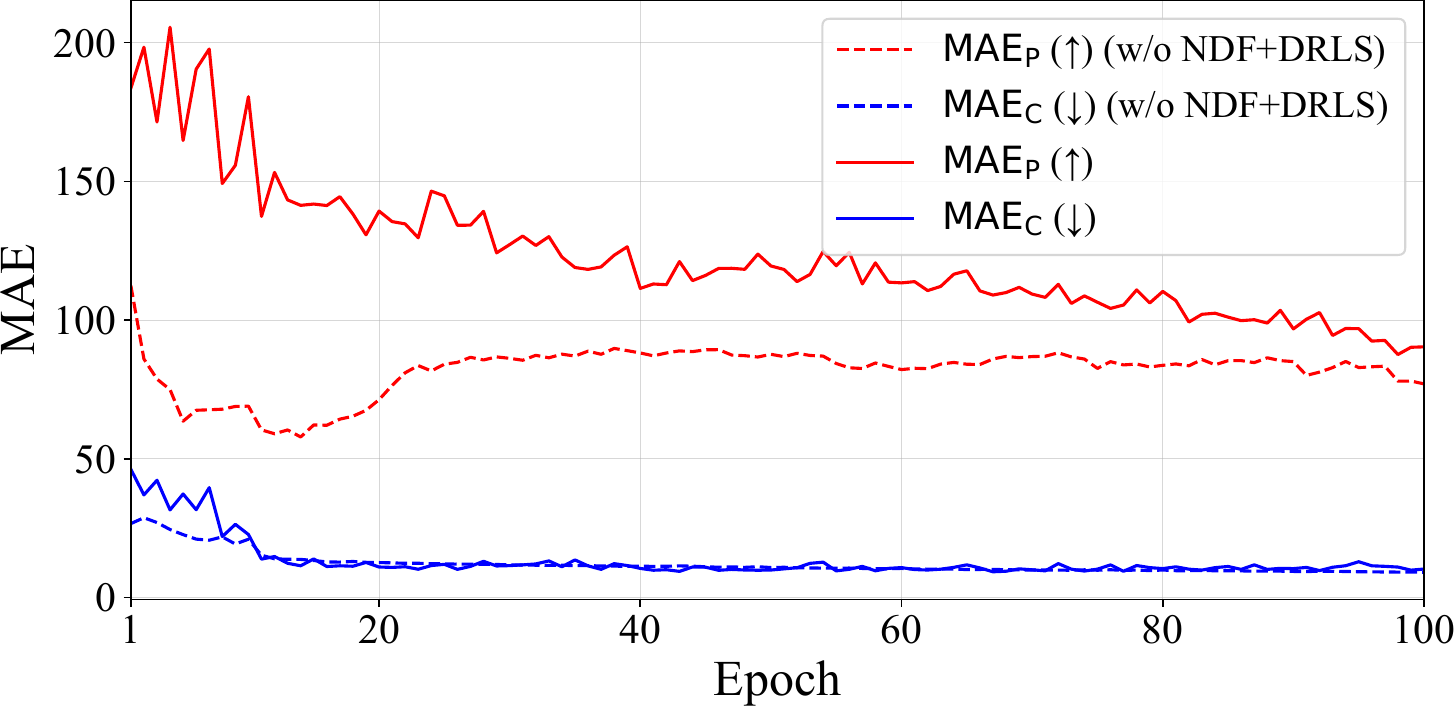}
        \caption{Clean and defense performance (\MAEC{} and \MAEP{}).}

    \end{subfigure}
    \caption{Dynamic illustration of \methodname{} at each training epoch under Random attack on Weather dataset of FEDformer model.}                
    \label{app-fig:clean-pool-illustration-weather}
\end{figure}

\begin{figure}[!htbp]
    \centering
    \begin{subfigure}{0.49\linewidth}
        \includegraphics[width=\linewidth]{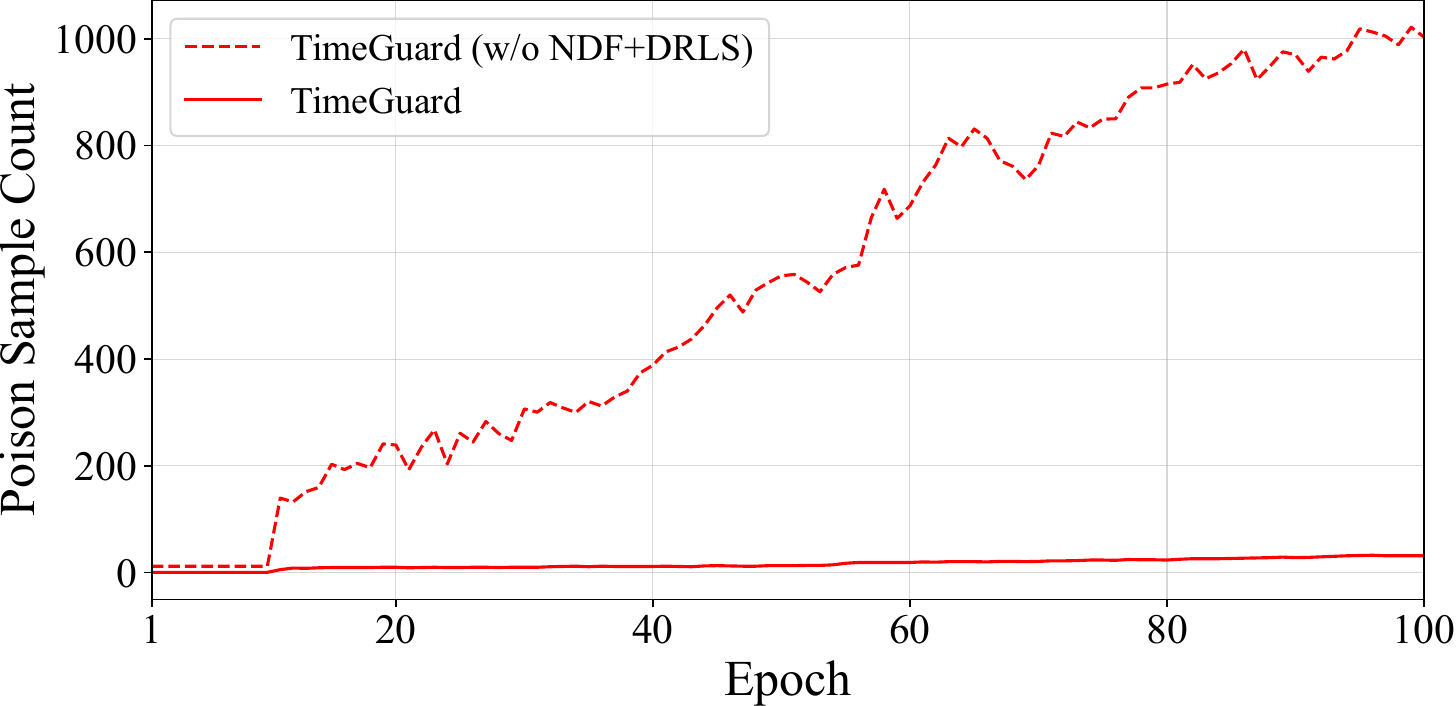}
        \caption{Number of poisoned samples in the reliable pool.}
    \end{subfigure}
    \begin{subfigure}{0.49\linewidth}
        \includegraphics[width=\linewidth]{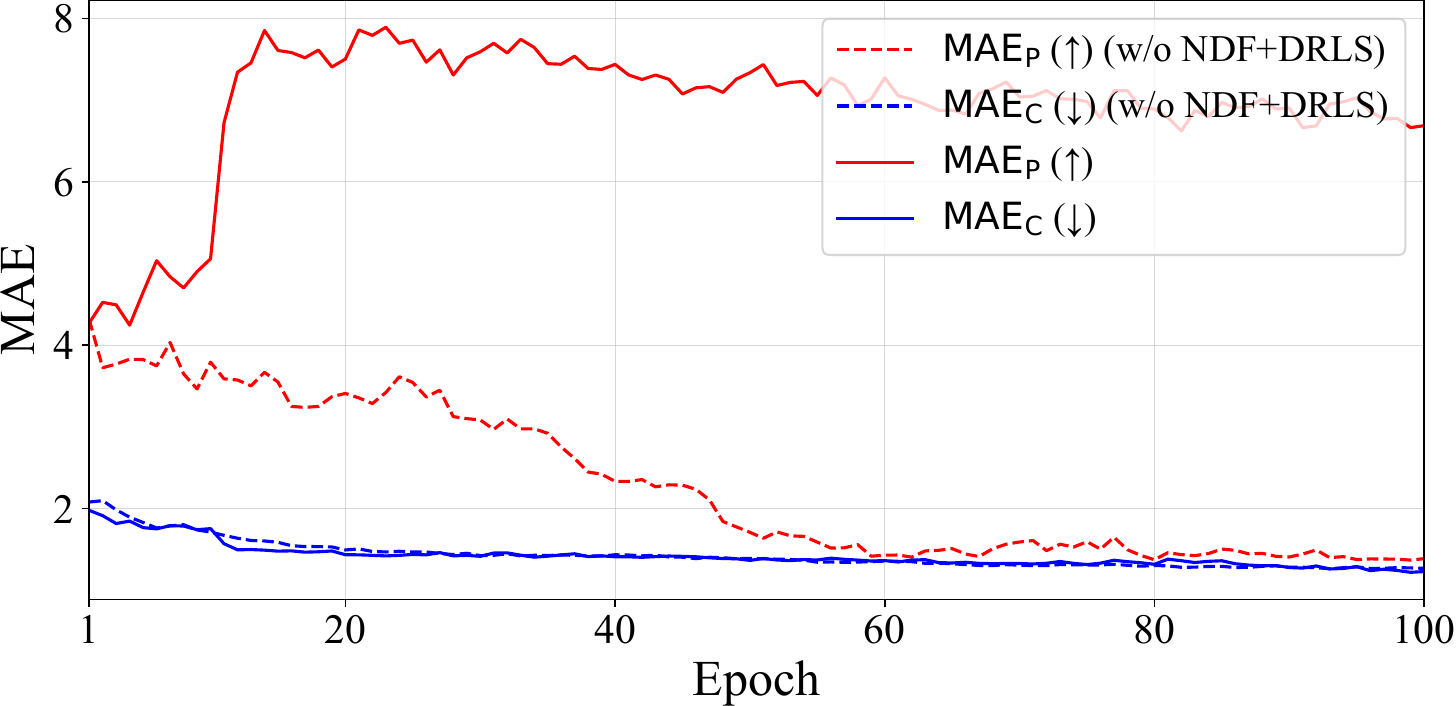}
        \caption{Clean and defense performance (\MAEC{} and \MAEP{}).}
    \end{subfigure}
    \caption{Dynamic illustration of \methodname{} at each training epoch under Random attack on ETTm1 dataset of FEDformer model.}                
    \label{app-fig:clean-pool-illustration-ettm1}
\end{figure}

\section{Showcases} \label{app:showcases}

To better visualize the effectiveness of \methodname{}, we provide an inference-time prediction example for the FEDformer model under the BackTime attack on PEMS03 in Figure~\ref{app-fig:clean-pool-illustration-pems03}, where the showcased triggers are sampled from different channels and different test samples. Overall, \methodname{} preserves accurate forecasts on clean channels while substantially mitigating trigger-induced manipulation on poisoned channels. Moreover, even when the input window is perturbed by the trigger, \methodname{} can partially recover the underlying future trend. We also observe that the generated triggers exhibit similar shapes despite BackTime using sample-dependent triggers, which further supports our analysis in \theoremautorefname~\ref{thm:main-tsf-backdoor-success}.
\begin{figure}[!htbp]
    \centering
    \begin{subfigure}{0.24\linewidth}
        \includegraphics[width=\linewidth]{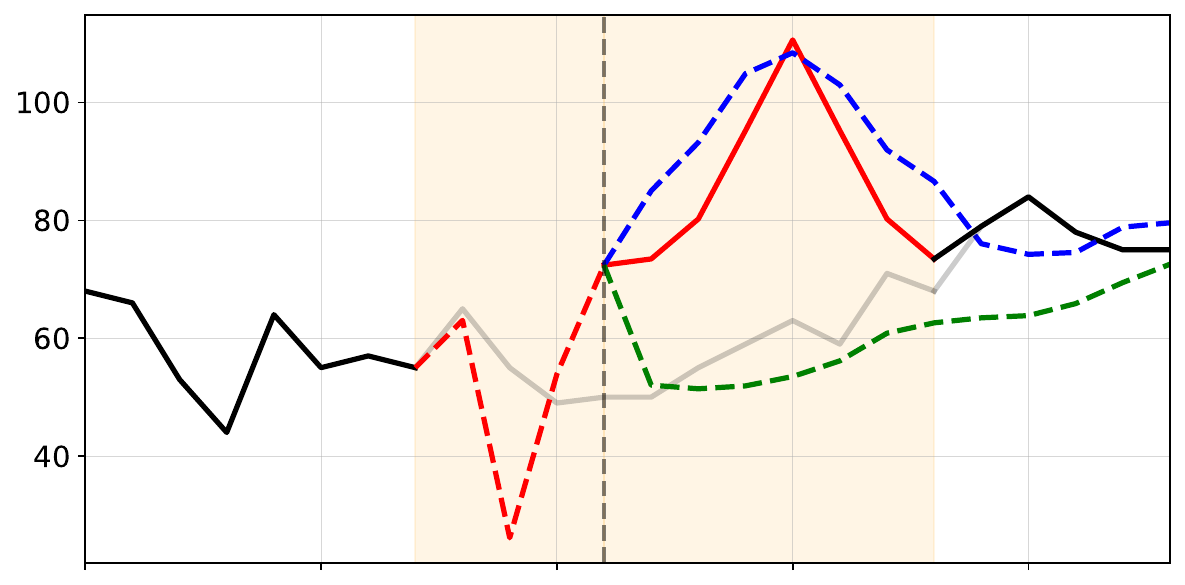}
    \end{subfigure}
    \hfill
    \begin{subfigure}{0.24\linewidth}
        \includegraphics[width=\linewidth]{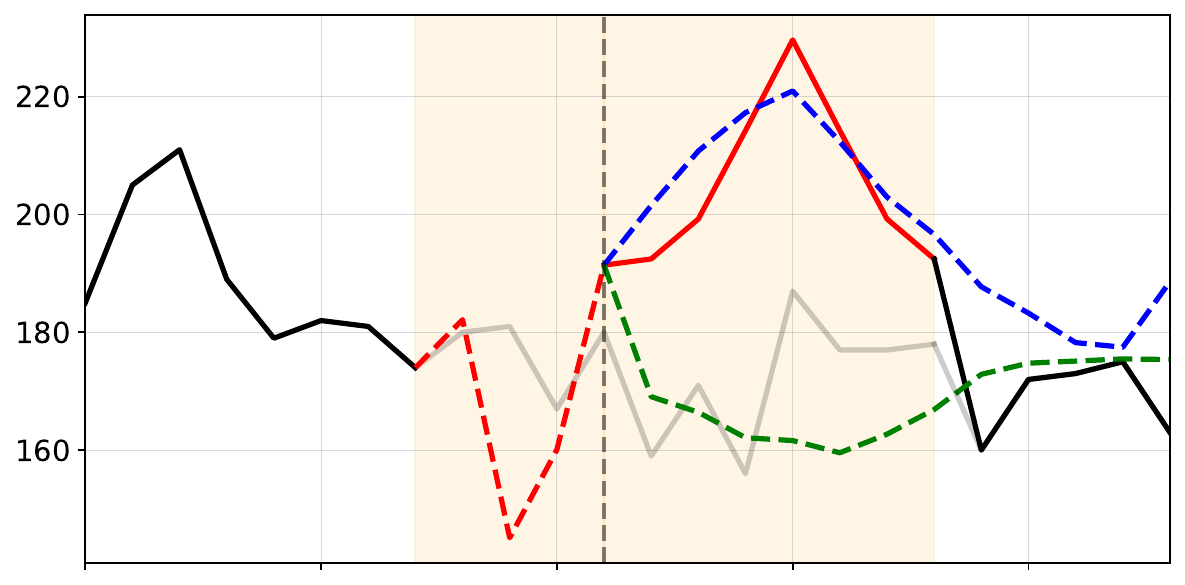}
    \end{subfigure}
    \hfill
     \begin{subfigure}{0.24\linewidth}
        \includegraphics[width=\linewidth]{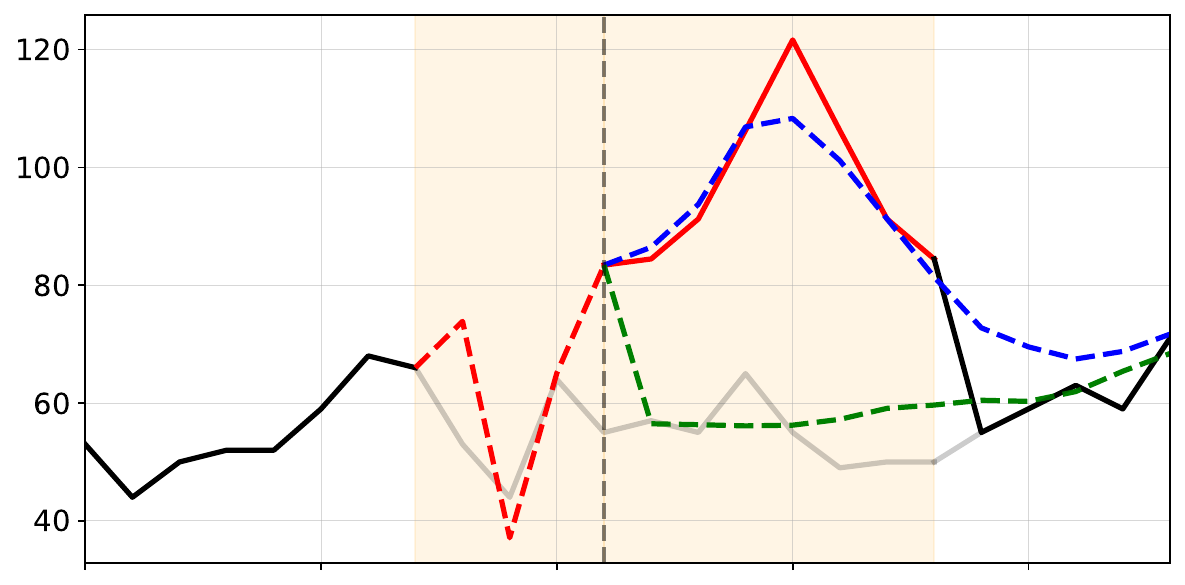}
    \end{subfigure}
    \hfill
     \begin{subfigure}{0.24\linewidth}
        \includegraphics[width=\linewidth]{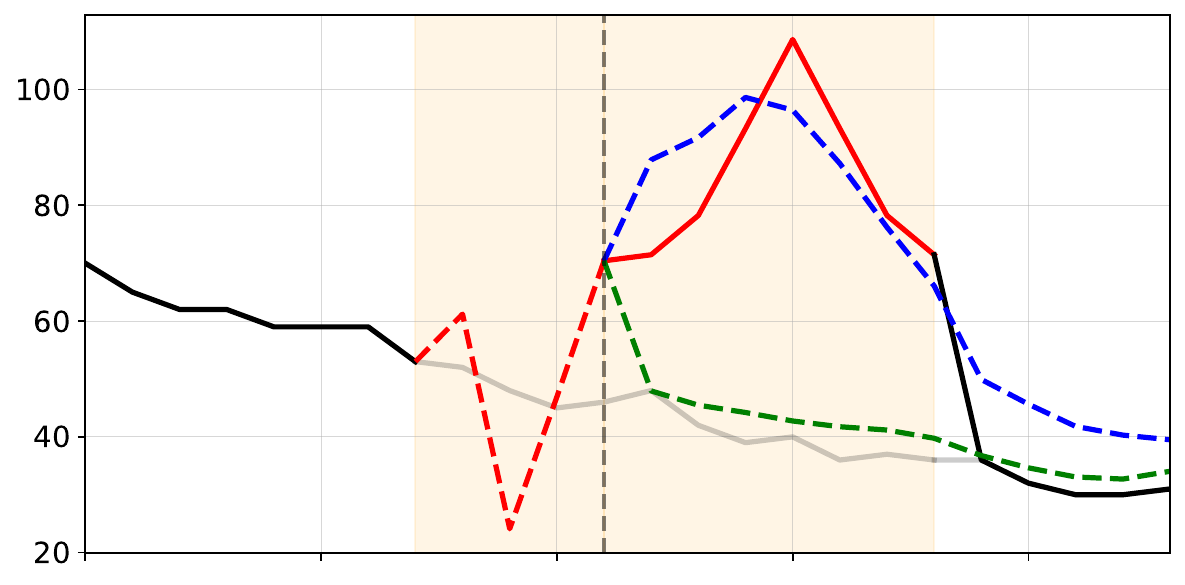}
    \end{subfigure}
    \vspace{0.6em} 
    \begin{subfigure}{0.24\linewidth}
        \includegraphics[width=\linewidth]{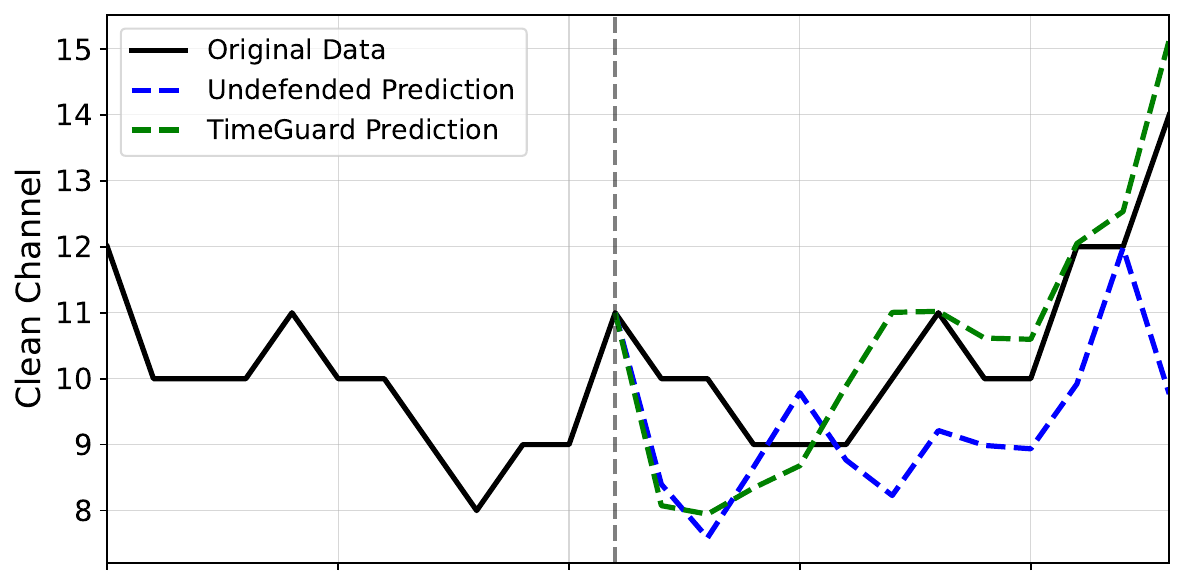}
    \end{subfigure}
    \hfill
    \begin{subfigure}{0.24\linewidth}
        \includegraphics[width=\linewidth]{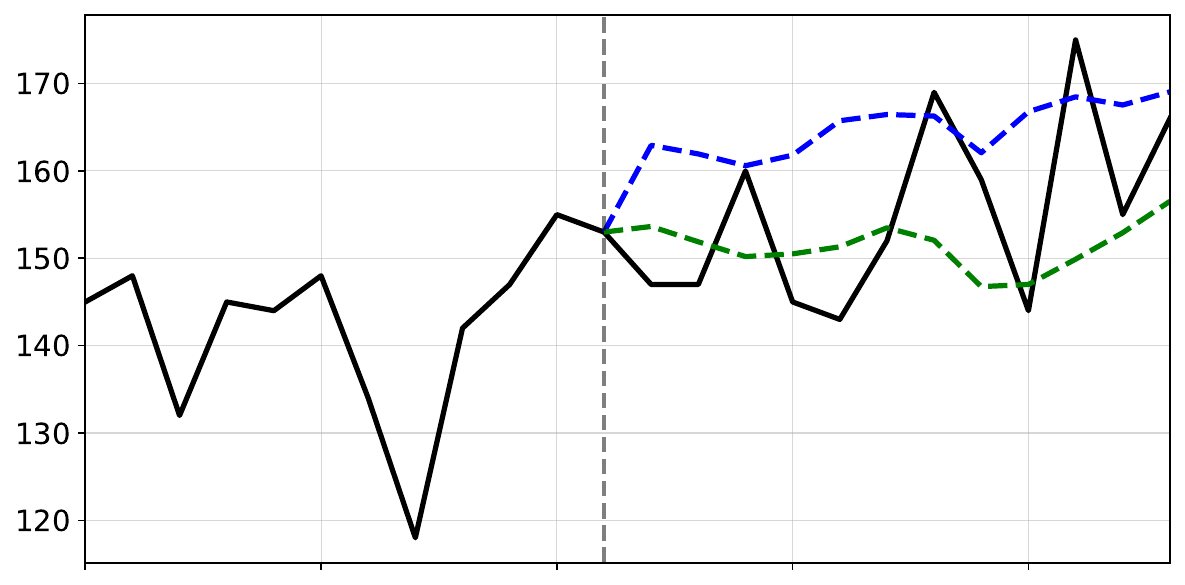}
    \end{subfigure}
    \hfill
     \begin{subfigure}{0.24\linewidth}
        \includegraphics[width=\linewidth]{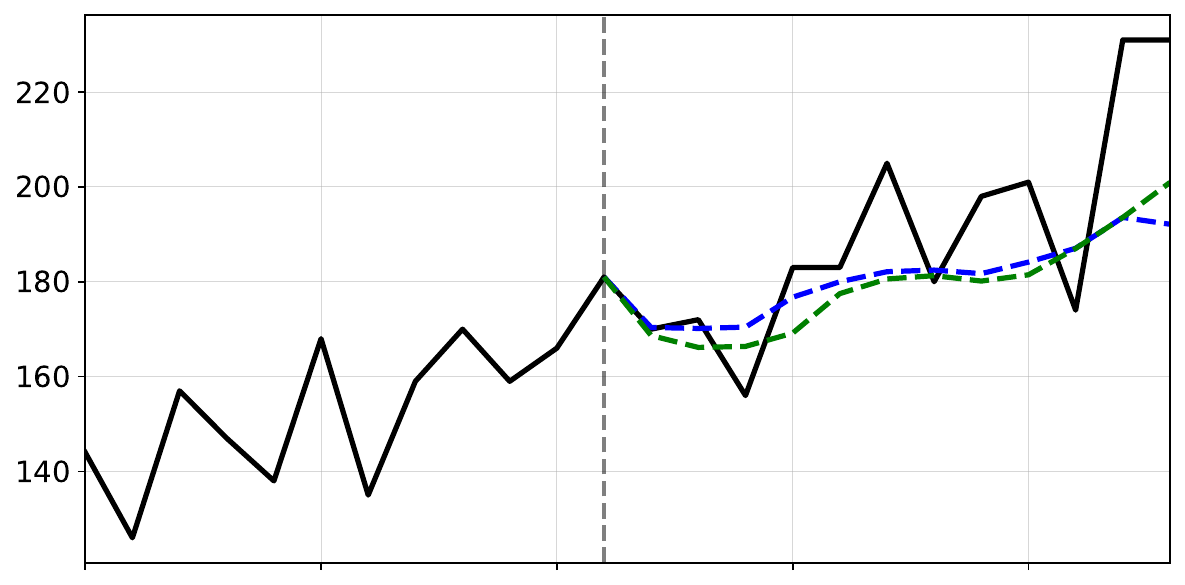}
    \end{subfigure}
    \hfill
     \begin{subfigure}{0.24\linewidth}
        \includegraphics[width=\linewidth]{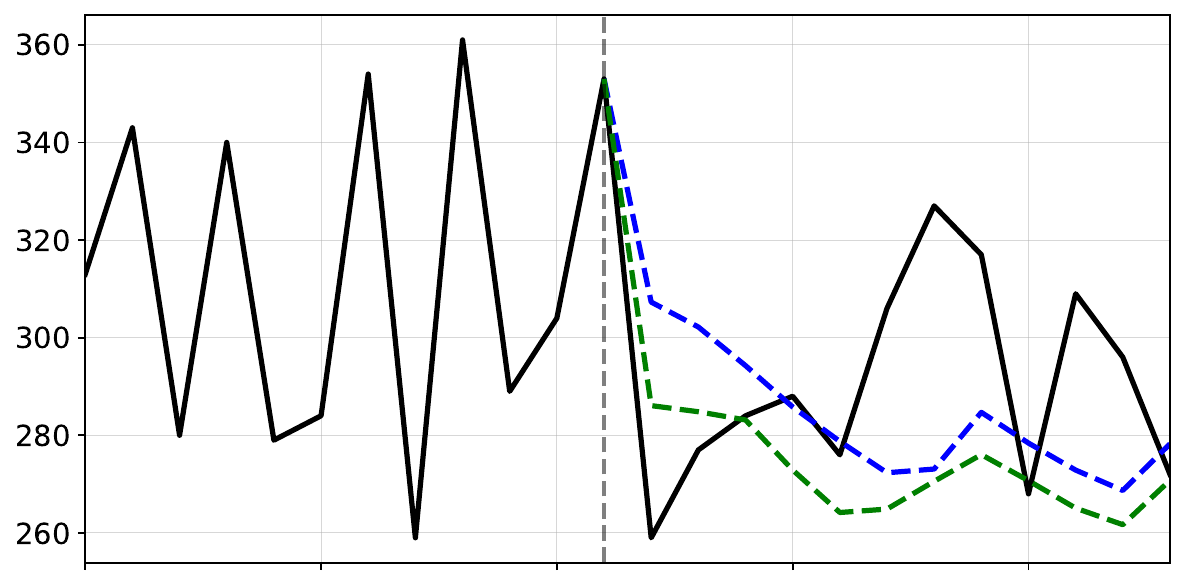}
    \end{subfigure}
    \vspace{0.6em}
     \begin{subfigure}{0.24\linewidth}
        \includegraphics[width=\linewidth]{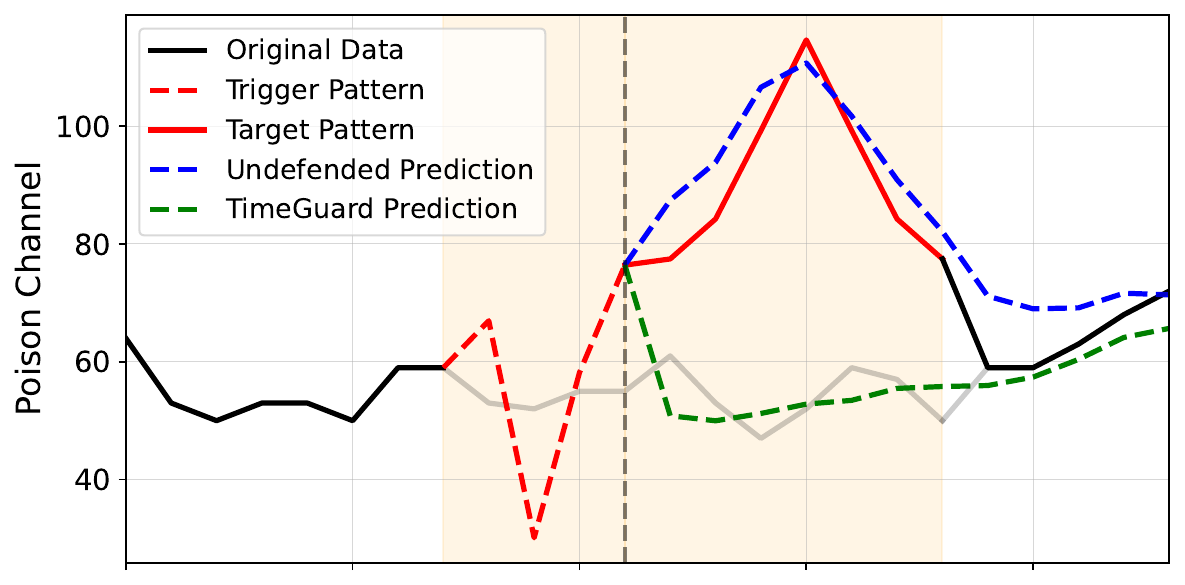}
    \end{subfigure}
    \hfill
    \begin{subfigure}{0.24\linewidth}
        \includegraphics[width=\linewidth]{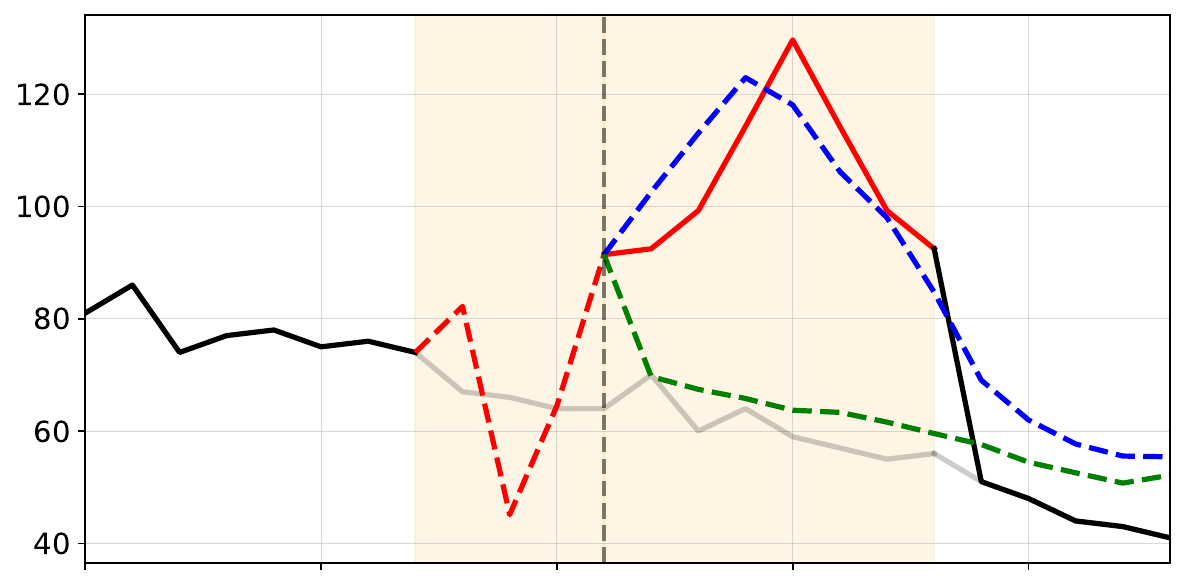}
    \end{subfigure}
    \hfill
     \begin{subfigure}{0.24\linewidth}
        \includegraphics[width=\linewidth]{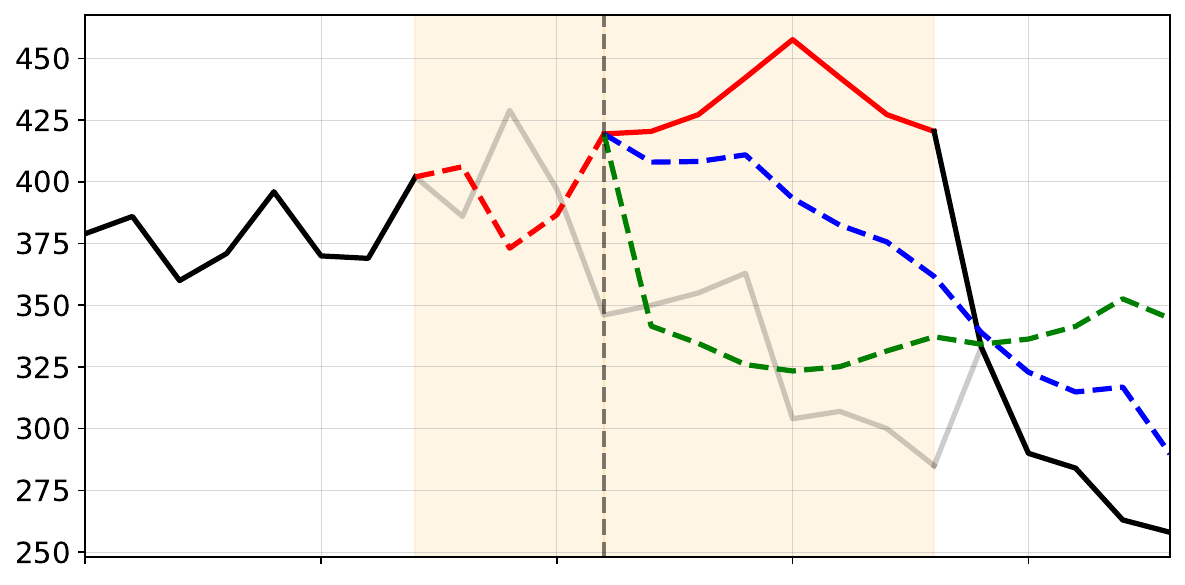}
    \end{subfigure}
    \hfill
     \begin{subfigure}{0.24\linewidth}
        \includegraphics[width=\linewidth]{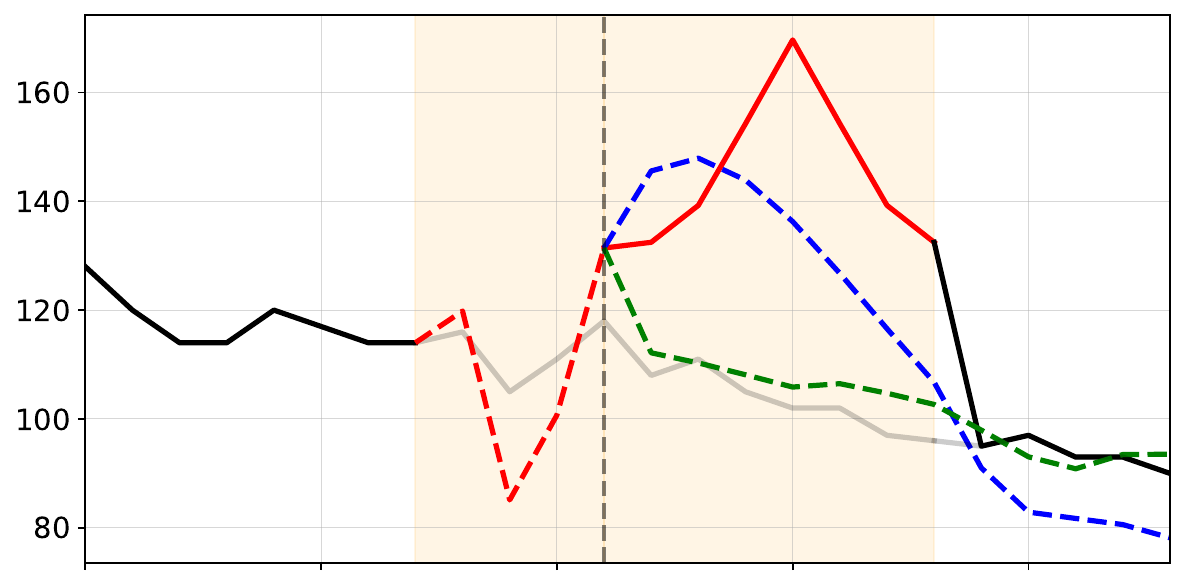}
    \end{subfigure}
    \vspace{0.6em} 
     \begin{subfigure}{0.24\linewidth}
        \includegraphics[width=\linewidth]{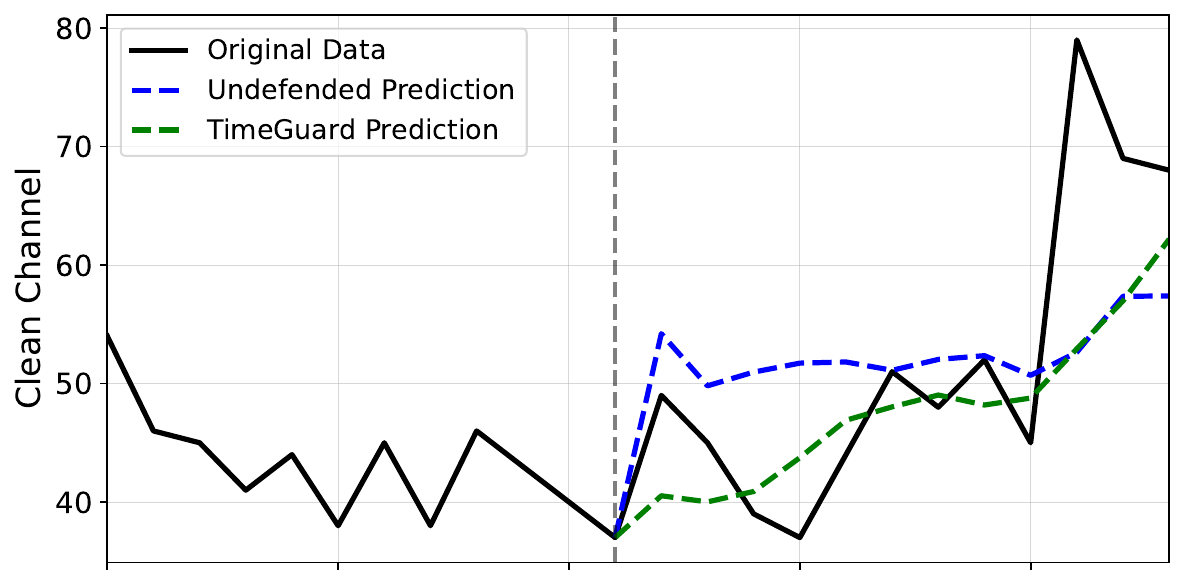}
    \end{subfigure}
    \hfill
    \begin{subfigure}{0.24\linewidth}
        \includegraphics[width=\linewidth]{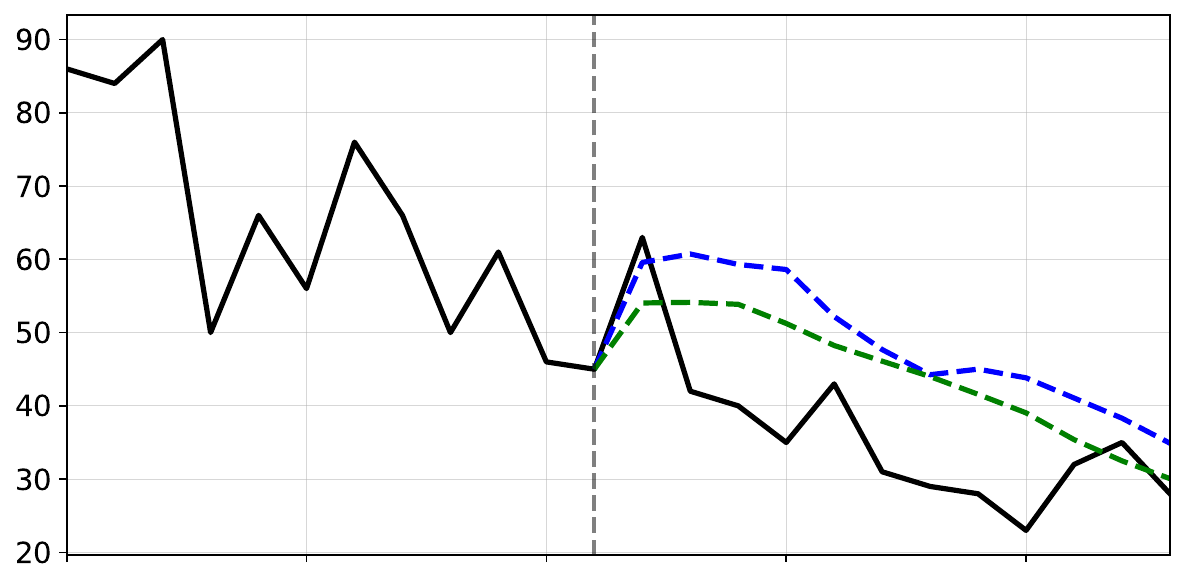}
    \end{subfigure}
    \hfill
     \begin{subfigure}{0.24\linewidth}
        \includegraphics[width=\linewidth]{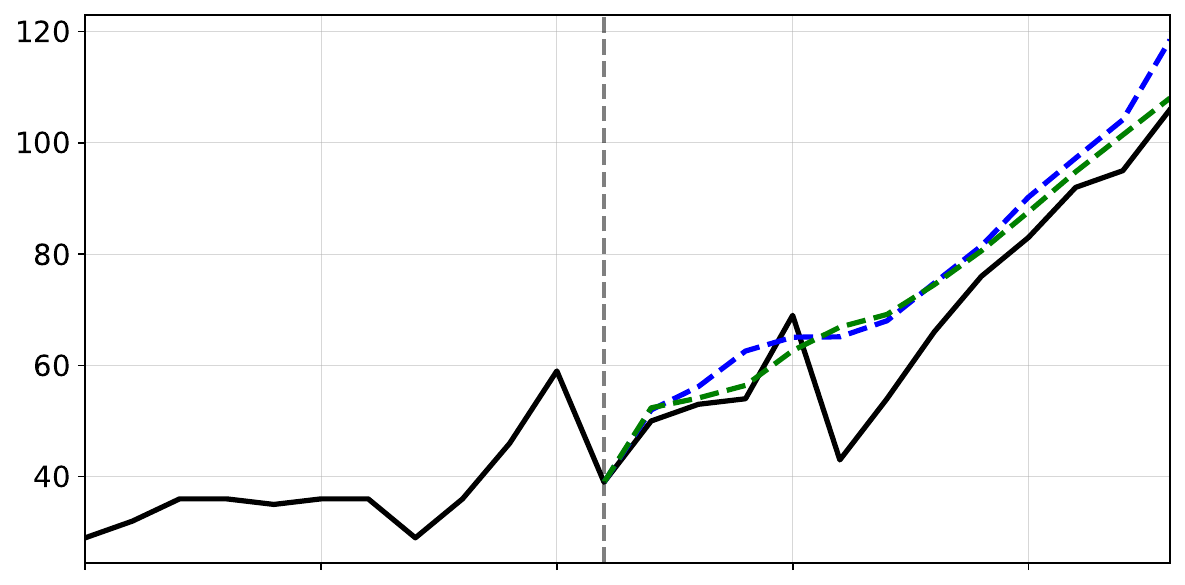}
    \end{subfigure}
    \hfill
     \begin{subfigure}{0.24\linewidth}
        \includegraphics[width=\linewidth]{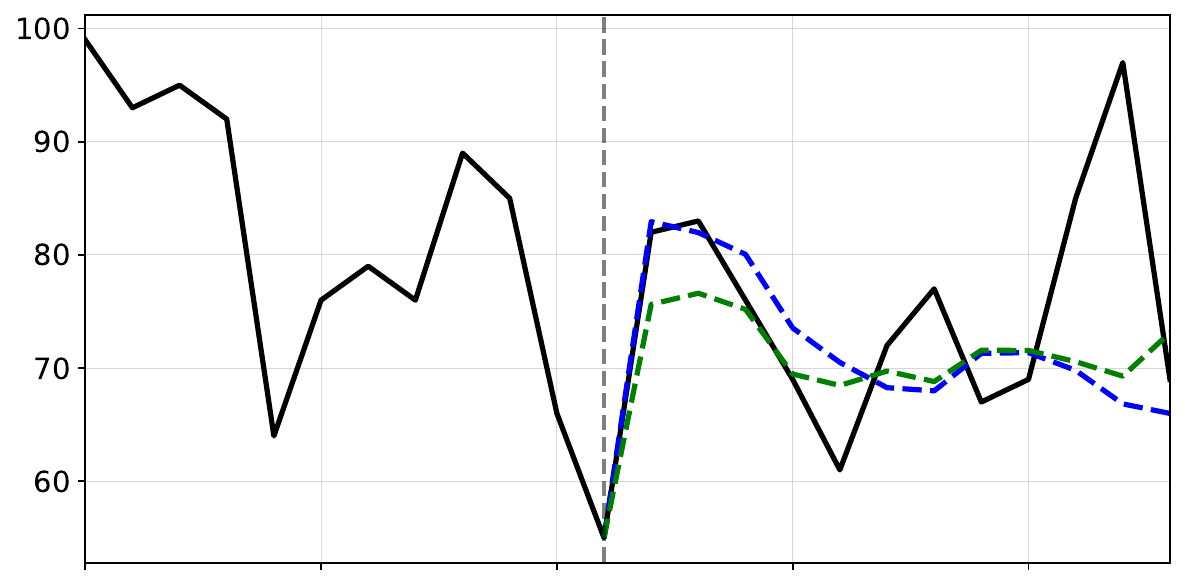}
    \end{subfigure}
    \caption{Inference-time prediction showcases of \methodname{} under the BackTime attack on PEMS03 of FEDformer model, visualized on alternating poisoned and clean channels. We display a randomly selected test sample with a randomly selected channel.}
    \label{app-fig:show-case}
\end{figure}

\section{Limitations and Future Work}
\label{app:limitation-future-work}

\noindent \textbf{Limitations.} First, \methodname{} relies on a hand-designed neighborhood metric, e.g., correlation-/distance-based $k$NN on normalized windows, to construct and refine the reliable pool. Such input-space distances may degrade under strong distribution shifts or nonstationarity. Our preliminary experiments suggest that this degradation is moderate rather than catastrophic; however, existing training-phase defenses also suffer under these challenging settings, as discussed in \appendixautorefname~\ref{app:defense-performance}. Second, TSF backdoor defenses in general face an inherent precision--recall trade-off when the trigger and induced target are not ``out-of-distribution'' relative to clean dynamics. If the trigger and target mimic prevalent motifs (e.g., a near-linear upward trend), poisoned and clean windows can be ambiguous in both learning-based and neighborhood structure: filtering/detecting may remove frequent clean patterns, while retaining them may preserve backdoor influence.

\noindent \textbf{Future work.}  A natural direction is to augment our input-space $k$NN with TSF-specific embedding spaces where neighborhoods better reflect forecasting semantics, e.g., via self-supervised or contrastive representations~\cite{zhang2024self, zheng2025st}. Although we conduct preliminary experiments using TS2Vec embeddings~\cite{yue2022ts2vec}, as discussed in \appendixautorefname~\ref{app:neighborhood-analysis}, the results remain unsatisfactory and do not outperform our original input-space implementation. Future work could explore representations specifically tailored to TSF backdoor defense. Another direction is to utilize (rather than discard) the unreliable pool with semi-supervised learning~\cite{cho2025comres}; however, current TSF semi-supervised methods are often architecture-dependent, motivating deeper study of architecture-agnostic formulations under backdoor settings. Finally, multivariate TSF offers opportunities to leverage cross-channel structure (e.g., dependency graphs or causal signals~\cite{qiu2025duet, han2025root}) to localize corrupted channels while improving clean and recovery forecasting performance.

More broadly, we hope this work encourage TSF-specific backdoor defense research and time series security in general, including standardized benchmarks, stronger adaptive attacks/defenses, and principled evaluation protocols.

\end{document}